\newtheorem{theorem}{Theorem}[section]
\newtheorem{lemma}[theorem]{Lemma}
\newtheorem{proposition}[theorem]{Proposition}
\newtheorem{corollary}[theorem]{Corollary}
\theoremstyle{definition}
\newtheorem{definition}[theorem]{Definition}
\theoremstyle{remark}
\newtheorem{remark}[theorem]{Remark}
\newtheorem{rhp}{Riemann-Hilbert Problem}
\newcommand{\rhref}[1]{Riemann-Hilbert Problem~\ref{#1}}
\let\Re=\undefined\DeclareMathOperator{\Re}{Re}
\let\Im=\undefined\DeclareMathOperator{\Im}{Im}
\DeclareMathOperator{\rem}{Rem}
\DeclareMathOperator{\sign}{sign}
\newcommand{\channels}{\ensuremath{\mathcal{C}}}
\newcommand{\shelves}{\ensuremath{\mathcal{S}}}
\newcommand{\exterior}{\ensuremath{\mathcal{E}}}
\newcommand{\dd}{\ensuremath{\,\mathrm{d}}}
\newcommand{\ii}{\ensuremath{\mathrm{i}}}
\newcommand{\ee}{\ensuremath{\,\mathrm{e}}}
\newcommand{\defeq}{\vcentcolon=}
\renewcommand*\env@matrix[1][\arraystretch]{%
  \edef\arraystretch{#1}%
  \hskip -\arraycolsep
  \let\@ifnextchar\new@ifnextchar
  \array{*\c@MaxMatrixCols c}}
\let\originalleft\left
\let\originalright\right
\renewcommand{\left}{\mathopen{}\mathclose\bgroup\originalleft}
\renewcommand{\right}{\aftergroup\egroup\originalright}
\title{Extreme Superposition:  High-Order Fundamental Rogue Waves in the Far-Field Regime}
\author{Deniz Bilman}
\address{Deniz Bilman:  Department of Mathematical Sciences, University of Cincinnati, Cincinnati, OH, USA}
\email{bilman@uc.edu}
\author{Peter D.~Miller}
\address{Peter D. Miller:  Department of Mathematics, University of Michigan, Ann Arbor, MI, USA}
\email{millerpd@umich.edu}
\thanks{The authors wish to thank Liming Ling and Alex Tovbis for useful discussions during the early stages of this project.
Bilman's work was partially supported by a research fellowship from Charles Phelps Taft Research Center.
Miller was supported by the National Science Foundation under grant number DMS-1812625.
}
\date{\today}
\begin{document}

\begin{abstract}
We study fundamental rogue-wave solutions of the focusing nonlinear Schr\"odinger equation in the limit that the order of the rogue wave is large and the independent variables $(x,t)$ are proportional to the order (the far-field limit).  We first formulate a Riemann-Hilbert representation of these solutions that allows the order to vary continuously rather than by integer increments.  The intermediate solutions in this continuous family include also soliton solutions for zero boundary conditions spectrally encoded by  a single complex-conjugate pair of poles of arbitrary order, as well as other solutions having nonzero boundary conditions matching those of the rogue waves albeit with far slower decay as $x\to\pm\infty$.  The large-order far-field asymptotic behavior of the solution depends on which of three disjoint regions $\channels$, $\shelves$, and $\exterior$ contains the rescaled variables.  On the regions $\channels$ and $\shelves$ we show that the asymptotic behavior is the same for all continuous orders, while in the region $\exterior$ the discrete sequence of rogue-wave orders produces distinctive asymptotic behavior that is different from other cases.
\end{abstract}

\maketitle

\section{Introduction}
This paper is a continuation of a study, begun in \cite{BilmanLM20}, of high-order rogue-wave solutions of the focusing nonlinear Schr\"odinger equation.  As in \cite{BilmanLM20}, the starting point is a Riemann-Hilbert problem characterization of the fundamental rogue wave of order $k$, $k\geq 0$, which was originally obtained in \cite{BilmanM19} and which we now describe. Let $\Sigma_\mathrm{c}$ denote a Schwarz-symmetric simple arc connecting endpoints $\lambda=\pm\ii$ with upward orientation, let $\rho:\mathbb{C}\setminus\Sigma_\mathrm{c}\to\mathbb{C}$ be the analytic function satisfying $\rho(\lambda)^2=\lambda^2+1$ and $\rho(\lambda)=\lambda+O(\lambda^{-1})$ as $\lambda\to\infty$, and let $\Sigma_\circ$ denote a Schwarz-symmetric Jordan curve with $\Sigma_\mathrm{c}$ in its interior and let $\Sigma_\circ$ have clockwise orientation.  
Also, let
\begin{equation}
\mathbf{Q}:=\frac{1}{\sqrt{2}}\begin{bmatrix} 1 & -1 \\ 1 & 1\end{bmatrix},
\label{eq:Q-def}
\end{equation}
and let $\mathbf{E}(\lambda)$ denote the matrix function defined for $\lambda\in\mathbb{C}\setminus\Sigma_\mathrm{c}$ by
\begin{equation}
\mathbf{E}(\lambda):= f(\lambda)\begin{bmatrix}1 & \ii (\lambda-\rho(\lambda))\\\ii(\lambda-\rho(\lambda)) & 1\end{bmatrix},\quad\lambda\in\mathbb{C}\setminus\Sigma_\mathrm{c},
\label{eq:E-def}
\end{equation}
where $f(\lambda)$ is the function analytic for $\lambda\in\mathbb{C}\setminus\Sigma_\mathrm{c}$ that satisfies 
\begin{equation}
f(\lambda)^2=\frac{\lambda+\rho(\lambda)}{2\rho(\lambda)}\quad\text{and $f(\lambda)\to 1$ as $\lambda\to\infty$.}
\label{eq:f-squared}
\end{equation}
This matrix $\mathbf{E}(\lambda)$ is analytic in its domain of definition and has unit determinant. 
It is convenient to introduce a sign $s=(-1)^k$ and express the order $k\in\mathbb{Z}_{\ge 0}$ in terms of another integer $n\in\mathbb{Z}_{\ge 0}$ and $s$ by
\begin{equation}
k=2n+\frac{1}{2}(s-1)\quad\Longleftrightarrow\quad n=\frac{1}{4}(2k+1-s).
\label{eq:k-vs-n}
\end{equation}
Each value of $n\in\mathbb{Z}_{>0}$ corresponds to two consecutive values of $k$, one of each parity; however $n=0$ corresponds to $k=0$ only.  Finally, let $B(\lambda)$ denote the elementary Blaschke factor
\begin{equation}
B(\lambda):=\frac{\lambda-\ii}{\lambda+\ii}.
\label{eq:Blaschke}
\end{equation}
In the following problem as in the rest of the paper, boundary values taken from the left/right are denoted with a subscript $+$/$-$, and $\sigma_3$ denotes one of the Pauli matrices:
\begin{equation}
\sigma_1:=\begin{bmatrix}0&1\\1 & 0\end{bmatrix},\quad
\sigma_2:=\begin{bmatrix}0 & -\ii\\\ii & 0\end{bmatrix},\quad
\sigma_3:=\begin{bmatrix}1 & 0\\0 & -1\end{bmatrix}.
\end{equation} 
\begin{rhp}[Rogue wave of order $k$]
Let $(x,t)\in\mathbb{R}^2$ be arbitrary parameters, and let $k\in\mathbb{Z}_{\ge 0}$.  Find a $2\times 2$ matrix $\mathbf{M}^{(k)}(\lambda;x,t)$ with the following properties:
\begin{itemize}
\item[]\textbf{Analyticity:}  $\mathbf{M}^{(k)}(\lambda;x,t)$ is analytic in $\lambda$ for $\lambda\in\mathbb{C}\setminus(\Sigma_\circ\cup\Sigma_\mathrm{c})$, and it takes continuous boundary values on $\Sigma_\circ\cup\Sigma_\mathrm{c}$.
\item[]\textbf{Jump conditions:}  The boundary values on the jump contour $\Sigma_\circ\cup\Sigma_\mathrm{c}$ are related as follows:
\begin{equation}
\mathbf{M}_+^{(k)}(\lambda;x,t)=\mathbf{M}_-^{(k)}(\lambda;x,t)\ee^{2\ii\rho_+(\lambda)(x+\lambda t)\sigma_3},\quad \lambda\in\Sigma_\mathrm{c},
\label{eq:jump-cut}
\end{equation}
and
\begin{equation}
\mathbf{M}_+^{(k)}(\lambda;x,t)=\mathbf{M}_-^{(k)}(\lambda;x,t)\ee^{-\ii\rho(\lambda)(x+\lambda t)\sigma_3}\mathbf{Q}
B(\lambda)^{sn\sigma_3}
\mathbf{Q}^{-1}\mathbf{E}(\lambda)\ee^{\ii\rho(\lambda)(x+\lambda t)\sigma_3},\quad
\lambda\in\Sigma_\circ,
\end{equation}
where $s=\pm 1$ is the parity index of $k$, and $n$ is given by \eqref{eq:k-vs-n}.
\item[]\textbf{Normalization:}  $\mathbf{M}^{(k)}(\lambda;x,t)\to\mathbb{I}$ as $\lambda\to\infty$. 
\end{itemize}
\label{rhp:rogue-wave}
\end{rhp}
The \emph{fundamental rogue wave of order $k$} is then defined by the limit 
\begin{equation}
\psi(x,t)=\psi_k(x,t)\defeq 2\ii\lim_{\lambda\to\infty}\lambda M_{12}^{(k)}(\lambda;x,t), \quad k\in\mathbb{Z}_{\ge 0},
\label{eq:psi-from-M}
\end{equation}
and it is a rational solution of the focusing nonlinear Schr\"odinger equation in the form 
\begin{equation}
\ii\psi_t +\tfrac{1}{2}\psi_{xx}+(|\psi|^2-1)\psi=0,
\label{eq:NLS}
\end{equation}
that tends to the background solution $\psi=\psi_0(x,t)\equiv 1$ as $(x,t)\to\infty$ in $\mathbb{R}^2$.  It is this feature of simultaneous spatio-temporal localization that explains the terminology of \emph{rogue waves} for such solutions.  

As the parameter $k$ increases, the fundamental rogue wave has increasing amplitude (see \cite[Proposition 2]{BilmanLM20} and also \cite{AAS09, WangYWH17}).  This large maximum amplitude is achieved exactly at the origin $(x,t)=(0,0)$, and the aim of the previous paper \cite{BilmanLM20} was to study the fundamental rogue wave of order $k$ in a small neighborhood of this amplitude peak.  It was discovered in \cite{BilmanLM20} that for fixed $s=(-1)^k$, $sn^{-1}\psi_k(n^{-1}X,n^{-2}T)$ converges as $n\to+\infty$ to a limiting function $\Psi(X,T)$, the \emph{rogue wave of infinite order}, that solves the focusing nonlinear Schr\"odinger equation in the form $\ii\Psi_T+\tfrac{1}{2}\Psi_{XX}+|\Psi|^2\Psi=0$.  This limiting function is a highly-transcendental solution having a number of remarkable properties described in \cite{BilmanLM20}, for instance:  (i) it satisfies also ordinary differential equations of Painlev\'e type in the two independent variables, (ii) it has its own Riemann-Hilbert representation, and (iii) $\Psi(X,T)\to 0$ for large $X$ and $T$ (even though $\psi_k\to 1$ for large $x$ and $t$).  The decay for large $X$ is sufficient for the function $\Psi(\cdot,T)$ to lie in $L^2(\mathbb{R})$ for every $T\in\mathbb{R}$, but $\Psi(\cdot,T)\not\in L^1(\mathbb{R})$, and the decay in $T$ is even slower.    The function $\Psi(X,T)$ has recently also been shown to be important in several other problems; for the same equation it describes also high-order multiple-pole soliton solutions \cite{BilmanB19} and self-similar focusing in the setting of weak dispersion \cite{Suleimanov17,BuckinghamJM21}, and for the sharp-line Maxwell-Bloch system in characteristic coordinates it models initial/boundary layers \cite{LiM21}.

\subsection{Reformulated characterization of fundamental rogue waves}
The purpose of this paper is to describe the fundamental rogue-wave solution of high order $k$ in a different regime for the independent variables on which both $x$ and $t$ are instead proportional to $k$.  To this end, in
place of $\mathbf{M}^{(k)}(\lambda;x,t)$, consider the matrix $\mathbf{P}^{(k)}(\lambda;x,t)$ defined by
\begin{multline}
\mathbf{P}^{(k)}(\lambda;x,t)\defeq
\ee^{\frac{1}{2}\ii t\sigma_3}\mathbf{M}^{(k)}(\lambda;x,t)\\{}\cdot
\begin{cases}
\ee^{-\ii\rho(\lambda)(x+\lambda t)\sigma_3}\mathbf{Q}^{s}\ee^{\ii(\lambda x+\lambda^2 t)\sigma_3},&\quad
\text{$\lambda$ inside $\Sigma_\circ$},\\ \displaystyle 
\ee^{\ii[\lambda x+\lambda^2t-\rho(\lambda)(x+\lambda t)]\sigma_3}
B(\lambda)^{-n\sigma_3}
\omega(\lambda)^{-s\sigma_3},&\quad 
\text{$\lambda$ exterior to $\Sigma_\circ$},
\end{cases}
\label{eq:M-P-bulk}
\end{multline}
where we recall that $s=\pm 1$ is the parity index of $k$, where $n$ is defined by \eqref{eq:k-vs-n}, 
and where
\begin{equation}
\omega(\lambda)\defeq f(\lambda)(1+\ii (\lambda-\rho(\lambda))).
\label{eq:omega-def}
\end{equation}
An alternate formula for $\omega(\lambda)$ 
can be found as follows.  First we observe that $\omega(\lambda)$ is analytic for $\lambda\in\mathbb{C}\setminus\Sigma_\mathrm{c}$ and satisfies $\omega(\lambda)\to 1$ as $\lambda\to\infty$.  Using \eqref{eq:f-squared} and $\rho(\lambda)^2=\lambda^2+1$, we easily calculate that
\begin{equation}
\omega(\lambda)^4 = B(\lambda).
\label{eq:omega-fourth-power}
\end{equation}
In particular, it follows from this that, recalling the upward orientation of $\Sigma_\mathrm{c}$,
\begin{equation}
\omega_+(\lambda)=\ii\omega_-(\lambda),\quad\lambda\in\Sigma_\mathrm{c}.
\label{eq:omega-jump}
\end{equation}

It is easy to check that $\mathbf{P}^{(k)}(\lambda;x,t)$ is an analytic function of $\lambda$ for $\lambda\in\mathbb{C}\setminus\Sigma_\circ$, i.e., the jump of $\mathbf{M}^{(k)}(\lambda;x,t)$ across the cut $\Sigma_\mathrm{c}$ between $\pm\ii$ is removed by the substitution, and no additional singularities are introduced.  Since $\lambda x+\lambda^2t-\rho(\lambda)(x+\lambda t) = -\tfrac{1}{2}t+O(\lambda^{-1})$ as $\lambda\to\infty$, it is easy to check that $\mathbf{P}^{(k)}(\lambda;x,t)\to\mathbb{I}$ in the same limit.  One directly calculates that the jump condition satisfied by $\mathbf{P}^{(k)}(\lambda;x,t)$ across the closed curve $\Sigma_\circ$ with clockwise orientation is then
\begin{multline}
\mathbf{P}^{(k)}_+(\lambda;x,t)=\\
\mathbf{P}^{(k)}_-(\lambda;x,t)\ee^{-\ii(\lambda x+\lambda^2t)\sigma_3}\mathbf{Q}^{-s}\mathbf{Q}
B(\lambda)^{sn\sigma_3}
\mathbf{Q}^{-1}\mathbf{E}(\lambda)\omega(\lambda)^{-s\sigma_3}
B(\lambda)^{-n\sigma_3}
\ee^{\ii(\lambda x+\lambda^2t)\sigma_3},\\
\lambda\in\Sigma_\circ.
\end{multline}
But the eigenvalues of $\mathbf{E}(\lambda)$ are precisely $\omega(\lambda)^{\pm 1}$ and $\mathbf{E}(\lambda)$ is diagonalized by the constant orthogonal eigenvector matrix $\mathbf{Q}$, so $\mathbf{Q}^{-1}\mathbf{E}(\lambda)=\omega(\lambda)^{\sigma_3}\mathbf{Q}^{-1}$.  Using this identity as well as $\mathbf{Q}^2=-\ii\sigma_2$ along with \eqref{eq:k-vs-n} and \eqref{eq:omega-fourth-power}, we see that $\mathbf{P}(\lambda;x,t,\mathbf{G},M)=\mathbf{P}^{(k)}(\lambda;x,t)$ solves the following Riemann-Hilbert problem with matrix $\mathbf{G}$ and positive parameter $M$ determined from $k\in\mathbb{Z}_{\ge 0}$ by
\begin{equation}
\mathbf{G}\defeq \mathbf{Q}^{-s}\quad\text{and}\quad  M\defeq n+\tfrac{1}{4}s=\tfrac{1}{2}k+\tfrac{1}{4}.
\label{eq:G-and-M-RogueWaves}
\end{equation}
\begin{rhp}[Reformulated problem for rogue waves]
Let $(x,t)\in\mathbb{R}^2$ and $M\in\mathbb{R}$ be arbitrary parameters, and let $\mathbf{G}$ be a $2\times 2$ matrix satisfying $\det(\mathbf{G})=1$ and $\mathbf{G}^*=\sigma_2\mathbf{G}\sigma_2$.  Find a $2\times 2$ matrix $\mathbf{P}(\lambda)=\mathbf{P}(\lambda;x,t,\mathbf{G},M)$ with the following properties:
\begin{itemize}
\item[]\textbf{Analyticity:}  $\mathbf{P}(\lambda)$ is analytic in $\lambda$ for $\lambda\in\mathbb{C}\setminus\Sigma_\circ$, and it takes continuous boundary values on $\Sigma_\circ$.
\item[]\textbf{Jump conditions:}  The boundary values on the jump contour $\Sigma_\circ$ are related as follows:
\begin{equation}
\mathbf{P}_+(\lambda)=
\mathbf{P}_-(\lambda)
\ee^{-\ii (\lambda x+\lambda^2t)\sigma_3}
B(\lambda)^{M\sigma_3}
\mathbf{G}
B(\lambda)^{-M\sigma_3}
\ee^{\ii(\lambda x+\lambda^2t)\sigma_3},\quad 
\lambda\in\Sigma_\circ,
\label{eq:P-bulk-jump}
\end{equation}
where scalar powers of the Blaschke factor $B(\lambda)$ are analytic for $\lambda\in\mathbb{C}\setminus\Sigma_\mathrm{c}$ and tend to $1$ as $\lambda\to\infty$.
\item[]\textbf{Normalization:}  $\mathbf{P}(\lambda)\to\mathbb{I}$ as $\lambda\to\infty$. 
\end{itemize}
\label{rhp:rogue-wave-reformulation}
\end{rhp}
It follows from \eqref{eq:psi-from-M} and the substitution \eqref{eq:M-P-bulk} that $\psi_k(x,t)$ can be recovered from $\mathbf{P}^{(k)}(\lambda;x,t)$ by the formula
\begin{equation}
\psi_k(x,t)=2\ii\ee^{-\ii t}\lim_{\lambda\to\infty} \lambda P^{(k)}_{12}(\lambda;x,t).
\end{equation}
Note that to prove Theorem~\ref{thm:exterior} below, it will be useful to work with a limiting case for the Jordan curve $\Sigma_\circ$ in which it is squeezed into a dumbbell shape; on the ``neck'' of the dumbbell there is then a different form of the jump condition.  See Section~\ref{sec:dumbbell}.

\subsection{Continuous interpolation between rogue waves and multiple-pole solitons of arbitrary orders}
\label{sec:M-arbitrary}
Even though it is only related to fundamental rogue waves when the parameters $\mathbf{G}$ and $M$ are related to the order $k\in\mathbb{Z}_{\ge 0}$ by \eqref{eq:G-and-M-RogueWaves}, more generally it follows from the vanishing lemma \cite{Zhou89} that Riemann-Hilbert Problem~\ref{rhp:rogue-wave-reformulation} is uniquely solvable globally in $(x,t)\in\mathbb{R}^2$ for any $M\in\mathbb{R}$ and matrix $\mathbf{G}$ with $\det(\mathbf{G})=1$ and $\mathbf{G}=\sigma_2\mathbf{G}^*\sigma_2$.  From the dressing method it then follows that the function 
\begin{equation}
q=q(x,t;\mathbf{G},M)\defeq 2\ii\lim_{\lambda\to\infty}\lambda P_{12}(\lambda;x,t,\mathbf{G},M)
\label{eq:q-define}
\end{equation}
is a well-defined solution of the focusing nonlinear Schr\"odinger equation in the form
\begin{equation}
\ii q_t+\tfrac{1}{2}q_{xx}+|q|^2q=0.
\label{eq:NLS-ZBC}
\end{equation}
This implies, in particular, that $q(x,t;\mathbf{G},M)$ provides a continuous interpolation via solutions of \eqref{eq:NLS-ZBC} of fundamental rogue waves of different (integral) orders.  The intermediate interpolating solutions can be of independent interest.
For instance, noting that a general matrix $\mathbf{G}$ satisfying $\det(\mathbf{G})=1$ and $\mathbf{G}=\sigma_2\mathbf{G}^*\sigma_2$ can be written in the form
\begin{equation}
\mathbf{G}=\frac{1}{\sqrt{|a|^2+|b|^2}}\begin{bmatrix}a & b^*\\-b & a^*\end{bmatrix},\quad a,b\in\mathbb{C},
\label{eq:G-form}
\end{equation}
comparing with \cite{BilmanBW19} one sees that if $M\in\mathbb{Z}_{>0}$, then $q(x,t;\mathbf{G},M)$ is a multiple-pole soliton solution of \eqref{eq:NLS-ZBC} of order $2M$, which satisfies quite different boundary conditions than do rogue waves.  In fact, it is easy to see directly that the jump matrix in Riemann-Hilbert Problem~\ref{rhp:rogue-wave-reformulation} is single-valued meromorphic if only $M\in\tfrac{1}{2}\mathbb{Z}$, with poles of order $2|M|$ at $\lambda=\pm\ii$.  This immediately allows the problem to be reduced to the solution of a finite-dimensional linear system for all such $M$, and hence $q(x,t;\mathbf{G},\tfrac{1}{2}k)$ is a $k^\mathrm{th}$ order pole soliton solution for $k\in\mathbb{Z}_{\ge 0}$.  
In this way, we see that as $M>0$ continuously increases, $q(x,t;\mathbf{G},M)$ remains a solution of the same equation \eqref{eq:NLS-ZBC} that satisfies zero boundary conditions for $M\in\tfrac{1}{2}\mathbb{Z}_{\ge 0}$ and satisfies constant-amplitude nonzero boundary conditions for $M\in\tfrac{1}{2}\mathbb{Z}_{\ge 0}+\tfrac{1}{4}$.  This proves the following.
\begin{theorem}
Let $\mathbf{G}$ be a $2\times 2$ constant matrix with $\det(\mathbf{G})=1$ and $\mathbf{G}=\sigma_2\mathbf{G}^*\sigma_2$, and let $M>0$ be arbitrary.  Then the function $q(x,t;\mathbf{G},M)$ given in terms of the well-defined solution of Riemann-Hilbert Problem~\ref{rhp:rogue-wave-reformulation} by \eqref{eq:q-define} is a global solution of the focusing nonlinear Schr\"odinger equation in the form \eqref{eq:NLS-ZBC} that is a rogue wave of order $k\in\mathbb{Z}_{\ge 0}$ whenever $M=\tfrac{1}{2}k+\tfrac{1}{4}$ and that is a multiple-pole soliton solution of order $k$ for $k\in\mathbb{Z}_{\ge 0}$ whenever $M=\tfrac{1}{2}k$.
\label{thm:solution-family}
\end{theorem}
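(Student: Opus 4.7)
The plan is to organize the proof into three steps, most of which formalize observations already made in the preceding discussion: (i) unique solvability of Riemann-Hilbert Problem \ref{rhp:rogue-wave-reformulation} for every admissible pair $(\mathbf{G},M)$, (ii) the fact that the extracted function $q(x,t;\mathbf{G},M)$ solves \eqref{eq:NLS-ZBC} via the dressing method, and (iii) the identifications of $q$ with rogue waves and multiple-pole solitons at the two distinguished half-integer sequences of values of $M$.

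For step (i), I would verify that the jump matrix on $\Sigma_\circ$ has unit determinant and satisfies the Schwarz symmetry appropriate for Zhou's vanishing lemma \cite{Zhou89}. The required symmetries follow from the reality of the coefficients of $B(\lambda)$, the reality of $(x,t)$, the Schwarz symmetry of the contour $\Sigma_\circ$, and the hypotheses $\det(\mathbf{G})=1$ and $\mathbf{G}=\sigma_2\mathbf{G}^*\sigma_2$. For step (ii), I would conjugate $\mathbf{P}$ by $\ee^{\ii(\lambda x+\lambda^2t)\sigma_3}$ to obtain a matrix whose jump on $\Sigma_\circ$ is independent of $(x,t)$. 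The logarithmic derivatives of this conjugated matrix in $x$ and $t$ then extend to entire functions of $\lambda$ with polynomial growth at infinity, and by Liouville's theorem they must coincide with the operators of the AKNS Lax pair whose zero-curvature condition is exactly \eqref{eq:NLS-ZBC}, with the potential read off as $q(x,t;\mathbf{G},M)$ from the coefficient of $\lambda^{-1}$ in the off-diagonal expansion of $\mathbf{P}(\lambda)$.

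For step (iii), the case $M=\tfrac{1}{2}k+\tfrac{1}{4}$ with $\mathbf{G}=\mathbf{Q}^{-s}$ is handled by reversing the substitution \eqref{eq:M-P-bulk}: the resulting matrix satisfies Riemann-Hilbert Problem \ref{rhp:rogue-wave} and, by uniqueness, must agree with $\mathbf{M}^{(k)}$, so that comparison of \eqref{eq:psi-from-M} with \eqref{eq:q-define} (after carefully expanding the exterior factor in \eqref{eq:M-P-bulk} as $\lambda\to\infty$) yields $q=\ee^{\ii t}\psi_k$, the $k$th-order rogue wave in the gauge of \eqref{eq:NLS-ZBC}. For $M=\tfrac{1}{2}k\in\tfrac{1}{2}\mathbb{Z}_{\ge 0}$, the identity \eqref{eq:omega-fourth-power} implies $B(\lambda)^{M\sigma_3}$ is meromorphic on $\mathbb{C}\setminus\{\pm\ii\}$ with poles/zeros of order $2M$ at $\pm\ii$, so the jump matrix extends meromorphically across $\Sigma_\circ$ and the Riemann-Hilbert problem collapses to a pole-type problem with residue conditions at $\lambda=\pm\ii$. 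Matching the parametrization \eqref{eq:G-form} of $\mathbf{G}$ with the soliton spectral data in \cite{BilmanBW19} then identifies $q(x,t;\mathbf{G},\tfrac{1}{2}k)$ as the $k$th-order multiple-pole soliton.

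The main obstacle is expected to be the bookkeeping in step (iii): correctly tracking the gauge factor between \eqref{eq:NLS} and \eqref{eq:NLS-ZBC}, and verifying that the various diagonal factors in \eqref{eq:M-P-bulk} contribute no unwanted terms to the $\lambda^{-1}$ coefficient extracting $q$. The solvability step (i) and the dressing step (ii) are essentially routine for problems of this type, although the non-selfadjoint character of the jump matrix for general $\mathbf{G}$ means that the vanishing lemma must be invoked in its abstract form rather than a scalar special case.
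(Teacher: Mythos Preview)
Your proposal is correct and follows essentially the same approach as the paper: the paper's proof consists of the discussion in the paragraphs immediately preceding the theorem statement, which invokes Zhou's vanishing lemma for solvability, the dressing method for the NLS equation, the substitution \eqref{eq:M-P-bulk} linking $\mathbf{P}$ to $\mathbf{M}^{(k)}$ for the rogue-wave identification, and the meromorphy of $B(\lambda)^{M\sigma_3}$ for $M\in\tfrac{1}{2}\mathbb{Z}$ together with comparison to \cite{BilmanBW19} for the soliton identification. Your plan simply fills in more detail (particularly for the dressing argument in step (ii)) than the paper provides, but the structure and key ingredients are the same.
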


This strikes us as a remarkable result.  For instance, it asserts that in a precise sense the famous Peregrine solution $\psi_1(x,t)$ can be regarded as a soliton of order $\tfrac{3}{2}$, because $M=\tfrac{3}{4}$ (Peregrine) lies halfway between $M=\tfrac{1}{2}$ (stationary simple-pole soliton for zero boundary conditions) and $M=1$ (stationary double-pole soliton for zero boundary conditions).  
For values of $M\ge 0$ corresponding to neither solitons ($M\in\tfrac{1}{2}\mathbb{Z}_{\ge 0}$) nor rogue waves ($M\in\tfrac{1}{2}\mathbb{Z}_{\ge 0}+\tfrac{1}{4}$), $q(x,t;\mathbf{G},M)$ satisfies the same nonzero boundary conditions as $|x|\to\infty$ as in the rogue-wave case, \emph{except} that the decay to the background is so slow that the difference is not even in $L^2(\mathbb{R})$; by contrast it is well-known that for rogue waves the difference is in $L^1(\mathbb{R})$.  We will give the proof of this slow decay in a subsequent paper devoted to the study of the solutions for general $M\ge 0$.

Despite the fact that the boundary conditions are quite different, because the solitons and rogue waves have now been placed within the same family of solutions, they have certain properties in common.  From \cite{BilmanB19,BilmanLM20} it is known that both types of solutions exhibit the same asymptotic behavior in the large-$M$ limit near the peak amplitude point.  Choosing $\Sigma_\circ$ in Riemann-Hilbert Problem~\ref{rhp:rogue-wave-reformulation} to be a circle of radius $M$ and scaling $(x,t)$ by $x=M^{-1}X$ and $t=M^{-2}T$ produces a limiting jump condition in the $\Lambda=M^{-1}\lambda$ plane that shows immediately that the same limiting behavior near the peak in terms of the rogue wave of infinite order is also valid in the limit $M\to\infty$ along any sequence, so the ``near field'' behavior is universal with respect to $M$.  We will show in this paper that this common asymptotic behavior for the whole solution family extends to a large region of the $(x,t)$-plane, expanding in size as $M\to+\infty$ at a rate proportional to $M$.  Within this region,
the large-$M$ asymptotic behavior of $q(x,t;\mathbf{G},M)$ is rather insensitive to any particular choice of specific unbounded and increasing sequence $\{M_k\}_{k=1}^\infty$.  On the other hand, in the complementary region one sees qualitatively different asymptotic behavior along different sequences.  See Figure~\ref{fig:roguewaves-and-solitons}.

\begin{figure}[h]
\begin{center}
\phantom{!}\hfill\includegraphics[width=0.4\linewidth]{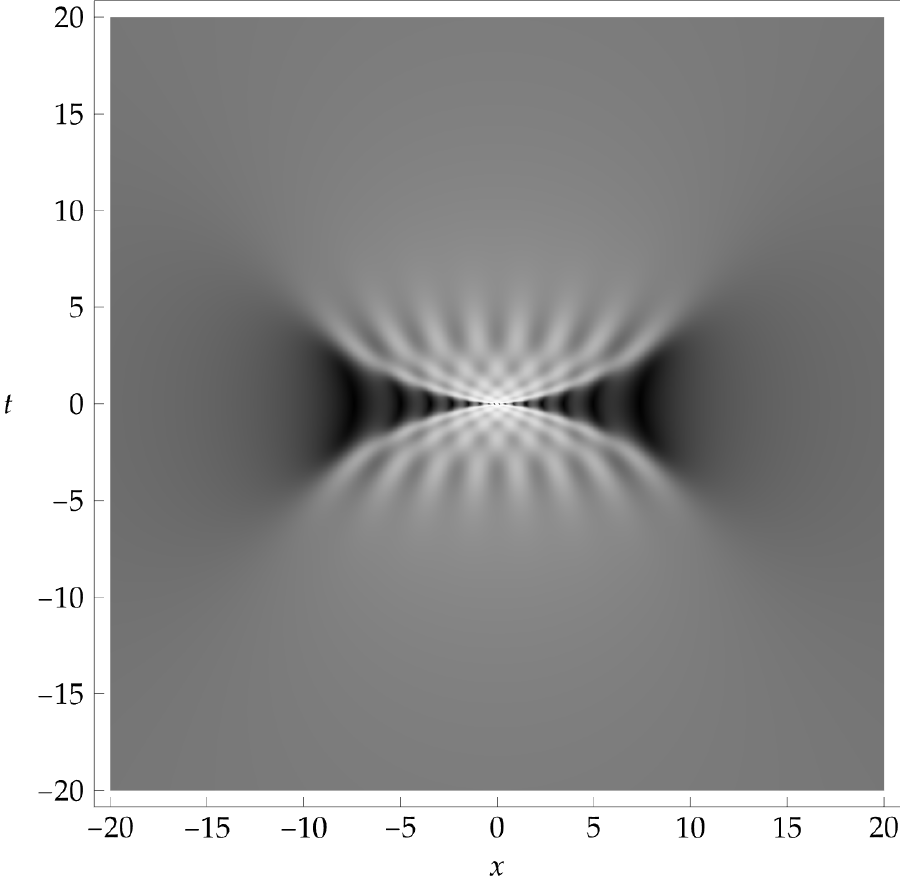}\hfill%
\includegraphics[width=0.4\linewidth]{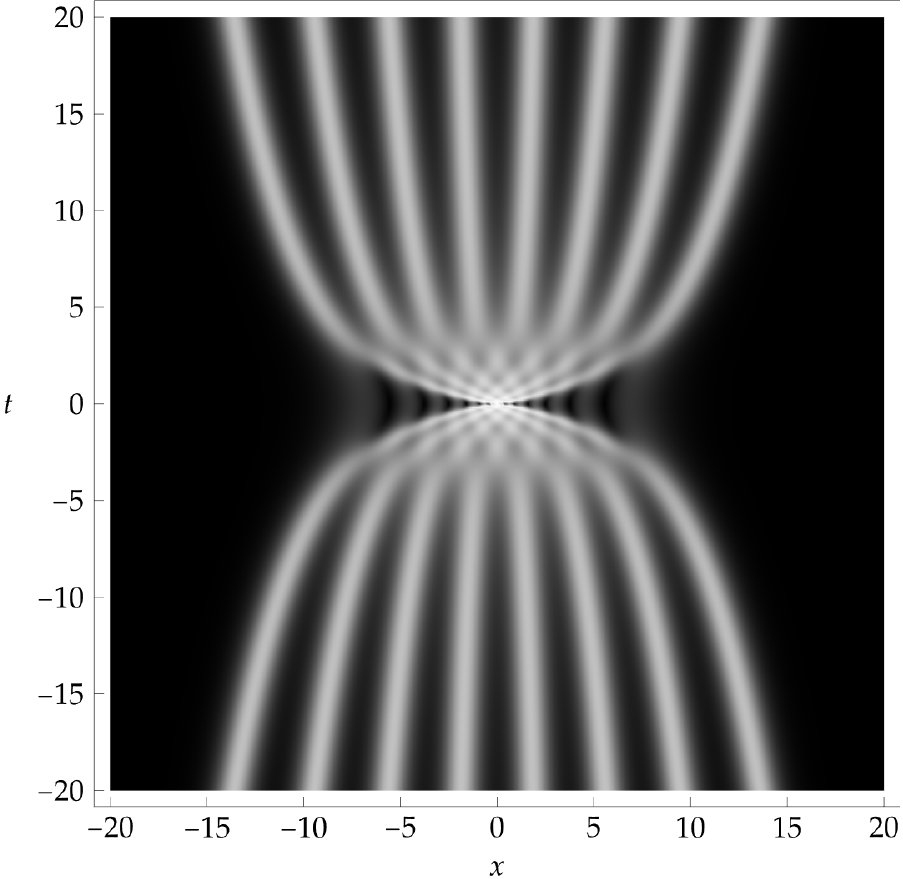}\hfill\phantom{!}\\
\phantom{!}\hfill\includegraphics[width=0.4\linewidth]{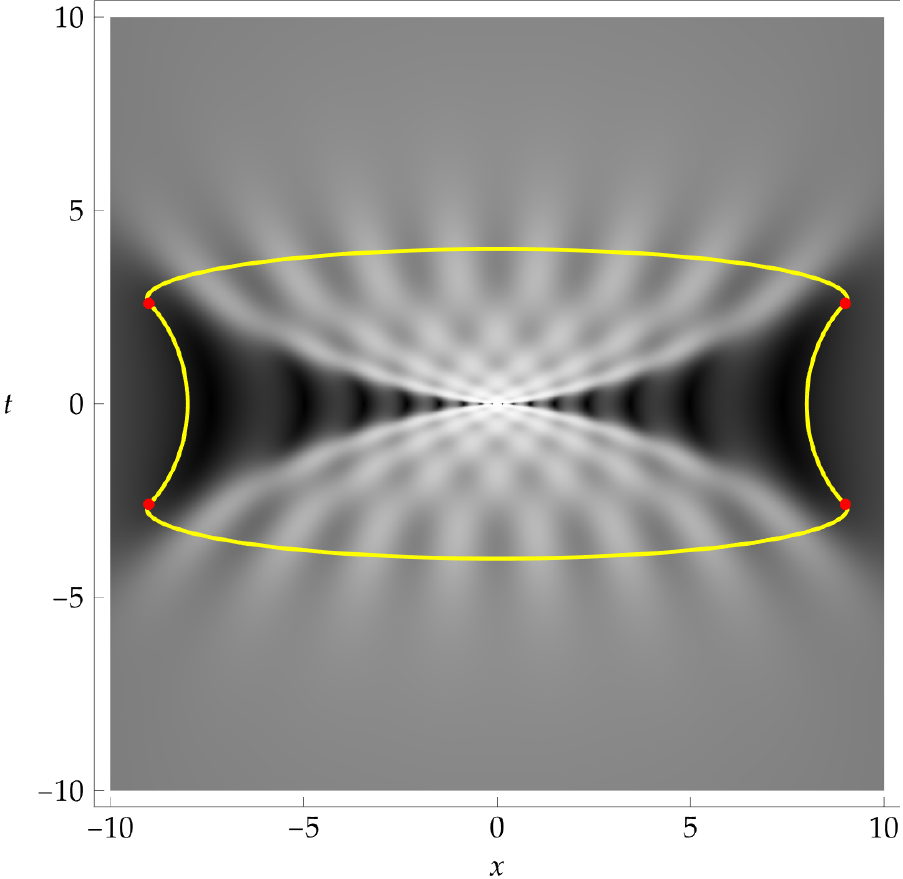}\hfill%
\includegraphics[width=0.4\linewidth]{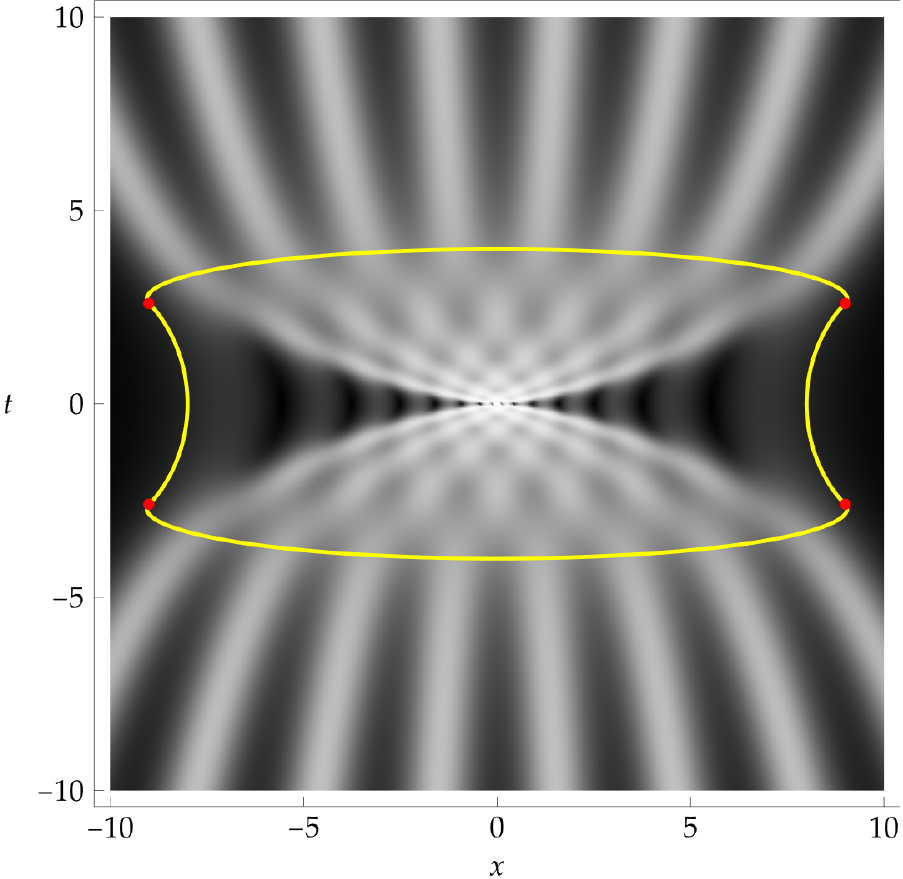}\hfill\phantom{!}
\end{center}
\caption{Top row:  amplitude density plots of the fundamental rogue wave of order $k=8$ (left) and a multiple-pole soliton of order $k=8$ (right).  Bottom row:  as in the top row, but closeup plots showing the region (bounded by yellow curves with red vertices) on which we prove common asymptotic behavior as $k\to+\infty$ for both types of solutions (see Theorem~\ref{thm:channels} and Theorem~\ref{thm:shelves} below).}
\label{fig:roguewaves-and-solitons}
\end{figure}

\subsection{Symmetry assumptions}
The function $q(x,t;\mathbf{G},M)$ is obviously unaffected by any transformation of $\mathbf{P}(\lambda;x,t,\mathbf{G},M)$ within the interior of $\Sigma_\circ$; furthermore, it is easy to see that the form of the jump condition and the symmetry property $\mathbf{G}^*=\sigma_2\mathbf{G}\sigma_2$ are both preserved if the latter transformation is taken to be right-multiplication by $w^{\sigma_3}$ where $w$ is any constant with $|w|=1$.  Thus one sees easily that there is no loss of generality in assuming $a>0$ in the form \eqref{eq:G-form}.  Under this assumption, there are only two matrices $\mathbf{G}$ that build in additional useful symmetries, namely $\mathbf{G}=\mathbf{Q}$ and $\mathbf{G}=\mathbf{Q}^{-1}$.
\begin{proposition}
For all $M>0$ and arbitrary sign $s=\pm 1$,
\begin{equation}
q(-x,t;\mathbf{Q}^{-s},M)=q(x,t;\mathbf{Q}^{-s},M)\quad\text{and}\quad
q(x,-t;\mathbf{Q}^{-s},M)=q(x,t;\mathbf{Q}^{-s},M)^*.
\label{eq:q-symmetries}  
\end{equation}
\label{prop:symmetry}
\end{proposition}
The proof is an elementary application of the representation of $q(x,t;\mathbf{G},M)$ via Riemann-Hilbert Problem~\ref{rhp:rogue-wave-reformulation} and can be found in Appendix~\ref{A:Proofs}.
The specific choice of $\mathbf{G}=\mathbf{Q}^{-s}$ with $s=\pm 1$ in Riemann-Hilbert Problem~\ref{rhp:rogue-wave-reformulation} makes the rogue wave (for $M=\tfrac{1}{2}k+\tfrac{1}{4}$ and $k\in\mathbb{Z}_{\ge 0}$ with $s=(-1)^k$) or soliton (for $M=\tfrac{1}{2}k$ with $k\in\mathbb{Z}_{\ge 0}$ and $s=\pm 1$ arbitrary) ``fundamental''.  For rogue waves the correlation of the sign $s$ with the order $k$ is important\footnote{The alternation of sign in the exponent of $\mathbf{Q}^{-s}$ is necessary to achieve the correct boundary condition $\psi_k(x,t)\to 1$ as $(x,t)\to\infty$.  Using $\mathbf{Q}^{-1}=\ii^{\sigma_3}\mathbf{Q}\ii^{-\sigma_3}$ it is easy to see that exchanging $\mathbf{Q}$ for $\mathbf{Q}^{-1}$ at fixed $M=\tfrac{1}{2}k+\tfrac{1}{4}$ corresponds to the transformation $\mathbf{P}\mapsto \ii^{\sigma_3}\mathbf{P}\ii^{-\sigma_3}$ which implies via \eqref{eq:q-define} that $q\mapsto -q$ and hence yields a rogue wave solution satisfying $\psi_k(x,t)\to -1$ as $(x,t)\to\infty$.} to fix the boundary conditions. 

This result allows us to assume, as we do for the rest of this paper, that $x\ge 0$ and $t\ge 0$.  

\subsection{The far-field regime}
A more important reason for characterizing rogue waves and solitons via Riemann-Hilbert Problem~\ref{rhp:rogue-wave-reformulation} is that its jump condition is well-suited for steepest-descent asymptotic analysis in the large $M>0$ regime where $x$ and $t$ are proportional to $M$.  Indeed, introducing rescaled variables by setting
\begin{equation}
\chi\defeq \frac{x}{M}\quad \text{and} \quad \tau\defeq \frac{t}{M},
\end{equation}
and then defining 
\begin{equation}
\vartheta(\lambda;\chi,\tau)\defeq\chi\lambda + \tau\lambda^2 +\ii\log\left(B(\lambda)\right),
\label{eq:vartheta}
\end{equation}
in which the logarithm is taken to be the principal branch (i.e., $\log(B(\lambda))$ is analytic for $\lambda\in\mathbb{C}\setminus\Sigma_\mathrm{c}$ and $\log(B(\lambda))\to 0$ as $\lambda\to\infty$), we set
\begin{equation}
\mathbf{S}(\lambda;\chi,\tau,\mathbf{G},M)\defeq\mathbf{P}(\lambda;M\chi,M\tau,\mathbf{G},M).
\label{eq:S-from-P}
\end{equation}
Then the jump condition for $\mathbf{S}(\lambda;\chi,\tau,\mathbf{G},M)$ on the jump contour $\Sigma_\circ$ reads
\begin{equation}
\mathbf{S}_+(\lambda;\chi,\tau,\mathbf{G},M)=\mathbf{S}_-(\lambda;\chi,\tau,\mathbf{G},M)\ee^{-\ii M\vartheta(\lambda;\chi,\tau)\sigma_3}\mathbf{G}\ee^{\ii M\vartheta(\lambda;\chi,\tau)\sigma_3},\quad\lambda\in\Sigma_\circ.
\label{eq:S-jump}
\end{equation}
Thus the large parameter $M\gg 1$ enters only via an exponential conjugation.  In general, a solution $q(x,t;\mathbf{G},M)$ of \eqref{eq:NLS-ZBC} is obtained from $\mathbf{S}(\lambda;\chi,\tau,\mathbf{G},M)$ via
\begin{equation}
q(M\chi,M\tau;\mathbf{G},M)=2\ii\lim_{\lambda\to\infty} \lambda S_{12}(\lambda;\chi,\tau,\mathbf{G},M).
\label{eq:q-S}
\end{equation}
To obtain the fundamental rogue wave of order $k$ we tie $\mathbf{G}$ and $M$ to $k$ via \eqref{eq:G-and-M-RogueWaves} and include an additional exponential factor:
\begin{equation}
\psi_k(M\chi,M\tau)=2\ii\ee^{-\ii M\tau}\lim_{\lambda\to\infty}\lambda S_{12}(\lambda;\chi,\tau,\mathbf{Q}^{-s},M),\quad s=(-1)^k,\quad M=\tfrac{1}{2}k+\tfrac{1}{4}.
\label{eq:psi-k-S}
\end{equation}
The regime in which the independent variables $(x,t)$ are proportional to the order $k$ (or more generally, to the parameter $M$) when the latter is large is called the \emph{far-field regime}.  The near-field regime where $x$ and $t$ are small when $k$ or $M$ is large was studied for high-order multiple-pole solitons in \cite{BilmanB19} and for fundamental rogue waves in \cite{BilmanLM20}.  It is important to observe that the near-field and far-field regimes \emph{do not} actually overlap.  There is, however, no expectation of any new phenomena occurring in the intermediate region; the near-field and far-field asymptotic formul\ae\ extend consistently to an expected overlap domain, but the conclusion of common validity over such a domain does not follow from the proofs we will give below.

\subsection{The basic exponent function $\vartheta(\lambda;\chi,\tau)$ and the domain $\channels$}
\label{sec:basic-exponent}
The exponent function $\vartheta(\lambda;\chi,\tau)$ has been studied before in the context of high-order multiple-pole soliton solutions of the focusing nonlinear Schr\"odinger equation \cite{BilmanBW19}; in the notation of that reference, we have $\varphi(\lambda;\chi,\tau,\ii)=\ii\vartheta(\lambda;\chi,\tau)$.  In particular, it is known that $\vartheta(\lambda;\chi,\tau)$ has simple critical points except when $(\chi,\tau)\in\mathbb{R}_{\ge 0}\times\mathbb{R}_{\ge 0}$ are related by the equation
\begin{equation}
16\tau^4 + (8\chi^2-72\chi+108)\tau^2 +\chi^4-2\chi^3=0.
\label{eq:boundary-curve}
\end{equation}
Clearly we can only have $\tau=0$ for $\chi\ge 0$ if $\chi=0$ or $\chi=2$.  
Solving for $\tau^2$ gives
\begin{equation}
\tau^2 
=\tfrac{1}{8}\left[-2\chi^2+18\chi-27\pm 2(9-4\chi)^\frac{3}{2}\right].
\label{eq:tau-squared}
\end{equation}
Reality of $\tau^2$ for $\chi\ge 0$ requires $0\le\chi\le\tfrac{9}{4}$.  If $2\le\chi\le\tfrac{9}{4}$, then both solutions for $\tau^2$ are non-negative.  The values of $\tau^2$ coincide only at the upper endpoint $\chi=\tfrac{9}{4}$ with common value $\tau^2=\tfrac{27}{64}$, and at the lower endpoint $\chi=2$ the smaller value of $\tau^2$ changes sign.  On the interval $0\le\chi< 2$, only the branch of $\tau^2$ with the ``$+$'' sign in \eqref{eq:tau-squared} is nonnegative (and strictly positive except at the lower endpoint $\chi=0$).  Counting with multiplicity, $\vartheta(\lambda;\chi,\tau)$ has three critical points for $\tau\neq 0$, two critical points for $\tau=0$ and $\chi>0$, and no critical points for $\tau=\chi=0$.  The critical points $\lambda$ satisfy the cubic equation
\begin{equation}
2\tau\lambda^3+\chi\lambda^2 + 2\tau\lambda +\chi-2=0,
\end{equation}
and having real coefficients the roots are in general either all real or form a conjugate pair and an isolated real root.  However, in the special case that $\tau=0$ and $0\le\chi\le 2$, there are only two roots, and the critical points are exactly the opposite real numbers
\begin{equation}
\lambda = \pm\sqrt{\frac{2}{\chi}-1},\quad 0\le\chi\le 2,\quad \tau=0.
\label{eq:tau-zero-critical-points}
\end{equation}
It follows that the graphs of the positive square roots of the positive branches of \eqref{eq:tau-squared} border a bounded and relatively open subset $\channels$ of the quadrant $(\chi,\tau)\in\mathbb{R}_{\ge 0}\times\mathbb{R}_{\ge 0}$ such that $(\chi,\tau)\in \channels$ implies that all critical points of $\vartheta(\lambda;\chi,\tau)$ are real and distinct.  In \cite{BilmanBW19} $\channels$ is called the ``algebraic-decay region''.  The same graphs border on the exterior an unbounded and relatively open subset of $\mathbb{R}_{\ge 0}\times\mathbb{R}_{\ge 0}$ on which $\vartheta(\lambda;\chi,\tau)$ has a conjugate pair of critical points with nonzero imaginary part.  The boundary of $\channels$ (shown with a red curve in Figure~\ref{fig:RegionsPlot} below) defined by the relation \eqref{eq:boundary-curve} or \eqref{eq:tau-squared} is smooth except for one point $(\chi^\sharp,\tau^\sharp)$ with coordinates
\begin{equation}
(\chi^\sharp,\tau^\sharp)\defeq\left(\tfrac{9}{4},\tfrac{3\sqrt{3}}{8}\right).
\label{eq:corner-point}
\end{equation}

Although it plays no role in the analysis of high-order fundamental rogue waves, on the exterior of $\channels$ there is a distinguished curve emanating from $(\chi^\sharp,\tau^\sharp)$ that we denote by $\ell_\mathrm{sol}$ along which the level set $\mathrm{Re}(\ii\vartheta(\lambda;\chi,\tau))=0$ is connected. This curve is determined by the condition 
\begin{equation}
\ell_\mathrm{sol}:  \mathrm{Re}\left(\int_\Gamma\ii\vartheta'(\lambda;\chi,\tau)\,\dd\lambda\right)=0,
\label{eq:DegenerateBoutroux}
\end{equation}
where $\Gamma$ is any Schwarz-symmetric contour avoiding $\lambda=\pm\ii$ and having endpoints equal to the complex-conjugate critical points of $\vartheta(\lambda;\chi,\tau)$.  The curve $\ell_\mathrm{sol}$ is shown with a black dotted line in Figure~\ref{fig:RegionsPlot} below; it is important in the asymptotic description of $q(x,t;\mathbf{G},M)$ for large $M\not\in \tfrac{1}{2}\mathbb{Z}_{\ge 0}+\tfrac{1}{4}$.

\subsection{Genus-zero modification of $\vartheta(\lambda;\chi,\tau)$ and the regions $\shelves$ and $\exterior$}
\label{sec:h-intro}
When $(\chi,\tau)\in (\mathbb{R}_{\ge 0}\times\mathbb{R}_{\ge 0})\setminus\overline{\channels}$ it will be necessary to modify the phase $\vartheta(\lambda;\chi,\tau)$ with a genus-zero $g$-function.  Let $\Sigma_g$ be a Schwarz-symmetric sub-arc of 
the jump contour for $\mathbf{S}(\lambda;\chi,\tau,\mathbf{G},M)$ with complex-conjugate endpoints $\lambda_0(\chi,\tau)=A(\chi,\tau)+\ii B(\chi,\tau)$ and $\lambda_0(\chi,\tau)^*=A(\chi,\tau)-\ii B(\chi,\tau)$, and let $g(\lambda;\chi,\tau)$ be bounded and analytic for $\lambda\in\mathbb{C}\setminus\Sigma_g$ with $g(\lambda;\chi,\tau)\to 0$ as $\lambda\to\infty$.  
Consider the matrix $\mathbf{T}(\lambda;\chi,\tau,\mathbf{G},M)$ defined in terms of $g$ and $\mathbf{S}(\lambda;\chi,\tau,\mathbf{G},M)$
by the formula
\begin{equation}
\mathbf{T}(\lambda;\chi,\tau,\mathbf{G},M)\defeq\mathbf{S}(\lambda;\chi,\tau,\mathbf{G},M)\ee^{\ii Mg(\lambda;\chi,\tau)\sigma_3}.
\label{eq:T-to-S}
\end{equation}
Then using \eqref{eq:S-jump}, we see that on the jump contour we have
\begin{equation}
\mathbf{T}_+(\lambda;\chi,\tau,\mathbf{G},M)=\mathbf{T}_-(\lambda;\chi,\tau,\mathbf{G},M)\ee^{-\ii Mh_-(\lambda;\chi,\tau)\sigma_3}
\mathbf{G}
\ee^{\ii Mh_+(\lambda;\chi,\tau)\sigma_3},
\label{eq:T-jump}
\end{equation}
where
\begin{equation}
h(\lambda;\chi,\tau)\defeq \vartheta(\lambda;\chi,\tau)+g(\lambda;\chi,\tau),\quad\lambda\in\mathbb{C}\setminus(\Sigma_g\cup\Sigma_\mathrm{c})
\label{eq:h-define}
\end{equation}
is the modification of $\vartheta(\lambda;\chi,\tau)$ referred to in the section title.  
We impose the additional condition that the sum of boundary values $h_+(\lambda;\chi,\tau)+h_-(\lambda;\chi,\tau)$ is independent of $\lambda\in\Sigma_g$, which simplifies the jump condition \eqref{eq:T-jump} for $\lambda\in\Sigma_g\subset\Sigma_\circ$.  Thus, $h'(\lambda;\chi,\tau)$ is analytic for $\lambda\in\mathbb{C}\setminus(\Sigma_g\cup\{\ii,-\ii\})$, satisfies $h_+'(\lambda;\chi,\tau)+h_-'(\lambda;\chi,\tau)=0$ on $\Sigma_g$, has simple poles inherited from $\vartheta'(\lambda;\chi,\tau)$ at $\lambda=\pm\ii$ with
\begin{equation}
\mathop{\mathrm{Res}}_{\lambda=\pm \ii}h'(\lambda;\chi,\tau)=\pm \ii,
\label{eq:hprime-residues}
\end{equation}
and has the large-$\lambda$ expansion
\begin{equation}
h'(\lambda;\chi,\tau)=2\tau\lambda+\chi + O(\lambda^{-2}),\quad\lambda\to\infty.
\label{eq:hprime-expansion}
\end{equation}
Letting $R(\lambda;\chi,\tau)$ be the analytic function for $\lambda\in\mathbb{C}\setminus\Sigma_g$ satisfying 
\begin{equation}
\begin{split}
R(\lambda;\chi,\tau)^2&=(\lambda-\lambda_0(\chi,\tau))(\lambda-\lambda_0(\chi,\tau)^*)\\
&=(\lambda-A(\chi,\tau))^2+B(\chi,\tau)^2,\quad\text{and $R(\lambda;\chi,\tau)=\lambda + O(1)$ as $\lambda\to\infty$}, 
\end{split}
\label{eq:R-define}
\end{equation}
it follows that $h'(\lambda;\chi,\tau)$ necessarily has the form
\begin{equation}
h'(\lambda;\chi,\tau)=\frac{2\tau\lambda^2+u(\chi,\tau)\lambda+v(\chi,\tau)}{\lambda^2+1}R(\lambda;\chi,\tau),
\label{eq:hprime-formula}
\end{equation}
where $A(\chi,\tau)\in\mathbb{R}$, $B(\chi,\tau)^2>0$, $u(\chi,\tau)\in\mathbb{R}$, and $v(\chi,\tau)\in\mathbb{R}$ are to be determined (uniquely, see Section~\ref{sec:g-function}) so that $h'(\lambda;\chi,\tau)$ has the desired residues \eqref{eq:hprime-residues} and large-$\lambda$ expansion \eqref{eq:hprime-expansion}.  This determination also places conditions on the location of the branch cut $\Sigma_g$ relative to the points $\lambda=\pm\ii$; see Remark~\ref{rem:Sigma_g}.

It turns out that the boundary curve \eqref{eq:boundary-curve} reappears in the analysis of the modified phase function $h(\lambda;\chi,\tau)$ as the condition that $B(\chi,\tau)^2=0$.  
In other words,
the roots of $R(\lambda;\chi,\tau)^2$ form a well-defined conjugate pair for all $(\chi,\tau)$ in the part of the first quadrant complementary to the domain $\channels$ on which the unmodified phase $\vartheta(\lambda;\chi,\tau)$ has three real critical points, and both $A(\chi,\tau)$ and $B(\chi,\tau)>0$ are real analytic functions of $(\chi,\tau)\in (\mathbb{R}_{\ge 0}\times\mathbb{R}_{\ge 0})\setminus\overline{\channels}$.  It is easy to show from the construction of $h'(\lambda;\chi,\tau)$ in Section~\ref{sec:g-function} that if one introduces polar coordinates via $\chi=r\cos(\theta)$ and $\tau=r\sin(\theta)$, then $A(\chi,\tau)\pm\ii B(\chi,\tau)\to\pm\ii$ as $r\to\infty$ uniformly with respect to $\theta$.  Also, $A(\chi,\tau)\pm\ii B(\chi,\tau)\to\infty$ as $(\chi,\tau)\to 0$ from the exterior of $\channels$.

In the study of high-order multiple-pole soliton solutions of the focusing nonlinear Schr\"odinger equation carried out in \cite{BilmanBW19}, the exterior of $\channels$ is further divided into three sub-regions, two unbounded and one bounded, on each of which a different modified phase function is needed (trivial modification, genus zero as described above, and genus one).  The rogue wave problem is simpler in that the single genus-zero phase function $h(\lambda;\chi,\tau)$ suffices to control the large-$k$ asymptotics throughout the exterior of $\channels$; however on the bounded component of the exterior identified in \cite{BilmanBW19} (where it is called the ``non-oscillatory region'' and which we denote by $\shelves$) (i) an additional $O(1)$ contribution to the phase appears in the leading term and (ii) the higher-order correction takes a different form than on the remaining unbounded component of the exterior, which we denote by $\exterior$.  The latter effect is observable in plots for finite order $k$.  The domain $\shelves$ abuts the domain $\channels$ along
the curve given by \eqref{eq:tau-squared} taken with the ``$+$'' sign, and the other part of its boundary in the first quadrant consists of a curve connecting the point $(\chi^\sharp,\tau^\sharp)$ defined in \eqref{eq:corner-point} with $(0,1)$.  While on the interior of $\shelves$ the roots of the quadratic factor in the numerator of \eqref{eq:hprime-formula} are real and distinct, denoted by $a(\chi,\tau)<b(\chi,\tau)$, the quadratic discriminant vanishes on this second boundary curve, which is shown with a solid blue line in Figure~\ref{fig:RegionsPlot}.

\begin{figure}[h]
\begin{center}
\includegraphics[width=0.5\linewidth]{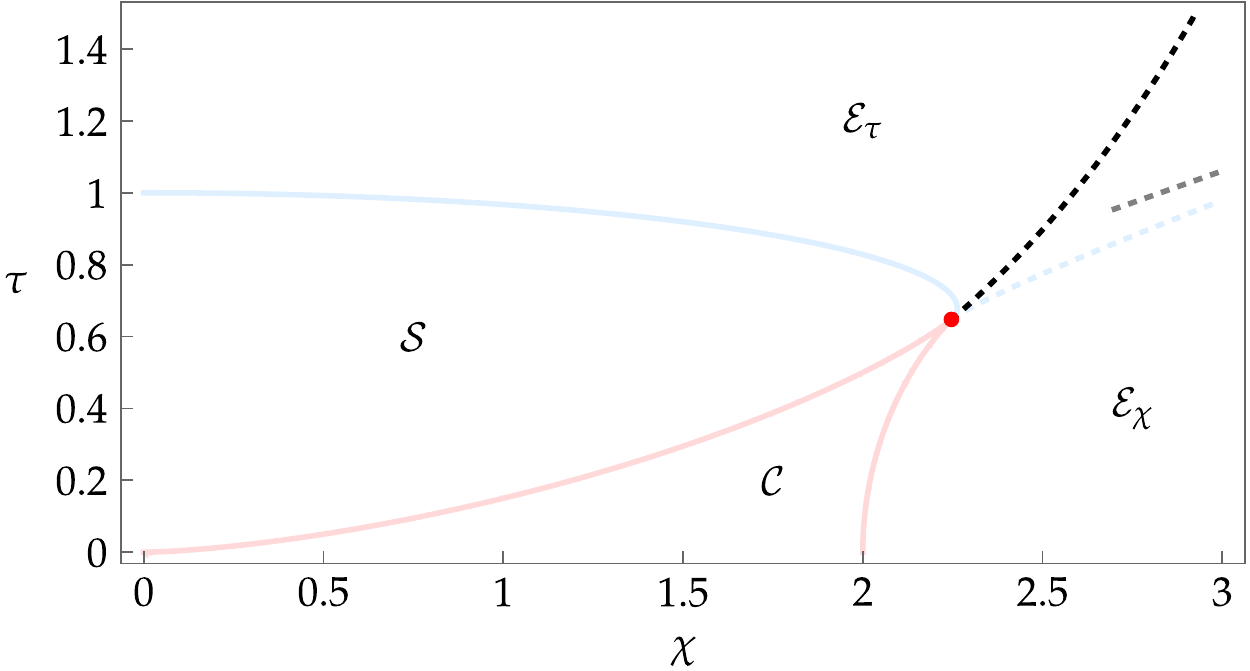}
\end{center}
\caption{The first quadrant in the $(\chi,\tau)$-plane and the regions $\channels$, $\shelves$, and $\exterior$ (which is further divided into $\exterior_\tau$ and $\exterior_\chi$).  The red curve is given by \eqref{eq:boundary-curve} or \eqref{eq:tau-squared}.  The distinguished point $(\chi^\sharp,\tau^\sharp)$ defined by \eqref{eq:corner-point} is indicated with a red dot.  Along the solid and dotted blue curves, $h'(\lambda;\chi,\tau)$ given by \eqref{eq:hprime-formula} has a (real) double root.  Important phase transitions for high-order fundamental rogue waves occur along the red curve and solid blue curve; the unbounded dotted blue curve in $\exterior$ separating $\exterior_\tau$ from $\exterior_\chi$ is of mere technical significance in our analysis.  The dotted gray line $\chi=\sqrt{8}\tau$ is an asymptote for large $\chi$ to the unbounded branch (see Section~\ref{sec:critical-points}).  The curve $\ell_\mathrm{sol}$ emanating from $(\chi^\sharp,\tau^\sharp)$ into $\exterior_\tau$ and shown with a black dotted line is of no importance at all for high-order fundamental rogue waves but it is crucial in the study of high-order multiple-pole solitons and of secondary importance in the study of $q(x,t;\mathbf{G},M)$ for other solutions characterized by Riemann-Hilbert Problem~\ref{rhp:rogue-wave-reformulation}.}
\label{fig:RegionsPlot}
\end{figure}

Exiting $\shelves$ through that curve, the roots of the quadratic factor become a complex-conjugate pair.  There is one additional unbounded curve emanating from $(\chi^\sharp,\tau^\sharp)$ into the exterior (denoted $\exterior$) of $\overline{\channels\cup\shelves}$ along which the discriminant vanishes again.  Crossing this curve (shown as a dotted blue curve in Figure~\ref{fig:RegionsPlot}), the roots of the quadratic factor become real once again.  We refer to the two unbounded components of $\exterior$ separated by this curve as $\exterior_\chi$ (the unbounded component abutting the positive $\chi$-axis for $\chi>2$) and $\exterior_\tau$ (the unbounded component abutting the positive $\tau$-axis for $\tau>1$).
The roots of the quadratic factor $2\tau\lambda^2+u(\chi,\tau)\lambda+v(\chi,\tau)$ appearing in \eqref{eq:hprime-formula} are real when $(\chi,\tau)\in \exterior_\chi\cup \shelves$ and form a complex-conjugate pair when $(\chi,\tau)\in \exterior_\tau$.

\begin{remark}
Although the bounded domain $\shelves$ coincides exactly with the ``non-oscillatory region'' identified in \cite{BilmanBW19}, the curve separating $\exterior_\tau$ from $\exterior_\chi$ is not the same as the curve $\ell_\mathrm{sol}$ separating the two unbounded components of $\exterior$ (the ``oscillatory region'' and the ``exponential-decay region'') identified in \cite{BilmanBW19} and relevant for the study of $q(x,t;\mathbf{G},M)$ for large $M\in\tfrac{1}{2}\mathbb{Z}_{\ge 0}$.  The latter curve is shown with a black dotted line in Figure~\ref{fig:RegionsPlot}.
\end{remark}

A discussion of the qualitative features of high-order fundamental rogue waves can be found in \cite[Section 1.1]{BilmanLM20}.  Near the origin in the $(x,t)$-plane one observes a narrow wedge-shaped region centered on each half of the $x$-axis containing small-amplitude oscillations and larger complementary regions centered on each half of the $t$-axis containing waves of higher amplitude.  In \cite[Section 1.1]{BilmanLM20} these types of regions near the origin were called ``channels'' and ``shelves'' respectively.  The channels and shelves were proven in \cite{BilmanLM20} to have significance for the asymptotic behavior of the rogue wave of infinite order.  In this paper we show that the channels and shelves extend also to the macroscopic regime of bounded $(\chi,\tau)$ as $\channels$ and $\shelves$ respectively. As the paper \cite{BilmanLM20} was concerned with fundamental rogue waves in a neighborhood of the origin only, the identification of the exterior domain $\exterior$ is new in this work.  Note that by definition $\channels$, $\shelves$, and $\exterior$ are all relatively open pairwise disjoint subsets of the closed first quadrant, whose union excludes only the boundary curves shown with solid lines in Figure~\ref{fig:RegionsPlot}.  Likewise $\exterior_\chi$ and $\exterior_\tau$ are relatively open disjoint subsets of $\exterior$, whose union excludes only the dotted blue curve shown in Figure~\ref{fig:RegionsPlot}.

The significance of these regions for high-order fundamental rogue waves can be seen in Figure~\ref{fig:2D-Plots}.  
\begin{figure}[h]
\begin{center}
\phantom{!}\hfill%
\includegraphics[width=0.33\linewidth]{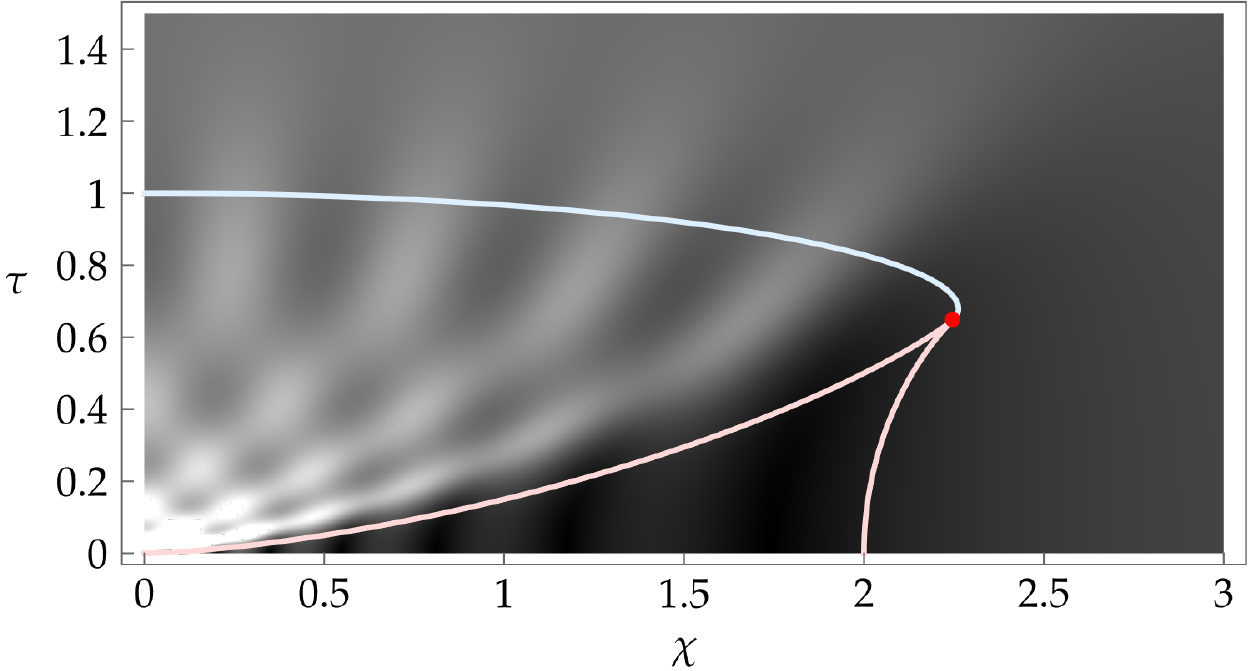}%
\hfill%
\includegraphics[width=0.33\linewidth]{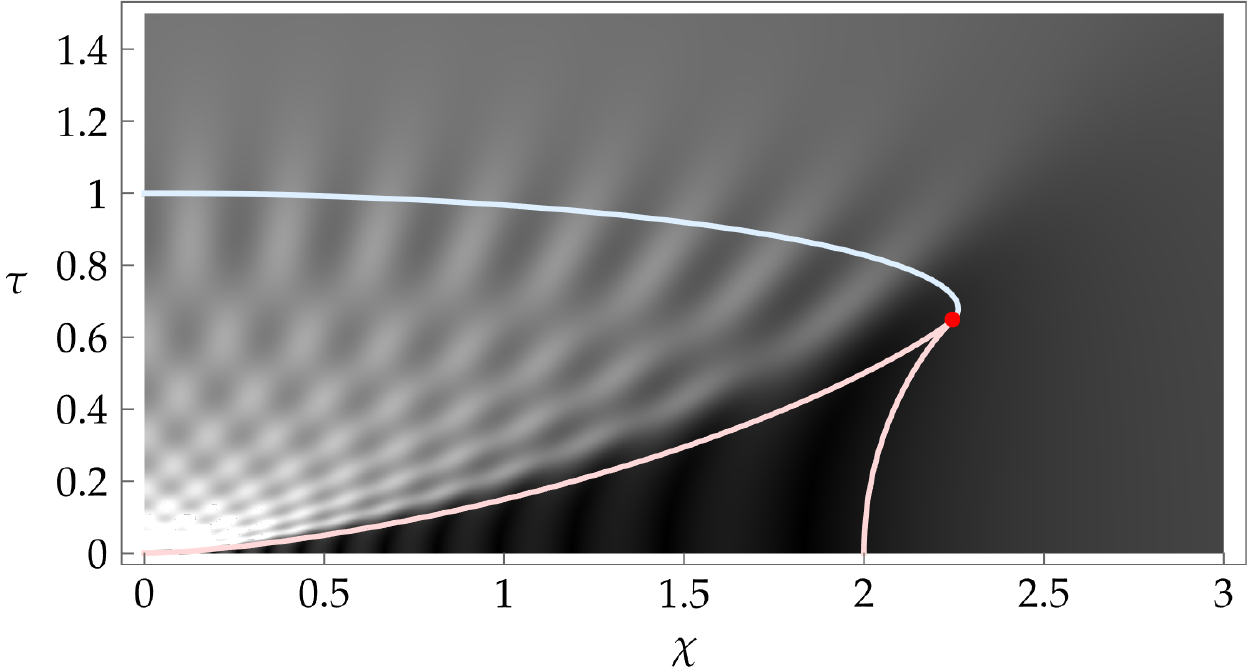}%
\hfill%
\includegraphics[width=0.33\linewidth]{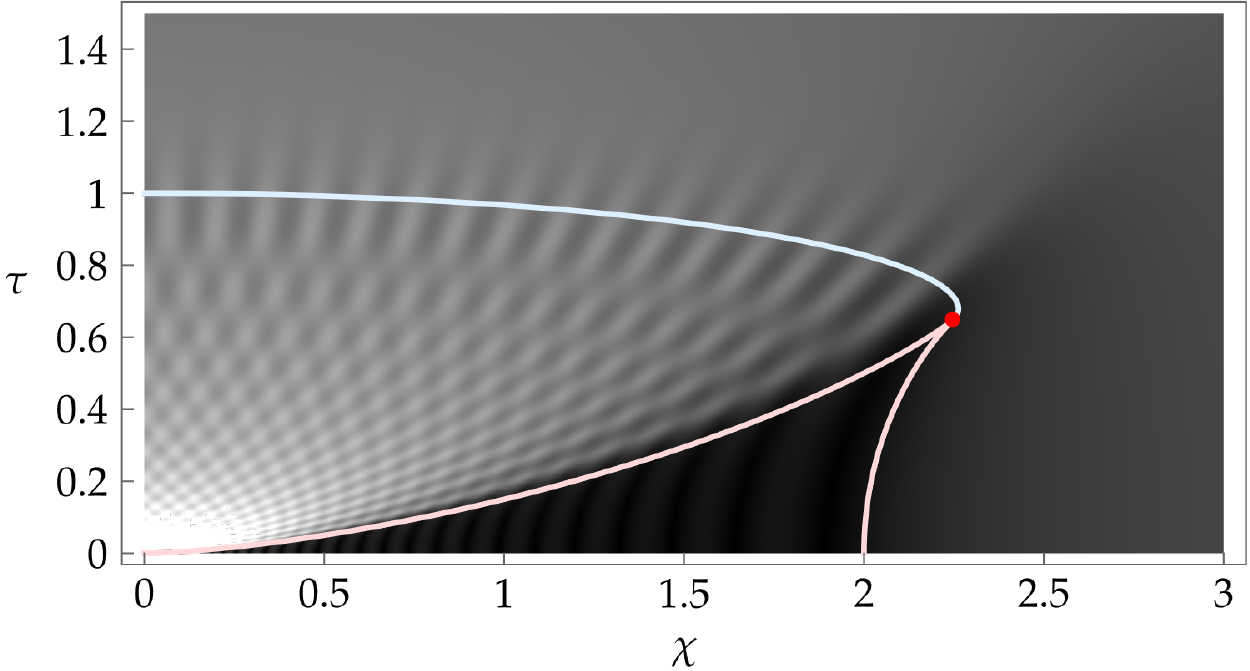}%
\hfill\phantom{!}%
\end{center}
\caption{Density plots of $|\psi_k(M\chi,M\tau)|$ with the region boundaries superimposed for $k=8$ and $M=4.25$ (left), for $k=16$ and $M=8.25$ (center), and for $k=32$ and $M=16.25$ (right).}
\label{fig:2D-Plots}
\end{figure}

\subsection{Results}
\subsubsection{Asymptotic behavior of $q(x,t;\mathbf{Q}^{-s},M)$ and fundamental rogue waves for $(\chi,\tau)\in\channels$}
\label{sec:results-channels}
Recall from Section~\ref{sec:basic-exponent} that when $(\chi,\tau)\in \channels$, the phase $\vartheta(\lambda;\chi,\tau)$ defined in \eqref{eq:vartheta} has only real and simple critical points.  When also $\tau=0$, there are precisely two of them, given by \eqref{eq:tau-zero-critical-points}.  We denote the unique continuation of these critical points to the domain $\channels$ by $a=a(\chi,\tau)$ and $b=b(\chi,\tau)$ with $a<b$.  For $\tau\neq 0$ there is a third critical point born from $\lambda=\infty$, with leading asymptotic $\lambda=-\chi/(2\tau)+ O(1)$ as $\tau\to 0$.  Hence this critical point lies to the left of $\lambda=a$ for $\tau>0$ and to the right of $\lambda=b$ for $\tau<0$, since $\chi>0$ holds throughout $\channels$.  
For $M>0$, define real phases $\Theta_a^{[\channels]}(\chi,\tau;M)$ and $\Theta_b^{[\channels]}(\chi,\tau;M)$ by 
\begin{equation}
\begin{split}
\Theta_a^{[\channels]}(\chi,\tau;M)&\defeq M\Phi_a^{[\channels]}(\chi,\tau)-\ln(M)\frac{\ln(2)}{2\pi}+\eta_a^{[\channels]}(\chi,\tau),\\
\Theta_b^{[\channels]}(\chi,\tau;M)&\defeq M\Phi_b^{[\channels]}(\chi,\tau)+\ln(M)\frac{\ln(2)}{2\pi} + \eta_b^{[\channels]}(\chi,\tau),
\end{split}
\label{eq:channels-phases}
\end{equation}
in which, noting that $\vartheta(a(\chi,\tau);\chi,\tau)$ and $\vartheta(b(\chi,\tau);\chi,\tau)$ are both real,
\begin{equation}
\begin{split}
\Phi_a^{[\channels]}(\chi,\tau)&\defeq -2\vartheta(a(\chi,\tau);\chi,\tau),\\
\Phi_b^{[\channels]}(\chi,\tau)&\defeq -2\vartheta(b(\chi,\tau);\chi,\tau)
\end{split}
\label{eq:channels-principal-phases}
\end{equation}
and, noting that $\vartheta''(a(\chi,\tau);\chi,\tau)<0$ and $\vartheta''(b(\chi,\tau);\chi,\tau)>0$ (derivatives with respect to $\lambda$),
\begin{equation}
\begin{split}
\eta_a^{[\channels]}(\chi,\tau)&\defeq -\frac{\ln(2)}{2\pi}\ln\left(-(b(\chi,\tau)-a(\chi,\tau))^2\vartheta''(a(\chi,\tau);\chi,\tau)\right)\\&\qquad\qquad\qquad\qquad{}-\frac{\ln(2)^2}{2\pi}-\frac{1}{4}\pi+\arg\left(\Gamma\left(\frac{\ii\ln(2)}{2\pi}\right)\right),\\
\eta_b^{[\channels]}(\chi,\tau)&\defeq \frac{\ln(2)}{2\pi}\ln\left((b(\chi,\tau)-a(\chi,\tau))^2\vartheta''(b(\chi,\tau);\chi,\tau)\right)\\
&\qquad\qquad\qquad\qquad{}+\frac{\ln(2)^2}{2\pi}+\frac{1}{4}\pi-\arg\left(\Gamma\left(\frac{\ii\ln(2)}{2\pi}\right)\right).
\end{split}
\label{eq:channels-lower-order-phases}
\end{equation}
Also, define two positive amplitudes by
\begin{equation}
\begin{split}
F_a^{[\channels]}(\chi,\tau)&\defeq \sqrt{-\frac{\ln(2)}{\pi\vartheta''(a(\chi,\tau);\chi,\tau)}},\\
F_b^{[\channels]}(\chi,\tau)&\defeq \sqrt{\frac{\ln(2)}{\pi\vartheta''(b(\chi,\tau);\chi,\tau)}}.
\end{split}
\label{eq:channels-amplitudes}
\end{equation}
Our first result is then the following:  
\begin{theorem}[Far-field asymptotics of $q(x,t;\mathbf{Q}^{-s},M)$ for $(\chi,\tau)\in\channels$]
Let $s=\pm 1$ be arbitrary.  Then, as $M\to+\infty$ through an arbitrary sequence of values,
$q(M\chi,M\tau;\mathbf{Q}^{-s},M)=\mathfrak{L}_s^{[\channels]}(\chi,\tau;M)+O(M^{-\frac{3}{2}})$, where
\begin{equation}
\mathfrak{L}_s^{[\channels]}(\chi,\tau;M)\defeq sM^{-\frac{1}{2}}\left[F_a^{[\channels]}(\chi,\tau)\ee^{\ii\Theta_a^{[\channels]}(\chi,\tau;M)} + F_b^{[\channels]}(\chi,\tau)\ee^{\ii\Theta_b^{[\channels]}(\chi,\tau;M)}\right],
\label{eq:leading-term-channels-q}
\end{equation}
and where the error term is uniform for $(\chi,\tau)$ in any compact subset of $\channels$.
\label{thm:channels}
\end{theorem}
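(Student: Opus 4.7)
The plan is to apply the Deift--Zhou nonlinear steepest-descent method to \rhref{rhp:rogue-wave-reformulation} for $\mathbf{S}(\lambda;\chi,\tau,\mathbf{Q}^{-s},M)$ with jump \eqref{eq:S-jump}, specialized to $(\chi,\tau)\in\channels$.  Since $\vartheta(\lambda;\chi,\tau)$ is real-valued on $\mathbb{R}$ (because $|B(\lambda)|=1$ there) and the hypothesis $(\chi,\tau)\in\channels$ guarantees that $\vartheta$ has three distinct real simple critical points, two of which are the continuations $a(\chi,\tau)<b(\chi,\tau)$ of the $\tau=0$ points in \eqref{eq:tau-zero-critical-points}, one may first deform $\Sigma_\circ$ (preserving Schwarz symmetry, the homotopy class in $\mathbb{C}\setminus\{\pm\ii\}$, and the containment of $\Sigma_\mathrm{c}$ in its interior) to a Jordan curve that crosses $\mathbb{R}$ transversally at exactly $a$ and $b$.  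The zero-level set $\{\Im\vartheta=0\}$ crosses $\mathbb{R}$ at each critical point at angles $\pm\pi/4$ to the axis, dividing a punctured neighborhood into four sectors of alternating sign of $\Im\vartheta$; this sign chart controls the next step.

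Next, LDU-factor $\mathbf{Q}^{-s}=\mathbf{L}_s\cdot 2^{\mu_s\sigma_3}\cdot\mathbf{U}_s$ with unit-triangular $\mathbf{L}_s,\mathbf{U}_s$ and $\mu_s=\pm\tfrac{1}{2}$, and install lens transformations peeling the conjugated triangular factors off $\Sigma_\circ$ into the adjacent sectors where their off-diagonal entries $\ee^{\pm 2\ii M\vartheta}$ decay exponentially.  Uniformly on compacta of $\channels$, this produces a new unknown $\mathbf{T}$ whose jumps on the lens boundaries are $\mathbb{I}+O(\ee^{-cM})$ for some $c>0$, while the jump on what remains of $\Sigma_\circ$ reduces to the constant diagonal matrix $2^{\mu_s\sigma_3}$.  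The latter is absorbed by the conjugation $\mathbf{T}\mapsto \mathbf{T}\delta^{\sigma_3}$, where $\delta(\lambda;\chi,\tau)$ solves the scalar Szeg\H{o}-type Riemann-Hilbert problem $\delta_+/\delta_-=2^{\mu_s}$ on the arc joining $a$ and $b$ with $\delta\to 1$ at infinity, namely
\begin{equation*}
\delta(\lambda;\chi,\tau)=\left(\frac{\lambda-b(\chi,\tau)}{\lambda-a(\chi,\tau)}\right)^{-\ii\nu_s},\qquad \nu_s\defeq s\,\frac{\ln(2)}{2\pi}.
\end{equation*}
The parameter $\nu_s$ born of this scalar problem is precisely the source of the constant $\ln(2)/(2\pi)$ that pervades the statement of the theorem.

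The global parametrix is the identity outside small disks centered at $a$ and $b$, with local parametrices built inside each disk from the classical parabolic cylinder (Weber) functions with parameter $\nu_s$, adapted to the conformal local coordinates $\zeta=(-M\vartheta''(a))^{1/2}(\lambda-a)$ and $\zeta=(M\vartheta''(b))^{1/2}(\lambda-b)$ (where $\vartheta''(a)<0<\vartheta''(b)$).  Reading off the $O(\zeta^{-1})$ coefficient of each local parametrix and translating back through the chain of transformations and \eqref{eq:q-S} yields two contributions of size $M^{-1/2}$ whose amplitudes reduce to $F_a^{[\channels]}$ and $F_b^{[\channels]}$ in \eqref{eq:channels-amplitudes}, whose leading phases $M\Phi_a^{[\channels]},\, M\Phi_b^{[\channels]}$ arise from evaluating $\ee^{\mp 2\ii M\vartheta}$ at the critical points, whose secular logarithmic phases $\mp\ln(M)\ln(2)/(2\pi)$ come from the factor $M^{\pm\ii\nu_s}$ generated by the coordinate rescaling $\lambda\mapsto\zeta$, and whose constant phases $\eta_a^{[\channels]},\,\eta_b^{[\channels]}$ assemble from (i) the connection coefficients of the Weber functions (contributing $\arg\Gamma(\ii\ln(2)/(2\pi))$ and $\pm\tfrac{1}{4}\pi$), (ii) the endpoint behavior of $\delta$ (contributing $\pm(\ln(2)/(2\pi))\ln((b-a)^2|\vartheta''|)$), and (iii) the value of $\delta$ at infinity combined with the rescaling constants (contributing $\pm\ln(2)^2/(2\pi)$).

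Small-norm Riemann-Hilbert theory then completes the argument: the error matrix $\mathbf{E}$, defined as the ratio of the true $\mathbf{T}$ to the global parametrix, satisfies jumps of size $\mathbb{I}+O(M^{-1})$ on the two matching circles around $a,b$ and $\mathbb{I}+O(\ee^{-cM})$ elsewhere, so that $\mathbf{E}(\lambda)=\mathbb{I}+O(M^{-1})$ uniformly near $\lambda=\infty$; substitution into \eqref{eq:q-S} gives the claimed $O(M^{-3/2})$ correction.  The principal technical difficulty is the bookkeeping required to collate the many constants introduced by the chain of transformations (lens, Szeg\H{o} conjugation, local conformal change of variables, parabolic cylinder parametrix with $\pm\pi/4$ Stokes constants) into the symmetric closed forms \eqref{eq:channels-amplitudes} and \eqref{eq:channels-lower-order-phases}; a secondary but necessary check is that for $\tau>0$ the third real critical point $c(\chi,\tau)$ of $\vartheta$ (lying strictly to the left of $a$ throughout $\channels$ and tending to $-\infty$ as $\tau\to 0^+$ by Vieta's formul\ae\ applied to the cubic $2\tau\lambda^3+\chi\lambda^2+2\tau\lambda+\chi-2=0$) can be left in the exterior of the deformed $\Sigma_\circ$, so that it does not participate in the leading-order asymptotics and does not disturb the sign chart used to open the lenses.
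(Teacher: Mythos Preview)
Your overall strategy matches the paper's: deform $\Sigma_\circ$ through the real critical points $a,b$, factor $\mathbf{Q}^{-s}$ and open lenses, reduce to a constant diagonal jump on an arc between $a$ and $b$ handled by a scalar function of the form $\bigl(\tfrac{\lambda-b}{\lambda-a}\bigr)^{-\ii p}$ with $p=\ln(2)/(2\pi)$, install parabolic-cylinder inner parametrices at $a,b$, and run small-norm theory.

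The gap is in your error analysis. The mismatch between inner and outer parametrices on the two matching circles is \emph{not} $\mathbb{I}+O(M^{-1})$; the parabolic-cylinder expansion $\mathbf{U}(\zeta)\zeta^{\ii p\sigma_3}=\mathbb{I}+(2\ii\zeta)^{-1}\bigl[\begin{smallmatrix}0&\alpha\\-\beta&0\end{smallmatrix}\bigr]+\cdots$ with $\zeta\sim M^{1/2}$ gives an off-diagonal mismatch of size $O(M^{-1/2})$ and a diagonal mismatch of size $O(M^{-1})$. Because the outer parametrix is diagonal, its $(1,2)$-entry at infinity vanishes, so the entire leading term $\mathfrak{L}_s^{[\channels]}$ is carried by the $O(M^{-1/2})$ part of the \emph{error matrix} itself (specifically, by the residues of $V^{\mathbf{F}}_{12}$ at $a,b$), not ``read off from the local parametrix'' independently of $\mathbf{E}$. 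Consequently, naive small-norm theory yields only $\mathbf{E}=\mathbb{I}+O(M^{-1/2})$, and after extracting the explicit $M^{-1/2}$ residue one is left \emph{a priori} with an $O(M^{-1})$ remainder, not $O(M^{-3/2})$.

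The sharpening to $O(M^{-3/2})$ that the theorem asserts requires an additional argument exploiting the diagonal/off-diagonal asymmetry just noted: one must show separately that $F_{11-}-1=O(M^{-1})$ in $L^2(\Sigma_{\mathbf{F}})$ (not merely $O(M^{-1/2})$), which follows from a secondary integral-equation estimate using $V^{\mathbf{F}}_{11}-1=O(M^{-1})$ and $F_{12-}V^{\mathbf{F}}_{21}=O(M^{-1})$. With that in hand, both $\int(F_{11-}-1)V^{\mathbf{F}}_{12}$ and $\int F_{12-}(V^{\mathbf{F}}_{22}-1)$ are $O(M^{-3/2})$ by Cauchy--Schwarz, leaving the pure residue integral $-\pi^{-1}\int V^{\mathbf{F}}_{12}\,\dd\eta$ as the leading term with the stated error. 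Your proposal omits this step entirely, and the claimed $O(M^{-3/2})$ does not follow from what you wrote.
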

We present the proof in Section~\ref{sec:channels}.  Note that as this result allows for $M$ to take any positive values tending to $+\infty$, it describes both high-order multiple-pole soliton solutions and fundamental rogue waves over the same domain $\channels$ (as well as many other families of solutions interpolating between the two types); hence $\channels$ with its reflections in the coordinate axes forms a component of the region bounded by the yellow curves in Figure~\ref{fig:roguewaves-and-solitons}.   For the high-order multiple-pole soliton case corresponding to large $M\in\tfrac{1}{2}\mathbb{Z}_{\ge 0}$, it implies one of the results in \cite{BilmanBW19}, although we sharpen the error estimate from $O(M^{-1})$ to $O(M^{-\frac{3}{2}})$.  For the rogue wave case of most interest to us here, we need to correlate the values of $M$ and the index $s$ to the order $k$ and include an additional exponential factor:
\begin{corollary}
The fundamental rogue wave of order $k\in\mathbb{Z}_{>0}$ satisfies $\psi_k(M\chi,M\tau)=L_k^{[\channels]}(\chi,\tau)+O(k^{-\frac{3}{2}})$, where
\begin{equation}
L_k^{[\channels]}(\chi,\tau)\defeq 
\ee^{-\ii M\tau}\mathfrak{L}_s^{[\channels]}(\chi,\tau;M),
\quad s=(-1)^k,\quad M=\tfrac{1}{2}k+\tfrac{1}{4},
\label{eq:leading-term-channels}
\end{equation}
in which $\mathfrak{L}_s^{[\channels]}(\chi,\tau;M)$ is given by \eqref{eq:leading-term-channels-q} 
and where the error term is uniform for $(\chi,\tau)$ in any compact subset of $\channels$.
\label{cor:rogue-wave-channels}
\end{corollary}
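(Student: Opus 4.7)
The corollary is a direct specialization of Theorem~\ref{thm:channels}, so the plan is essentially a bookkeeping exercise that tracks how $\psi_k$ is recovered from the solution $q(x,t;\mathbf{G},M)$ of Riemann-Hilbert Problem~\ref{rhp:rogue-wave-reformulation}. The first step is to recall the recipe \eqref{eq:psi-k-S}, which identifies the fundamental rogue wave of order $k$ with a specific instance of $q(x,t;\mathbf{G},M)$ after insertion of an overall gauge factor: namely, setting $\mathbf{G}=\mathbf{Q}^{-s}$ with $s=(-1)^k$ and $M=\tfrac{1}{2}k+\tfrac{1}{4}$, one has
\begin{equation*}
\psi_k(M\chi,M\tau)=\ee^{-\ii M\tau}\,q(M\chi,M\tau;\mathbf{Q}^{-s},M),
\end{equation*}
simply because both sides are equal to $2\ii\ee^{-\ii M\tau}\lim_{\lambda\to\infty}\lambda S_{12}(\lambda;\chi,\tau,\mathbf{Q}^{-s},M)$ by comparing \eqref{eq:q-S} and \eqref{eq:psi-k-S}.

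The second step is to invoke Theorem~\ref{thm:channels} with this particular choice of $\mathbf{G}$ and $M$. Since the theorem permits $M$ to go to $+\infty$ through any sequence of positive values, the arithmetic sequence $M_k=\tfrac{1}{2}k+\tfrac{1}{4}$, $k\in\mathbb{Z}_{>0}$, is admissible. Thus for any compact subset $K\subset\channels$,
\begin{equation*}
q(M\chi,M\tau;\mathbf{Q}^{-s},M)=\mathfrak{L}_s^{[\channels]}(\chi,\tau;M)+O(M^{-\frac{3}{2}})
\end{equation*}
uniformly for $(\chi,\tau)\in K$. Multiplying by the unimodular factor $\ee^{-\ii M\tau}$ (which is bounded by $1$) yields the claimed leading term $L_k^{[\channels]}(\chi,\tau)=\ee^{-\ii M\tau}\mathfrak{L}_s^{[\channels]}(\chi,\tau;M)$ exactly as defined in \eqref{eq:leading-term-channels}, with error still $O(M^{-\frac{3}{2}})$ uniformly on $K$.

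Finally, I would convert the error bound from $M$ to $k$: since $M=\tfrac{1}{2}k+\tfrac{1}{4}$, we have $M^{-\frac{3}{2}}=O(k^{-\frac{3}{2}})$ as $k\to\infty$, and uniformity on compact subsets of $\channels$ is preserved because the change of variable is just a uniform rescaling of the large parameter. There is no real obstacle here; the substantive analytic work is entirely contained in Theorem~\ref{thm:channels}, and the corollary is obtained by dressing its conclusion with the explicit gauge $\ee^{-\ii M\tau}$ required by the rogue-wave boundary condition $\psi_k\to 1$ at infinity (as recorded in the footnote on the correlation between $s$ and $k$).
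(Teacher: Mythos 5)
Your proposal is correct and coincides with the paper's (implicit) argument: the corollary is obtained by specializing Theorem~\ref{thm:channels} to the sequence $M=\tfrac{1}{2}k+\tfrac{1}{4}$ with $s=(-1)^k$, using the identification $\psi_k(M\chi,M\tau)=\ee^{-\ii M\tau}q(M\chi,M\tau;\mathbf{Q}^{-s},M)$ from comparing \eqref{eq:q-S} with \eqref{eq:psi-k-S}, and noting $M^{-\frac{3}{2}}=O(k^{-\frac{3}{2}})$. Nothing further is needed.
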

When $\tau=0$, the two oscillations in the leading term have a common amplitude, and the formula simplifies further.  Indeed,
\begin{equation}
L_k^{[\channels]}(\chi,0)=\sqrt{\frac{\ln(2)}{\pi M}}\frac{2}{\chi^\frac{3}{4}(2-\chi)^\frac{1}{4}}\cos\left(2MF(\chi)-\frac{\ln(2)}{2\pi}\ln(M)-\Omega(\chi)+\Phi_0\right),\quad 0<\chi<2,
\label{eq:leading-term-channels-tau-zero}
\end{equation}
where
\begin{equation}
F(\chi)\defeq\chi\sqrt{\frac{2}{\chi}-1}+\pi-2\tan^{-1}\left(\sqrt{\frac{2}{\chi}-1}\right),\quad
\Omega(\chi)\defeq\frac{\ln(2)}{\pi}\ln(\chi)+\frac{3\ln(2)}{2\pi}\ln\left(\sqrt{\frac{2}{\chi}-1}\right),
\end{equation}
and
\begin{equation}
\Phi_0\defeq\left(k-\frac{1}{4}\right)\pi-\frac{3\ln(2)^2}{2\pi}+\arg\left(\Gamma\left(\frac{\ii\ln(2)}{2\pi}\right)\right).
\label{eq:leading-term-channels-tau-zero-last}
\end{equation}

\subsubsection{Asymptotic behavior of fundamental rogue waves for $(\chi,\tau)\in\exterior$}
\label{sec:Results-Exterior}
Unlike the analysis for $(\chi,\tau)\in\channels$, our next result pertains to fundamental rogue waves only, i.e., we cannot allow $M$ to tend to $\infty$ in an arbitrary fashion without making substantial modifications that are beyond the scope of this work.  Let $\gamma(\chi,\tau)$ be defined on $\exterior$ by
\begin{equation}
\gamma(\chi,\tau)\defeq \chi A(\chi,\tau)+\tau(A(\chi,\tau)^2-\tfrac{1}{2}B(\chi,\tau)^2)+\ii\int_{-\ii}^{\lambda_0(\chi,\tau)^*}\frac{\dd\lambda}{R(\lambda;\chi,\tau)}+\ii\int_{\lambda_0(\chi,\tau)}^{\ii}
\frac{\dd\lambda}{R(\lambda;\chi,\tau)},
\label{eq:gamma-formula-intro}
\end{equation}
in which the path of integration in each integral is arbitrary in the part of the upper/lower half-plane complementary to $\Sigma_g$.  (In practice, to compute $\gamma(\chi,\tau)$ for $\chi>0$ in $\exterior$ it suffices to let $R(\lambda;\chi,\tau)$ have a vertical branch cut connecting $\lambda_0(\chi,\tau)$ and $\lambda_0(\chi,\tau)^*$; recall that $A(\chi,\tau)=\mathrm{Re}(\lambda_0(\chi,\tau))$ and $B(\chi,\tau)=\mathrm{Im}(\lambda_0(\chi,\tau))$.)  
\begin{theorem}[Far-field asymptotics of $\psi_k(x,t)$ for $(\chi,\tau)\in\exterior$]
The fundamental rogue wave $\psi_k(x,t)$ of order $k\in\mathbb{Z}_{>0}$ satisfies $\psi_k(M\chi,M\tau)=L_k^{[\exterior]}(\chi,\tau)+O(k^{-1})$, where
\begin{equation}
L_k^{[\exterior]}(\chi,\tau)\defeq B(\chi,\tau)\ee^{-\ii M\tau}\ee^{-2\ii M\gamma(\chi,\tau)},\quad M=\tfrac{1}{2}k+\tfrac{1}{4},
\label{eq:leading-term-exterior}
\end{equation}
and where the error term is uniform for $(\chi,\tau)$ in compact subsets of $\exterior$.
\label{thm:exterior}
\end{theorem}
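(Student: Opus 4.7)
The plan is to carry out a nonlinear steepest-descent analysis of the rescaled problem with jump \eqref{eq:S-jump}, using the genus-zero modification of $\vartheta$ from Section~\ref{sec:h-intro}. First I would complete the construction of $h(\lambda;\chi,\tau)=\vartheta(\lambda;\chi,\tau)+g(\lambda;\chi,\tau)$ by verifying that the constraints \eqref{eq:hprime-residues}--\eqref{eq:hprime-expansion} together with the condition $h_+'+h_-'=0$ on $\Sigma_g$ uniquely fix the endpoints $\lambda_0(\chi,\tau),\lambda_0(\chi,\tau)^*$, the coefficients $A,B,u,v$ in \eqref{eq:hprime-formula}, and a real constant $\Omega(\chi,\tau)$ with $h_+(\lambda)+h_-(\lambda)=\Omega(\chi,\tau)$ on $\Sigma_g$. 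The crucial analytic input is then verifying that for every $(\chi,\tau)\in\exterior$ the inequality $\mathrm{Re}(\ii h(\lambda))>0$ holds on one side of $\Sigma_g$ and the opposite inequality on the other, and that these sign properties propagate to the remainder of the deformed contour $\Sigma_\circ$. Because of the distinct geometry of the critical points of $h$ on $\exterior_\chi$ versus $\exterior_\tau$ (real versus complex-conjugate roots of the quadratic factor in \eqref{eq:hprime-formula}), this sign calculus must be done separately on the two subregions.

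Next I would deform $\Sigma_\circ$ into the dumbbell shape mentioned before Section~\ref{sec:M-arbitrary} and formalized in Section~\ref{sec:dumbbell}, so that $\Sigma_g$ sits as a subarc of the dumbbell and the contour passes close to $\Sigma_\mathrm{c}$ through the ``neck'', where the jump takes the alternate form that uses the jump of $B(\lambda)^M$ across $\Sigma_\mathrm{c}$. After the $g$-function substitution \eqref{eq:T-to-S}, I would open lenses around $\Sigma_g$ by factoring $\mathbf{G}=\mathbf{Q}^{-s}$ in LDU form, whose diagonal factor is $2^{-\sigma_3/2}$. The conjugations of the triangular factors by $\ee^{\ii M h_\pm(\lambda)\sigma_3}$ extend analytically into the upper and lower lens regions, and the sign conditions established in the first step force the induced lens-boundary jumps to be $\mathbb{I}+O(\ee^{-cM})$ uniformly on compact subsets of $\exterior$.

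I would then build a global parametrix in the standard way. The outer parametrix $\dot{\mathbf{T}}^{\mathrm{out}}(\lambda)$ solves a Riemann--Hilbert problem with a constant jump on $\Sigma_g$ (essentially the product of $\ee^{-\ii M\Omega\sigma_3}$ and $2^{-\sigma_3/2}$, the remaining factors being diagonal and commuting through the transformations), and with normalization $\mathbb{I}$ at infinity; its solution is expressible in terms of the algebraic function $R(\lambda;\chi,\tau)$ from \eqref{eq:R-define} together with a Szeg\H{o}-type scalar function that accounts for the nontrivial diagonal factor $2^{-\sigma_3/2}$. Local Airy parametrices are built in small disks around the soft edges $\lambda_0(\chi,\tau)$ and $\lambda_0(\chi,\tau)^*$, and separate local parametrices are required near the points $\pm\ii$ where $h'(\lambda;\chi,\tau)$ has simple poles; the latter are solvable explicitly using the alternate jump form on the dumbbell neck, and they match to $\dot{\mathbf{T}}^{\mathrm{out}}$ on the boundaries of the local disks. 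A standard small-norm argument then yields $\mathbf{T}(\lambda)=\mathbf{T}^{\mathrm{par}}(\lambda)(\mathbb{I}+O(M^{-1}))$ uniformly on compact subsets of $\exterior$.

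Finally I would extract the leading term via the recovery formula \eqref{eq:psi-k-S} by expanding the off-diagonal entry of $\dot{\mathbf{T}}^{\mathrm{out}}(\lambda)$ as $\lambda\to\infty$. The coefficient of $\lambda^{-1}$ is, by direct computation from the $R$-function representation, proportional to $B(\chi,\tau)=\mathrm{Im}(\lambda_0(\chi,\tau))$, producing the amplitude. The constant phase $\ee^{-2\ii M\gamma(\chi,\tau)}$ assembles from several contributions: the constant $\Omega(\chi,\tau)$ from the $g$-function construction, the subleading coefficient of $g(\lambda;\chi,\tau)$ in its large-$\lambda$ expansion, the logarithmic constants of the Szeg\H{o}-type scalar, and the factor $\ee^{-\ii M\tau}$ of \eqref{eq:psi-k-S}. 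Identifying this combination with the explicit integral expression \eqref{eq:gamma-formula-intro} requires evaluating Abelian integrals of $\dd\lambda/R(\lambda;\chi,\tau)$ from $\pm\ii$ to the branch points $\lambda_0(\chi,\tau)$ and $\lambda_0(\chi,\tau)^*$, exploiting the residue conditions \eqref{eq:hprime-residues} and the expansion \eqref{eq:hprime-expansion}. The principal obstacle I anticipate is precisely this identification of $\gamma(\chi,\tau)$: the bookkeeping of the several constant-phase contributions through the transformations, especially through the dumbbell-neck local parametrices at $\pm\ii$, is delicate. A secondary difficulty is ensuring the sign calculus for $\mathrm{Re}(\ii h)$ holds with uniform estimates on compact subsets of $\exterior$ that may approach the boundary with $\channels$ (where $B\to 0$ via \eqref{eq:boundary-curve}) or with $\shelves$ (where the roots of the quadratic factor in \eqref{eq:hprime-formula} collide), so that the lens geometries and the local parametrix constructions remain nondegenerate throughout.
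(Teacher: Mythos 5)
Your architecture treats $\Sigma_g$ as a band of the jump contour to be opened with lenses via an LDU factorization of $\mathbf{Q}^{-s}$, and its ``crucial analytic input'' is that $\mathrm{Re}(\ii h(\lambda;\chi,\tau))$ has opposite signs on the two sides of $\Sigma_g$. That cannot be arranged on all of $\exterior$: by the trajectory analysis of Section~\ref{sec:zero-level-curve}, for $(\chi,\tau)\in\exterior_\tau$ the zero level set of $\mathrm{Re}(\ii h)$ is disconnected, its only bounded nonreal components being loops joining $\lambda_0(\chi,\tau)$ (resp.\ $\lambda_0(\chi,\tau)^*$) to itself around $\ii$ (resp.\ $-\ii$); there is no admissible Schwarz-symmetric arc from $\lambda_0^*$ to $\lambda_0$ lying on the zero level, so $\mathrm{Re}(\ii h)$ is discontinuous across any choice of $\Sigma_g$ and the sign dichotomy you need fails (this is exactly the issue the connectedness condition \eqref{eq:DegenerateBoutroux} defining $\ell_\mathrm{sol}$ governs for solitons and general $M$). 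Moreover, the jump actually carried by the dumbbell neck is not $\ee^{-\ii Mh_-\sigma_3}\mathbf{Q}^{-s}\ee^{\ii Mh_+\sigma_3}$: collapsing $\Sigma_\circ$ onto both sides of $\Sigma_\mathrm{c}$ combines the two jumps with the increment $\vartheta_+-\vartheta_-=-2\pi$, and for $M=\tfrac{1}{2}k+\tfrac{1}{4}$, $s=(-1)^k$ this yields the purely off-diagonal matrix \eqref{eq:S-N-jump-ALT}, which after \eqref{eq:T-to-S} becomes the constant off-diagonal jump \eqref{eq:T-jump-N-Schi-Stau-ALT}. An off-diagonal matrix admits no LDU factorization (its diagonal pivots vanish), so the lens opening you propose around $\Sigma_g$ is unavailable even where your sign condition could be met; see Remark~\ref{rem:M-quantum}.

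The paper's proof exploits precisely this quantization: the neck jump never needs to be factored (cf.\ the caption of Figure~\ref{fig:Stau1}), so no sign control near $\Sigma_g$ is required, which is what makes the result hold uniformly on both $\exterior_\chi$ and $\exterior_\tau$. Lenses via \eqref{eq:Q-factorizations} are opened only along the loops $\Gamma^\pm$, which are critical trajectories of $h'(\lambda;\chi,\tau)^2\dd\lambda^2$ encircling $\pm\ii$ and passing through $\lambda_0,\lambda_0^*$; the points $\pm\ii$ end up strictly inside regions where only constant diagonal factors $2^{\pm\frac{1}{2}\sigma_3}$ are applied, so no local parametrices at $\pm\ii$ and no Szeg\H{o} function arise (a Szeg\H{o} function is needed only in $\shelves$, where the interval $I$ carries a $2^{\sigma_3}$ jump). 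The outer parametrix is the explicit \eqref{eq:outer-parametrix-Schi-Stau-ALT} built from the fourth root $y$ of \eqref{eq:y-def}, and the phase is not assembled from scattered constants: $\gamma(\chi,\tau)$ is simply half the constant value of $h_++h_-$ on $\Sigma_g\subset\Sigma_\mathrm{c}$, evaluated once in \eqref{eq:gamma-formula}, entering the answer through the conjugating factors $\ee^{\mp\ii M\gamma(\chi,\tau)\sigma_3}$. Thus your anticipated principal obstacle is an artifact of the band-type architecture; the genuine work is the Teichm\"uller/Jenkins trajectory analysis fixing $\Gamma^\pm$ and the Airy parametrices at $\lambda_0,\lambda_0^*$, which produce the $O(M^{-1})$ error.
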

This result shows a marked difference between high-order fundamental rogue waves and high-order soliton solutions of the focusing nonlinear Schr\"odinger equation.  Indeed, for $(\chi,\tau)\in\exterior$, fundamental rogue waves behave like a slowly-modulated plane-wave solution of the same equation.  By contrast, high-order multiple-pole solitons behave like a slowly-modulated elliptic function solution or decay exponentially to zero on complementary subregions of $\exterior$ \cite{BilmanBW19}. It is not difficult to show that $\gamma(\chi,\tau)+\tfrac{1}{2}\tau\to 0$ as $(\chi,\tau)\to\infty$ in $\exterior$.  In conjunction with the fact that $A(\chi,\tau)\pm\ii B(\chi,\tau)\to \pm\ii$ as $(\chi,\tau)\to\infty$ in $\exterior$, this shows that the leading term $L_k^{[\exterior]}(\chi,\tau)$ tends to the background solution $\psi_0\equiv 1$ as $(\chi,\tau)\to\infty$ in $\exterior$, a result that is consistent with the known asymptotic $\psi_k(x,t)\to 1$ as $(x,t)\to\infty$ in $\mathbb{R}^2$, although our proof of Theorem~\ref{thm:exterior} as given in Section~\ref{sec:Schi-Stau} does not allow $(\chi,\tau)$ to become unbounded.  

It is also worth noting that the leading term $L_k^{[\exterior]}(\chi,\tau)$ becomes explicit if $\tau=0$.  Indeed, using \eqref{eq:AB-tau-small} below, one sees that $A(\chi,0)=0$ and $B(\chi,\tau)^2 = 1-4/\chi^2$ for $\chi>2$, and hence also from \eqref{eq:gamma-formula-intro} one obtains $\gamma(\chi,0)=0$.  Therefore,
\begin{equation}
L_k^{[\exterior]}(\chi,0)=\sqrt{1-\frac{4}{\chi^2}},\quad \chi>2,
\end{equation}
a formula that, in light of Theorem~\ref{thm:exterior}, describes precisely how $\psi_k(M\chi,0)$ rises from being small of size $k^{-\frac{1}{2}}$ for $0<\chi<2$ (as given in Corollary~\ref{cor:rogue-wave-channels} and \eqref{eq:leading-term-channels-tau-zero}--\eqref{eq:leading-term-channels-tau-zero-last}) to ultimately approach the unit background value for large $x$.

\subsubsection{Asymptotic behavior of $q(x,t;\mathbf{Q}^{-s},M)$ and fundamental rogue waves for $(\chi,\tau)\in \shelves$}
\label{sec:Results-Shelves}
The next results again allow $M$ to become large in an arbitrary fashion, and they concern the asymptotic description of $q(x,t;\mathbf{Q}^{-s},M)$ for rescaled coordinates $(\chi,\tau)\in \shelves$.  An obvious feature of the plots shown in Figure~\ref{fig:2D-Plots} as well as similar plots of high-order multiple-pole solitons \cite{BilmanBW19} is that in the domain $\shelves$ there are evidently amplitude oscillations of small (on the scale of $\chi$ and $\tau$) wavelength and period.  To capture these oscillations it is necessary to include both a leading term and a first error term in an asymptotic formula for $q(x,t;\mathbf{Q}^{-s},M)$.  To formulate our result, we first define some quantities.  Recall that for $(\chi,\tau)\in \shelves$, the function $h'(\lambda;\chi,\tau)$ has two real simple zeros $a(\chi,\tau)<b(\chi,\tau)$, as well as a conjugate pair $A(\chi,\tau)\pm\ii B(\chi,\tau)$ of branch points.  For such $(\chi,\tau)$ we assume that the Schwarz-symmetric logarithmic branch cut $\Sigma_\mathrm{c}$ connecting $\pm\ii$ with upward orientation crosses the real axis at a unique point between $a(\chi,\tau)$ and $b(\chi,\tau)$.  First, set
\begin{equation}
\kappa(\chi,\tau)\defeq \chi A(\chi,\tau)+\tau(A(\chi,\tau)^2-\tfrac{1}{2}B(\chi,\tau)^2)+\ii\int_{\Sigma_\mathrm{c}}\frac{\dd\lambda}{R(\lambda;\chi,\tau)},\quad (\chi,\tau)\in \shelves,
\label{eq:kappa-formula}
\end{equation}
which is well defined under the assumption that the path of integration $\Sigma_\mathrm{c}$ lies to the right of the Schwarz-symmetric branch cut $\Sigma_g$ of $R$, which we assume crosses the real axis only at $\lambda=a(\chi,\tau)$.  Then define
\begin{equation}
\mu(\chi,\tau)\defeq\frac{\ln(2)}{2\pi}\int_{a(\chi,\tau)}^{b(\chi,\tau)}\frac{\dd\lambda}{R(\lambda;\chi,\tau)}>0,
\label{eq:mu-formula-intro}
\end{equation}
where the integration is on the real line where the integrand is strictly positive.  Next, we set
\begin{equation}
K_a(\chi,\tau)\defeq \frac{\ln(2)}{\pi}\frac{|a(\chi,\tau)-\lambda_0(\chi,\tau)|}{2\pi\ii}
\int_{C}\log\left(\frac{\lambda-a(\chi,\tau)}{\lambda-b(\chi,\tau)}\right)\frac{\dd\lambda}{R(\lambda;\chi,\tau)(\lambda-a(\chi,\tau))}
\label{eq:intro-Ka}
\end{equation}
and
\begin{equation}
K_b(\chi,\tau)\defeq \frac{\ln(2)}{\pi}\frac{|b(\chi,\tau)-\lambda_0(\chi,\tau)|}{2\pi\ii}
\int_{C}\log\left(\frac{\lambda-a(\chi,\tau)}{\lambda-b(\chi,\tau)}\right)\frac{\dd\lambda}{R(\lambda;\chi,\tau)(\lambda-b(\chi,\tau))},
\label{eq:intro-Kb}
\end{equation}
where the contour $C$ lies to the left of $\Sigma_g$ with the same endpoints and orientation, and where the logarithm is cut on the real line in $[a(\chi,\tau),b(\chi,\tau)]$ and tends to zero as $\lambda\to\infty$.  Now we define real phases $\Theta_a^{[\shelves]}(\chi,\tau;M)$ and $\Theta_b^{[\shelves]}(\chi,\tau;M)$ by (compare with \eqref{eq:channels-phases})
\begin{equation}
\begin{split}
\Theta_a^{[\shelves]}(\chi,\tau;M)&\defeq M\Phi_a^{[\shelves]}(\chi,\tau)-\ln(M)\frac{\ln(2)}{2\pi}+\eta_a^{[\shelves]}(\chi,\tau)
\\
\Theta_b^{[\shelves]}(\chi,\tau;M)&\defeq M\Phi_b^{[\shelves]}(\chi,\tau)+\ln(M)\frac{\ln(2)}{2\pi} +
\eta_b^{[\shelves]}(\chi,\tau) 
\end{split}
\label{eq:Thetas-shelves}
\end{equation}
in which, noting that $h_-(a(\chi,\tau);\chi,\tau)$ and $h(b(\chi,\tau);\chi,\tau)$ are both real and comparing with \eqref{eq:channels-principal-phases},
\begin{equation}
\begin{split}
\Phi_a^{[\shelves]}(\chi,\tau)&\defeq -2h_-(a(\chi,\tau);\chi,\tau)\\
\Phi_b^{[\shelves]}(\chi,\tau)&\defeq -2h(b(\chi,\tau);\chi,\tau)
\end{split}
\label{eq:Phis-shelves}
\end{equation}
and, noting that $h''_-(a(\chi,\tau);\chi,\tau)<0$ and $h''(b(\chi,\tau);\chi,\tau)>0$ and comparing with \eqref{eq:channels-lower-order-phases},
\begin{equation}
\begin{split}
\eta_a^{[\shelves]}(\chi,\tau)&\defeq -\frac{\ln(2)}{2\pi}\ln\left(-(b(\chi,\tau)-a(\chi,\tau))^2h''_-(a(\chi,\tau);\chi,\tau)\right)\\
&\qquad\qquad\qquad\qquad{}-\frac{\ln(2)^2}{2\pi}-\frac{1}{4}\pi+\arg\left(\Gamma\left(\frac{\ii\ln(2)}{2\pi}\right)\right),\\
\eta_b^{[\shelves]}(\chi,\tau)&\defeq \frac{\ln(2)}{2\pi}\ln\left((b(\chi,\tau)-a(\chi,\tau))^2h''(b(\chi,\tau);\chi,\tau)\right)\\
&\qquad\qquad\qquad\qquad{}+\frac{\ln(2)^2}{2\pi}+\frac{1}{4}\pi-\arg\left(\Gamma\left(\frac{\ii\ln(2)}{2\pi}\right)\right).
\end{split}
\end{equation}
We also define additional real phases by
\begin{equation}
\begin{split}
\delta_a(\chi,\tau)&\defeq \pi-2(K_a(\chi,\tau)+\mu(\chi,\tau)),\\
\delta_b(\chi,\tau)&\defeq -2(K_b(\chi,\tau)+\mu(\chi,\tau)).
\end{split}
\end{equation}
By analogy with \eqref{eq:channels-amplitudes} define positive amplitudes by
\begin{equation}
\begin{split}
F_a^{[\shelves]}(\chi,\tau)&\defeq \sqrt{-\frac{\ln(2)}{\pi h''_-(a(\chi,\tau);\chi,\tau)}},\\
F_b^{[\shelves]}(\chi,\tau)&\defeq \sqrt{\frac{\ln(2)}{\pi h''(b(\chi,\tau);\chi,\tau)}}.
\end{split}
\end{equation}
Finally, define four positive modulation factors with range $[0,1]$ by
\begin{equation}
\begin{split}
m_a^\pm(\chi,\tau)&\defeq\tfrac{1}{2}\left(1\pm\cos\left(\arg\left(a(\chi,\tau)-\lambda_0(\chi,\tau)\right)\right)\right),\\
m_b^\pm(\chi,\tau)&\defeq\tfrac{1}{2}\left(1\pm\cos\left(\arg\left(b(\chi,\tau)-\lambda_0(\chi,\tau)\right)\right)\right).
\end{split}
\label{eq:m-a-b-shelves}
\end{equation}
Our main result for the region $\shelves$ is then the following.
\begin{theorem}[Far-field asymptotics of $q(x,t;\mathbf{Q}^{-s},M)$ for $(\chi,\tau)\in\shelves$]
Let $s=\pm 1$ be arbitrary.  Then, as $M\to+\infty$ through an arbitrary sequence of values,
$q(M\chi,M\tau;\mathbf{Q}^{-s};M)=\mathfrak{L}_s^{[\shelves]}(\chi,\tau;M)+\mathfrak{S}_s^{[\shelves]}(\chi,\tau;M) + O(M^{-1})$, where
\begin{equation}
\mathfrak{L}_s^{[\shelves]}(\chi,\tau;M)\defeq B(\chi,\tau)\ee^{-2\ii (M\kappa(\chi,\tau)+\mu(\chi,\tau)+\frac{1}{4}s\pi)},
\label{eq:leading-term-shelves-q}
\end{equation}
and 
\begin{multline}
\mathfrak{S}_s^{[\shelves]}(\chi,\tau;M)\defeq sM^{-\frac{1}{2}}\Big[m_a^+(\chi,\tau)\ee^{\ii\delta_a(\chi,\tau)}F_a^{[\shelves]}(\chi,\tau)\ee^{\ii\Theta_a^{[\shelves]}(\chi,\tau;M)} \\
{}+ m_b^+(\chi,\tau)\ee^{\ii\delta_b(\chi,\tau)}F_b^{[\shelves]}(\chi,\tau)\ee^{\ii\Theta_b^{[\shelves]}(\chi,\tau;M)} \\
{}- m_a^-(\chi,\tau)\ee^{-\ii\delta_a(\chi,\tau)}F_a^{[\shelves]}(\chi,\tau)\ee^{-\ii[\Theta_a^{[\shelves]}(\chi,\tau;M)+4M\kappa(\chi,\tau)+4\mu(\chi,\tau)]} \\
{}- m_b^-(\chi,\tau)\ee^{-\ii\delta_b(\chi,\tau)}F_b^{[\shelves]}(\chi,\tau)\ee^{-\ii[\Theta_b^{[\shelves]}(\chi,\tau;M)+4M\kappa(\chi,\tau)+4\mu(\chi,\tau)]}\Big],
\label{eq:subleading-term-shelves-q}
\end{multline}
and where the error term is uniform for $(\chi,\tau)$ in any compact subset of $\shelves$.
\label{thm:shelves}
\end{theorem}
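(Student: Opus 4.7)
The plan is to execute a Deift-Zhou steepest-descent analysis of the Riemann-Hilbert problem for $\mathbf{S}(\lambda;\chi,\tau,\mathbf{Q}^{-s},M)$ in the dumbbell form of Section~\ref{sec:dumbbell}, using the genus-zero $g$-function built in Section~\ref{sec:h-intro} and placing parabolic cylinder parametrices at the real critical points $a(\chi,\tau)<b(\chi,\tau)$ of $h$. First, set $\mathbf{T}(\lambda)\defeq\mathbf{S}(\lambda)\ee^{\ii Mg(\lambda)\sigma_3}$ as in \eqref{eq:T-to-S}, so that \eqref{eq:T-jump} holds and $h=\vartheta+g$ satisfies $h_+(\lambda)+h_-(\lambda)$ equal to a real constant on $\Sigma_g$ by construction. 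Choose the dumbbell contour so that its neck is a segment of the real axis containing both critical points $a$ and $b$ and crossing $\Sigma_\mathrm{c}$ strictly between them; with the conventions on $\Sigma_g$ stated just above \eqref{eq:kappa-formula}, $a$ is the crossing of $\Sigma_g$ with the real axis while $b$ lies on the remainder of the neck.

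Second, use the signature chart of $\mathrm{Re}(\ii h(\lambda;\chi,\tau))$ on $\shelves$ to open lenses on both sides of $\Sigma_g$ via the standard upper/lower-triangular factorization of $\mathbf{Q}^{-s}$, reducing the jump on $\Sigma_g$ to the constant matrix $\mathbf{Q}^{-s}$ modulo lens jumps that are exponentially close to the identity away from $\lambda_0(\chi,\tau)$, $\lambda_0(\chi,\tau)^*$, and $a$. Open analogous lenses on the real-axis remainder of the dumbbell, leaving only a small fixed disk around $b$ where the oscillation cannot be removed; the sign of $\mathrm{Re}(\ii h)$ on the two lobes of the dumbbell surrounding $\pm\ii$ is controlled uniformly on compact subsets of $\shelves$. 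The upshot is a Riemann-Hilbert problem whose jumps are exponentially trivial outside four fixed-size disks.

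Third, construct the outer parametrix $\dot{\mathbf{T}}^{\mathrm{out}}(\lambda)$ solving the model problem with only the constant jump $\mathbf{Q}^{-s}$ on $\Sigma_g$ (dressed by the diagonal factor inherited from the constant value of $h_++h_-$), together with the residual $B(\lambda)^{\pm M}$ contributions from the dumbbell formulation. It is built explicitly from $R(\lambda;\chi,\tau)$ and a scalar Szeg\H{o} function whose logarithmic density is the constant $-\tfrac{1}{2}\log 2$, arising from $|(\mathbf{Q}^{-s})_{12}|=1/\sqrt{2}$; this is precisely the origin of $\mu(\chi,\tau)$ via \eqref{eq:mu-formula-intro}, while $\kappa(\chi,\tau)$ from \eqref{eq:kappa-formula} records the residual constant phase accumulated between the branch cut $\Sigma_\mathrm{c}$ and $\Sigma_g$. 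Extracting $2\ii\lim_{\lambda\to\infty}\lambda\,\dot{T}^{\mathrm{out}}_{12}(\lambda)$ and applying \eqref{eq:q-S} yields exactly the leading term $\mathfrak{L}_s^{[\shelves]}(\chi,\tau;M)$. Standard Airy parametrices at $\lambda_0$ and $\lambda_0^*$ contribute only to the $O(M^{-1})$ error.

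The inner parametrices at the real critical points $a$ and $b$ are parabolic cylinder parametrices with shape parameter $\nu\defeq\ln(2)/(2\pi)$, as dictated by $1-|(\mathbf{Q}^{-s})_{12}|^2=\tfrac{1}{2}$. Weighting the $(1,2)$ entries of each parametrix at $\lambda=\infty$ by the column magnitudes of the outer parametrix at the critical point yields the $O(M^{-1/2})$ correction $\mathfrak{S}_s^{[\shelves]}$. The four terms in \eqref{eq:subleading-term-shelves-q} correspond to the two critical points combined with the two possibilities for which column of $\dot{\mathbf{T}}^{\mathrm{out}}$ supplies the off-diagonal part: the modulation factors $m_a^\pm,m_b^\pm$ defined in \eqref{eq:m-a-b-shelves} are exactly the squared magnitudes of the two entries of the first row of $\dot{\mathbf{T}}^{\mathrm{out}}$ evaluated at the critical point in question, and the phases $\delta_a,\delta_b$ collect the arguments of these entries together with the scalar Szeg\H{o} factor. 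The universal pieces of the phase---the $\mp\ln(M)\ln(2)/(2\pi)$ correction, the curvature normalization, the $\arg\Gamma(\ii\ln(2)/(2\pi))$ constant, and the amplitudes $F_{a,b}^{[\shelves]}$---are all standard outputs of the large-argument asymptotics of the parabolic cylinder parametrix, while the relative shift $4M\kappa(\chi,\tau)+4\mu(\chi,\tau)$ in the ``minus'' terms records the conjugation of the second parabolic-cylinder column by the diagonal factor $\ee^{2\ii(M\kappa+\mu)\sigma_3}$ from the outer solution at infinity. A small-norm estimate on $\mathbf{T}\dot{\mathbf{T}}^{-1}$, with $\dot{\mathbf{T}}$ the global parametrix, then delivers the overall $O(M^{-1})$ error uniformly on compact subsets of $\shelves$. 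The main obstacles are (i) verifying globally across $\shelves$ the sign chart of $\mathrm{Re}(\ii h)$ needed for the lens openings and the uniform boundedness and invertibility of the outer parametrix up to the closed boundary curves of $\shelves$, and (ii) tracking exactly the arguments and amplitudes through all of the diagonal conjugations to extract the precise $m_{a,b}^\pm$, $\delta_{a,b}$, and $4M\kappa+4\mu$ coupling, which is where the bulk of the bookkeeping lies.
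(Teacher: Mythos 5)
Your toolkit is the right one—the genus-zero $g$-function, a Szeg\H{o}-type dressing of the outer parametrix producing $\mu(\chi,\tau)$, parabolic-cylinder parametrices with parameter $\ln(2)/(2\pi)$ at $a(\chi,\tau)$ and $b(\chi,\tau)$, Airy parametrices at $\lambda_0,\lambda_0^*$, and a small-norm residue computation of the $(1,2)$ entry of the error jump—and this matches the mechanics of the paper's proof. However, your choice of contour contains a genuine gap. You propose to work with the dumbbell deformation of Section~\ref{sec:dumbbell}, i.e.\ with a jump contour whose two lobes separately enclose $\lambda=+\ii$ and $\lambda=-\ii$, joined by a neck that crosses $\Sigma_\mathrm{c}$. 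Any contour that separates $+\ii$ from $-\ii$ must cross the branch cut $\Sigma_\mathrm{c}$, and across that cut the conjugating factor $B(\lambda)^{M\sigma_3}$ (equivalently $\ee^{\ii M\vartheta\sigma_3}$, since $\vartheta_+-\vartheta_-=-2\pi$) is discontinuous by $\ee^{\mp 2\pi\ii M\sigma_3}$. Collapsing the neck onto $\Sigma_\mathrm{c}$ as in Section~\ref{sec:dumbbell} produces a neck jump governed by the matrix $\mathbf{Z}$ of Remark~\ref{rem:M-quantum}, which depends on the remainder $r$ in $M=\tfrac{1}{2}k+r$: it is off-diagonal only for $r=\tfrac14$ (rogue waves), diagonal only for $r=0$ (solitons), and has four nonzero entries otherwise, requiring an entirely different factorization scheme that the paper defers to later work. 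Since Theorem~\ref{thm:shelves} is claimed for $M\to+\infty$ through \emph{arbitrary} sequences, a dumbbell-based argument cannot deliver the stated uniformity; at best it would cover the two quantized families. Your setup is also internally inconsistent: you invoke $\kappa(\chi,\tau)$ from \eqref{eq:kappa-formula}, whose derivation (Section~\ref{sec:g-function-loop}) presupposes $\Sigma_g\cap\Sigma_\mathrm{c}=\emptyset$, while the dumbbell construction forces $\Sigma_g\subset\Sigma_\mathrm{c}$ and yields the different constant $\gamma(\chi,\tau)$ of Section~\ref{sec:g-function-dumbbell}.

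The repair is exactly what the paper does for $\shelves$: keep $\Sigma_\circ$ a single Jordan curve, namely $\Sigma_\circ=\Sigma_g\cup\Gamma$ passing through $\lambda_0(\chi,\tau)$, $a(\chi,\tau)$, $\lambda_0(\chi,\tau)^*$, and $b(\chi,\tau)$, enclosing \emph{both} points $\pm\ii$ and all of $\Sigma_\mathrm{c}$ (which is chosen to cross the real axis strictly between $a$ and $b$). Then no jump contour ever meets $\Sigma_\mathrm{c}$, the value of $M$ enters only through $\ee^{\pm\ii Mh\sigma_3}$ and $\ee^{\pm\ii M\kappa\sigma_3}$, and no quantization is needed. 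The real segment $I=[a(\chi,\tau),b(\chi,\tau)]$ is not part of $\Sigma_\circ$ at all; it appears only after lens opening, as the interface between the regions carrying the diagonal factors $2^{\pm\frac{1}{2}\sigma_3}$, with constant jump $2^{\sigma_3}$ handled by the $((\lambda-a)/(\lambda-b))^{\ii p\sigma_3}$ factor and the Szeg\H{o} function in the outer parametrix, while $\Sigma_g$ carries the constant off-diagonal jump \eqref{eq:W-jump-Sigma-g} with phase $2M\kappa(\chi,\tau)$. With that contour in place, the remainder of your outline (the origin of $m_{a,b}^\pm$ as squared moduli of the outer parametrix entries at the critical points, the $\delta_{a,b}$ phases from the Szeg\H{o} factor, the $4M\kappa+4\mu$ shift from the diagonal conjugation, and the $O(M^{-1})$ small-norm error) goes through as you describe.
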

This result therefore provides both a leading term $\mathfrak{L}_s^{[\shelves]}(\chi,\tau;M)$ (which in the case of high-order multiple-pole solitons with $M\in\mathbb{Z}_{>0}$ was obtained in \cite{BilmanBW19}) and a sub-leading term $\mathfrak{S}_s^{[\shelves]}(\chi,\tau;M)$.  As with Theorem~\ref{thm:channels}, this result applies to the full family of solutions including both solitons and rogue waves, and hence $\shelves$ with its reflections in the coordinate axes forms the remaining components of the region bounded by the yellow curves in Figure~\ref{fig:roguewaves-and-solitons}.  To write the formula in the rogue wave case requires just cosmetic modification; the analogue of Corollary~\ref{cor:rogue-wave-channels} when $(\chi,\tau)\in\shelves$ is the following.
\begin{corollary}
The fundamental rogue wave of order $k\in\mathbb{Z}_{>0}$ satisfies $\psi_k(M\chi,M\tau)=L_k^{[\shelves]}(\chi,\tau)+S_k^{[\shelves]}(\chi,\tau)+O(k^{-1})$, where
\begin{equation}
L_k^{[\shelves]}(\chi,\tau)\defeq\ee^{-\ii M\tau}\mathfrak{L}_s^{[\shelves]}(\chi,\tau;M),\quad
S_k^{[\shelves]}(\chi,\tau)\defeq\ee^{-\ii M\tau}\mathfrak{S}_s^{[\shelves]}(\chi,\tau;M),\quad
s=(-1)^k,\quad M=\tfrac{1}{2}k+\tfrac{1}{4},
\label{eq:rw-terms-shelves}
\end{equation}
in which $\mathfrak{L}_s^{[\shelves]}(\chi,\tau;M)$ and $\mathfrak{S}_s^{[\shelves]}(\chi,\tau;M)$ are given by \eqref{eq:leading-term-shelves-q} and \eqref{eq:subleading-term-shelves-q} respectively, and
where the error term is uniform for $(\chi,\tau)$ in any compact subset of $\shelves$.
\label{cor:rogue-wave-shelves}
\end{corollary}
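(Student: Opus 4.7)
The plan is to deduce the statement directly from Theorem~\ref{thm:shelves} by invoking the correspondence \eqref{eq:G-and-M-RogueWaves} between the rogue-wave order $k$ and the pair $(\mathbf{G},M)$ appearing in Riemann-Hilbert Problem~\ref{rhp:rogue-wave-reformulation}. Concretely, for $k\in\mathbb{Z}_{>0}$ I set $s=(-1)^k$ and $M=\tfrac{1}{2}k+\tfrac{1}{4}$, so that $\mathbf{G}=\mathbf{Q}^{-s}$ and $M\to+\infty$ along the (discrete) sequence indexed by $k$. Comparing the extraction formulas \eqref{eq:q-S} and \eqref{eq:psi-k-S} at these parameter values, both $q(M\chi,M\tau;\mathbf{Q}^{-s},M)$ and $\ee^{\ii M\tau}\psi_k(M\chi,M\tau)$ equal $2\ii\lim_{\lambda\to\infty}\lambda S_{12}(\lambda;\chi,\tau,\mathbf{Q}^{-s},M)$, so
\begin{equation*}
\psi_k(M\chi,M\tau) = \ee^{-\ii M\tau}\,q(M\chi,M\tau;\mathbf{Q}^{-s},M).
\end{equation*}

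Next I apply Theorem~\ref{thm:shelves} on the right-hand side; this is permissible because $\{M_k\} = \{\tfrac{1}{2}k+\tfrac{1}{4}\}_{k=1}^\infty$ is one admissible positive sequence tending to $+\infty$ under the hypothesis of that theorem. This yields
\begin{equation*}
\psi_k(M\chi,M\tau) = \ee^{-\ii M\tau}\mathfrak{L}_s^{[\shelves]}(\chi,\tau;M) + \ee^{-\ii M\tau}\mathfrak{S}_s^{[\shelves]}(\chi,\tau;M) + \ee^{-\ii M\tau}\cdot O(M^{-1})
\end{equation*}
uniformly for $(\chi,\tau)$ in compact subsets of $\shelves$. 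By the definitions in \eqref{eq:rw-terms-shelves}, the first two terms on the right are precisely $L_k^{[\shelves]}(\chi,\tau)$ and $S_k^{[\shelves]}(\chi,\tau)$. Finally, since $|\ee^{-\ii M\tau}|=1$ and the substitution $M=\tfrac{1}{2}k+\tfrac{1}{4}$ gives $M^{-1}=2k^{-1}+O(k^{-2})$, the remainder $O(M^{-1})$ absorbs into $O(k^{-1})$ while retaining uniformity in $(\chi,\tau)$.

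Because the corollary is essentially a cosmetic reformulation of Theorem~\ref{thm:shelves} via the dictionary \eqref{eq:G-and-M-RogueWaves}, there is no genuine obstacle in this step. The substantive difficulty---namely the steepest-descent analysis of the $\mathbf{T}$-problem on $\shelves$, the construction of local parametrices at the branch points $\lambda_0(\chi,\tau),\lambda_0(\chi,\tau)^*$ and at the real stationary phase points $a(\chi,\tau),b(\chi,\tau)$, and the identification of the phases, amplitudes, and modulation factors $m_{a,b}^\pm$ entering \eqref{eq:leading-term-shelves-q}--\eqref{eq:subleading-term-shelves-q}---is confined to the proof of the theorem itself and contributes nothing new at the level of the corollary.
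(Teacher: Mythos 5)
Your proposal is correct and is essentially the paper's own argument: the paper derives the corollary from Theorem~\ref{thm:shelves} by exactly this ``cosmetic modification,'' namely specializing $s=(-1)^k$, $M=\tfrac{1}{2}k+\tfrac{1}{4}$, using the relation $\psi_k(M\chi,M\tau)=\ee^{-\ii M\tau}q(M\chi,M\tau;\mathbf{Q}^{-s},M)$ implicit in \eqref{eq:q-S} and \eqref{eq:psi-k-S}, and noting $O(M^{-1})=O(k^{-1})$.
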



The first correction on the domain $\shelves$ resolves the obvious oscillations visible in plots of high-order multiple-pole soliton solutions \cite{BilmanBW19} and in plots of high-order fundamental rogue waves such as those shown in Figure~\ref{fig:2D-Plots}.  On two-dimensional plots such as these, one observes that these fluctuations form a highly-regular interference pattern.  To see how Theorem~\ref{thm:shelves} yields such a pattern, we can rewrite the combination $\mathfrak{L}_s^{[\shelves]}(\chi,\tau;M)+\mathfrak{S}_s^{[\shelves]}(\chi,\tau;M)$ in a different form by factoring out a phase factor, which has the effect of producing some symmetry in the four phases present in \eqref{eq:subleading-term-shelves-q}.  Therefore, using $s=\pm 1$, we write:
\begin{multline}
\mathfrak{L}_s^{[\shelves]}(\chi,\tau;M)+\mathfrak{S}_s^{[\shelves]}(\chi,\tau;M) = s\ee^{-2\ii\phi(\chi,\tau;M)}\Big[-\ii B(\chi,\tau)\\
{}+M^{-\frac{1}{2}}\Big(m_a^+(\chi,\tau)F_a^{[\shelves]}(\chi,\tau)\ee^{\ii\phi_a(\chi,\tau;M)} -
m_a^-(\chi,\tau)F_a^{[\shelves]}(\chi,\tau)\ee^{-\ii\phi_a(\chi,\tau;M)} \\
{}+
m_b^+(\chi,\tau)F_b^{[\shelves]}(\chi,\tau)\ee^{\ii\phi_b(\chi,\tau;M)} -
m_b^-(\chi,\tau)F_b^{[\shelves]}(\chi,\tau)\ee^{-\ii\phi_b(\chi,\tau;M)}\Big)
\Big],
\label{eq:leading-and-subleading-shelves-rewritten}
\end{multline}
in which
\begin{equation}
\begin{split}
\phi(\chi,\tau;M)&\defeq M\kappa(\chi,\tau)+\mu(\chi,\tau),\\
\phi_a(\chi,\tau;M)&\defeq \Theta^{[\shelves]}_a(\chi,\tau;M)+2M\kappa(\chi,\tau) + \delta_a(\chi,\tau) + 2\mu(\chi,\tau),\\
\phi_b(\chi,\tau;M)&\defeq \Theta^{[\shelves]}_b(\chi,\tau;M)+2M\kappa(\chi,\tau) + \delta_b(\chi,\tau) + 2\mu(\chi,\tau).
\end{split}
\label{eq:symmetrical-phases}
\end{equation}
Using the fact that $m_a^+(\chi,\tau)+m_a^-(\chi,\tau)=m_b^+(\chi,\tau)+m_b^-(\chi,\tau)=1$ to expand the square modulus of the right-hand side of \eqref{eq:leading-and-subleading-shelves-rewritten} through terms proportional to $M^{-\frac{1}{2}}$, and combining with Theorem~\ref{thm:shelves} then gives the following.
\begin{corollary}
Let $s=\pm 1$ be arbitrary.  Then as $M\to+\infty$ through an arbitrary sequence of values, 
\begin{multline}
|q(M\chi,M\tau;\mathbf{Q}^{-s},M)|^2 = 
B(\chi,\tau)^2\\
{}-2M^{-\frac{1}{2}}B(\chi,\tau)\left[F_a^{[\shelves]}(\chi,\tau)\sin(\phi_a(\chi,\tau;M))+F_b^{[\shelves]}(\chi,\tau)\sin(\phi_b(\chi,\tau;M))\right]+O(M^{-1}),
\label{eq:interference}
\end{multline}
where the error is uniform for $(\chi,\tau)$ in compact subsets of $\shelves$.
\label{cor:pattern}
\end{corollary}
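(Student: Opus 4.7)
The plan is to derive Corollary~\ref{cor:pattern} directly from the algebraic identity \eqref{eq:leading-and-subleading-shelves-rewritten} and Theorem~\ref{thm:shelves} by squaring moduli and expanding through order $M^{-\frac{1}{2}}$. No further Riemann--Hilbert analysis is required; all of the asymptotic work has already been done in Theorem~\ref{thm:shelves}, so what remains is a short algebraic manipulation.

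First I would observe that \eqref{eq:leading-and-subleading-shelves-rewritten} presents $\mathfrak{L}_s^{[\shelves]}(\chi,\tau;M)+\mathfrak{S}_s^{[\shelves]}(\chi,\tau;M)$ as a unimodular prefactor $s\ee^{-2\ii\phi(\chi,\tau;M)}$ (recall $s=\pm 1$) times a bracket of the form $-\ii B(\chi,\tau)+M^{-\frac{1}{2}}X(\chi,\tau;M)$, where
\[
X(\chi,\tau;M)\defeq\sum_{j\in\{a,b\}}F_j^{[\shelves]}(\chi,\tau)\bigl[m_j^+(\chi,\tau)\ee^{\ii\phi_j(\chi,\tau;M)}-m_j^-(\chi,\tau)\ee^{-\ii\phi_j(\chi,\tau;M)}\bigr].
\]
The amplitudes $F_j^{[\shelves]}$ and modulation factors $m_j^\pm$ are bounded on compact subsets of $\shelves$, so $|X|=O(1)$ uniformly there. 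The elementary identity
\[
\bigl|{-\ii B}+M^{-\frac{1}{2}}X\bigr|^2 = B^2 - 2BM^{-\frac{1}{2}}\Im(X) + M^{-1}|X|^2
\]
therefore places all but the first two terms into the $O(M^{-1})$ error budget, and the unimodular prefactor $s\ee^{-2\ii\phi}$ drops out on taking the modulus squared.

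The second and only substantive step is to simplify $\Im(X)$. From the definitions $m_j^\pm=\tfrac{1}{2}(1\pm\cos(\arg(\cdot)))$ in \eqref{eq:m-a-b-shelves} one has $m_j^+(\chi,\tau)+m_j^-(\chi,\tau)=1$ for both $j\in\{a,b\}$, and therefore
\[
\Im\bigl(m_j^+\ee^{\ii\phi_j}-m_j^-\ee^{-\ii\phi_j}\bigr)=m_j^+\sin(\phi_j)+m_j^-\sin(\phi_j)=\sin(\phi_j),
\]
so the four apparent terms in $X$ collapse into the two clean sines appearing in \eqref{eq:interference}. This is the only place where any cancellation is used, and it is the mildly non-obvious point of the whole argument.

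Finally I would invoke Theorem~\ref{thm:shelves} to write $q(M\chi,M\tau;\mathbf{Q}^{-s},M)=\mathfrak{L}_s^{[\shelves]}+\mathfrak{S}_s^{[\shelves]}+O(M^{-1})$, with the leading combination bounded uniformly on compacta in $\shelves$, so that squaring gives $|q|^2=|\mathfrak{L}_s^{[\shelves]}+\mathfrak{S}_s^{[\shelves]}|^2+O(M^{-1})$ (the cross term between the $O(1)$ leading contribution and the $O(M^{-1})$ error is itself $O(M^{-1})$). Combined with the preceding calculation, this produces exactly \eqref{eq:interference}. I do not anticipate any real obstacle: the calculation is mechanical, and if there is any subtlety at all it is purely bookkeeping, namely verifying that the rewriting \eqref{eq:leading-and-subleading-shelves-rewritten} indeed matches \eqref{eq:leading-term-shelves-q}--\eqref{eq:subleading-term-shelves-q} with the symmetric phases \eqref{eq:symmetrical-phases}, which is a direct substitution.
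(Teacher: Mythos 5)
Your proposal is correct and follows essentially the same route as the paper: the paper obtains Corollary~\ref{cor:pattern} precisely by expanding the squared modulus of the right-hand side of \eqref{eq:leading-and-subleading-shelves-rewritten} through terms of order $M^{-\frac{1}{2}}$, using $m_a^++m_a^-=m_b^++m_b^-=1$ to collapse the four oscillatory terms into the two sines, and then invoking Theorem~\ref{thm:shelves} to absorb the remaining contributions into the $O(M^{-1})$ error. Your bookkeeping of the cross term with the $O(M^{-1})$ error and the $M^{-1}|X|^2$ term matches what the paper leaves implicit.
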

Since $|q|^2=|\psi_k|^2$ when $s=(-1)^k$ and $M=\tfrac{1}{2}k+\tfrac{1}{4}$, this result explains the interference pattern seen in amplitude plots of high-order fundamental rogue waves such as in \cite[Figure 2]{BilmanLM20} and in Figure~\ref{fig:2D-Plots} of this paper.  However, as it is valid for arbitrary $M\to+\infty$, the same formula also explains the similar patterns observed in plots of $k^\mathrm{th}$-order pole solitons for $M=\tfrac{1}{2}k$ large such as can be found in \cite{BilmanB19,BilmanBW19}.  It is equally valid for all other increasing sequences of $M$-values that do not correspond to either type of solution.
Corollary~\ref{cor:pattern} shows that for $(\chi,\tau)\in\shelves$, the squared modulus of $q(M\chi, M \tau; \mathbf{Q}^{-s}, M)$ consists of a slowly-varying ``shelf'' of size $O(1)$ and a rapidly-varying perturbation proportional to $M^{-\frac{1}{2}}$. To leading order, this perturbation is a superposition of two sine functions with different phases $\phi_a(\chi,\tau;M)$ and $\phi_b(\chi,\tau;M)$ whose derivatives are large for $M\gg 1$ due to the presence of the terms $2M[\kappa(\chi,\tau)-h_-(a(\chi,\tau);\chi,\tau)]$ and $2M[\kappa(\chi,\tau)-h(b(\chi,\tau);\chi,\tau)]$ respectively, see \eqref{eq:Thetas-shelves}, \eqref{eq:Phis-shelves}, and \eqref{eq:symmetrical-phases}.  
Since $F_a^{[\shelves]}(\chi,\tau)$ and $F_b^{[\shelves]}(\chi,\tau)$ are both positive, the two terms proportional to $M^{-\frac{1}{2}}$ in \eqref{eq:interference} are individually maximized when $\phi_a(\chi,\tau;M)\in (-\tfrac{1}{2}+2\mathbb{Z})\pi$ and where $\phi_b(\chi,\tau;M)\in (-\tfrac{1}{2}+2\mathbb{Z})\pi$, each condition of which produces a ($M$-dependent) system of curves that can be plotted over the region $\shelves$ in the $(\chi,\tau)$-plane.  Provided that $\nabla\phi_a(\chi,\tau;M)$ and $\nabla\phi_b(\chi,\tau;M)$ are linearly-independent vectors near a given point $(\chi,\tau)\in\shelves$, the two systems of maximizing curves will intersect each other transversely and there will be isolated local maxima of $|q|^2$ that form a locally-regular parallelogram lattice of increasing density as $M\to+\infty$.  
When $M>0$ is large, the gradient vectors of the phases $\phi_a(\chi,\tau;M)$ and $\phi_b(\chi,\tau;M)$ are dominated by the terms proportional to $M$.  Then, since in the limit that $(\chi,\tau)$ approaches the common boundary of $\shelves$ and $\exterior$ the real critical points $a(\chi,\tau)<b(\chi,\tau)$ coalesce, one can see that these leading terms coincide at the boundary curve, implying that the systems of maximizing curves nearly coincide at this boundary of $\shelves$.  Therefore, in this limit, the lattice of local maxima degenerates into a pattern of stripes instead, such as can be seen along the blue curves in Figure~\ref{fig:2D-Plots}.  For high-order multiple-pole solitons, the stripes in the square modulus $|q|^2$ grow as $(\chi,\tau)$ exits $\shelves$ and form a stripe pattern of $O(1)$ size that are modeled by an elliptic function in the ``oscillatory region'' that is a proper subset of $\exterior$ abutting $\shelves$ \cite{BilmanBW19}; for high-order fundamental rogue waves the stripes instead decay away as $(\chi,\tau)$ exits $\shelves$, leaving only the slowly-varying background amplitude $B(\chi,\tau)>0$ as described on the whole of $\exterior$ by Theorem~\ref{thm:exterior}. See also Figure~\ref{fig:roguewaves-and-solitons}.

As a final corollary of Theorem~\ref{thm:shelves}, we present a space-time localized asymptotic formula for $q(M\chi,M\tau;\mathbf{Q}^{-s},M)$.
\begin{corollary}
Let $s=\pm 1$ be arbitrary, and fix $(\chi_0,\tau_0)\in\shelves$.  Then, as $M\to+\infty$ through an arbitrary sequence of values,
\begin{equation}
q(M\chi_0+\Delta x,M\tau_0+\Delta t;\mathbf{Q}^{-s},M)=Q(\Delta x,\Delta t)\left(1+M^{-\frac{1}{2}}\left(p_a(\Delta x,\Delta t)+p_b(\Delta x,\Delta t)\right)\right)+O(M^{-1})
\label{eq:Q-perturbation-shelves}
\end{equation}
holds uniformly for bounded $(\Delta x,\Delta t)$, where 
\begin{equation}
Q(\Delta x,\Delta t)\defeq \mathcal{A}\ee^{\ii(\xi_0\Delta x-\Omega_0\Delta t)},\quad \mathcal{A}\defeq -\ii s\ee^{-2\ii\phi(\chi_0,\tau_0;M)}B(\chi_0,\tau_0),
\label{eq:leading-plane-wave-intro}
\end{equation}
and
\begin{equation}
\begin{split}
p_a(\Delta x,\Delta t)&\defeq \ii\frac{F_a^{[\shelves]}(\chi_0,\tau_0)}{B(\chi_0,\tau_0)}\Big[
m_a^+(\chi_0,\tau_0)\ee^{\ii\phi_a(\chi_0,\tau_0;M)}\ee^{\ii(\xi_a\Delta x-\Omega_a\Delta t)} \\
&\qquad\qquad\qquad {}- 
m_a^-(\chi_0,\tau_0)\ee^{-\ii\phi_a(\chi_0,\tau_0;M)}\ee^{-\ii(\xi_a\Delta x-\Omega_a\Delta t)}\Big],\\
p_b(\Delta x,\Delta t)&\defeq \ii\frac{F_b^{[\shelves]}(\chi_0,\tau_0)}{B(\chi_0,\tau_0)}\Big[
m_b^+(\chi_0,\tau_0)\ee^{\ii\phi_b(\chi_0,\tau_0;M)}\ee^{\ii(\xi_b\Delta x-\Omega_b\Delta t)} \\
&\qquad\qquad\qquad {}- 
m_b^-(\chi_0,\tau_0)\ee^{-\ii\phi_b(\chi_0,\tau_0;M)}\ee^{-\ii(\xi_b\Delta x-\Omega_b\Delta t)}\Big],
\end{split}
\label{eq:p-a-b-shelves}
\end{equation}
in which real local wavenumbers are defined by
\begin{equation}
\begin{split}
\xi_0&\defeq -2\kappa_\chi(\chi_0,\tau_0),\\
\xi_a&\defeq 2(\kappa_\chi(\chi_0,\tau_0)-h_{\chi-}(a(\chi_0,\tau_0);\chi_0,\tau_0)),\\
\xi_b&\defeq 2(\kappa_\chi(\chi_0,\tau_0)-h_{\chi}(b(\chi_0,\tau_0);\chi_0,\tau_0)),
\end{split}
\label{eq:wavenumbers-intro}
\end{equation}
and real local frequencies are defined by
\begin{equation}
\begin{split}
\Omega_0&\defeq 2\kappa_\tau(\chi_0,\tau_0),\\
\Omega_a&\defeq -2(\kappa_\tau(\chi_0,\tau_0)-h_{\tau-}(a(\chi_0,\tau_0);\chi_0,\tau_0)),\\
\Omega_b&\defeq -2(\kappa_\tau(\chi_0,\tau_0)-h_{\tau}(b(\chi_0,\tau_0);\chi_0,\tau_0)).
\end{split}
\label{eq:frequencies-intro}
\end{equation}
Moreover, $Q(\Delta x,\Delta t)$ is a plane-wave solution of the focusing nonlinear Schr\"odinger equation in the form
\begin{equation}
\ii Q_{\Delta t} + \tfrac{1}{2} Q_{\Delta x\Delta x} + |Q|^2Q=0,
\label{eq:NLS-Deltas}
\end{equation}
and both $p_a(\Delta x,\Delta t)$ and $p_b(\Delta x,\Delta t)$ are particular plane-wave solutions of the formal linearization of \eqref{eq:NLS-Deltas} about $Q(\Delta x,\Delta t)$ written in the frame rotating with the phase of that solution:
\begin{equation}
\ii p_{\Delta t} + \ii\xi_0p_{\Delta x} + \tfrac{1}{2}p_{\Delta x\Delta x} +|\mathcal{A}|^2(p+p^*)=0.
\label{eq:linearization-intro}
\end{equation}
The relative wavenumbers $\xi_a$ and $\xi_b$ also satisfy the inequalities
\begin{equation}
\xi_a^2>4|\mathcal{A}|^2\quad\text{and}\quad \xi_b^2>4|\mathcal{A}|^2.
\label{eq:relative-wavenumber-inequalities}
\end{equation}
\label{cor:shelves-local}
\end{corollary}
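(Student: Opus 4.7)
The plan is to apply Theorem~\ref{thm:shelves} at the shifted base point $(\chi,\tau)=(\chi_0+M^{-1}\Delta x,\tau_0+M^{-1}\Delta t)\in\shelves$, Taylor expand in $M^{-1}$, and then verify that the resulting leading and sub-leading terms satisfy the advertised PDEs and inequalities.

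First, I observe that the slowly varying factors $B,F_a^{[\shelves]},F_b^{[\shelves]},m_a^\pm,m_b^\pm,\delta_a,\delta_b,\mu$ and the smooth parts of $\eta_a^{[\shelves]},\eta_b^{[\shelves]}$ are real analytic on $\shelves$, so they may be replaced by their values at $(\chi_0,\tau_0)$ with error $O(M^{-1})$; multiplied into a factor of size $O(M^{-1/2})$ such replacement contributes only $O(M^{-3/2})$ to the final estimate. The three $M$-scale exponents $M\kappa(\chi,\tau)$, $-2Mh_-(a(\chi,\tau);\chi,\tau)$, and $-2Mh(b(\chi,\tau);\chi,\tau)$ Taylor expand as $Mf(\chi_0,\tau_0)+f_\chi(\chi_0,\tau_0)\Delta x+f_\tau(\chi_0,\tau_0)\Delta t+O(M^{-1})$. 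Crucially, because $a$ and $b$ are critical points (i.e., $h'_-(a)=h'(b)=0$), the chain-rule contributions through $a_\chi,a_\tau,b_\chi,b_\tau$ drop out, reducing $f_\chi,f_\tau$ exactly to the partial derivatives $h_{\chi-}(a;\cdot),\,h_{\tau-}(a;\cdot),\,h_\chi(b;\cdot),\,h_\tau(b;\cdot)$ at fixed $\lambda$. Substituting into the rewritten form \eqref{eq:leading-and-subleading-shelves-rewritten} and matching with the definitions in \eqref{eq:wavenumbers-intro}--\eqref{eq:frequencies-intro} yields the formula \eqref{eq:Q-perturbation-shelves} with $\mathcal{A}$, $Q$, $p_a$, $p_b$ as stated.

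Next, I verify the PDE identities. That $Q$ solves \eqref{eq:NLS-Deltas} reduces to the dispersion relation $\Omega_0=\tfrac{1}{2}\xi_0^2-|\mathcal{A}|^2$, i.e., $\kappa_\tau(\chi_0,\tau_0)+\tfrac{1}{2}B(\chi_0,\tau_0)^2=\kappa_\chi(\chi_0,\tau_0)^2$. That each $p_{a,b}$ solves the linearization \eqref{eq:linearization-intro} amounts to (i) the dispersion relation $(\Omega-\xi_0\xi)^2=\tfrac{1}{4}\xi^2(\xi^2-4|\mathcal{A}|^2)$ for $\xi=\xi_{a,b}$ and $\Omega=\Omega_{a,b}$, and (ii) a compatibility condition between the coefficients of the $e^{\ii(\xi\Delta x-\Omega\Delta t)}$ and $e^{-\ii(\xi\Delta x-\Omega\Delta t)}$ modes which forces precisely the prefactor ratios $m_{a,b}^+/m_{a,b}^-$ appearing in \eqref{eq:p-a-b-shelves}. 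I plan to establish these identities by differentiating the defining conditions of $h$ from Section~\ref{sec:h-intro}---the large-$\lambda$ expansion \eqref{eq:hprime-expansion}, the residues \eqref{eq:hprime-residues} at $\lambda=\pm\ii$, and the symmetric-sum condition $h'_++h'_-=0$ on $\Sigma_g$---with respect to $\chi$ and $\tau$, and then evaluating at $\lambda=a,b,\lambda_0,\lambda_0^*$ in conjunction with the integral representations \eqref{eq:kappa-formula} and \eqref{eq:mu-formula-intro} for $\kappa$ and $\mu$. An alternative route uses that $q$ solves \eqref{eq:NLS-ZBC} exactly: any uniformly valid asymptotic of the form \eqref{eq:Q-perturbation-shelves} must be consistent with the PDE order by order in $M^{-1/2}$, forcing the relations. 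Finally, \eqref{eq:relative-wavenumber-inequalities} is immediate from the dispersion relation: for real $\xi_{a,b},\Omega_{a,b}$ one needs $\xi^2(\xi^2-4|\mathcal{A}|^2)\ge 0$, and since $(\chi_0,\tau_0)$ lies in the open interior of $\shelves$ the zeros $a<b$ of $h'(\cdot;\chi_0,\tau_0)$ are distinct and simple, making $\xi_a,\xi_b$ nonzero and delivering strict inequality.

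The main technical obstacle is the derivation of the dispersion and compatibility identities in the second step. These require careful differentiation of the implicit $g$-function data with respect to $(\chi,\tau)$, respecting the fact that the branch points $\lambda_0(\chi,\tau),\lambda_0(\chi,\tau)^*$ themselves depend on the parameters; the Taylor expansion in the first step and the inequalities in the third are then bookkeeping. Combining direct differentiation of the constraints defining $h$ with the external consistency check coming from $q$ being an exact solution of \eqref{eq:NLS-ZBC} should produce the cleanest rigorous argument.
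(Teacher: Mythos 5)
Your first step---Taylor expanding the two-term formula of Theorem~\ref{thm:shelves} about $(\chi_0,\tau_0)$ with $\Delta\chi=M^{-1}\Delta x$, $\Delta\tau=M^{-1}\Delta t$, freezing the slowly varying amplitudes, and noting that the chain-rule terms through $a_\chi,a_\tau,b_\chi,b_\tau$ drop because $a,b$ are zeros of the quadratic factor in $h'$---is exactly the paper's, and your primary plan for the wave-theoretic identities (differentiate the data defining $h$ and $g$ with respect to $\chi$ and $\tau$) is also the paper's route, carried out there as Lemma~\ref{lemma:g-derivatives}. However, two parts of your argument have genuine gaps. The inequalities \eqref{eq:relative-wavenumber-inequalities} are not ``immediate from the dispersion relation'': with real $(\xi,\Omega)$ the relation \eqref{eq:linearized-dispersion} gives only $\xi^2(\xi^2-4|\mathcal{A}|^2)\ge 0$, hence $\xi^2\ge 4|\mathcal{A}|^2$ once $\xi\ne 0$, and it cannot exclude the band-edge equality $\xi^2=4|\mathcal{A}|^2$ (which is perfectly compatible with real $\Omega=\xi_0\xi$). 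Moreover, your reason for $\xi_a,\xi_b\ne 0$ (distinctness and simplicity of $a<b$ as zeros of $h'$ in $\lambda$) has no bearing on $\xi_a=2(\kappa_\chi-h_{\chi-}(a))$, which is a combination of $\chi$-derivatives. In the paper both the nonvanishing and the strictness come from the explicit formulas of Lemma~\ref{lemma:g-derivatives}: $\kappa_\chi=A$, $\kappa_\tau=A^2-\tfrac12 B^2$, $g_\chi=A-\lambda+R$, $g_\tau=A^2-\tfrac12 B^2-\lambda^2+(A+\lambda)R$, which yield $\xi_a=-2R_-(a)$ and $\xi_b=-2R(b)$ (see \eqref{eq:xi-a-b-explicit}), hence $\xi_{a}^2=4(a-A)^2+4B^2$ and $\xi_b^2=4(b-A)^2+4B^2$, strictly larger than $4B^2=4|\mathcal{A}|^2$.

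Second, the heart of the corollary---that $p_a$ and $p_b$ solve \eqref{eq:linearization-intro}---is reduced by you to ``a compatibility condition \dots which forces precisely the prefactor ratios $m^+/m^-$,'' but that is exactly the nontrivial content and is left unproved. After splitting into real and imaginary parts one obtains a singular homogeneous system (the paper's \eqref{eq:F-a-b-pm-sys-1}--\eqref{eq:F-a-b-pm-sys-2}), and one must verify that the specific vector built from $m^\pm_{a,b}$ lies in its nullspace; this is where the geometric identities $\cos(\arg(a-\lambda_0))=(a-A)/R_-(a)$ and $\cos(\arg(b-\lambda_0))=(b-A)/R(b)$ (cf. \eqref{eq:show-cos-arg-a-bun}--\eqref{eq:show-cos-arg-b-bun}) enter, again via the same explicit formulas; your sketch does not identify these outputs. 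Your fallback argument---that since $q$ solves \eqref{eq:NLS-ZBC} exactly, a uniformly valid expansion of the form \eqref{eq:Q-perturbation-shelves} must satisfy the PDE order by order in $M^{-1/2}$---is not a valid substitute: the $O(M^{-1})$ error is a bound on function values, uniform for bounded $(\Delta x,\Delta t)$, with no control on its $\Delta x$ and $\Delta t$ derivatives, so exactness of $q$ does not force $Q$, $p_a$, $p_b$ to satisfy \eqref{eq:NLS-Deltas} and \eqref{eq:linearization-intro}. Filling these gaps essentially requires proving Lemma~\ref{lemma:g-derivatives} and the residue/geometry computations that follow it, i.e., the part of the paper's proof your proposal defers.
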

Note that in defining the local wavenumbers and frequencies, it makes no difference whether one first evaluates $h(\lambda;\chi,\tau)$ at $\lambda=b(\chi,\tau)$ or $h_-(\lambda;\chi,\tau)$ at $\lambda=a(\chi,\tau)\in\Sigma_g$ and then differentiates with respect to $\chi$ or $\tau$, or the other way around.  This is because $\lambda=a(\chi,\tau)$ and $\lambda=b(\chi,\tau)$ are the roots of the quadratic factor in the  numerator of \eqref{eq:hprime-formula}.

The well-known theory of plane-wave solutions of the focusing nonlinear Schr\"odinger equation 
of arbitrary amplitude $|\mathcal{A}|$ and the formal linearized theory of their perturbations is briefly summarized in Appendix~\ref{A:perturbations}.  A key result of that theory is the existence of an unstable band of relative wavenumbers $\xi$ given by the inequality $\xi^2\le 4|\mathcal{A}|^2$.  It follows from \eqref{eq:relative-wavenumber-inequalities} that the solutions $p_a(\Delta x,\Delta t)$ and $p_b(\Delta x,\Delta t)$ are linearly stable perturbations of the underlying plane wave $Q(\Delta x,\Delta t)$.

The proof of Corollary~\ref{cor:shelves-local} is given in Section~\ref{sec:wave-theoretic-interpretation} below.

\subsubsection{Relations between asymptotic formul\ae\ for $q(M\chi,M\tau;\mathbf{Q}^{-s},M)$ on $\channels$ and $\shelves$}
The asymptotic description of $q(M\chi,M\tau;\mathbf{Q}^{-s},M)$ when $(\chi,\tau)\in\shelves$ given in Theorem~\ref{thm:shelves} is substantially more complicated than for $(\chi,\tau)\in\channels$ (cf., Theorem~\ref{thm:channels}). 
However, comparing \eqref{eq:leading-term-channels-q} and \eqref{eq:subleading-term-shelves-q}, one notices that the part of $\mathfrak{S}_s^{[\shelves]}(\chi,\tau;M)$ written on the first two lines of \eqref{eq:subleading-term-shelves-q} bears a striking resemblance to the leading term $\mathfrak{L}_s^{[\channels]}(\chi,\tau;M)$ valid on the other side of the $\shelves$--$\channels$ boundary curve.  Indeed, $h(\lambda;\chi,\tau)$ degenerates at this curve into the unmodified phase $\vartheta(\lambda;\chi,\tau)$, making the indicated terms match except for the slowly-varying complex factors $m_{a,b}^+(\chi,\tau)\ee^{\ii\delta_{a,b}(\chi,\tau)}$ present within $\shelves$.  Approaching this same curve from $\shelves$, $B(\chi,\tau)\to 0$, so it is also true that the leading term $\mathfrak{L}_s^{[\shelves]}(\chi,\tau;M)$ vanishes in the limit.  However, it is difficult to compare the two asymptotic formul\ae\ quantitatively near the $\shelves$--$\channels$ boundary because $\vartheta''(a(\chi,\tau);\chi,\tau)$ and $h''(a(\chi,\tau);\chi,\tau)$ both vanish as the boundary curve is approached from $\channels$ and from $\shelves$, respectively (we also note that $(\chi,\tau)\mapsto a(\chi,\tau)$ denotes two different real-analytic functions on $\channels$ and $\shelves$ that happen to agree along the common boundary curve).  This makes one of the terms in $\mathfrak{L}^{[\channels]}_s(\chi,\tau;M)$ and two of the terms in $\mathfrak{S}^{[\shelves]}_s(\chi,\tau;M)$ blow up at the boundary curve.  Of course, neither Theorem~\ref{thm:channels} nor Theorem~\ref{thm:shelves} accurately describes $q(M\chi,M\tau;\mathbf{Q}^{-s},M)$ near this curve, so this blow up merely signals the need for further double-scaling asymptotic analysis to resolve the wave field in its vicinity.

\subsubsection{Relations between the asymptotic formula for $\psi_k(M\chi,M\tau)$ on $\exterior$ with those valid on $\channels$ and $\shelves$}
To discuss the region $\exterior$ in light of Theorem~\ref{thm:exterior}, we need to restrict attention to the fundamental rogue-wave solutions $\psi_k(M\chi,M\tau)$ where $M=\tfrac{1}{2}k+\tfrac{1}{4}$.  As the region $\exterior$ abuts both $\channels$ and $\shelves$, it is interesting and useful to compare asymptotic formul\ae\ for $\psi_k(M\chi,M\tau)$ valid on all three regions.  

The simplest observation is that since $B(\chi,\tau)\downarrow 0$ as $(\chi,\tau)$ approaches $\channels$ from anywhere in the exterior,  in particular from $\exterior$, $L^{[\exterior]}_k(\chi,\tau)\to 0$ also in this limit.  This fact is consistent with the fact that $L^{[\channels]}_k(\chi,\tau)$ is small of order $k^{-\frac{1}{2}}$.  However, we note that neither Corollary~\ref{cor:rogue-wave-channels} nor Theorem~\ref{thm:exterior} is valid on a neighborhood of any common boundary point of $\channels$ and $\exterior$.  Like the problem of studying $q(M\chi,M\tau;\mathbf{Q}^{-s},M)$ near the common boundary of $\channels$ and $\shelves$, some new phenomena may be uncovered by a suitable double-scaling analysis to zoom in on points on the curve separating $\exterior$ from $\channels$.  

We can give a more quantitative comparison between the asymptotic formul\ae\ for $\psi_k(M\chi,M\tau)$ on the domains $\exterior$ and $\shelves$.  First, note that the integral in \eqref{eq:gamma-formula-intro} originally defined for $(\chi,\tau)\in\exterior$ admits continuation to $(\mathbb{R}_{\ge 0}\times\mathbb{R}_{\ge 0})\setminus\overline{\channels}$ as a real analytic function, and the latter domain contains also $\shelves$.  Thus for $(\chi,\tau)\in\shelves$, $\gamma(\chi,\tau)$ and $\kappa(\chi,\tau)$ given by \eqref{eq:kappa-formula} can be compared.  Indeed, deforming the integration path $\Sigma_\mathrm{c}$ in \eqref{eq:kappa-formula} leftward to lie partly along the right edge of the branch cut $\Sigma_g$ by its upward orientation, one can replace the resulting integral along $\Sigma_g$ of $1/R_-(\lambda;\chi,\tau)$ by half of the integral of $1/R(\lambda;\chi,\tau)$ over a positively oriented loop enclosing $\Sigma_g$.  Evaluating the latter integral by residues using $R(\lambda;\chi,\tau)=\lambda + O(1)$ as $\lambda\to\infty$ and comparing with \eqref{eq:gamma-formula-intro} one obtains the following identity:
\begin{equation}
\kappa(\chi,\tau)=\gamma(\chi,\tau)-\pi,
\quad (\chi,\tau)\in\shelves.
\label{eq:kappa-gamma}
\end{equation}
We then have the following, which uses the fact that $M=\tfrac{1}{2}k+\tfrac{1}{4}$ and $s=(-1)^k$ for the fundamental rogue wave of order $k$.
\begin{corollary}
The phase and amplitude of the leading term $L_k^{[\exterior]}(\chi,\tau)$ admit real analytic continuation from $\exterior$ into $\shelves$, in which the following identity holds:
\begin{equation}
L_k^{[\shelves]}(\chi,\tau)=\ee^{-2\ii\mu(\chi,\tau)}L_k^{[\exterior]}(\chi,\tau),\quad (\chi,\tau)\in\shelves.
\end{equation}
Therefore, for fundamental rogue waves of high order $k$, the leading terms agree for $(\chi,\tau)\in\exterior$ and for $(\chi,\tau)\in\shelves$, up to a phase $-2\mu(\chi,\tau)$ that vanishes as the common boundary is approached from $\shelves$.
\label{cor:leading-term-match-shelves-exterior}
\end{corollary}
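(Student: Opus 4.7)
The proof is essentially a direct computation built on the identity \eqref{eq:kappa-gamma} already discussed in the text. The plan proceeds in four short steps.

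First, I would establish the claimed real-analytic continuation. The amplitude $B(\chi,\tau)$ is constructed in Section~\ref{sec:h-intro} as a real-analytic function on the whole of $(\mathbb{R}_{\ge 0}\times\mathbb{R}_{\ge 0})\setminus\overline{\channels}$ (not just on $\exterior$), as are $A(\chi,\tau)$ and the branch points $\lambda_0(\chi,\tau)=A+\ii B$, $\lambda_0(\chi,\tau)^*$. Consequently the function $R(\lambda;\chi,\tau)$ defined by \eqref{eq:R-define} is jointly analytic in $\lambda$ and in $(\chi,\tau)$ for $\lambda$ away from $\Sigma_g$ and $(\chi,\tau)$ in $(\mathbb{R}_{\ge 0}\times\mathbb{R}_{\ge 0})\setminus\overline{\channels}$. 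For any fixed $(\chi_0,\tau_0)\in\shelves$ the two integration paths in \eqref{eq:gamma-formula-intro} can be chosen (e.g., as vertical segments from $\pm\ii$ to $\lambda_0(\chi_0,\tau_0)^*$ and $\lambda_0(\chi_0,\tau_0)$, respectively) so as to avoid $\Sigma_g$ for all $(\chi,\tau)$ in a neighborhood; this shows $\gamma(\chi,\tau)$ is real-analytic on a neighborhood of $(\chi_0,\tau_0)$, hence on all of $\shelves$.

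Second, I would invoke the identity \eqref{eq:kappa-gamma}, namely $\kappa(\chi,\tau)=\gamma(\chi,\tau)-\pi$ for $(\chi,\tau)\in\shelves$, which the paragraph preceding the corollary already proves by deforming $\Sigma_\mathrm{c}$ in \eqref{eq:kappa-formula} leftward onto $\Sigma_g$, writing the resulting integral of $1/R_-$ as half the integral of $1/R$ over a positively oriented loop surrounding $\Sigma_g$, and evaluating the latter by the residue at infinity using the normalization $R(\lambda;\chi,\tau)=\lambda+O(1)$.

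Third, with the values $M=\tfrac{1}{2}k+\tfrac{1}{4}$ and $s=(-1)^k$ plugged into the defining formulas \eqref{eq:leading-term-shelves-q} and \eqref{eq:leading-term-exterior} together with \eqref{eq:rw-terms-shelves}, the ratio reduces to
\begin{equation}
\frac{L_k^{[\shelves]}(\chi,\tau)}{L_k^{[\exterior]}(\chi,\tau)} = \ee^{-2\ii M(\kappa(\chi,\tau)-\gamma(\chi,\tau))}\ee^{-2\ii\mu(\chi,\tau)}\ee^{-\frac{1}{2}\ii s\pi}
=\ee^{2\ii M\pi}\ee^{-\frac{1}{2}\ii s\pi}\ee^{-2\ii\mu(\chi,\tau)}.
\end{equation}
The integer-matching step is then the only numerical check required: since $2M\pi=k\pi+\tfrac{\pi}{2}$ and $\ee^{\ii k\pi}=s$, one finds $\ee^{2\ii M\pi}\ee^{-\frac{1}{2}\ii s\pi}=(\ii s)(-\ii s)=1$, yielding the stated identity.

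Fourth and finally, I would note that the phase correction $-2\mu(\chi,\tau)$ vanishes as $(\chi,\tau)$ approaches the common boundary of $\shelves$ and $\exterior$. This is immediate from the definition \eqref{eq:mu-formula-intro}: on this boundary curve the quadratic factor in the numerator of \eqref{eq:hprime-formula} has a real double root, so $a(\chi,\tau)$ and $b(\chi,\tau)$ coalesce and the real interval of integration shrinks to a point. There is no real obstacle in this proof; essentially all the analytic content is already established, and the only thing to watch is the mod-$2\pi$ bookkeeping in the third step, which hinges squarely on the specific correlation $s=(-1)^k$ and $M=\tfrac{1}{2}k+\tfrac{1}{4}$ distinguishing rogue waves within the broader family characterized by Riemann--Hilbert Problem~\ref{rhp:rogue-wave-reformulation}.
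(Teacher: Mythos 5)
Your proposal is correct and follows essentially the same route as the paper: the identity $\kappa(\chi,\tau)=\gamma(\chi,\tau)-\pi$ from \eqref{eq:kappa-gamma} plus the rogue-wave quantization $M=\tfrac{1}{2}k+\tfrac{1}{4}$, $s=(-1)^k$, from which the phase factor $\ee^{2\ii M\pi}\ee^{-\frac{1}{2}\ii s\pi}=1$ and hence the stated relation, with real analyticity on $(\mathbb{R}_{\ge 0}\times\mathbb{R}_{\ge 0})\setminus\overline{\channels}$ and the coalescence $a\to b$ giving $\mu\to 0$ at the boundary. Your mod-$2\pi$ bookkeeping checks out exactly, so there is nothing to add.
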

The amplitude $|\psi_k(M\chi,M\tau)|$ is compared with that of the common leading term, namely $B(\chi,\tau)$, on the exterior of $\channels$ in Figure~\ref{fig:AbsLeadingTermShelves}.  
\begin{figure}[h]
\begin{center}
\phantom{!}\hfill\includegraphics[width=0.4\linewidth]{AltScaling-amplitude-regions-plot-orderk-32.pdf}\hfill%
\includegraphics[width=0.4\linewidth]{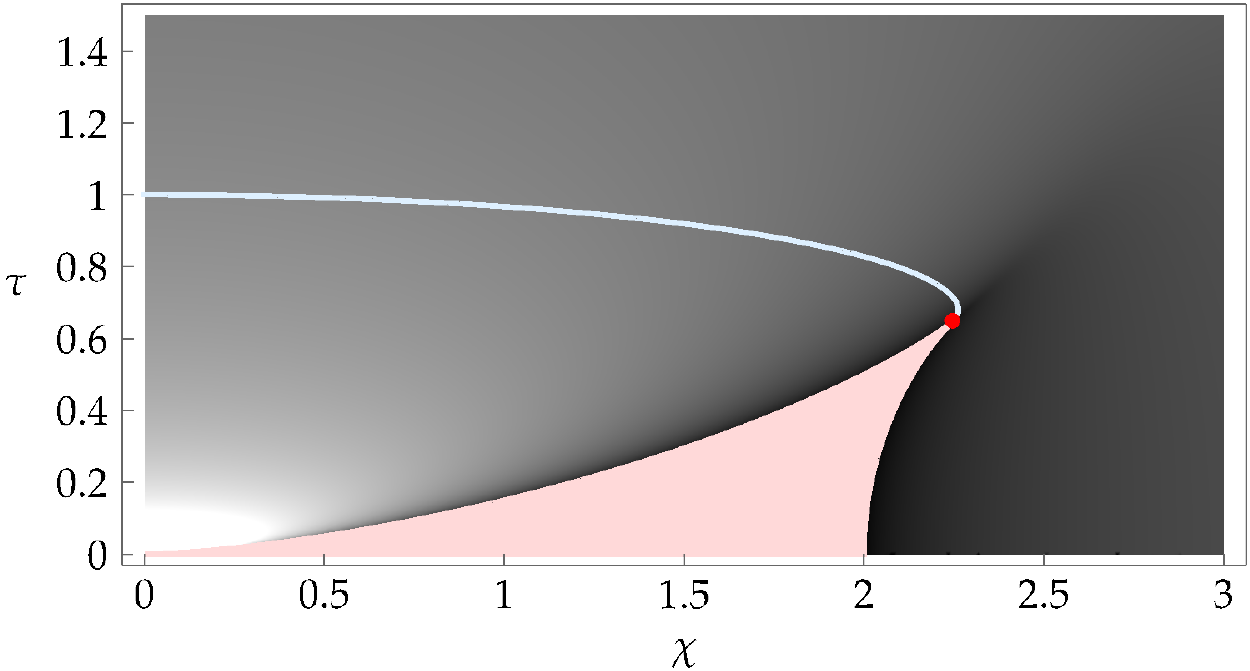}\hfill\phantom{!}%
\end{center}
\caption{Left:  same as the right-hand panel of Figure~\ref{fig:2D-Plots}.  Right:  amplitude $B(\chi,\tau)>0$ of the leading term as a function of $(\chi,\tau)\in\shelves\cup\exterior$.  Both plots employ the same colormap.}
\label{fig:AbsLeadingTermShelves}
\end{figure}

In terms of derivatives with respect to $(\chi,\tau)$ rather than $(x,t)$, the focusing nonlinear Schr\"odinger equation \eqref{eq:NLS-ZBC} satisfied by $q$ takes the rescaled ``semiclassical'' form:
\begin{equation}
\ii\epsilon q_\tau +\tfrac{1}{2}\epsilon^2q_{\chi\chi} + |q|^2q=0,\quad\epsilon=\frac{1}{M}\ll 1.
\label{eq:NLS-semiclassical}
\end{equation}
To study this equation for small $\epsilon$ it is convenient to introduce in place of $q$ Madelung's real variables $\rho$ and $U$ given by
\begin{equation}
\rho(\chi,\tau)\defeq |q|^2\quad\text{and}\quad U(\chi,\tau)\defeq \epsilon\mathrm{Im}\left(\frac{q_\chi}{q}\right).
\label{eq:Madelung-transform}
\end{equation}
Then, without approximation \eqref{eq:NLS-semiclassical} can be written in the form
\begin{equation}
\rho_\tau+(\rho U)_\chi=0\quad\text{and}\quad U_\tau + \left(\tfrac{1}{2}U^2-\rho\right)_\chi = \tfrac{1}{2}\epsilon^2 F[\rho]_{\chi\chi},\quad\text{where}\quad
F[\rho]\defeq\frac{\rho_{\chi\chi}}{2\rho}-\left(\frac{\rho_\chi}{2\rho}\right)^2.
\label{eq:Madelung-exact}
\end{equation}
Assuming that $\rho\neq 0$, it may appear reasonable to neglect the formally small term in \eqref{eq:Madelung-exact} and hence obtain the approximating system
\begin{equation}
\rho_\tau+(\rho U)_\chi=0\quad\text{and}\quad U_\tau + \left(\tfrac{1}{2}U^2-\rho\right)_\chi = 0.
\label{eq:dispersionless-NLS}
\end{equation}
This is an elliptic quasilinear system on $\rho$ and $U$ known as the (focusing) \emph{dispersionless nonlinear Schr\"odinger system}.  Now, observing that $\rho$ and $U$ defined by \eqref{eq:Madelung-transform} are invariant under $q\mapsto \psi\defeq \ee^{-\ii\tau/\epsilon}q$, we may apply these definitions to the leading terms $L_k^{[\exterior]}(\chi,\tau)$ and $L_k^{[\shelves]}(\chi,\tau)$ of $\psi_k$ on $\exterior$ and $\shelves$, respectively.  Up to a correction term in $U$ proportional to  $\epsilon$ that is only present when $(\chi,\tau)\in\shelves$ (originating from the phase correction $-2\mu(\chi,\tau)$), the formul\ae\ in both regions read
\begin{equation}
\rho(\chi,\tau)=B(\chi,\tau)^2\quad\text{and}\quad U(\chi,\tau)=-2\gamma_\chi(\chi,\tau).
\label{eq:Madelung-transform-on-leading-term}
\end{equation}
\begin{corollary}
The expressions \eqref{eq:Madelung-transform-on-leading-term} satisfy the dispersionless nonlinear Schr\"odinger system \eqref{eq:dispersionless-NLS} for $(\chi,\tau)\in(\mathbb{R}_{\ge 0}\times\mathbb{R}_{\ge 0})\setminus\overline{\channels}$, i.e., for $(\chi,\tau)\in\exterior$ or $(\chi,\tau)\in\shelves$, or on the common boundary curve.
\label{cor:Whitham}
\end{corollary}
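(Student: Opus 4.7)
The plan is to verify the two equations of \eqref{eq:dispersionless-NLS} by establishing the remarkably simple closed-form identities
\begin{equation*}
\gamma_\chi(\chi,\tau) = A(\chi,\tau) \qquad\text{and}\qquad \gamma_\tau(\chi,\tau) = A(\chi,\tau)^2 - \tfrac{1}{2}B(\chi,\tau)^2.
\end{equation*}
With these in hand, the second equation of \eqref{eq:dispersionless-NLS} (with $\rho = B^2$ and $U = -2\gamma_\chi$) is equivalent, after a single differentiation in $\chi$, to $\gamma_{\chi\tau} = 2\gamma_\chi\gamma_{\chi\chi} - BB_\chi$, which follows at once by taking $\partial_\chi$ of the second identity above.

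To establish the two identities, I observe that the partial derivatives $h_\chi$ and $h_\tau$ (taken at fixed $\lambda$) admit clean scalar Riemann-Hilbert characterizations. Because $\vartheta_\chi = \lambda$ and $\vartheta_\tau = \lambda^2$ are entire and $g \to 0$ at infinity, and because $h(\lambda) - h(\lambda_0) = O((\lambda - \lambda_0)^{3/2})$ near $\lambda_0$, both $h_\chi$ and $h_\tau$ are analytic on $\mathbb{C}\setminus\Sigma_g$, bounded at the branch points $\lambda_0$ and $\lambda_0^*$, with $h_\chi(\lambda) = \lambda + O(\lambda^{-1})$ and $h_\tau(\lambda) = \lambda^2 + O(\lambda^{-1})$ as $\lambda \to \infty$. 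Differentiating the normalization $h_+(\lambda) + h_-(\lambda) = 2\kappa(\chi,\tau)$ on $\Sigma_g$ (and using $\gamma - \kappa = \pi$ per \eqref{eq:kappa-gamma}, so that $\kappa_\chi = \gamma_\chi$ and $\kappa_\tau = \gamma_\tau$) yields the jump conditions $(h_\chi)_+ + (h_\chi)_- = 2\gamma_\chi$ and $(h_\tau)_+ + (h_\tau)_- = 2\gamma_\tau$ on $\Sigma_g$. Using $R_+ + R_- = 0$ on $\Sigma_g$ together with the expansion $R(\lambda) = \lambda - A + B^2/(2\lambda) + AB^2/(2\lambda^2) + O(\lambda^{-3})$, one solves these scalar problems by inspection:
\begin{equation*}
h_\chi(\lambda;\chi,\tau) = R(\lambda;\chi,\tau) + A(\chi,\tau), \qquad h_\tau(\lambda;\chi,\tau) = R(\lambda;\chi,\tau)(\lambda + A(\chi,\tau)) + A(\chi,\tau)^2 - \tfrac{1}{2}B(\chi,\tau)^2.
\end{equation*}
Matching the (constant) boundary values on $\Sigma_g$ then gives $\gamma_\chi = A$ and $\gamma_\tau = A^2 - B^2/2$ as claimed.

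The remaining first equation of \eqref{eq:dispersionless-NLS} reduces, using $\gamma_\chi = A$, to the identity $BB_\tau = (AB^2)_\chi$, which follows from the equality of mixed partials $\partial_\tau h_\chi = \partial_\chi h_\tau$ (valid since $h$ is jointly smooth in $(\chi,\tau)$ off the branch cuts). Concretely, writing $h(\lambda;\chi,\tau) = \tau\lambda^2 + \chi\lambda + h_1(\chi,\tau)\lambda^{-1} + O(\lambda^{-2})$ at infinity, the two formulas above furnish $\partial_\chi h_1 = B^2/2$ and $\partial_\tau h_1 = AB^2$, whose compatibility is precisely $BB_\tau = (AB^2)_\chi$, equivalent to $(B^2)_\tau = 2(B^2\gamma_\chi)_\chi$. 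The main technical obstacle is the uniqueness of the scalar Riemann-Hilbert problems in the admissible class: the kernel (solutions with zero right-hand side, vanishing at infinity) consists of multiples of $1/R(\lambda)$, which are excluded by the boundedness requirement at the branch points. This is otherwise standard but requires some care near $\lambda_0$ and $\lambda_0^*$ in invoking the $g$-function construction.
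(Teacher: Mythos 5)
Your proposal is correct and follows essentially the same route as the paper: the identities $\gamma_\chi=A$ and $\gamma_\tau=A^2-\tfrac{1}{2}B^2$ are exactly the content of Lemma~\ref{lemma:g-derivatives} (proved there by the same scalar Riemann--Hilbert characterization of the parameter derivatives, phrased for $g_\chi,g_\tau$ rather than $h_\chi,h_\tau$, with the constant fixed by the $O(\lambda^{-1})$ normalization), and your use of mixed partials of $\gamma$ and of the $\lambda^{-1}$ coefficient reproduces the derivation of \eqref{eq:Whitham1}--\eqref{eq:Whitham2} in Remark~\ref{rem:Whitham}. The only cosmetic difference is that you verify the explicit solutions by inspection plus a uniqueness/kernel argument instead of constructing them via the Plemelj formula, which is equivalent.
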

Note that the elliptic nature of \eqref{eq:dispersionless-NLS} is consistent with the real analyticity of $A(\chi,\tau)$ and $B(\chi,\tau)$ (and, via \eqref{eq:gamma-formula-intro}, $\gamma(\chi,\tau)$).  The dispersionless nonlinear Schr\"odinger system \eqref{eq:dispersionless-NLS} is also sometimes called the genus-zero Whitham modulation system.  In the Whitham modulation theory it arises from an ansatz of a solution of \eqref{eq:NLS-semiclassical} in the form of a modulated plane wave.  The proof of Corollary~\ref{cor:Whitham} relies on the identity $\gamma_\chi(\chi,\tau)=\kappa_\chi(\chi,\tau)=A(\chi,\tau)$ that is established in Lemma~\ref{lemma:g-derivatives}, and is given in Remark~\ref{rem:Whitham} below.

\begin{remark}
A version of Corollary~\ref{cor:Whitham} holds more generally for $\rho$ and $U$ extracted from the leading term $\mathfrak{L}_s^{[\shelves]}(\chi,\tau;M)$ valid as an approximation for $q(M\chi,M\tau;\mathbf{Q}^{-s},M)$ for general large $M$ when $(\chi,\tau)\in\shelves$.  One need only replace $\gamma_\chi(\chi,\tau)$ with $\kappa_\chi(\chi,\tau)$ in the definition \eqref{eq:Madelung-transform-on-leading-term}.  As such it also holds for the high-order multiple-pole soliton solutions studied in \cite{BilmanBW19}.
\end{remark}

\subsection{The behavior of $q(M\chi,M\tau;\mathbf{Q}^{-s},M)$ for $(\chi,\tau)\in\exterior$ for general large $M$}
\label{sec:general}
It is shown in \cite{BilmanBW19} that in the unbounded domain $\exterior$, high-order multiple-pole solitons behave quite differently from high-order fundamental rogue waves as reported in Theorem~\ref{thm:exterior}.  For the soliton solutions, the domain $\exterior$ is divided into two components by the curve $\ell_\mathrm{sol}$ described by \eqref{eq:DegenerateBoutroux} and shown with a dotted black line in Figure~\ref{fig:RegionsPlot}.  On the component adjacent to the positive $\chi$-axis, the solution is exponentially small when $M=\tfrac{1}{2}k\to+\infty$.  This is consistent with the exponential decay of $q(x,t;\mathbf{Q}^{-s},\tfrac{1}{2}k)$ as $x\to \pm\infty$ for fixed $t$, although for technical reasons the proof given in \cite{BilmanBW19} does not allow $(\chi,\tau)$ to become unbounded.  On the complementary component adjacent to the positive $\tau$-axis, the solution behaves completely differently.  Here $q(\tfrac{1}{2}k\chi,\tfrac{1}{2}k\tau;\mathbf{Q}^{-s},\tfrac{1}{2}k)$ is approximated by a modulated elliptic function of amplitude asymptotically independent of $M=\tfrac{1}{2}k$.  The elliptic modulus $m=m(\chi,\tau)$ approaches $m=0$ as $(\chi,\tau)$ approaches the common boundary with $\shelves$, while it approaches $m=1$ instead as $(\chi,\tau)$ approaches the curve $\ell_\mathrm{sol}$.  In the former limit the elliptic wave degenerates onto the trigonometric plane-wave leading term given in all cases of $M\to\infty$ in Theorem~\ref{thm:shelves}, and in the latter limit the elliptic wave degenerates into a train of isolated solitons (which explains our notation $\ell_\mathrm{sol}$).   

When we consider solutions $q(x,t;\mathbf{Q}^{-s},M)$ that do not fit into either family, we see both common features and substantial differences comparing with the special cases of solitons and rogue waves.  The way to take the limit $M\to\infty$ in this situation is to represent $M$ in modular form as $M=\tfrac{1}{2}k+r$ with quotient $k\in\mathbb{Z}_{\ge 0}$ and remainder $0\le r<\tfrac{1}{2}$.  Then we fix the remainder and let $k\to+\infty$.  Of course the soliton case is $r=0$, and the rogue wave case is $r=\tfrac{1}{4}$.  When $r\neq 0$ and $r\neq\tfrac{1}{4}$, the large-$M$ asymptotic behavior of $q(M\chi,M\tau;\mathbf{Q}^{-s},M)$ for bounded $(\chi,\tau)$ in $\exterior$ depends on whether $(\chi,\tau)\in\exterior_\chi$ or $(\chi,\tau)\in\exterior_\tau$ (see Figure~\ref{fig:RegionsPlot}).  Because we think it will be interesting for the reader, in the following paragraphs we describe what we have learned about these solutions; however full details and proofs will be given in a subsequent paper devoted to the case of general $M\ge 0$.

If $(\chi,\tau)\in\exterior_\chi$, then a version of Theorem~\ref{thm:exterior} applies to $\ee^{-\ii M\tau}q(M\chi,M\tau;\mathbf{Q}^{-s},M)$ for $r\neq 0$ and $r\neq\tfrac{1}{4}$, in which the leading term is multiplied by an $M$-independent but $(\chi,\tau)$-dependent phase factor, and in which the error term is larger, of size $O(M^{-\frac{1}{2}})$.  The sub-leading term proportional to $M^{-\frac{1}{2}}$ is simpler than for $(\chi,\tau)\in\shelves$, consisting of only one of the two waves present for instance in \eqref{eq:p-a-b-shelves}; this means that the amplitude fluctuations will form a stripe pattern rather than an interference pattern such as occurs in $\shelves$.  The approximation of $\ee^{-\ii M\tau}q(M\chi,M\tau;\mathbf{Q}^{-s},M)$ tends to the background plane wave $\psi=\pm 1$ as $(\chi,\tau)\to\infty$ in $\exterior_\chi$, which is consistent with the exact boundary conditions satisfied by the fundamental rogue-wave solutions occurring for $r=\tfrac{1}{4}$; however the proof we have in mind of this result is not valid for technical reasons when $(\chi,\tau)$ become unbounded.  Nonetheless, it follows from a different proof that, like the rogue-wave solutions, all solutions $q(x,t;\mathbf{Q}^{-s},M)$ for remainder $r\neq 0$ satisfy nonzero boundary conditions with unit limiting amplitude as $x\to\pm\infty$; however for $r\neq\tfrac{1}{4}$ the decay is so slow that the difference between $q$ and the background does not even lie in $L^2(\mathbb{R})$.

On the other hand, if $(\chi,\tau)\in\exterior_\tau$, then as $M\to\infty$ with $r\neq 0$ and $r\neq\tfrac{1}{4}$ fixed, $q(M\chi,M\tau;\mathbf{Q}^{-s},M)$ is approximated by a modulated elliptic function of amplitude neither small nor large.  In the part of $\exterior_\tau$ above (i.e., for larger $\tau$) the curve $\ell_\mathrm{sol}$, the leading term of the approximation differs from that valid in the same region for the multiple-pole soliton case of $r=0$ only in phase modifications that are independent of $M\gg 1$.  However the error term is of order $O(M^{-\frac{1}{2}})$ rather than $O(M^{-1})$.  In the part of $\exterior_\tau$ lying below the curve $\ell_\mathrm{sol}$, the solution $q(M\chi,M\tau;\mathbf{Q}^{-s},M)$ evidently behaves neither like the rogue-wave solutions for $r=\tfrac{1}{4}$ (approximated by modulated plane waves) nor like the multiple-pole soliton solutions for $r=0$ (exponentially small).  The elliptic modulus varies with $(\chi,\tau)$ from $m=1$ on the curve $\ell_\mathrm{sol}$ to $m=0$ on the curve $\partial\exterior_\chi\cap\partial\exterior_\tau$ (the blue dotted curve in Figure~\ref{fig:RegionsPlot}).  

The asymptotic description of the solution $q(M\chi,M\tau;\mathbf{Q}^{-s},M)$ for remainder $r\neq 0$ and $r\neq\tfrac{1}{4}$ and $(\chi,\tau)\in\exterior$ is consistent with the universal long-time asymptotics for solutions of the focusing nonlinear Schr\"odinger equation with nonzero boundary conditions at $x=\pm\infty$ established by Biondini and Mantzavinos \cite{BiondiniM17}.  These authors showed that for a wide variety of initial conditions, the solution depends asymptotically only on the ratio $\xi\defeq x/t=\chi/\tau$, and as a function of $\xi$ is approximated for $|\xi|<\sqrt{8}$ (translating to our scaling of the equation from theirs) by a modulated elliptic function solution with elliptic modulus $m(\xi)$ varying between $m(0)=1$ and $m(\sqrt{8})=0$, and approximated for $|\xi|>\sqrt{8}$ by a plane-wave solution of constant amplitude equal to that specified by the large-$x$ boundary conditions.  This is consistent with our description of $q(M\chi,M\tau;\mathbf{Q}^{-s},M)$ for remainder $r\neq 0$ and $r\neq\tfrac{1}{4}$ because 
\begin{itemize}
\item
the condition $\chi/\tau=\sqrt{8}$ is precisely the linear asymptote valid for large $\tau$ (dotted gray line in Figure~\ref{fig:RegionsPlot}) for the curve $\partial\exterior_\chi\cap\partial\exterior_\tau$ (dotted blue curve in Figure~\ref{fig:RegionsPlot}), and 
\item
the condition $m(0)=1$ is consistent with $m(\chi,\tau)\to 1$ as $(\chi,\tau)\to\ell_\mathrm{sol}$ because the latter curve, while not asymptotic to any line for large $\tau$, satisfies 
$\chi=\ln(\tau)+O(1)$ as $\tau\to +\infty$.  Hence the whole region above the curve $\ell_\mathrm{sol}$ in Figure~\ref{fig:RegionsPlot} can be found to the left of $\xi=\xi_0$ for any $\xi_0>0$, asymptotically in the large-$\tau$ limit.
\end{itemize}
On the other hand, the class of solutions considered in \cite{BiondiniM17} does not contain $q(x,t;\mathbf{Q}^{-s},M)$ for remainder $r\neq 0$ and large $M$ because increasing $M$ by half-integer increments amounts to iteration of a Darboux transformation \cite{BilmanM19} that injects solitons/rogue waves into the solution at the distinguished value of the spectral parameter corresponding to the nonzero background solution (here, $\lambda=\pm\ii$).  The slow decay to the background as $x\to\pm\infty$ for  $r\neq 0,\tfrac{1}{4}$ also obstructs analysis by inverse-scattering methods.  There are some extensions of the results of \cite{BiondiniM17} that allow for finitely many solitons with generic spectral parameters but no results for the case that the injected solitons are at the distinguished value.  It is also true that, as has been mentioned several times already, it is not possible to directly compare large-$(x,t)$ asymptotics with large-$M$ and bounded $(\chi,\tau)$ asymptotics without additional arguments that are not part of our proofs.

The reason why the solution $q(M\chi,M\tau;\mathbf{Q}^{-s},M)$ is so sensitive to the value of the remainder $r$ when $(\chi,\tau)\in\exterior$ is that in this domain we need to use the limiting form of the jump contour $\Sigma_\circ$ in which it is deformed into a dumbbell shape consisting of two loops connected by a ``neck'' that we denote by $N$ in Section~\ref{sec:dumbbell}.  When $\lambda\in N$, the algebraic form of the jump condition for this deformed problem depends explicitly on $r$; see Remark~\ref{rem:M-quantum} below.  In particular, for the cases $r=0$ and $r=\tfrac{1}{4}$ the jump matrix has two elements that vanish exactly, which prohibits the use of two of the four canonical factorizations of unit-determinant $2\times 2$ matrices:
\begin{equation}
\begin{alignedat}{3}
\begin{bmatrix} a&b\\c&d\end{bmatrix}&=\begin{bmatrix}1 & 0\\ca^{-1} & 1\end{bmatrix}a^{\sigma_3}\begin{bmatrix}1 & ba^{-1}\\0 & 1\end{bmatrix},&&\quad a\neq 0, &&\quad\text{(``LDU'')},\\
\begin{bmatrix} a&b\\c&d\end{bmatrix}&=\begin{bmatrix}1 & 0\\db^{-1}&1\end{bmatrix}\begin{bmatrix}0&b\\-b^{-1} & 0\end{bmatrix}\begin{bmatrix}1&0\\ab^{-1}&1\end{bmatrix},&&\quad b\neq 0,&&\quad\text{(``LTL'')},\\
\begin{bmatrix} a&b\\c&d\end{bmatrix}&=\begin{bmatrix}1 & ac^{-1}\\0 & 1\end{bmatrix}
\begin{bmatrix}0 & -c^{-1}\\c & 0\end{bmatrix}\begin{bmatrix}1 & dc^{-1}\\0 & 1\end{bmatrix},&&\quad c\neq 0,&&\quad\text{(``UTU'')},\\
\begin{bmatrix} a&b\\c&d\end{bmatrix}&=\begin{bmatrix}1 & bd^{-1}\\0 & 1\end{bmatrix}d^{-\sigma_3}
\begin{bmatrix}1 & 0\\cd^{-1}&1\end{bmatrix},&&\quad d\neq 0,&&\quad\text{(``UDL'')}.
\end{alignedat}
\end{equation}
It turns out that for $r=0$ (multiple-pole soliton case) only the LDU and UDL factorizations are possible and they are both trivial as the jump matrix on $N$ is already diagonal.  Similarly for $r=\tfrac{1}{4}$ (rogue wave case) only the LTL and UTU factorizations are possible and they are both trivial as the jump matrix on $N$ is already off-diagonal.  On the other hand, for $r\neq 0$ and $r\neq \tfrac{1}{4}$ there are four nonzero pivots and hence all four factorizations are admissible; moreover all four are essential to the steepest-descent arguments behind the proofs.

As they do not concern rogue waves and require substantially different proofs, all of the results reported in Section~\ref{sec:general} describing $q(M\chi,M\tau;\mathbf{Q}^{-s},M)$ for large $M$ with remainder $r\neq 0$ and $r\neq \tfrac{1}{4}$ and $(\chi,\tau)\in\exterior$ will be given in more detail and fully proven in a forthcoming paper.

\subsection{Numerical illustration of the results}

We now illustrate the accuracy of the asymptotic formul\ae{} obtained for the fundamental rogue waves $\psi_k(M \chi, M \tau)$, $M=\tfrac{1}{2}k+\tfrac{1}{4}$, for $(\chi,\tau)$ in the regions $\channels$, $\exterior$, and $\shelves$. 
In each subsubsection that follows, we first plot the exact solution $\psi_k$ against the approximation provided by the asymptotic formul\ae{} for certain values of $k$. Second, we study the trends in the relevant error sizes by comparing the approximations with the family of exact solutions $\psi_k$ as the value of $k$ increases over a set of positive integers $\mathcal{K}$.
The solutions $\psi_{k}$ are computed by numerically solving linear systems system obtained from their representations given by \rhref{rhp:rogue-wave}. We refer the reader to \cite[Section 3.5]{BilmanM19} for the derivation of the linear system used in this work. 

\subsubsection{Numerical illustration of the asymptotic formula for $\psi_k(M\chi,M\tau)$ in $\channels$} 
Here we give numerical evidence confirming Corollary~\ref{cor:rogue-wave-channels}.  Recall the leading term approximation of $\psi_k(M\chi,M\tau)$ denoted $L_k^{[\channels]}(\chi,\tau)$ and defined in \eqref{eq:leading-term-channels}.
\paragraph{\textit{Comparison plots}} We first consider $\tau = 0$, in which case $\channels$ comprises the open interval $0<\chi<2$, and we fix the proper subset $[0.25,1.75]$ of values for $\chi$. We plot $\psi_{k}(M \chi, M\tau)$ versus $L_k^{[\channels]}(\chi,\tau)$ for $k=15$ and for $k=32$ in Figure~\ref{f:comparison-channels-tau-0-k-15-32}. Note that the solution is real-valued for $\tau=0$ hence no plots for the imaginary parts are given.
\begin{figure}[h]
\includegraphics[width=0.45\textwidth]{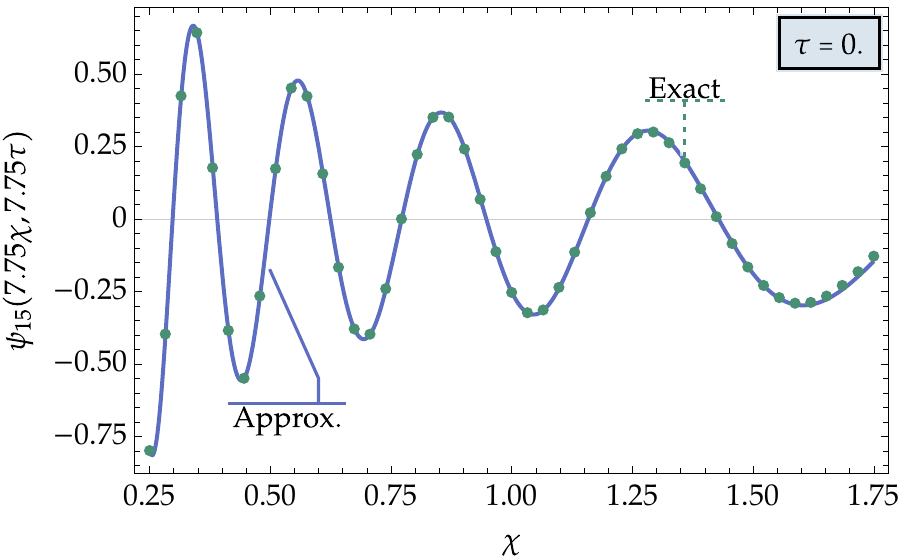}\qquad
\includegraphics[width=0.45\textwidth]{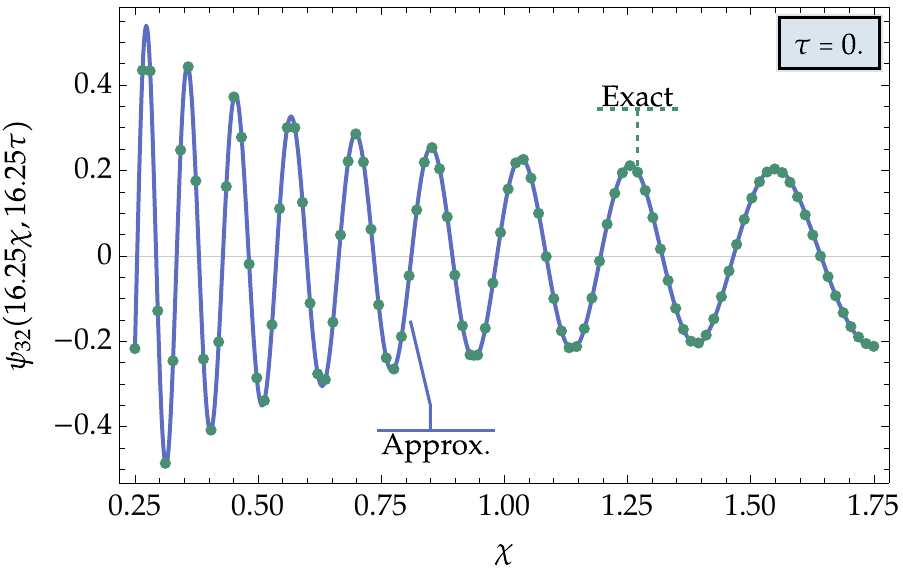}
\caption{Comparison of the exact solution $\psi_k(M \chi, M\tau)$ (dots) with the approximation $L_k^{[\channels]}(\chi,\tau)$ (solid curve) at $\tau=0$ for $0.25\leq \chi \leq 1.75$. Left panel: $k=15$, right panel: $k=32$. The solution is real-valued for $\tau=0$.}
\label{f:comparison-channels-tau-0-k-15-32}
\end{figure}
Next, we set $\tau=0.125$ and consider the range $1 \leq \chi \leq 1.75$ which lies inside the region $\channels$. Figure~\ref{f:comparison-channels-tau-0p125-k-31} presents plots of the real and imaginary parts of $\psi_{k}(M \chi, M\tau)$ with those of $L_k^{[\channels]}(\chi,\tau)$ for $k=31$ and Figure~\ref{f:comparison-channels-tau-0p125-k-32} presents the same comparisons for $k=32$.
 \begin{figure}[h]
\includegraphics[width=0.45\textwidth]{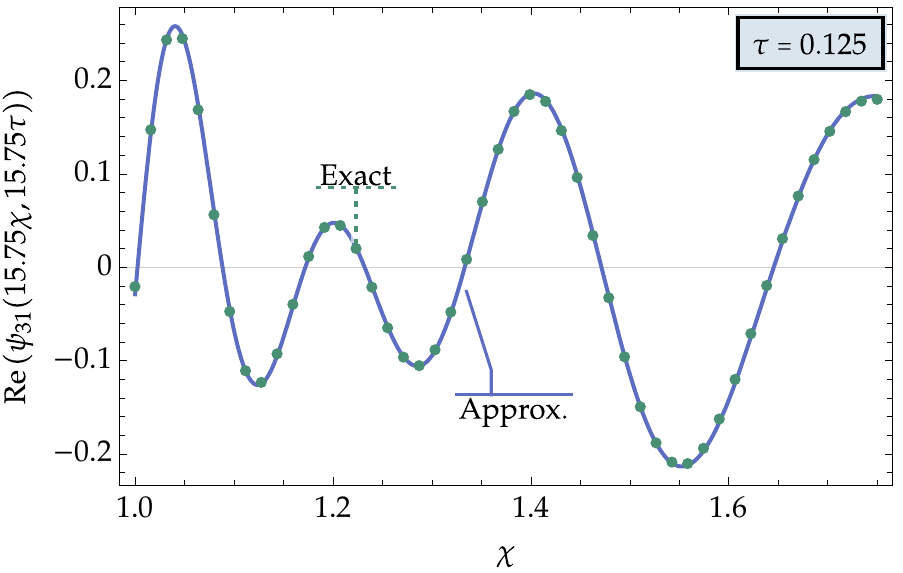}\qquad
\includegraphics[width=0.45\textwidth]{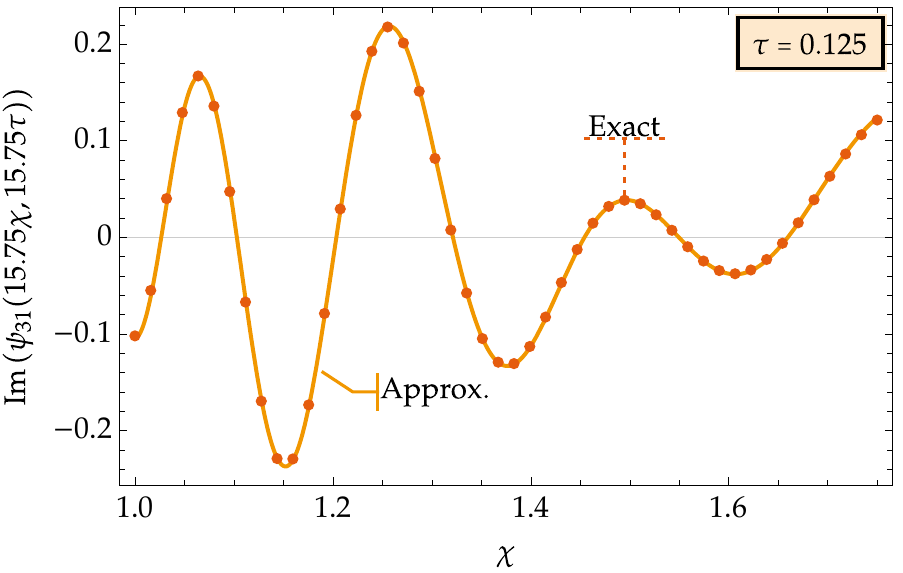}
\caption{Comparison of the exact solution $\psi_k(M \chi, M\tau)$ (dots) with the approximation $L_k^{[\channels]}(\chi,\tau)$ (solid curve) for $k=31$, at $\tau=0.125$ for $1\leq \chi \leq 1.75$. Left panel: real parts, right panel: imaginary parts.}
\label{f:comparison-channels-tau-0p125-k-31}
\end{figure}
 \begin{figure}[h]
\includegraphics[width=0.45\textwidth]{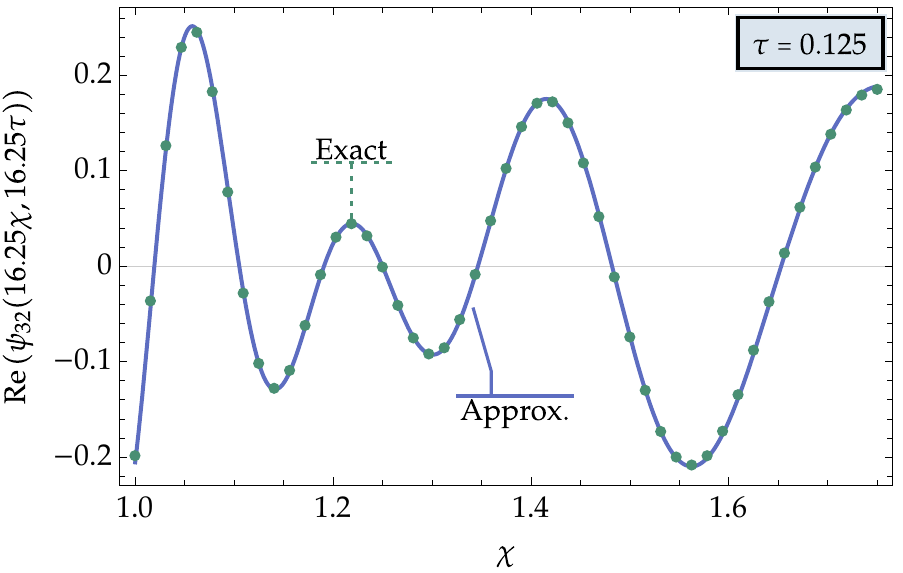}\qquad
\includegraphics[width=0.45\textwidth]{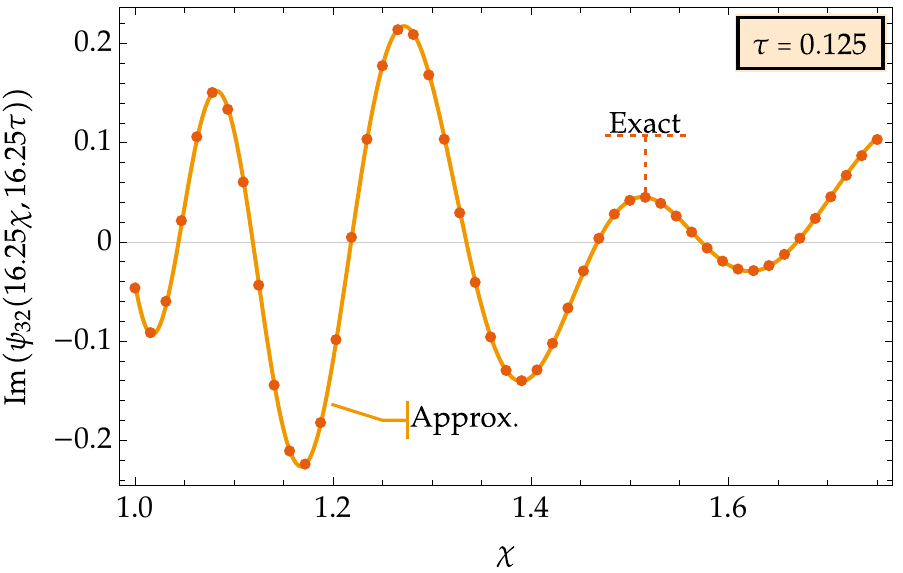}
\caption{Comparison of the exact solution $\psi_k(M \chi, M\tau)$ (dots) with the approximation $L_k^{[\channels]}(\chi,\tau)$ (solid curve) for $k=32$, at $\tau=0.125$ for $1 \leq \chi \leq 1.75$. Left panel: real parts, right panel: imaginary parts.}
\label{f:comparison-channels-tau-0p125-k-32}
\end{figure}

\paragraph{\textit{Error plots}} To validate the size of the error term $O(k^{-\frac{3}{2}})$ predicted in Corollary~\ref{cor:rogue-wave-channels}, we fix $\tau=0.125$ and the range $1\leq \chi \leq 1.75$ for the values of $\chi$. We construct a grid $\mathcal{G}_k$ on this interval with step size
$\delta\chi = (4M)^{-1}$ starting at the left endpoint $\chi=1$. 
We then compute the absolute errors made in approximating the fundamental rogue waves $\psi_k(M\chi,M\tau)$ with the leading terms $L_k^{[\channels]}(\chi,\tau)$ measured in the sup-norm over the grid $\mathcal{G}_k$, for $k$ ranging over the set $\mathcal{K} := \{16, 32, 48, 64, 80, 96 \}$:
\begin{equation}
E_k^{[\channels]}\defeq \sup_{\chi \in \mathcal{G}_k} \left| \psi_{k}(M\chi,M\tau) - L_k^{[\channels]}(\chi,\tau) \right|, \quad k\in\mathcal{K},\quad \tau=0.125.
\end{equation}
We plot $\ln(E_k^{[\channels]})$ versus $\ln(k)$ in Figure~\ref{f:channels-errors} and perform linear regression, which yields the best-fit line $\ln(E_k^{[\channels]}) =0.518758 - 1.48135 \ln(k)$ with $R$-squared value of $0.998502$. The slope of this line recovers approximately the exponent $-\tfrac{3}{2}$ in the error term predicted in Corollary~\ref{cor:rogue-wave-channels}.
\begin{figure}[h]
\includegraphics[width=0.45\textwidth]{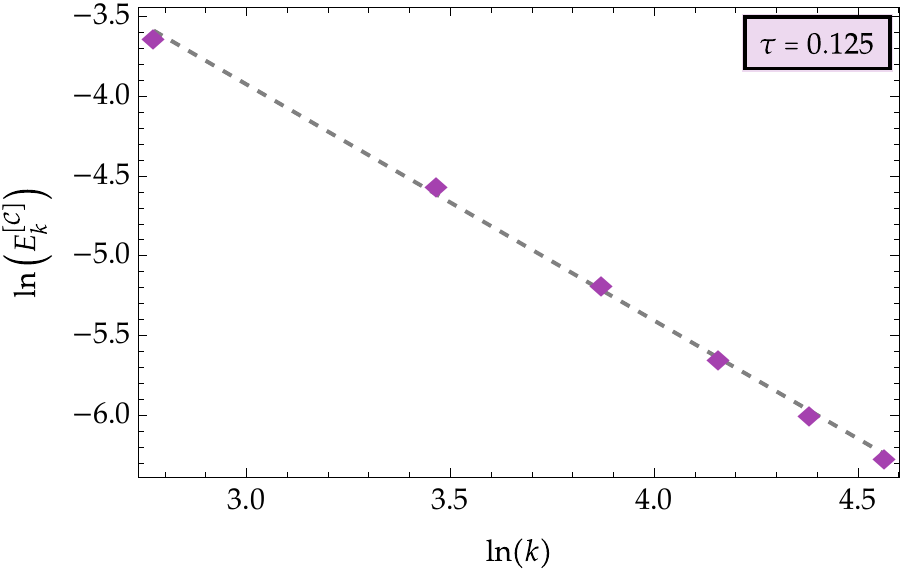}
\caption{Plot of $\ln(E_k^{[\channels]})$ versus $\ln(k)$ (diamond markers) for $k\in\mathcal{K}$. The dashed line is the best-fit line for the plotted data set.}
\label{f:channels-errors}
\end{figure}

\subsubsection{Numerical illustration of the asymptotic formula for $\psi_k(M\chi,M\tau)$ in $\exterior$} 
Next, we give numerical evidence confirming Theorem~\ref{thm:exterior}.  Recall the leading term approximation of $\psi_k(M\chi,M\tau)$ denoted $L_k^{[\exterior]}(\chi,\tau)$ and defined in \eqref{eq:leading-term-exterior}.
%
\paragraph{\textit{Comparison plots}} We fix $\tau = 1.25>1$ and consider the range $0\leq \chi \leq 4$, which contains points $(\chi,\tau)$ from $\exterior_\chi$ and from $\exterior_\tau$, and plot the real and imaginary parts of $\psi_k(M\chi,M \tau)$ and its approximation $L^{[\exterior]}_k(\chi,\tau)$ for $k=16$ in Figure~\ref{f:comparison-exterior-k-16} and for $k=31$ in Figure~\ref{f:comparison-exterior-k-31}.
\begin{figure}[h]
\includegraphics[width=0.45\textwidth]{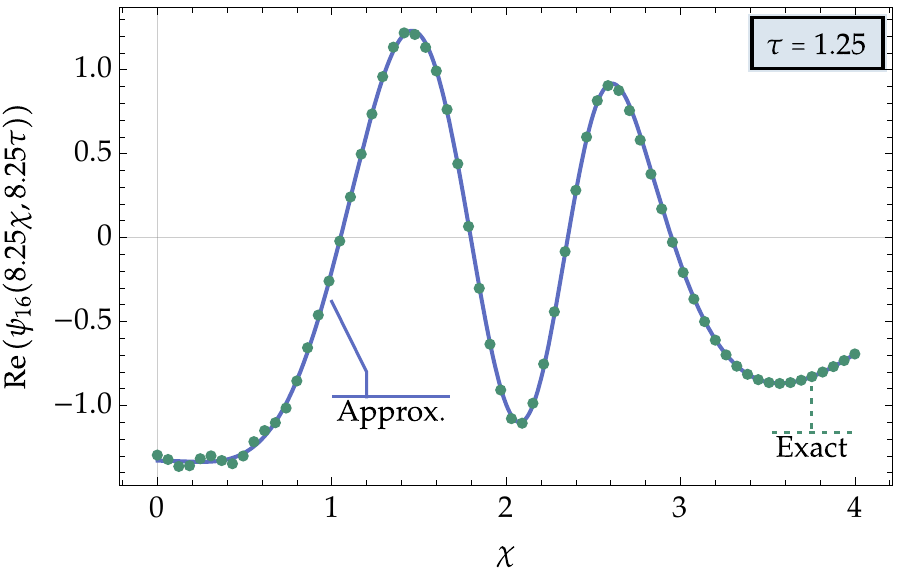}\qquad
\includegraphics[width=0.45\textwidth]{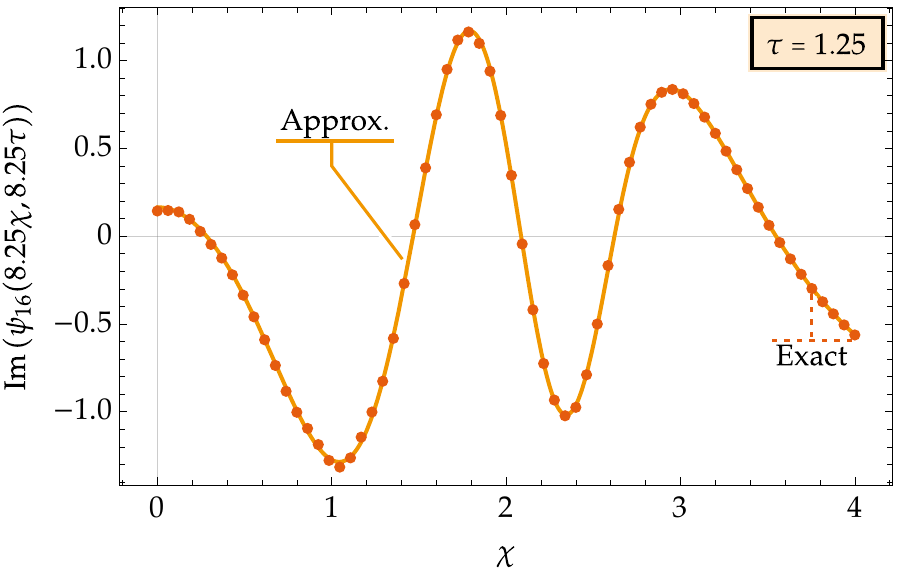}
\caption{Comparison of the exact solution $\psi_k(M \chi, M\tau)$ (dots) with the approximation $L_k^{[\exterior]}(\chi,\tau)$ (solid curve) for $k=16$, at $\tau=1.25$ for $0\leq \chi \leq 4$. Left panel: real parts, right panel: imaginary parts.}
\label{f:comparison-exterior-k-16}
\end{figure}
\begin{figure}[h]
\includegraphics[width=0.45\textwidth]{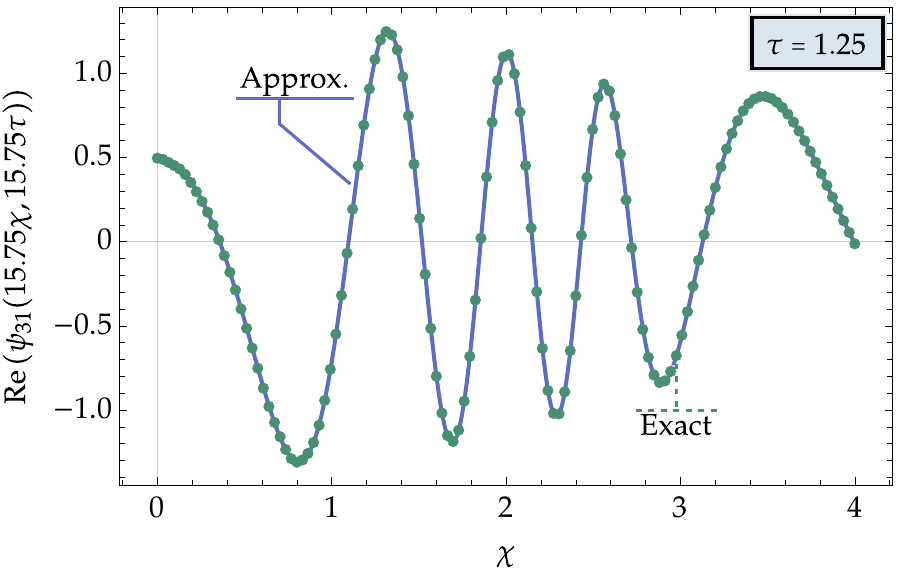}\qquad
\includegraphics[width=0.45\textwidth]{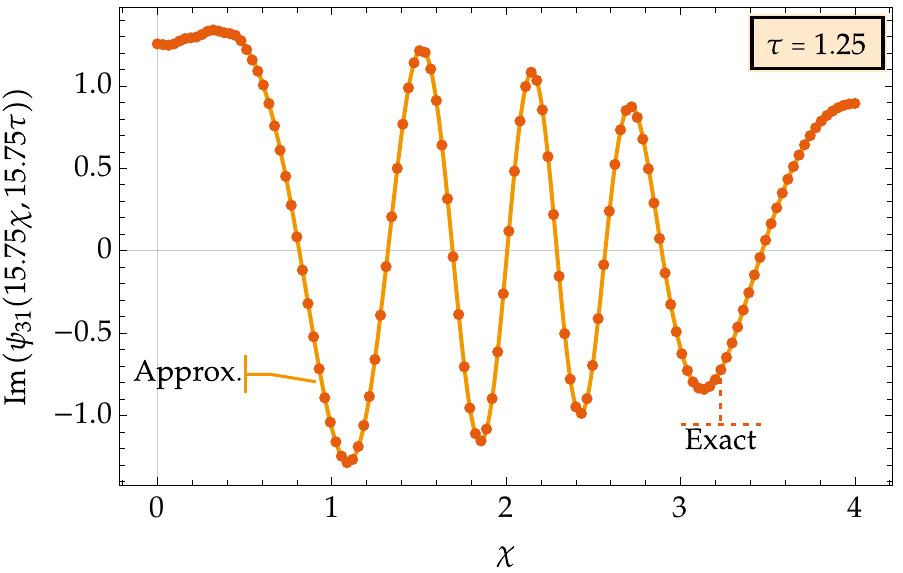}
\caption{Comparison of the exact solution $\psi_k(M \chi, M\tau)$ (dots) with the approximation $L_k^{[\exterior]}(\chi,\tau)$ (solid curve) for $k=31$, at $\tau=1.25$ for $0\leq \chi \leq 4$. Left panel: real parts, right panel: imaginary parts.}
\label{f:comparison-exterior-k-31}
\end{figure}
\paragraph{\textit{Error plots}} We now validate the size of the error term
$O(k^{-1})$ predicted in Theorem~\ref{thm:exterior}. We fix $\tau=1.25$ and the range $3\leq \chi \leq 4$ for the values of $\chi$, and again construct a grid $\mathcal{G}_k$ on this interval with step size
$\delta\chi = (4M)^{-1}$ starting at the left endpoint $\chi=3$. 
We then compute the absolute errors made in approximating the fundamental rogue waves $\psi_k(M\chi,M\tau)$ with the leading terms $L_k^{[\exterior]}(\chi,\tau)$ measured in the sup-norm over the grid $\mathcal{G}_k$, for $k$ ranging over the set $\mathcal{K} := \{16, 32, 48, 64, 80, 96 \}$:
\begin{equation}
E_k^{[\exterior]}:= \sup_{\chi \in \mathcal{G}_k} \left| \psi_{k}(M\chi,M\tau) - L_k^{[\exterior]}(\chi,\tau) \right|, \quad k\in\mathcal{K},\quad \tau=1.25.
\end{equation}
We plot $\ln(E_k^{[\exterior]})$ versus $\ln(k)$ in Figure~\ref{f:exterior-errors} and perform linear regression, which yields the best-fit line $\ln(E_k^{[\exterior]}) =-1.98314 - 0.927044 \ln(k)$ with $R$-squared value of $0.999274$. The slope of this line recovers approximately the exponent $-1$ in the error predicted in Theorem~\ref{thm:exterior}.
\begin{figure}[h]
\includegraphics[width=0.45\textwidth]{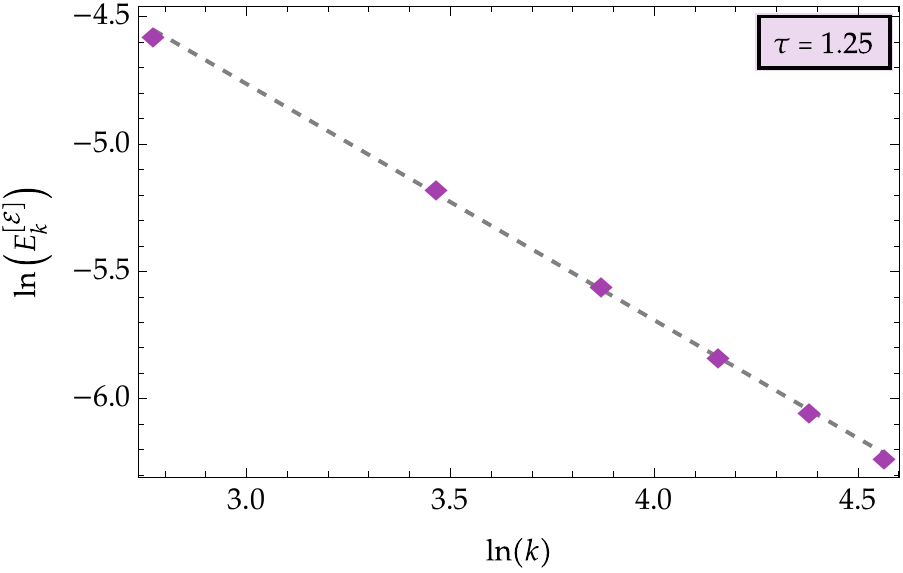}
\caption{Plot of $\ln(E_k^{[\exterior]})$ versus $\ln(k)$ (diamond markers) for $k\in\mathcal{K}$. The dashed line is the best-fit line for the plotted data set.}
\label{f:exterior-errors}
\end{figure}
\subsubsection{Numerical illustration of the asymptotic formula for $\psi_k(M\chi,M\tau)$ in $\shelves$} 
Finally, we turn to the illustration of Corollary~\ref{cor:rogue-wave-shelves}, recalling the approximate formula $L_k^{[\shelves]}(\chi,\tau)+S_k^{[\shelves]}(\chi,\tau)$ defined in \eqref{eq:rw-terms-shelves}.

\paragraph{\textit{Comparison plots}} We fix $\tau = 0.5$ for which $\shelves$ consists of the interval $(0,2)$ for the values of $\chi$. We consider the range $0.25\leq\chi\leq1.75$ which is a proper subset of $(0,2)$ and plot the real and imaginary parts of $\psi_k(M\chi, M\tau)$ and of
$ L_k^{[\shelves]}(\chi,\tau) + S_k^{[\shelves]}(\chi,\tau)$ for $k=23$ in Figure~\ref{f:comparison-bun-k-23} and for $k=32$ in Figure~\ref{f:comparison-bun-k-32}.
\begin{figure}[h]
\includegraphics[width=0.45\textwidth]{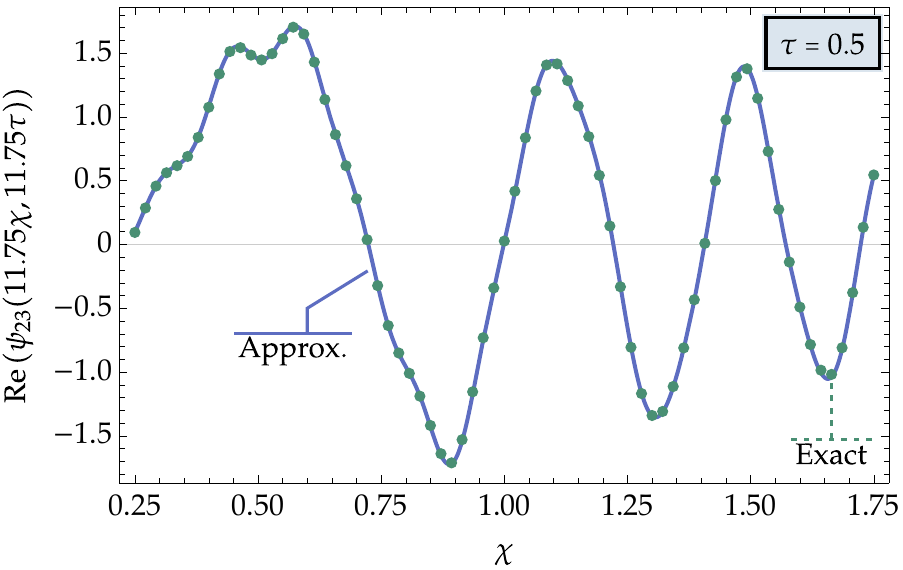}\qquad
\includegraphics[width=0.45\textwidth]{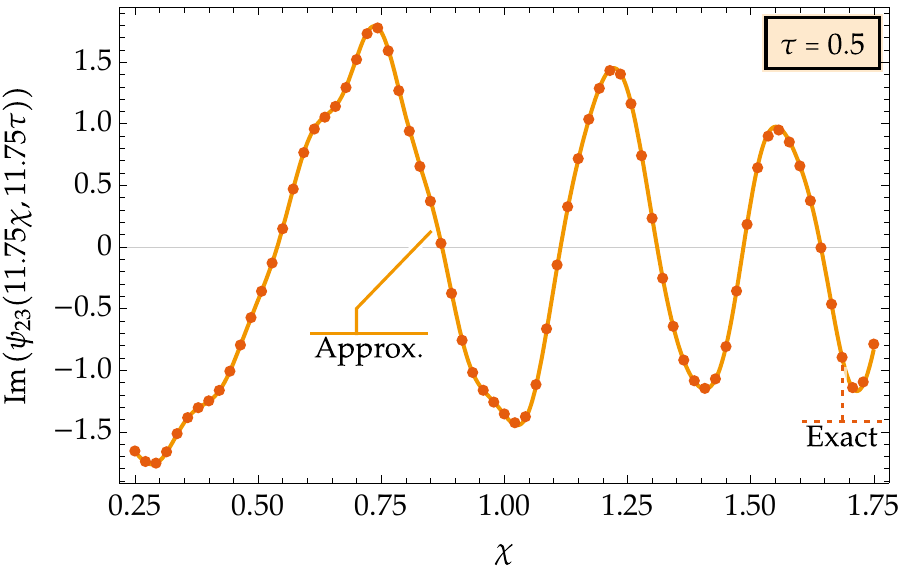}
\caption{Comparison of the exact solution $\psi_k(M \chi, M\tau)$ (dots) with the approximation $L_k^{[\shelves]}(\chi,\tau)+S_k^{[\shelves]}(\chi,\tau)$ (solid curve) for $k=23$, at $\tau=0.5$ for $0.25< \chi < 1.75$. Left panel: real parts, right panel: imaginary parts.}
\label{f:comparison-bun-k-23}
\end{figure}
\begin{figure}[h]
\includegraphics[width=0.45\textwidth]{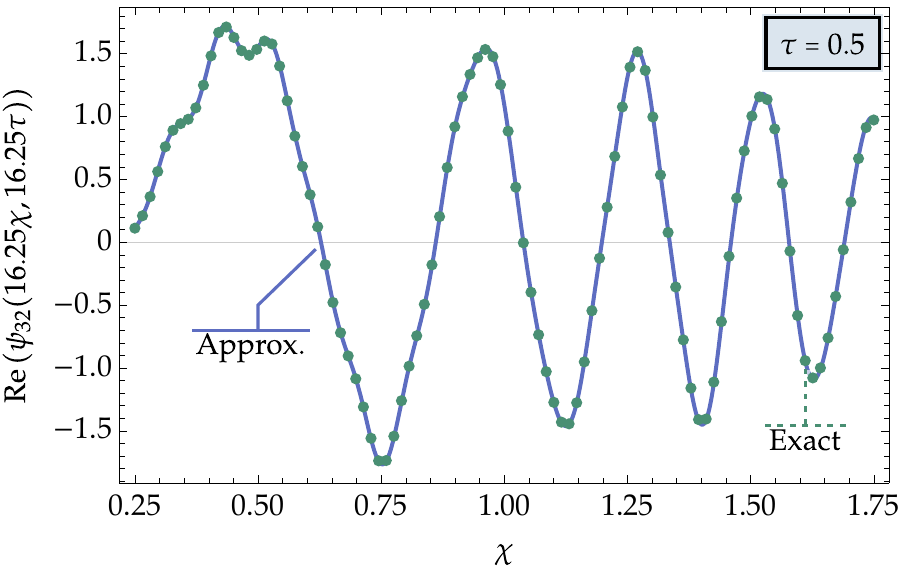}\qquad
\includegraphics[width=0.45\textwidth]{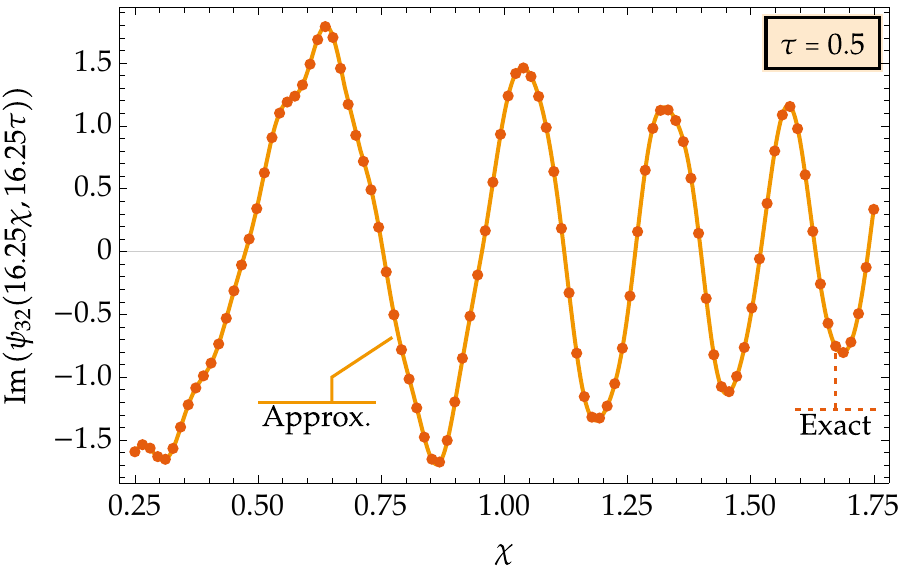}
\caption{Comparison of the exact solution $\psi_k(M \chi, M\tau)$ (dots) with the approximation $L_k^{[\shelves]}(\chi,\tau)+S_k^{[\shelves]}(\chi,\tau)$ (solid curve) for $k=32$, at $\tau=0.5$ for $0.25< \chi < 1.75$. Left panel: real parts, right panel: imaginary parts.}
\label{f:comparison-bun-k-32}
\end{figure}

\paragraph{\textit{Error plots}} To validate the size of the error term predicted in Corollary~\ref{cor:rogue-wave-shelves}, we fix $\tau=0.5$, consider the range $0.5 \leq \chi \leq 1.5$ corresponding to a line segment contained in $\shelves$, and construct a grid $\mathcal{G}_k$ on this interval with step size $\delta \chi = (4M)^{-1}$ starting at the left endpoint $\chi=0.5$. 
We then again compute the absolute errors made in approximating the fundamental rogue waves  $\psi_k(M\chi,M\tau)$ with $L_k^{[\shelves]}(\chi,\tau) + S_k^{[\shelves]}(\chi,\tau)$ measured in the sup-norm over the grid $\mathcal{G}_k$, for $k$ ranging over the set $\mathcal{K} := \{16, 32, 48, 64, 80, 96 \}$:
\begin{equation}
E_k^{[\shelves]}:= \sup_{\chi \in \mathcal{G}_k} \left| \psi_{k}(M\chi,M\tau) - \left( L_k^{[\exterior]}(\chi,\tau) + S_k^{[\shelves]}(\chi,\tau) \right)\right|, \quad k\in\mathcal{K},\quad \tau=0.5.
\end{equation}
We plot $\ln(E_k^{[\shelves]})$ versus $\ln(k)$ in Figure~\ref{f:shelves-errors} and perform linear regression, which yields the best-fit line $\ln(E_k^{[\shelves]}) =-0.366203 - 1.01873 \ln(k)$ with $R$-squared value of $0.995771$. The slope of this line recovers approximately the exponent $-1$ in the error predicted in Corollary~\ref{cor:rogue-wave-shelves}.
\begin{figure}[h]
\includegraphics[width=0.45\textwidth]{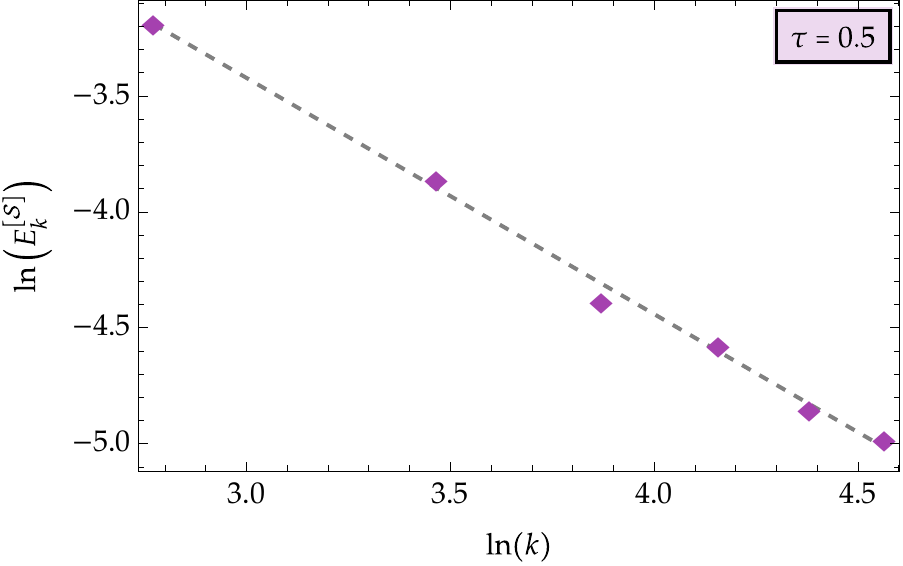}
\caption{Plot of $\ln(E_k^{[\shelves]})$ versus $\ln(k)$ (diamond markers) for $k\in\mathcal{K}$. The dashed line is the best-fit line for the plotted data set.}
\label{f:shelves-errors}
\end{figure}

\paragraph{\textit{Interference pattern}} We now illustrate how Corollary~\ref{cor:pattern} accurately predicts the complicated wave pattern seen in the plots in Figure~\ref{fig:2D-Plots} for $(\chi,\tau)\in\shelves$.  
Although Corollary~\ref{cor:pattern} applies to more general solutions, in keeping with the setting we restrict attention to special case of the fundamental rogue-wave solutions.
Thus, we compute the following $M$-dependent unions of level curves inside $\shelves$ in the $(\chi,\tau)$-plane:
\begin{align}
\mathcal{M}_a &=\bigcup_{j\in\mathbb{Z}} \{ (\chi,\tau)\in \shelves \colon \phi_a(\chi,\tau;M) = - \frac{\pi}{2} + 2\pi j \},\\
\mathcal{M}_b &=\bigcup_{j\in\mathbb{Z}} \{ (\chi,\tau)\in \shelves \colon \phi_b(\chi,\tau;M) = - \frac{\pi}{2} + 2\pi j \},
\end{align}
so that $\sin(\phi_a(\chi,\tau;M)) =  -1$ for $(\chi,\tau)\in \mathcal{M}_a$ and  $\sin(\phi_b(\chi,\tau;M)) =  -1$ for $(\chi,\tau)\in \mathcal{M}_b$. 
Accordingly, the claim is that the intersection points of $\mathcal{M}_a$ and $\mathcal{M}_b$ locate the amplitude peaks formed by $|\psi_k(M \chi, M\tau)|^2$. To verify that this is the case, we fix the box $[0.2, 1.2]\times[0.2,0.8] \subset \shelves$ and plot $|\psi_k(M \chi, M\tau)|^2$ and the set of points $\mathcal{M}_a \cup \mathcal{M}_b$. Figure~\ref{f:interference-overlays} illustrates this formation as it is described.

\begin{figure}[h]
\includegraphics[width=0.32\textwidth]{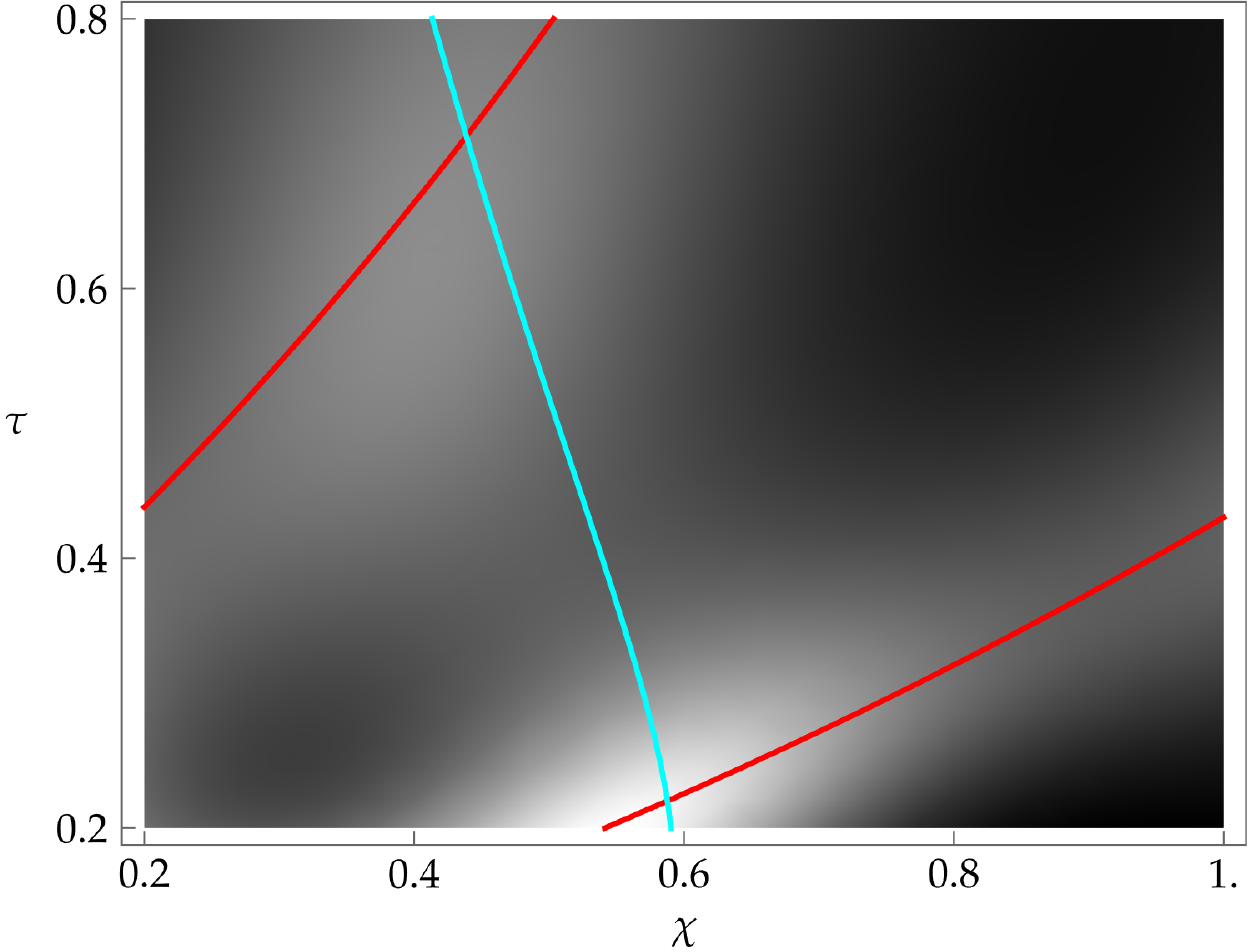}\quad \includegraphics[width=0.32\textwidth]{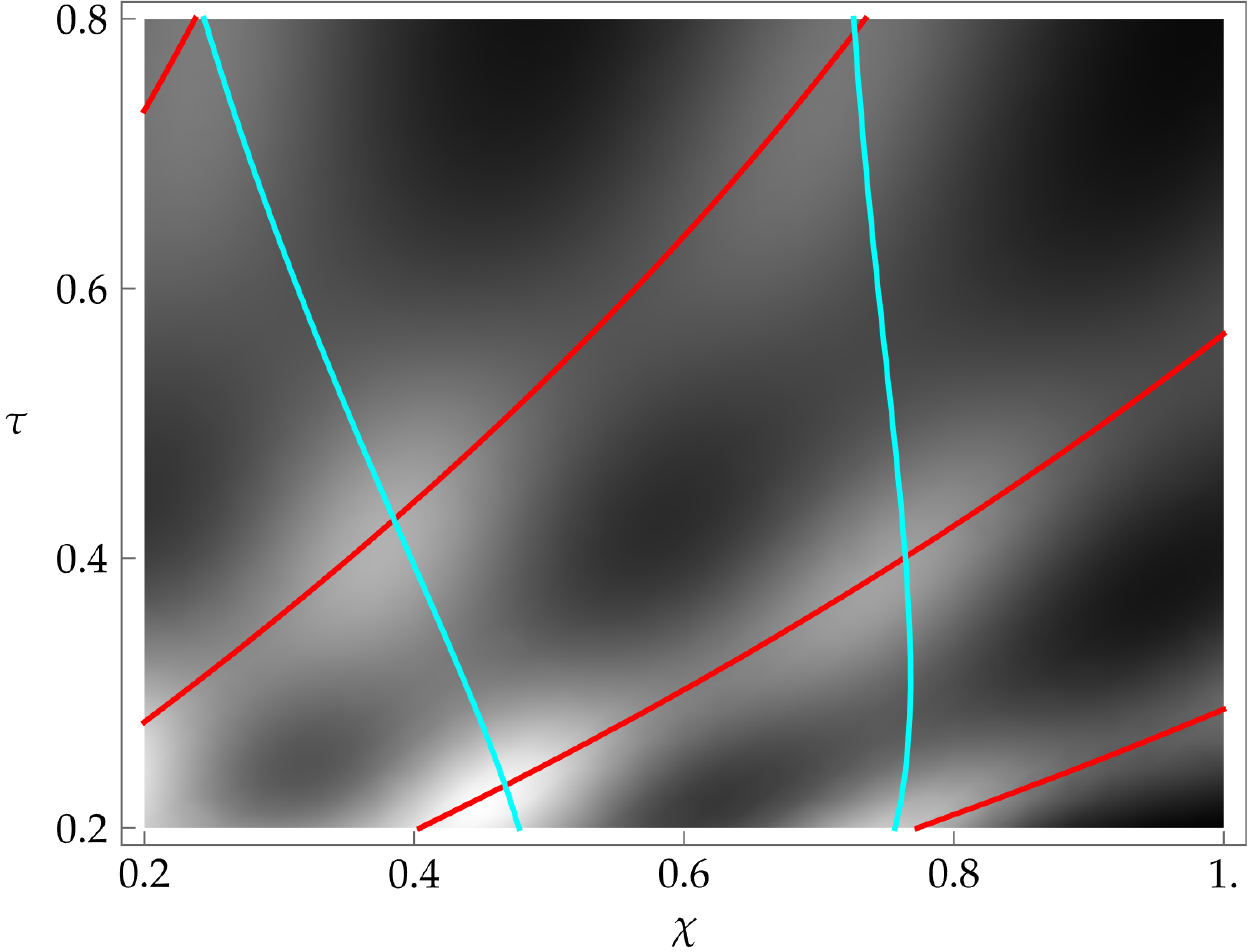}
\includegraphics[width=0.32\textwidth]{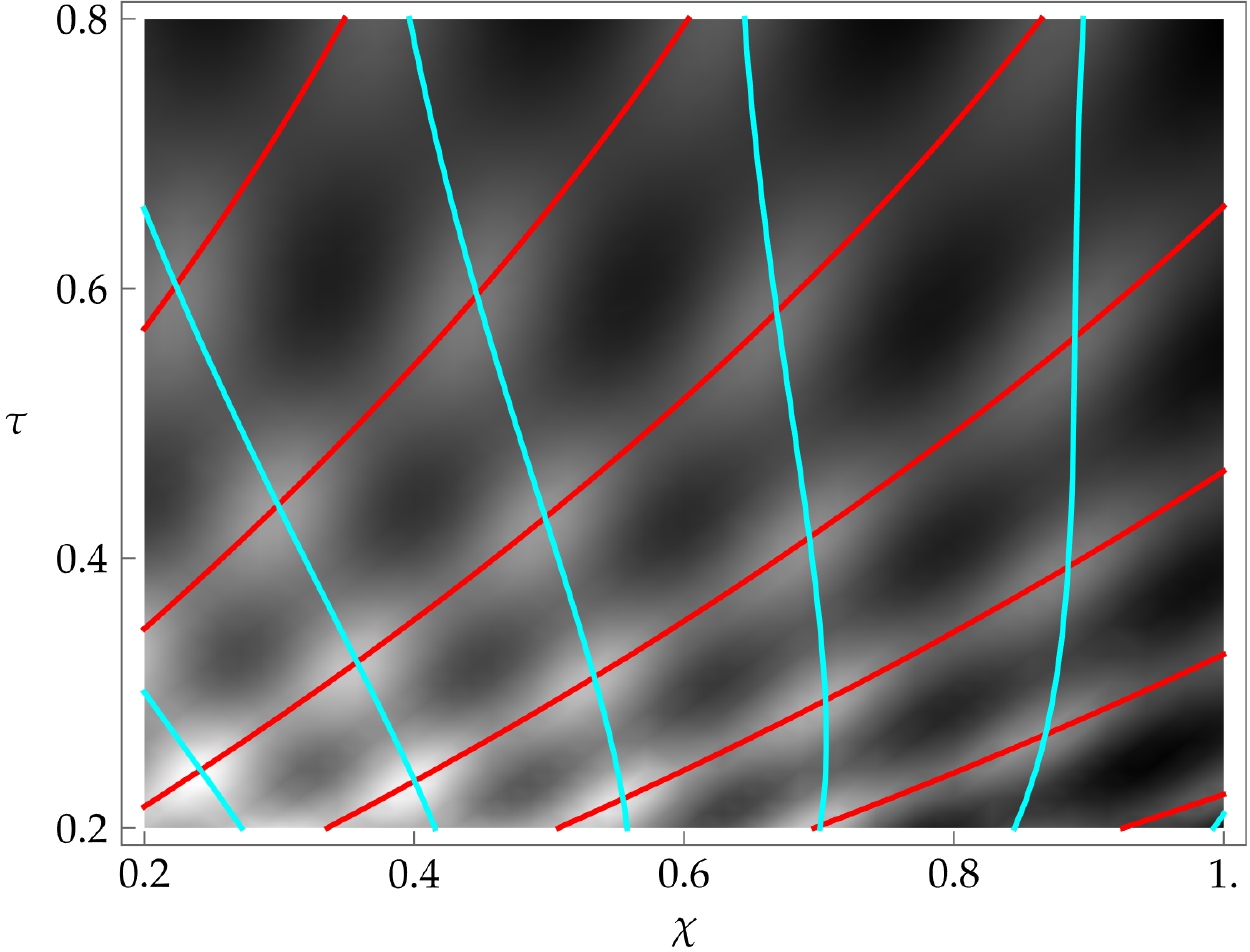}
\caption{Monochrome density plot of $|\psi_k(M \chi, M\tau)|^2$ and plots of the unions of level curves $\mathcal{M}_a$ (red contours) and $\mathcal{M}_b$ (cyan contours) for values of $k=4,8,16$. The brighter colors correspond to larger amplitude, and hence the white spots are where the peaks are formed. Left panel: $k=4$, center panel: $k=8$, right panel: $k=16$.}
\label{f:interference-overlays}
\end{figure}


\section{Far-Field Asymptotic Behavior in the Domain $\channels$}
\label{sec:channels}
In this section we prove Theorem~\ref{thm:channels}.  
Our analysis is driven by the sign chart of $\mathrm{Re}(\ii\vartheta(\lambda;\chi,\tau))$ in the $\lambda$-plane.  The function $\lambda\mapsto\mathrm{Re}(\ii\vartheta(\lambda;\chi,\tau))$ is odd with respect to Schwarz reflection $\lambda\mapsto \lambda^*$.  It follows that the whole real $\lambda$-axis is a component of the zero level curve $\mathrm{Re}(\ii\vartheta(\lambda;\chi,\tau))=0$, so all three critical points lie on the zero level.  Since each critical point is simple when $(\chi,\tau)\in\channels$, from each of them a unique  arc of the zero level curve emanates locally into the upper half-plane with a vertical tangent.  From \eqref{eq:vartheta} one sees easily that as $\chi>0$ holds in $\channels$, $\mathrm{Re}(\ii\vartheta(\lambda;\chi,0))$ is negative (resp., positive) for sufficiently large $\lambda$ in the upper (resp., lower) half-plane; on the other hand $\mathrm{Re}(\ii\vartheta(\lambda;\chi,\tau))$ is always positive (resp., negative) near $\lambda=\ii$ (resp., near $\lambda=-\ii$).  From this and the fact that $\mathrm{Re}(\ii\vartheta(\lambda;\chi,\tau))$ is harmonic away from $\lambda=\pm\ii$ it follows that when $\tau=0$ the two arcs of the zero level curve emanating into the upper half-plane from $\lambda=a(\chi,0)$ and $\lambda=b(\chi,0)$ actually coincide and close around the singularity at $\lambda=\ii$.  This structure persists under perturbation for $\tau\neq 0$, as the arc of the zero level curve emanating into the upper half-plane from the newly-born large critical point must tend to $\lambda=\infty$ vertically without intersecting the arc we denote by $\Gamma^+=\Gamma^+(\chi,\tau)$ joining $a(\chi,\tau)$ and $b(\chi,\tau)$ in the upper half-plane. Therefore, for all $(\chi,\tau)\in \channels$, the zero level curve of $\lambda\mapsto\mathrm{Re}(\ii\vartheta(\lambda;\chi,\tau))$ is the disjoint union $\mathbb{R}\sqcup\Gamma^+\sqcup\Gamma^-\sqcup \ell^+\sqcup \ell^-$, where $\ell^+$ denotes the unbounded arc in the upper half-plane emanating from the third critical point that is large when $\tau\neq 0$ is small (we take $\ell^+=\emptyset$ when $\tau=0$) and $\Gamma^-$ and $\ell^-$ are the Schwarz reflections of $\Gamma^+$ and $\ell^+$ respectively.

\subsection{Steepest descent deformation of the Riemann-Hilbert problem}
Since $\overline{\Gamma^+\cup\Gamma^-}$ is a simple closed curve with the points $\lambda=\pm\ii$ in its interior, we use this curve as $\Sigma_\circ$ in the formulation of Riemann-Hilbert Problem~\ref{rhp:rogue-wave-reformulation}.  As $\Sigma_\circ$ has clockwise orientation, we assume that $\Gamma^+$ is oriented from $a$ to $b$ while $\Gamma^-$ is oriented from $b$ to $a$ in the lower half-plane. In the jump condition \eqref{eq:S-jump} for the matrix $\mathbf{S}(\lambda;\chi,\tau,\mathbf{Q}^{-s},M)$ equivalent to $\mathbf{P}(\lambda;x,t,\mathbf{Q}^{-s},M)$ by \eqref{eq:S-from-P}, we factor the matrix $\mathbf{Q}^{-s}$, $s=\pm 1$, as
\begin{equation}
\mathbf{Q}^{-s}=\begin{cases}
2^{\frac{1}{2}\sigma_3}\begin{bmatrix}1 & \tfrac{1}{2}s\\0 & 1\end{bmatrix}\begin{bmatrix}1 & 0\\-s & 1\end{bmatrix},\quad& \lambda\in\Gamma^+,\\
2^{-\frac{1}{2}\sigma_3}\begin{bmatrix}1 & 0\\-\tfrac{1}{2}s & 1\end{bmatrix}\begin{bmatrix}1 & s\\0 & 1\end{bmatrix},\quad& \lambda\in\Gamma^-.
\end{cases}
\label{eq:Q-factorizations}
\end{equation}
Based on these two factorizations, we define a new unknown $\mathbf{W}(\lambda)=\mathbf{W}(\lambda;\chi,\tau,\mathbf{Q}^{-s},M)$ related to $\mathbf{S}(\lambda;\chi,\tau,\mathbf{Q}^{-s},M)$ by first introducing ``lens'' domains $L^\pm$ and $R^\pm$ to the left and right respectively of $\Gamma^\pm$ (so thin as to exclude the points $\pm\ii$ and to support a fixed sign of $\mathrm{Re}(\ii\vartheta(\lambda;\chi,\tau))$) and we let $\Omega^\pm$ denote the domain between $R^\pm$ and the real line.  See Figure~\ref{fig:Channels1}, left panel.  
\begin{figure}[h]
\begin{center}
\includegraphics{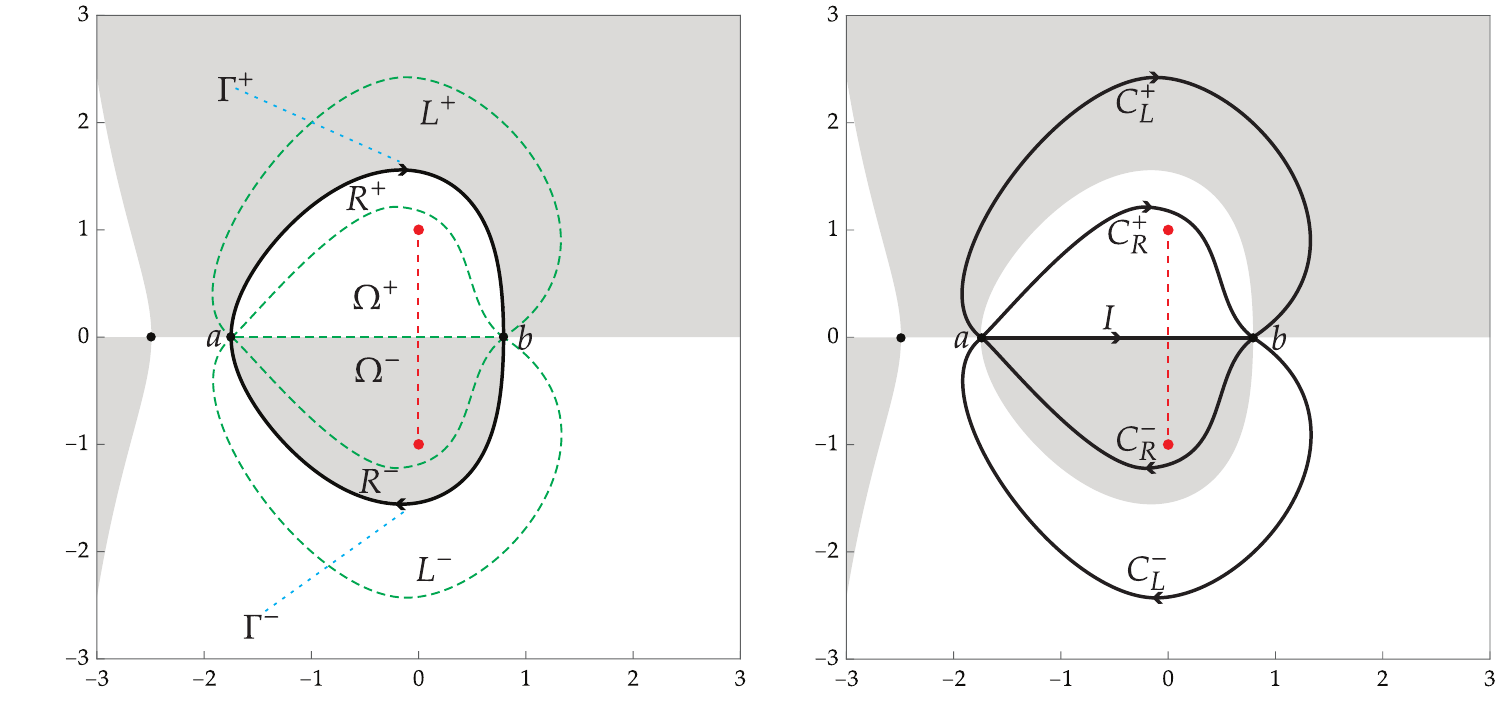}
\end{center}
\caption{Left:  for $(\chi,\tau)=(1,0.145)\in \channels$, the regions in the $\lambda$-plane where $\mathrm{Re}(\ii\vartheta(\lambda;\chi,\tau))<0$ (shaded) and $\mathrm{Re}(\ii\vartheta(\lambda;\chi,\tau))>0$ (unshaded), and the curve $\Sigma_\circ=\Gamma^+\cup\Gamma^-$.  The jump contour $\Sigma_\mathrm{c}$ for $\vartheta(\lambda;\chi,\tau)$ is indicated with a red dashed line terminating at the endpoints $\lambda=\pm\ii$.  Critical points of $\vartheta(\lambda;\chi,\tau)$ are shown with black dots.  Also shown are the ``lens'' regions $L^\pm$ and $R^\pm$ lying to the left and right respectively of $\Gamma^\pm$, and the domains $\Omega^\pm$ lying between $R^\pm$ and the real axis and containing the points $\lambda=\pm\ii$.  Right:  the jump contour for $\mathbf{W}(\lambda)$.}
\label{fig:Channels1}
\end{figure}
Then, we define $\mathbf{W}(\lambda)$ by 
\begin{equation}
\mathbf{W}(\lambda)\defeq\mathbf{S}(\lambda;\chi,\tau,\mathbf{Q}^{-s},M)\begin{bmatrix}1 & 0\\
s\ee^{2\ii M\vartheta(\lambda;\chi,\tau)} & 1\end{bmatrix},\quad\lambda\in L^+,
\label{eq:T-S-L-plus-ALT}
\end{equation}
\begin{equation}
\mathbf{W}(\lambda)\defeq\mathbf{S}(\lambda;\chi,\tau,\mathbf{Q}^{-s},M)2^{\frac{1}{2}\sigma_3}\begin{bmatrix}1 & \tfrac{1}{2}s\ee^{-2\ii M\vartheta(\lambda;\chi,\tau)}\\ 0 & 1\end{bmatrix},\quad\lambda\in R^+,
\label{eq:T-S-R-plus-ALT}
\end{equation}
\begin{equation}
\mathbf{W}(\lambda)\defeq\mathbf{S}(\lambda;\chi,\tau,\mathbf{Q}^{-s},M)2^{\frac{1}{2}\sigma_3},\quad
\lambda\in\Omega^+,
\end{equation}
\begin{equation}
\mathbf{W}(\lambda)\defeq\mathbf{S}(\lambda;\chi,\tau,\mathbf{Q}^{-s},M)2^{-\frac{1}{2}\sigma_3},\quad
\lambda\in\Omega^-,
\end{equation}
\begin{equation}
\mathbf{W}(\lambda)\defeq\mathbf{S}(\lambda;\chi,\tau,\mathbf{Q}^{-s},M)2^{-\frac{1}{2}\sigma_3}\begin{bmatrix} 1 & 0\\-\tfrac{1}{2}s\ee^{2\ii M\vartheta(\lambda;\chi,\tau)} & 1\end{bmatrix},\quad\lambda\in R^-,\quad\text{and}
\label{eq:T-S-R-minus-ALT}
\end{equation}
\begin{equation}
\mathbf{W}(\lambda)\defeq\mathbf{S}(\lambda;\chi,\tau,\mathbf{Q}^{-s},M)\begin{bmatrix}
1 & -s\ee^{-2\ii M\vartheta(\lambda;\chi,\tau)}\\ 0 & 1\end{bmatrix},\quad
\lambda\in L^-,
\label{eq:T-S-L-minus-ALT}
\end{equation}
and we take $\mathbf{W}(\lambda)\defeq\mathbf{S}(\lambda;\chi,\tau,\mathbf{Q}^{-s},M)$ whenever $\lambda\in \mathbb{C}\setminus\overline{L^+\cup R^+\cup\Omega^+\cup\Omega^-\cup R^-\cup L^-}$.  Then it is easy to check that $\mathbf{W}(\lambda)$ may be defined for $\lambda\in\Gamma^+\cup\Gamma^-$ to be analytic there, so that $\mathbf{W}(\lambda)$ is analytic in the complement of the jump contour $C_L^+\cup C_R^+\cup I\cup C_R^-\cup C_L^-$ shown in Figure~\ref{fig:Channels1}, right panel.  The jump conditions satisfied by $\mathbf{W}(\lambda)$ on the arcs of this jump contour are then:
\begin{equation}
\mathbf{W}_+(\lambda)=\mathbf{W}_-(\lambda)\begin{bmatrix}1 & 0\\
-s\ee^{2\ii M\vartheta(\lambda;\chi,\tau)} & 1\end{bmatrix},\quad\lambda\in C_L^+,
\label{eq:Tjump-channels-CLplus-ALT}
\end{equation}
\begin{equation}
\mathbf{W}_+(\lambda)=\mathbf{W}_-(\lambda)\begin{bmatrix}1 & \tfrac{1}{2}s\ee^{-2\ii M\vartheta(\lambda;\chi,\tau)} \\ 0 & 1\end{bmatrix},\quad\lambda\in C_R^+,
\label{eq:Tjump-channels-CRplus-ALT}
\end{equation}
\begin{equation}
\mathbf{W}_+(\lambda)=\mathbf{W}_-(\lambda)2^{\sigma_3},\quad\lambda\in I,
\label{eq:Channels-T-I-jump}
\end{equation}
\begin{equation}
\mathbf{W}_+(\lambda)=\mathbf{W}_-(\lambda)\begin{bmatrix}1 & 0\\
-\tfrac{1}{2}s\ee^{2\ii M\vartheta(\lambda;\chi,\tau)} & 1\end{bmatrix},\quad
\lambda\in C_R^-,\quad\text{and}
\label{eq:Tjump-channels-CRminus-ALT}
\end{equation}
\begin{equation}
\mathbf{W}_+(\lambda)=\mathbf{W}_-(\lambda)\begin{bmatrix} 1 & s\ee^{-2\ii M\vartheta(\lambda;\chi,\tau)}\\ 0 & 1\end{bmatrix},\quad\lambda\in C_L^-.
\label{eq:Tjump-channels-CLminus-ALT}
\end{equation}
It follows from the sign chart of $\mathrm{Re}(\ii\vartheta(\lambda;\chi,\tau))$ as shown in Figure~\ref{fig:Channels1} that as $n\to+\infty$, the jump matrix for $\mathbf{W}(\lambda)$ is an exponentially small perturbation of the identity everywhere on the jump contour except on the interval $I=[a,b]$ and in neighborhoods of its endpoints.  

\subsection{Parametrix construction}
\label{sec:channels-parametrix}
To deal with those jump matrices that are not near-identity, we first construct an \emph{outer parametrix}
$\dot{\mathbf{W}}^{\mathrm{out}}(\lambda)$ by setting
\begin{equation}
\dot{\mathbf{W}}^{\mathrm{out}}(\lambda)\defeq\left(\frac{\lambda-b(\chi,\tau)}{\lambda-a(\chi,\tau)}\right)^{-\ii p\sigma_3},\quad p\defeq\frac{\ln(2)}{2\pi}>0,\quad\lambda\in\mathbb{C}\setminus I.
\label{eq:Channels-Tout}
\end{equation}
Here, the power function is the principal branch, making $\dot{\mathbf{W}}^\mathrm{out}(\lambda)$ analytic in the indicated domain.  Furthermore it is clear that $\dot{\mathbf{W}}_+^\mathrm{out}(\lambda)=\dot{\mathbf{W}}_-^\mathrm{out}(\lambda)2^{\sigma_3}$ holds for $\lambda\in I$, so the jump condition in \eqref{eq:Channels-T-I-jump} is satisfied exactly by the outer parametrix, which also tends to the identity as $\lambda\to\infty$.  However, $\dot{\mathbf{W}}^\mathrm{out}(\lambda)$ is discontinuous near the endpoints of $I$, making the outer parametrix a poor model for $\mathbf{W}(\lambda)$ near these points.  

We can construct \emph{inner parametrices} near $\lambda=a,b$ that locally satisfy the jump conditions for $\mathbf{W}(\lambda)$ exactly.  Let $D_a(\delta)$ and $D_b(\delta)$ be disks of radius $\delta$ centered at $\lambda=a,b$ respectively, where $\delta>0$ is sufficiently small but independent of $n$.  We first define conformal coordinates $f_a(\lambda;\chi,\tau)$ and $f_b(\lambda;\chi,\tau)$ in these disks by setting 
\begin{equation}
f_a(\lambda;\chi,\tau)^2=2[\vartheta_a(\chi,\tau)-\vartheta(\lambda;\chi,\tau)]\quad\text{and}\quad
f_b(\lambda;\chi,\tau)^2=2[\vartheta(\lambda;\chi,\tau)-\vartheta_b(\chi,\tau)],
\end{equation}
where $\vartheta_a(\chi,\tau)\defeq\vartheta(a(\chi,\tau);\chi,\tau)$ and $\vartheta_b(\chi,\tau)\defeq\vartheta(b(\chi,\tau);\chi,\tau)$, 
and then taking analytic square roots in each case so that the inequalities $f_a'(a(\chi,\tau);\chi,\tau)<0$ and $f_b'(b(\chi,\tau);\chi,\tau)>0$ both hold.  This is possible because $a$ and $b$ are simple critical points of $\vartheta(\lambda;\chi,\tau)$, with $\vartheta''_a(\chi,\tau)\defeq\vartheta''(a(\chi,\tau);\chi,\tau)<0$ and $\vartheta''_b(\chi,\tau)\defeq\vartheta''(b(\chi,\tau);\chi,\tau)>0$.  In fact, one has the formul\ae\
\begin{equation}
f_a'(a(\chi,\tau);\chi,\tau)=-\sqrt{-\vartheta''_a(\chi,\tau)}\quad\text{and}\quad
f_b'(b(\chi,\tau);\chi,\tau)=\sqrt{\vartheta''_b(\chi,\tau)}.
\label{eq:Channels-fafb-Derivs}
\end{equation}
Next, define $M$-independent holomorphic matrix valued functions in $D_a(\delta)$ and $D_b(\delta)$ by
\begin{equation}
\mathbf{H}^a(\lambda)\defeq\left(\frac{f_a(\lambda;\chi,\tau)}{a(\chi,\tau)-\lambda}\right)^{-\ii p\sigma_3}(b(\chi,\tau)-\lambda)^{-\ii p\sigma_3}(\ii\sigma_2),\quad\lambda\in D_a(\delta)
\label{eq:Channels-Ha}
\end{equation}
and
\begin{equation}
\mathbf{H}^b(\lambda)\defeq\left(\frac{f_b(\lambda;\chi,\tau)}{\lambda-b(\chi,\tau)}\right)^{\ii p\sigma_3}(\lambda-a(\chi,\tau))^{\ii p\sigma_3},\quad\lambda\in D_b(\delta).  
\label{eq:Channels-Hb}
\end{equation}
Note that in both cases, the diagonal prefactor is an analytic function nonvanishing in the relevant disk for $\delta$ sufficiently small.  In particular,
\begin{equation}
\begin{split}
\mathbf{H}^a(a(\chi,\tau))&=\left(-f_a'(a(\chi,\tau);\chi,\tau)\right)^{-\ii p\sigma_3}(b(\chi,\tau)-a(\chi,\tau))^{-\ii p\sigma_3}(\ii\sigma_2)\\
&=\left(-\vartheta_a''(\chi,\tau)\right)^{-\frac{1}{2}\ii p\sigma_3}(b(\chi,\tau)-a(\chi,\tau))^{-\ii p\sigma_3}(\ii\sigma_2)
\end{split}
\label{eq:Channels-Ha-center}
\end{equation}
and
\begin{equation}
\begin{split}
\mathbf{H}^b(b(\chi,\tau))&=f_b'(b(\chi,\tau);\chi,\tau)^{\ii p\sigma_3}(b(\chi,\tau)-a(\chi,\tau))^{\ii p\sigma_3}\\
&=\vartheta''_b(\chi,\tau)^{\frac{1}{2}\ii p\sigma_3}(b(\chi,\tau)-a(\chi,\tau))^{\ii p\sigma_3},
\end{split}
\label{eq:Channels-Hb-center}
\end{equation}
where on the second line in each case we used \eqref{eq:Channels-fafb-Derivs}.
Letting 
$\zeta_{a,b}=M^{\frac{1}{2}}f_{a,b}(\lambda;\chi,\tau)$
denote rescalings of the conformal coordinates, 
we then define the inner parametrices by setting
\begin{equation}
\dot{\mathbf{W}}^a(\lambda)\defeq M^{-\frac{1}{2}\ii p\sigma_3}\ee^{-\ii M\vartheta_a(\chi,\tau)\sigma_3}\ii^{\frac{1}{2}(1-s)\sigma_3}\mathbf{H}^a(\lambda)
\mathbf{U}(\zeta_a)(\ii\sigma_2)^{-1}\ii^{-\frac{1}{2}(1-s)\sigma_3}\ee^{\ii M\vartheta_a(\chi,\tau)\sigma_3},\quad\lambda\in D_a(\delta)
\label{eq:Channels-Ta-ALT}
\end{equation}
and
\begin{equation}
\dot{\mathbf{W}}^b(\lambda)\defeq M^{\frac{1}{2}\ii p\sigma_3}\ee^{-\ii M\vartheta_b(\chi,\tau)\sigma_3}\ii^{\frac{1}{2}(1-s)\sigma_3}\mathbf{H}^b(\lambda)
\mathbf{U}(\zeta_b)\ii^{-\frac{1}{2}(1-s)\sigma_3}\ee^{\ii M\vartheta_b(\chi,\tau)\sigma_3},\quad\lambda\in D_b(\delta).
\label{eq:Channels-Tb-ALT}
\end{equation}
Here the factors to the left of $\mathbf{U}(\zeta_{a,b})$ in each case are analytic on the relevant disk and therefore have no effect on the jump conditions, and the matrix function $\mathbf{U}(\zeta)$ is defined in terms of parabolic cylinder functions as the solution of Riemann-Hilbert Problem 5 in \cite{BilmanLM20} (for example; a development of the solution of this problem is given in \cite[Appendix A]{Miller18} taking $\tau=1$ in the notation of that reference).  The main properties of $\mathbf{U}(\zeta)$ that we need to refer to here are
\begin{itemize}
\item $\mathbf{U}(\zeta)$ is analytic for $|\arg(\zeta)|<\tfrac{1}{4}\pi$, $\tfrac{1}{4}\pi<|\arg(\zeta)|<\tfrac{3}{4}\pi$, and $\tfrac{3}{4}\pi<|\arg(\zeta)|<\pi$ (five sectors);
\item $\mathbf{U}(\zeta)$ takes continuous boundary values from each of the five sectors related by jump conditions $\mathbf{U}_+(\zeta)=\mathbf{U}_-(\zeta)\mathbf{V}^\mathrm{PC}(\zeta)$, where $\mathbf{V}^\mathrm{PC}(\zeta)$ is defined in terms of the exponentials $\ee^{\pm\ii\zeta^2}$ on the five complementary oriented boundary rays as shown in \cite[Figure 9]{BilmanLM20};
\item $\mathbf{U}(\zeta)$ has uniform asymptotics in all directions of the complex plane given by
\begin{equation}
\mathbf{U}(\zeta)\zeta^{\ii p\sigma_3}=\mathbb{I}+\frac{1}{2\ii\zeta}\begin{bmatrix}0 & \alpha\\-\beta & 0\end{bmatrix}+\begin{bmatrix}O(\zeta^{-2}) & O(\zeta^{-3})\\O(\zeta^{-3}) & O(\zeta^{-2})\end{bmatrix},\quad\zeta\to\infty,
\label{eq:PCU-asymp}
\end{equation}
where
\begin{equation}
\alpha\defeq\frac{2^\frac{3}{4}\sqrt{2\pi}}{\Gamma(\ii p)}\ee^{\frac{1}{4}\ii\pi}\ee^{2\pi\ii p^2}
= \sqrt{\frac{\ln(2)}{\pi}}\ee^{\ii(\frac{1}{4}\pi+2\pi p^2-\arg(\Gamma(\ii p)))},\quad\beta\defeq -\alpha^*.
\label{eq:Channels-alpha-beta}
\end{equation}
\end{itemize}
In particular, the analyticity and jump conditions satisfied by $\mathbf{U}(\zeta)$ imply that the inner parametrices $\dot{\mathbf{W}}^a(\lambda)$ and $\dot{\mathbf{W}}^b(\lambda)$ exactly satisfy the jump conditions for $\mathbf{W}(\lambda)$ within their respective disks of definition (here we assume that the jump contours for $\mathbf{W}(\lambda)$ within each disk have been deformed to agree with preimages under $\lambda\mapsto \zeta_{a,b}$ of the straight rays across which $\mathbf{U}(\zeta)$ has jump discontinuities).  

A \emph{global parametrix} is then constructed from the outer and inner parametrices as follows:
\begin{equation}
\dot{\mathbf{W}}(\lambda)\defeq\begin{cases}
\dot{\mathbf{W}}^a(\lambda),&\quad\lambda\in D_a(\delta),\\
\dot{\mathbf{W}}^b(\lambda),&\quad\lambda\in D_b(\delta),\\
\dot{\mathbf{W}}^\mathrm{out}(\lambda),&\quad\lambda\in\mathbb{C}\setminus (I\cup D_a(\delta)\cup D_b(\delta)).
\end{cases}
\end{equation}

\subsection{Small norm problem for the error and large-$M$ expansion}
\label{sec:small-norm-channels}
We now compare the (unknown) matrix $\mathbf{W}(\lambda)=\mathbf{W}(\lambda;\chi,\tau,\mathbf{Q}^{-s},M)$ with its global parametrix by defining the \emph{error} as
\begin{equation}
\mathbf{F}(\lambda)\defeq\mathbf{W}(\lambda)\dot{\mathbf{W}}(\lambda)^{-1}.
\end{equation}
Since the parametrix is an exact solution of the Riemann-Hilbert jump conditions for $\mathbf{W}(\lambda)$ within the disks $D_{a,b}(\delta)$ and across the part of $I=[a,b]$ exterior to these disks, 
$\mathbf{F}(\lambda)$ can be extended to an analytic function of $\lambda\in\mathbb{C}$ with the exception of the arcs of $C_L^\pm$ and $C_R^\pm$ lying outside of the disks $D_{a,b}(\delta)$, and the boundaries $\partial D_{a,b}(\delta)$, which we take to have clockwise orientation.  Because $\delta$ is fixed 
as $M\to+\infty$,
and since $\dot{\mathbf{W}}^\mathrm{out}(\lambda)$ is independent of 
$M$,
there is a positive constant $\nu>0$ such that 
$\mathbf{F}_+(\lambda)=\mathbf{F}_-(\lambda)(\mathbb{I}+O(\ee^{-\nu M}))$ holds uniformly on the jump contour for $\mathbf{F}(\lambda)$ except on the circles $\partial D_{a,b}(\delta)$.  On the circles, we calculate the jump matrix for $\mathbf{F}(\lambda)$ as follows:
\begin{equation}
\mathbf{F}_+(\lambda)=\mathbf{F}_-(\lambda)\cdot\dot{\mathbf{W}}^{a,b}(\lambda)\dot{\mathbf{W}}^\mathrm{out}(\lambda)^{-1},\quad\lambda\in\partial D_{a,b}(\delta),
\end{equation}
because $\mathbf{W}(\lambda)$ is continuous across $\partial D_{a,b}(\delta)$.  Now we use the fact that by comparing the definition \eqref{eq:Channels-Tout} of the outer parametrix 
$\dot{\mathbf{W}}^\mathrm{out}(\lambda)$ with the definitions \eqref{eq:Channels-Ha}--\eqref{eq:Channels-Hb} of $\mathbf{H}^a(\lambda)$ and $\mathbf{H}^b(\lambda)$, we have
\begin{multline}
\dot{\mathbf{W}}^\mathrm{out}(\lambda)\ee^{-\ii M\vartheta_a(\chi,\tau)\sigma_3}\ii^{\frac{1}{2}(1-s)\sigma_3}(\ii\sigma_2)
=M^{-\frac{1}{2}\ii p\sigma_3}\ee^{-\ii M\vartheta_a(\chi,\tau)\sigma_3}\ii^{\frac{1}{2}(1-s)\sigma_3}\mathbf{H}^a(\lambda)\zeta_a^{-\ii p\sigma_3},\\
\lambda\in D_a(\delta)\setminus I
\end{multline}
and
\begin{equation}
\dot{\mathbf{W}}^\mathrm{out}(\lambda)\ee^{-\ii M\vartheta_b(\chi,\tau)\sigma_3}\ii^{\frac{1}{2}(1-s)\sigma_3}
=M^{\frac{1}{2}\ii p\sigma_3}\ee^{-\ii M\vartheta_b(\chi,\tau)\sigma_3}\ii^{\frac{1}{2}(1-s)\sigma_3}\mathbf{H}^b(\lambda)\zeta_b^{-\ii p\sigma_3},\quad
\lambda\in D_b(\delta)\setminus I.
\end{equation}
Therefore, using \eqref{eq:Channels-Ta-ALT} and \eqref{eq:PCU-asymp} and the fact that 
$\zeta_a=M^\frac{1}{2}f_a(\lambda;\chi,\tau)$
while $f_a(\lambda;\chi,\tau)$ is bounded away from zero on $\partial D_a(\delta)$ for $\delta$ sufficiently small independent of 
$M$,
\begin{multline}
\mathbf{F}_+(\lambda)=\mathbf{F}_-(\lambda)
M^{-\frac{1}{2}\ii p\sigma_3}\ee^{-\ii M\vartheta_a(\chi,\tau)\sigma_3}\ii^{\frac{1}{2}(1-s)\sigma_3}\mathbf{H}^a(\lambda)\\
\cdot 
\left(\mathbb{I}+\frac{1}{2\ii M^{\frac{1}{2}}f_a(\lambda;\chi,\tau)}\begin{bmatrix}0 & \alpha\\-\beta & 0\end{bmatrix}+\begin{bmatrix}O(M^{-1}) & O(M^{-\frac{3}{2}})\\O(M^{-\frac{3}{2}}) & O(M^{-1})\end{bmatrix}\right)\\
\cdot\mathbf{H}^a(\lambda)^{-1}\ii^{-\frac{1}{2}(1-s)\sigma_3}\ee^{\ii M\vartheta_a(\chi,\tau)\sigma_3}M^{\frac{1}{2}\ii p\sigma_3},\quad\lambda\in\partial D_a(\delta).
\label{eq:Channels-VF-partialDa-ALT}
\end{multline}
Likewise, using \eqref{eq:Channels-Tb-ALT} and the fact that 
$\zeta_b=M^\frac{1}{2}f_b(\lambda;\chi,\tau)$
with $f_b(\lambda;\chi,\tau)$ bounded away from zero on $\partial D_b(\delta)$, 
\begin{multline}
\mathbf{F}_+(\lambda)=\mathbf{F}_-(\lambda)
M^{\frac{1}{2}\ii p\sigma_3}\ee^{-\ii M\vartheta_b(\chi,\tau)\sigma_3}\ii^{\frac{1}{2}(1-s)\sigma_3}\mathbf{H}^b(\lambda)\\
\cdot 
\left(\mathbb{I}+\frac{1}{2\ii M^{\frac{1}{2}}f_b(\lambda;\chi,\tau)}\begin{bmatrix}0 & \alpha\\-\beta & 0\end{bmatrix}+\begin{bmatrix}O(M^{-1}) & O(M^{-\frac{3}{2}})\\O(M^{-\frac{3}{2}}) & O(M^{-1})\end{bmatrix}\right)\\
\cdot\mathbf{H}^b(\lambda)^{-1}\ii^{-\frac{1}{2}(1-s)\sigma_3}\ee^{\ii M\vartheta_b(\chi,\tau)\sigma_3}M^{-\frac{1}{2}\ii p\sigma_3},\quad\lambda\in\partial D_b(\delta).
\label{eq:Channels-VF-partialDb-ALT}
\end{multline}
In particular, it follows that 
$\mathbf{F}_+(\lambda)=\mathbf{F}_-(\lambda)(\mathbb{I}+O(M^{-\frac{1}{2}}))$ 
holds uniformly on the compact jump contour for $\mathbf{F}(\lambda)$, which otherwise is analytic and tends to $\mathbb{I}$ as $\lambda\to\infty$.  By small-norm theory for such Riemann-Hilbert problems, it follows that 
$\mathbf{F}_-(\cdot)=\mathbb{I}+O(M^{-\frac{1}{2}})$ 
holds in the $L^2$ sense on the jump contour, in the limit 
$M\to+\infty$.

\subsection{Asymptotic formula for 
$q(M\chi,M\tau;\mathbf{Q}^{-s},M)$
for $(\chi,\tau)\in \channels$}
Beginning with \eqref{eq:q-S} and using the facts that $\mathbf{S}(\lambda;\chi,\tau,\mathbf{Q}^{-s},M)=\mathbf{W}(\lambda)=\mathbf{W}(\lambda;\chi,\tau,\mathbf{Q}^{-s},M)$ and $\dot{\mathbf{W}}(\lambda)=\dot{\mathbf{W}}^\mathrm{out}(\lambda)$ both hold for $|\lambda|$ sufficiently large, we obtain the exact formula
\begin{equation}
\begin{split}
q(M\chi,M\tau;\mathbf{Q}^{-s},M)&=2\ii\lim_{\lambda\to\infty}\lambda W_{12}(\lambda)\\
&=
2\ii\lim_{\lambda\to\infty}\lambda\left[F_{11}(\lambda)\dot{W}^\mathrm{out}_{12}(\lambda)+F_{12}(\lambda)\dot{W}^\mathrm{out}_{22}(\lambda)\right].
\end{split}
\label{eq:psi-k-exact-channels-ALT}
\end{equation}
Since $\dot{\mathbf{W}}^\mathrm{out}(\lambda)$ is a diagonal matrix tending to $\mathbb{I}$ as $\lambda\to\infty$, this formula simplifies to
\begin{equation}
q(M\chi,M\tau;\mathbf{Q}^{-s},M)=
2\ii\lim_{\lambda\to\infty}\lambda F_{12}(\lambda).
\end{equation}
If $\mathbf{V}^\mathbf{F}(\lambda)$ denotes the jump matrix for $\mathbf{F}(\lambda)$, i.e., $\mathbf{F}_+(\lambda)=\mathbf{F}_-(\lambda)\mathbf{V}^\mathbf{F}(\lambda)$ holds on the jump contour $\Sigma_\mathbf{F}$, then it follows from the Plemelj formula that
\begin{equation}
\mathbf{F}(\lambda)=\mathbb{I}+\frac{1}{2\pi\ii}\int_{\Sigma_\mathbf{F}}\frac{\mathbf{F}_-(\eta)(\mathbf{V}^\mathbf{F}(\eta)-\mathbb{I})}{\eta-\lambda}\,\dd\eta,\quad
\lambda\in\mathbb{C}\setminus\Sigma_\mathbf{F},
\label{eq:F-Cauchy-channels}
\end{equation}
and therefore
\begin{equation}
q(M\chi,M\tau;\mathbf{Q}^{-s},M)=-\frac{1}{\pi}\int_{\Sigma_\mathbf{F}}\left[F_{11-}(\eta)V^\mathbf{F}_{12}(\eta)+F_{12-}(\eta)(V^\mathbf{F}_{22}(\eta)-1)\right]\,\dd\eta.
\end{equation}
Since 
$V^\mathbf{F}_{22}(\cdot)-1=O(M^{-1})$
holds uniformly on $\Sigma_\mathbf{F}$, as $\Sigma_\mathbf{F}$ is compact we also have 
$V^\mathbf{F}_{22}(\cdot)-1=O(M^{-1})$
in $L^2(\Sigma_\mathbf{F})$.  Using that 
$F_{12-}(\cdot)=O(M^{-\frac{1}{2}})$
in $L^2(\Sigma_\mathbf{F})$ as well, by Cauchy-Schwarz,
\begin{equation}
q(M\chi,M\tau;\mathbf{Q}^{-s},M)=-\frac{1}{\pi}\int_{\Sigma_\mathbf{F}}F_{11-}(\eta)V^\mathbf{F}_{12}(\eta)\,\dd\eta + O(M^{-\frac{3}{2}}).
\end{equation}
A similar argument allows $F_{11-}(\eta)$ to be replaced with $1$ at the cost of an error term of the same order.  Indeed, taking a boundary value on $\Sigma_\mathbf{F}$ in \eqref{eq:F-Cauchy-channels} gives for $\varphi(\lambda)\defeq F_{11-}(\lambda)-1$ the integral equation
\begin{equation}
\varphi(\lambda)-\frac{1}{2\pi\ii}\int_{\Sigma_\mathbf{F}}\frac{\varphi(\eta)(V^\mathbf{F}_{11}(\eta)-1)}{\eta-\lambda_-}\,\dd\eta = f(\lambda),\quad\lambda\in\Sigma_\mathbf{F},
\label{eq:phi-integral-equation}
\end{equation}
where 
\begin{equation}
f(\lambda)\defeq \frac{1}{2\pi\ii}\int_{\Sigma_\mathbf{F}}\frac{V_{11}^\mathbf{F}(\eta)-1}{\eta-\lambda_-}\,\dd\eta + \frac{1}{2\pi\ii}\int_{\Sigma_\mathbf{F}}\frac{F_{12-}(\eta)V_{21}^\mathbf{F}(\eta)}{\eta-\lambda_-}\,\dd\eta,\quad\lambda\in\Sigma_\mathbf{F}.
\label{eq:phi-integral-equation-RHS}
\end{equation}
The small-norm theory is fundamentally based on the fact that the Cauchy projection operator 
\begin{equation}
m(\lambda)\mapsto\frac{1}{2\pi\ii}\int_{\Sigma_\mathbf{F}}\frac{m(\eta)\,\dd\eta}{\eta-\lambda_-},\quad\lambda\in\Sigma_\mathbf{F}
\end{equation}
is bounded on $L^2(\Sigma_\mathbf{F})$ with norm depending only on the geometry of the contour $\Sigma_\mathbf{F}$, which is independent of any large parameter.  Since 
$V_{11}^\mathbf{F}(\cdot)-1=O(M^{-1})$
in $L^\infty(\Sigma_\mathbf{F})$ it follows easily from \eqref{eq:phi-integral-equation} that $\varphi(\cdot)=O(f(\cdot))$ in $L^2(\Sigma_\mathbf{F})$ as 
$M\to\infty$.
Likewise, from \eqref{eq:phi-integral-equation-RHS} we see that $f(\cdot)=O(V^\mathbf{F}_{11}(\cdot)-1)+O(F_{12-}(\cdot)V_{21}^\mathbf{F}(\cdot))$ in $L^2(\Sigma_\mathbf{F})$.  Since 
$V^\mathbf{F}_{11}(\cdot)-1=O(M^{-1})$
in $L^\infty(\Sigma_\mathbf{F})$, compactness of $\Sigma_\mathbf{F}$ implies that 
$V^\mathbf{F}_{11}(\cdot)-1=O(M^{-1})$
in $L^2(\Sigma_\mathbf{F})$.  Also, since 
$V_{21}^\mathbf{F}(\cdot)=O(M^{-\frac{1}{2}})$
in $L^\infty(\Sigma_\mathbf{F})$ while 
$F_{12-}(\cdot)=O(M^{-\frac{1}{2}})$
in $L^2(\Sigma_\mathbf{F})$, we consequently have 
$F_{12-}(\cdot)V_{21}^\mathbf{F}(\cdot)=O(M^{-1})$
in $L^2(\Sigma_\mathbf{F})$ as well.  Therefore 
$\varphi(\cdot)=F_{11-}(\cdot)-1=O(M^{-1})$
in $L^2(\Sigma_\mathbf{F})$.  As 
$V_{12}^\mathbf{F}(\cdot)=O(M^{-\frac{1}{2}})$
in $L^\infty(\Sigma_\mathbf{F})$ and hence also in $L^2(\Sigma_\mathbf{F})$ it then follows by Cauchy-Schwarz that
\begin{equation}
q(M\chi,M\tau;\mathbf{Q}^{-s},M)=-\frac{1}{\pi}\int_{\Sigma_\mathbf{F}}V^\mathbf{F}_{12}(\eta)\,\dd\eta + O(M^{-\frac{3}{2}}).
\end{equation}
The dominant contribution to the integral comes from $\partial D_a(\delta)\cup\partial D_b(\delta)$ where $V^\mathbf{F}_{12}(\cdot)$ is proportional to 
$M^{-\frac{1}{2}}$,
while contributions from the rest of $\Sigma_\mathbf{F}$ are uniformly exponentially small.  Therefore, we may modify the integration contour to consist of just two small circles:
\begin{equation}
q(M\chi,M\tau;\mathbf{Q}^{-s},M)=-\frac{1}{\pi}\int_{\partial D_a(\delta)\cup\partial D_b(\delta)}V^\mathbf{F}_{12}(\eta)\,\dd\eta + O(M^{-\frac{3}{2}}).
\end{equation}
Now, using the jump conditions \eqref{eq:Channels-VF-partialDa-ALT}--\eqref{eq:Channels-VF-partialDb-ALT} and the fact that $\mathbf{H}^a(\cdot)$ is off-diagonal while $\mathbf{H}^b(\cdot)$ is diagonal, one easily finds that
\begin{equation}
V_{12}^\mathbf{F}(\eta)=\frac{M^{-\ii p}\ee^{-2\ii M\vartheta_a(\chi,\tau)}(-1)^{\frac{1}{2}(1-s)}}{2\ii M^\frac{1}{2}f_a(\eta;\chi,\tau)}
\beta H^a_{12}(\eta)^2 + O(M^{-\frac{3}{2}}),\quad
\eta\in\partial D_a(\delta),
\end{equation}
\begin{equation}
V_{12}^\mathbf{F}(\eta)=\frac{M^{\ii p}\ee^{-2\ii M\vartheta_b(\chi,\tau)}(-1)^{\frac{1}{2}(1-s)}}{2\ii M^\frac{1}{2}f_b(\eta;\chi,\tau)}
\alpha H^b_{11}(\eta)^2 + O(M^{-\frac{3}{2}}),\quad
\eta\in\partial D_b(\delta).
\end{equation}
Therefore, since $f_{a,b}(\cdot;\chi,\tau)$ are analytic functions with simple zeros at $a$ and $b$ respectively, a residue calculation gives
\begin{multline}
q(M\chi,M\tau;\mathbf{Q}^{-s},M)=\frac{(-1)^{\frac{1}{2}(1-s)}}{M^\frac{1}{2}}\left[M^{-\ii p}\ee^{-2\ii M\vartheta_a(\chi,\tau)}\frac{\beta H_{12}^a(a(\chi,\tau))^2}{f_a'(a(\chi,\tau);\chi,\tau)}\right.\\
\left. {}+M^{\ii p}\ee^{-2\ii M\vartheta_b(\chi,\tau)}\frac{\alpha H_{11}^b(b(\chi,\tau))^2}{f_b'(b(\chi,\tau);\chi,\tau)}\right]+O(M^{-\frac{3}{2}}).
\end{multline}
Since $s=\pm 1$,
we then use \eqref{eq:Channels-fafb-Derivs}, \eqref{eq:Channels-Ha-center}--\eqref{eq:Channels-Hb-center}, and \eqref{eq:Channels-alpha-beta} to obtain
\begin{multline}
q(M\chi,M\tau;\mathbf{Q}^{-s},M)=\frac{s}{M^\frac{1}{2}}\sqrt{\frac{\ln(2)}{\pi}}
\left[
\ee^{\ii\phi}
\frac{\ee^{-2\ii M\vartheta_a(\chi,\tau)}(-\vartheta''_a(\chi,\tau))^{-\ii p}}{(-\vartheta''_a(\chi,\tau))^\frac{1}{2}}\right.\\
\left.{}+\ee^{-\ii\phi}
\frac{\ee^{-2\ii M\vartheta_b(\chi,\tau)}\vartheta''_b(\chi,\tau)^{\ii p}}{\vartheta''_b(\chi,\tau)^\frac{1}{2}}\right]+O(M^{-\frac{3}{2}}),
\end{multline}
where, recalling the value of $p$ from \eqref{eq:Channels-Tout}, a real angle $\phi$ is defined by 
\begin{equation}
\phi\defeq -\frac{\ln(2)}{2\pi}\ln(M)-\frac{\ln(2)}{\pi}\ln(b(\chi,\tau)-a(\chi,\tau))-\frac{\ln(2)^2}{2\pi}-\frac{1}{4}\pi+\arg\left(\Gamma\left(\frac{\ii\ln(2)}{2\pi}\right)\right).
\label{eq:phi-def}
\end{equation}
We may further observe that the numerator of each of the fractions in square brackets above has unit modulus, so upon identifying the angles of those phase factors
the proof of Theorem~\ref{thm:channels} is complete, with a standard argument to supply the local uniformity of the error estimate for $(\chi,\tau)$ in compact subsets of $\channels$ (which can include points on the positive $\chi$-axis).

\subsection{Simplification for $\tau=0$}
%
The further simplification mentioned at the end of Section~\ref{sec:results-channels}, so that $(\chi,\tau)\in \channels$ with $\tau=0$ means $0<\chi<2$, is accomplished by noting that the phase function $\vartheta(\lambda;\chi,0)$ defined by \eqref{eq:vartheta} is an odd function of $\lambda$ for each $\chi\in (0,2)$, and we recall that the critical points $\lambda=a,b$ in this case are given by \eqref{eq:tau-zero-critical-points}: 
\begin{equation}
b(\chi,0)=\sqrt{\frac{2}{\chi}-1},\quad a(\chi,0)=-b(\chi,0).
\end{equation}
A computation then shows that
\begin{equation}
\vartheta_b(\chi,0)=\vartheta(b(\chi,0);\chi,0)=\chi\sqrt{\frac{2}{\chi}-1}+\pi-2\tan^{-1}\left(\sqrt{\frac{2}{\chi}-1}\right),\quad\vartheta_a(\chi,0)=-\vartheta_b(\chi,0),
\end{equation}
and that
\begin{equation}
\vartheta_b''(\chi,0)=\vartheta''(b(\chi,0);\chi,0)=\chi^2\sqrt{\frac{2}{\chi}-1},\quad\vartheta_a''(\chi,0)=-\vartheta_b''(\chi,0).
\end{equation}
Therefore, in this special case, the leading term denoted $L^{[\channels]}_k(\chi,\tau)$ in \eqref{eq:leading-term-channels} reduces for $\tau=0$ and $0<\chi<2$ to \eqref{eq:leading-term-channels-tau-zero}.

\section{Properties of $h(\lambda;\chi,\tau)$ for $(\chi,\tau)\in\exterior\cup\shelves$}
\label{sec:GenusZeroModification}

\subsection{Unique determination of $h'(\lambda;\chi,\tau)$ for $(\chi,\tau)\in\exterior\cup\shelves$}
\label{sec:g-function}
Here we show how $(\chi,\tau)\in\overline{\exterior\cup\shelves}$ determines a unique function $h'(\lambda;\chi,\tau)$ of the form \eqref{eq:hprime-formula} that satisfies the residue and asymptotic conditions \eqref{eq:hprime-residues} and \eqref{eq:hprime-expansion} respectively.  
 
We first use \eqref{eq:hprime-expansion} with \eqref{eq:hprime-formula} to explicitly eliminate $A(\chi,\tau)$ and $B(\chi,\tau)^2$ in favor of $u(\chi,\tau)$ and $v(\chi,\tau)$:
\begin{equation}
\begin{split}
A(\chi,\tau)&=\frac{u(\chi,\tau)-\chi}{2\tau},\\
A(\chi,\tau)^2+B(\chi,\tau)^2&=\frac{3u(\chi,\tau)^2}{4\tau^2}-\frac{v(\chi,\tau)}{\tau}+2-\frac{\chi u(\chi,\tau)}{\tau^2}+\frac{\chi^2}{4\tau^2}.
\end{split}
\label{eq:eliminate-AB}
\end{equation}
Then, the residue conditions \eqref{eq:hprime-residues} become $(-2\tau\pm\ii u(\chi,\tau)+v(\chi,\tau))R(\pm\ii;\chi,\tau)=-2$.  Imposing instead the \emph{squares} of these conditions\footnote{Later, getting the signs right for the residues is accomplished by choosing the location of the branch cut $\Sigma_g$ in relation to the points $\lambda=\pm\ii$.  See Remark~\ref{rem:Sigma_g}.} one arrives at two complex-conjugate equations, which amount to two real equations by taking real and imaginary parts.  The real part equation reads $\mathcal{R}=0$, where
\begin{multline}
\mathcal{R}\defeq 3u^4-3u^2v^2-4\chi u^3+4\tau v^3+4\chi uv^2+\chi^2u^2-\chi^2v^2-8\chi\tau uv+8\tau^2u^2-20\tau^2v^2\\{}+4\chi^2\tau v+32\tau^3v-16\tau^4+16\tau^2-4\chi^2\tau^2,
\end{multline}
and the imaginary part equation reads $\mathcal{I}_1\mathcal{I}_2=0$, where
\begin{equation}
\mathcal{I}_1\defeq u^2-\chi u-2\tau v+4\tau^2\quad\text{and}\quad
\mathcal{I}_2\defeq 3uv-4\tau u-\chi v+2\chi\tau.
\end{equation}
Note that for $\tau=0$,
\begin{equation}
\left.\mathcal{R}\right|_{\tau=0}=(u+v)(u-v)(3u-\chi)(u-\chi),\quad
\left.\mathcal{I}_1\right|_{\tau=0}=(u-\chi)u,\quad\text{and}\quad
\left.\mathcal{I}_2\right|_{\tau=0}=(3u-\chi)v,
\end{equation}
so one solution is to choose $u(\chi,0)=\chi$ and $v(\chi,0)=0$.  In order to apply the implicit function theorem to continue this solution to $\tau\neq 0$, it is necessary to discard the factor $\mathcal{I}_1$ and enforce only the conditions $\mathcal{R}=0$ and $\mathcal{I}_2=0$.  Then a calculation shows that the Jacobian is
\begin{equation}
\left.\det\begin{bmatrix} \mathcal{R}_u & \mathcal{R}_v\\
\mathcal{I}_{2u} & \mathcal{I}_{2v}\end{bmatrix}\right|_{\tau=0,u=\chi,v=0}=4\chi^4,
\end{equation}
which is nonzero for $\chi>2$.  Moreover, the equation $\mathcal{I}_2=0$ can be used to explicitly eliminate $v$ by
\begin{equation}
\mathcal{I}_2=0\quad\Leftrightarrow\quad v=2\tau\frac{2u-\chi}{3u-\chi}.
\label{eq:eliminate-v}
\end{equation}
With $v$ eliminated, the equation $\mathcal{R}=0$ reads $P(u;\chi,\tau)=0$, where $P(u;\chi,\tau)$ is the septic polynomial
\begin{multline}
P(u;\chi,\tau)\defeq 81u^7-189\chi u^6 + (162\chi^2+72\tau^2)u^5-(66\chi^2+120\tau^2)\chi u^4 \\
{}+ (13\chi^4+56\chi^2\tau^2+16\tau^4+432\tau^2)u^3 -(\chi^4+8\chi^2\tau^2+16\tau^4+432\tau^2)\chi u^2 \\
{}+ 144\chi^2\tau^2u-16\chi^3\tau^2.
\end{multline}
Note that $P(u;\chi,0)=(u-\chi)(3u-\chi)^4u^2$, so if $\chi>0$, $u(\chi,0)=\chi$ is a simple root hence continuable to $\tau>0$ (sufficiently small, given $\chi>0$) by the implicit function theorem.  In the limit $\tau\downarrow 0$ we can compute as many terms in the Taylor expansion of $u(\chi,\tau)$ about $u(\chi,0)=\chi$ as we like; in particular it is easy to see that
\begin{equation}
u(\chi,\tau)=\chi -\frac{8\tau^2}{\chi^3} + O(\tau^4),\quad\tau\downarrow 0,\quad \chi>2,
\end{equation}
which implies via \eqref{eq:eliminate-v} that 
\begin{equation}
v(\chi,\tau)=\tau-\frac{4\tau^3}{\chi^4}+O(\tau^5),\quad\tau\downarrow 0,\quad\chi>2.
\end{equation}
From \eqref{eq:eliminate-AB} we then also find that
\begin{equation}
A(\chi,\tau)=-\frac{4\tau}{\chi^3}+O(\tau^3)\quad\text{and}\quad
B(\chi,\tau)^2=1-\frac{4}{\chi^2}+O(\tau^2),\quad\tau\downarrow 0,\quad \chi>2.
\label{eq:AB-tau-small}
\end{equation}
In the special case that $\tau=0$ and $\chi>2$, it follows that $u(\chi,0)=\chi$, $v(\chi,0)=0$, $A(\chi,0)=0$ and $B(\chi,0)^2=1-4/\chi^2<1$.
We claim that this solution can be uniquely continued not just locally near $\tau=0$ but also to the entire unbounded exterior region $\exterior$ as well as through its common boundary with the bounded region $\shelves$ into that entire region.  We have the following result, the proof of which can be found in Appendix~\ref{A:Proofs}.
\begin{proposition}
Let $(\chi,\tau)\in \overline{\exterior\cup\shelves}$.  Then $P(u;\chi,\tau)$ has a unique real root of odd multiplicity, denoted $u=u(\chi,\tau)$ with $u(0,\tau)=0$ for $\tau>0$ and $u(\chi,0)=\chi$ for $\chi>2$.  There exists a value $\tau_1>0$ such that except for $\chi=0$ and possibly three or fewer points $(\chi,\tau)$ with $\chi>0$ and $\tau=\tau_1$, $u(\chi,\tau)$ is the only real root of $P(u;\chi,\tau)$ and it is simple.  
\label{prop:u}
\end{proposition}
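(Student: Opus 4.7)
The plan is to combine explicit analysis of $P(u;\chi,\tau)$ on two accessible pieces of $\partial(\exterior\cup\shelves)$ with a global continuation argument controlled by the discriminant $\Delta(\chi,\tau)\defeq\mathrm{disc}_u P(u;\chi,\tau)$. For the boundary data, the factorization $P(u;\chi,0)=(u-\chi)(3u-\chi)^4u^2$ already recorded in the preceding discussion shows that for $\chi>2$ the unique real root of odd multiplicity is the simple root $u=\chi$. For $\chi=0$ and $\tau>0$, direct substitution yields $P(u;0,\tau)=u^3\bigl(81u^4+72\tau^2u^2+16\tau^4+432\tau^2\bigr)$; the quartic factor, viewed as a quadratic in $w=u^2$, has discriminant $(72\tau^2)^2-4\cdot 81\cdot(16\tau^4+432\tau^2)=-139968\tau^2<0$, so $u=0$ is its only real root, of (odd) multiplicity three.

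Next I would invoke the implicit function theorem (after dividing out the known even-multiplicity factors in the first boundary case, and using the fact that real roots of odd multiplicity persist under small real perturbation in the second) to continue the distinguished root off each boundary segment into the interior of $\exterior\cup\shelves$. On the open set $\{\Delta\ne 0\}\cap(\exterior\cup\shelves)$ every real root of $P$ is simple and the real-root count is locally constant. Since $\deg_u P=7$ with real coefficients, this count equals $1$, $3$, $5$, or $7$ on each connected component, and any component whose closure touches one of the two boundary segments analyzed above is forced to have count $1$, identifying $u(\chi,\tau)$ there as the unique simple real root of $P$.

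The remaining step is to show that \emph{every} connected component of $\{\Delta\ne 0\}\cap(\exterior\cup\shelves)$ has its closure meeting one of the two analyzed boundary segments, and to pin down the exceptional locus. The zero set $\{\Delta=0\}$ is a real algebraic curve of bounded total degree, hence has finitely many branches, each either unbounded or compact. Unbounded branches can be followed out toward $\chi\to\infty$ or $\tau\to\infty$, where asymptotic expansion of $P$ reduces the analysis to the two cases already treated. A compact branch enclosing a region of larger real-root count would require two complex-conjugate roots of $P$ to collide on the real axis; the infimum of $\tau$-values at which this can occur inside $\overline{\exterior\cup\shelves}$ defines $\tau_1$, and the three-or-fewer exceptional points on the horizontal line $\tau=\tau_1$ follow from a B\'ezout-type bound on the intersection of $\{\Delta=0\}$ with that line combined with the observation that at such a transition at most one new conjugate pair of roots of $P$ can cross the real axis at a time in each local branch.

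The main obstacle will be ruling out hidden compact branches of $\{\Delta=0\}$ rigorously, that is, showing the count cannot jump inside a pocket of $\overline{\exterior\cup\shelves}$ separated from its boundary. I would attack this by performing a Sturm-sequence computation for $P(u;\chi,\tau)$ in $u$ treating $(\chi,\tau)$ parametrically, reducing the real-root count to a finite list of polynomial sign conditions in $(\chi,\tau)$, and then verifying these conditions throughout $\overline{\exterior\cup\shelves}$ by combining the explicit small-$\tau$ expansions \eqref{eq:AB-tau-small} with the semi-algebraic description of the region derived in Section~\ref{sec:basic-exponent}.
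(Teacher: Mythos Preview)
Your overall architecture matches the paper's: analyze $P$ on two boundary pieces, then use the discriminant $\Delta=\mathrm{disc}_u P$ to control root bifurcations in the interior, with Sturm sequences as the computational tool. However, there are concrete gaps.

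\textbf{Boundary analysis is incomplete.} At $\tau=0$, $\chi>2$, the real root set of $P$ is $\{\chi,\tfrac{1}{3}\chi,0\}$ with multiplicities $1,4,2$; the count of distinct real roots is $3$, not $1$. ``Dividing out the known even-multiplicity factors'' does not tell you what happens for $\tau>0$ small: the quadruple root at $\tfrac{1}{3}\chi$ could in principle split into four real simple roots, and likewise the double root at $0$. The paper rules this out by a dominant-balance calculation (substitute $u=\tfrac{1}{3}\chi+\tau\zeta$ and $u=\tau\zeta$ respectively and read off that the limiting equations have only non-real roots). Similarly at $\chi=0$, $\tau>0$, the triple root at $u=0$ could split into three real simple roots for $\chi>0$ small; the paper excludes this by writing $u=\chi\zeta$, obtaining a cubic $P_0(\zeta;\tau)$, and checking its discriminant never vanishes for $\tau>0$. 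Without these steps you have not established that the real-root count is $1$ on an open neighborhood of either axis, which is what the continuation argument needs.

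\textbf{The identification of $\tau_1$ and the ``three or fewer'' bound are not as you describe.} Your $\tau_1$ is defined as an infimum over hypothetical compact branches, but the proposition asserts a specific $\tau_1$ at which the only possible exceptions occur. In the paper this comes from computing the resultant in $X=\chi^2$ of $\delta^{\mathrm{R}}(X,T)$ and $\partial_X\delta^{\mathrm{R}}(X,T)$ (with $T=\tau^2$), which factors explicitly; one factor is a quintic $Q_5(T)$ shown via a Sturm-sequence count to have a unique root $T_1>T^\sharp$, and $\tau_1=\sqrt{T_1}$. The bound of three comes from $\delta^{\mathrm{R}}(\cdot,T_1)$ being degree $7$ in $X$, hence having at most three repeated positive roots. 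A B\'ezout bound on $\{\Delta=0\}\cap\{\tau=\tau_1\}$ gives you the degree of $\Delta$ in $\chi$, which is much larger.

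\textbf{The acknowledged obstacle is where the proof lives.} You correctly flag that ruling out compact branches of $\{\Delta=0\}$ is the crux, but ``Sturm sequence in $u$ with $(\chi,\tau)$ parametric'' followed by sign checks over the whole region is not what the paper does and would be very hard to make rigorous. The paper instead tracks positive roots $X$ of $\delta^{\mathrm{R}}(X,T)=0$ as $T$ decreases from $+\infty$: for $T$ large all coefficients are positive so there are none; the first $T$ where a repeated positive root can appear is $T_1$; at $T=T^\sharp$ a double root appears at $X=X^\sharp$ and a local expansion shows it splits into real roots only for $T<T^\sharp$; a further resultant computation (against the boundary-of-$\channels$ polynomial) shows those roots stay inside $\channels$ for $0<T<T^\sharp$. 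Each of these steps is an explicit polynomial computation, and that is what turns the outline into a proof.
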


\begin{remark}
In the case $\tau=0$ and $\chi>2$, $\Sigma_g$ is an arc connecting the two points $\lambda=\pm\ii B(\chi,0)$.  If we take $\Sigma_g$ to be the purely imaginary straight-line segment connecting these points, then from the prescribed large-$\lambda$ asymptotic behavior of $R(\lambda;\chi,0)$ we find that $R(\pm\ii;\chi,0)=\pm 2\ii/\chi$, from which it follows directly via the formula \eqref{eq:hprime-formula} that the residue conditions \eqref{eq:hprime-residues} hold, so the signs of the residues which had been conflated in squaring the residue conditions are indeed correctly resolved with the indicated choice of $\Sigma_g$.  To ensure that this successful resolution is maintained upon continuation of the solution from $\tau=0$ it is then sufficient that $\Sigma_g$ deform continuously with $(\chi,\tau)$ without ever contacting the poles $\lambda=\pm\ii$.  This is feasible because the endpoints $A(\chi,\tau)\pm\ii B(\chi,\tau)$ lie in the left half-plane for all $(\chi,\tau)\in\overline{\exterior\cup\shelves}$ with $\chi>0$ and $\tau>0$; indeed from \eqref{eq:eliminate-AB}, $A(\chi,\tau)=0$ holds if and only if $u(\chi,\tau)=\chi$ and $P(\chi;\chi,\tau)=128\tau^2\chi^3$ which vanishes only on the coordinate axes.  Therefore $A(\chi,\tau)$ has one sign on the interior of $\overline{\exterior\cup\shelves}$, and by \eqref{eq:AB-tau-small} we see that $A(\chi,\tau)<0$.  Note that when $\chi\downarrow 0$ for given $\tau>0$, the proof of Proposition~\ref{prop:u} shows that $B(\chi,\tau)^2$ tends to a value strictly greater than $1$ while $A(\chi,\tau)\to 0$, so in this limiting situation we should choose $\Sigma_g$ to lie in the left half-plane except for its endpoints.
\label{rem:Sigma_g}
\end{remark}

We are now in a position to show that, as claimed in Section~\ref{sec:h-intro}, $B(\chi,\tau)^2>0$ holds for all $(\chi,\tau)\in\exterior\cup\shelves\cup(\partial\exterior\cap\partial\shelves)$ and $B(\chi,\tau)^2=0$ holds on the common boundary of the union with $\channels$. 
Indeed,
expressing $B^2$ explicitly in terms of $u$, $\chi$, and $\tau$ using \eqref{eq:eliminate-AB} and \eqref{eq:eliminate-v}, one finds that $B^2=0$ implies $3u^3-4\chi u^2+(4\tau^2+\chi^2) u=0$.  The resultant between this equation and $P(u;\chi,\tau)$ vanishes on the open quadrant $(\chi,\tau)\in\mathbb{R}_{>0}\times\mathbb{R}_{>0}$ exactly where \eqref{eq:boundary-curve} holds.

\subsubsection{Critical points of $h'(\lambda;\chi,\tau)$ for $(\chi,\tau)\in\exterior\cup\shelves$}
\label{sec:critical-points}
Observe that while the coefficients $u(\chi,\tau)$ and $v(\chi,\tau)$ in the quadratic factor in the numerator of $h'(\lambda;\chi,\tau)$ defined in \eqref{eq:hprime-formula} depend real-analytically on $(\chi,\tau)\in (\mathbb{R}_{\ge 0}\times\mathbb{R}_{\ge 0})\setminus\overline{\channels}$, the quadratic discriminant vanishes to first order along two curves in this domain so the roots undergo bifurcation upon crossing these curves.  Eliminating $v$ via \eqref{eq:eliminate-v}, the quadratic discriminant $u^2-8\tau v$ is seen to vanish only if $3u^3-\chi u^2-32\tau^2 u+16\chi\tau^2=0$.  The resultant of this cubic polynomial with $P(u;\chi,\tau)$ vanishes for $(\chi,\tau)\in\mathbb{R}_{>0}\times\mathbb{R}_{>0}$ exactly where $D(\chi,\tau)\defeq H_{10}(\chi,\tau)+H_8(\chi,\tau)+H_6(\chi,\tau) + H_4(\tau)=0$, in which the $H_j$ are homogeneous polynomials
\begin{equation}
\begin{split}
H_{10}(\chi,\tau)&\defeq -(8\tau^2-\chi^2)(100\tau^2+\chi^2)^4,\\
H_8(\chi,\tau)&\defeq 2\chi^8+1040\chi^6\tau^2+1741728\chi^4\tau^4-125516800\chi^2\tau^6+730880000\tau^8,\\
H_6(\chi,\tau)&\defeq \chi^6+504\chi^4\tau^2+3103488\chi^2\tau^4+67627008\tau^6,\\
H_4(\tau)&\defeq 1492992\tau^4.
\end{split}
\label{eq:H-polynomials}
\end{equation}
There is one unbounded arc in the first quadrant where this condition holds (see the dotted blue curve in Figure~\ref{fig:RegionsPlot}) and it is governed far from the origin by the highest-order homogeneous terms $H_{10}(\chi,\tau)\approx 0$; this arc is therefore asymptotic to the line $\chi=\sqrt{8}\tau$ (see the dotted gray line in Figure~\ref{fig:RegionsPlot}).  The cusp point $(\chi,\tau)=(\chi^\sharp,\tau^\sharp)$ (see \eqref{eq:corner-point}) on the boundary of $\channels$ is a non-smooth point on the locus $D(\chi,\tau)=0$ because the gradient vector vanishes there as well.  
In fact, setting $(\chi,\tau)=(\chi^\sharp+\Delta\chi,\tau^\sharp+\Delta\tau)$, one computes that
\begin{equation}
D(\chi,\tau)=\frac{3948901875}{256}\left(\frac{2}{\sqrt{3}}\Delta \tau-\Delta \chi\right)^2 + O((\Delta\chi^2+\Delta\tau^2)^\frac{3}{2}).
\end{equation}
The leading terms describe two curves tangent to the line $\Delta\chi=\tfrac{2}{\sqrt{3}}\Delta\tau$, and along this line the cubic correction terms are proportional to $\Delta\tau^3$ by a negative coefficient.  Therefore the two curves both emanate from the cusp point $(\Delta\chi,\Delta\tau)=(0,0)$ along this tangent line in the direction $\Delta\tau>0$, entering the exterior of $\channels$ from the cusp point.  
Since $D(0,\tau)=2048(1-\tau^2)\tau^4(625\tau^2+27)^2$, an arc along which this condition holds exits the quadrant $(\chi,\tau)\in \mathbb{R}_{>0}\times\mathbb{R}_{>0}$ on the $\tau$-axis at the point $(\chi,\tau)=(0,1)$.  This is the arc separating $\shelves$ from $\exterior$, and is shown as a solid blue curve in Figure~\ref{fig:RegionsPlot}.


\subsubsection{Construction of $g(\lambda;\chi,\tau)$ when $\Sigma_g\cap\Sigma_\mathrm{c}=\emptyset$}
\label{sec:g-function-loop}
Since $h'(\lambda;\chi,\tau)$ is now well-defined for all $(\chi,\tau)\in \exterior\cup\shelves$, we have $g'(\lambda;\chi,\tau)=h'(\lambda;\chi,\tau)-\vartheta'(\lambda;\chi,\tau)$ which has removable singularities at $\lambda=\pm\ii$ according to \eqref{eq:hprime-residues}
and hence is an analytic function for $\lambda\in\mathbb{C}\setminus\Sigma_g$ with, according to \eqref{eq:hprime-expansion}, the asymptotic behavior $g'(\lambda;\chi,\tau)=O(\lambda^{-2})$ as $\lambda\to\infty$.  Since $g'(\lambda;\chi,\tau)$ is integrable at $\lambda=\infty$, the contour integral
\begin{equation}
g(\lambda;\chi,\tau)\defeq\int_\infty^\lambda g'(\eta;\chi,\tau)\,\dd\eta
\label{eq:g-integral}
\end{equation}
is independent of path in the domain $\mathbb{C}\setminus\Sigma_g$ and defines the unique antiderivative analytic in the same domain that satisfies the condition $g(\infty;\chi,\tau)=0$.  It is easy to check that $g(\lambda^*;\chi,\tau)=g(\lambda;\chi,\tau)^*$ holds for each $\lambda\in\Sigma_g$ and $(\chi,\tau)\in \exterior\cup\shelves$.  Obtaining $g(\lambda;\chi,\tau)$ from \eqref{eq:g-integral} is a bit of a subtle calculation, because the integrability at $\eta=\infty$ and $\eta=\pm\ii$ relies on cancellations arising from the equations satisfied by the parameters $u,v,A,B$.  Another approach is to assume that $\Sigma_g$ is determined by solving those equations, and then to note that $g(\lambda;\chi,\tau)$ is a function analytic for $\lambda\in\mathbb{C}\setminus\Sigma_g$ with $g(\lambda;\chi,\tau)=O(\lambda^{-1})$ as $\lambda\to\infty$ and whose boundary values on $\Sigma_g$ satisfy 
\begin{equation}
\begin{split}
g_+(\lambda;\chi,\tau)+g_-(\lambda;\chi,\tau)&=h_+(\lambda;\chi,\tau)+h_-(\lambda;\chi,\tau)-2\vartheta(\lambda;\chi,\tau)\\
&=2 \kappa(\chi,\tau)-2\vartheta(\lambda;\chi,\tau),\quad \lambda\in\Sigma_g,
\end{split}
\label{eq:hpm-kappa}
\end{equation}
for some integration constant $\kappa(\chi,\tau)$ (because the sum of boundary values of $h$ is constant along $\Sigma_g$).  This formula \eqref{eq:hpm-kappa} assumes that $\vartheta(\lambda;\chi,\tau)$ is analytic on the subset $\Sigma_g$ of the jump contour for $\mathbf{S}(\lambda;\chi,\tau,\mathbf{G},M)$.  As the jump contour for $\vartheta(\lambda;\chi,\tau)$ is $\Sigma_\mathrm{c}$, we are assuming that the latter is contained in the interior of the Jordan curve $\Sigma_\circ$, which guarantees that $\Sigma_g\cap\Sigma_\mathrm{c}=\emptyset$.  Another situation, in which $\Sigma_\circ$ is deformed into a dumbbell-shaped jump contour with a Schwarz-symmetric neck that is necessarily a subset of $\Sigma_\mathrm{c}$, will be required to prove Theorem~\ref{thm:exterior}.  We will describe how the procedure needs to be modified for that case in Section~\ref{sec:g-function-dumbbell} below.

Returning to \eqref{eq:hpm-kappa}, to determine the constant $\kappa(\chi,\tau)$ and simultaneously obtain $g(\lambda;\chi,\tau)$ without using \eqref{eq:g-integral}, we represent $g(\lambda;\chi,\tau)$ in the form $g(\lambda;\chi,\tau)=R(\lambda;\chi,\tau)k(\lambda;\chi,\tau)$, from which it follows that $k(\lambda;\chi,\tau)$ is analytic for $\lambda\in\mathbb{C}\setminus\Sigma_g$, with bounded boundary values except at the branch points $\lambda=A\pm\ii B$ where it is only required that the product $R(\lambda;\chi,\tau)k(\lambda;\chi,\tau)$ is bounded.  We also require that $k(\lambda;\chi,\tau)=O(\lambda^{-2})$ as $\lambda\to\infty$, and that the boundary values taken by $k(\lambda;\chi,\tau)$ along $\Sigma_g$ are related by
\begin{equation}
k_+(\lambda;\chi,\tau)-k_-(\lambda;\chi,\tau)=\frac{2 \kappa(\chi,\tau)-2\vartheta(\lambda;\chi,\tau)}{R_+(\lambda;\chi,\tau)},\quad\lambda\in\Sigma_g,
\end{equation}
as implied by \eqref{eq:hpm-kappa}.
It follows that $k(\lambda;\chi,\tau)$ is necessarily given by the Plemelj formula:
\begin{equation}
k(\lambda;\chi,\tau)=\frac{1}{\ii\pi}\int_{\Sigma_g}\frac{\kappa(\chi,\tau)-\vartheta(\eta;\chi,\tau)}{R_+(\eta;\chi,\tau)(\eta-\lambda)}\,\dd\eta,\quad\lambda\in\mathbb{C}\setminus\Sigma_g.
\label{eq:k-formula}
\end{equation}
It is not hard to see that this formula automatically gives the condition that $R(\lambda;\chi,\tau)k(\lambda;\chi,\tau)$ is bounded at the branch points $\lambda=A\pm\ii B$.  However the condition $k(\lambda;\chi,\tau)=O(\lambda^{-2})$ as $\lambda\to\infty$ remains to be enforced, and this will determine the integration constant $\kappa(\chi,\tau)$.  Indeed, the coefficient of the leading term proportional to $\lambda^{-1}$ in the Laurent expansion of $k(\lambda;\chi,\tau)$ about $\lambda=\infty$ must vanish, i.e.,
\begin{equation}
\int_{\Sigma_g}\frac{\kappa(\chi,\tau)-\vartheta(\lambda;\chi,\tau)}{R_+(\lambda;\chi,\tau)}\,\dd\lambda = 0.
\label{eq:K-integral}
\end{equation}
Note that letting $L$ denote any clockwise-oriented loop surrounding the branch cut $\Sigma_g$ of $R(\lambda;\chi,\tau)$, a residue computation at $\lambda=\infty$ where $R(\lambda;\chi,\tau)=\lambda+O(1)$ shows that
\begin{equation}
\int_{\Sigma_g}\frac{\dd\lambda}{R_+(\lambda;\chi,\tau)} = \frac{1}{2}\oint_L\frac{\dd\lambda}{R(\lambda;\chi,\tau)} =-\ii\pi.
\label{eq:integral-R-plus}
\end{equation}
This being nonzero shows that $\kappa(\chi,\tau)$ will indeed be determined by the condition \eqref{eq:K-integral}.  Then,
\begin{equation}
\int_{\Sigma_g}\frac{\vartheta(\lambda;\chi,\tau)}{R_+(\lambda;\chi,\tau)}\,\dd\lambda = I_1(\chi,\tau)+I_2(\chi,\tau),
\end{equation}
where
\begin{equation}
I_1(\chi,\tau)\defeq\int_{\Sigma_g}\frac{\chi\lambda+\tau\lambda^2}{R_+(\lambda;\chi,\tau)}\,\dd\lambda\quad\text{and}\quad
I_2(\chi,\tau)\defeq\ii\int_{\Sigma_g}
\frac{\log(B(\lambda))}{R_+(\lambda;\chi,\tau)}\,\dd\lambda.
\label{eq:I1-I2}
\end{equation}
A similar residue calculation using two more terms in the large-$\lambda$ expansion of $R(\lambda;\chi,\tau)$, specifically that $R(\lambda;\chi,\tau)^{-1}=\lambda^{-1}+A\lambda^{-2}+(A^2-\tfrac{1}{2}B^2)\lambda^{-3}+O(\lambda^{-4})$ shows that
\begin{equation}
I_1(\chi,\tau)=-\ii\pi (\chi A +\tau(A^2-\tfrac{1}{2}B^2)),\quad A=A(\chi,\tau),\quad B^2=B(\chi,\tau)^2.
\end{equation}
Now assuming that the loop $L$ excludes the branch cut $\Sigma_\mathrm{c}$ of the logarithm and that $L'$ is a counter-clockwise oriented contour that encircles $\Sigma_\mathrm{c}$ but that excludes $\Sigma_g$, we use the fact that the integrand for $I_2$ is integrable at $\lambda=\infty$ to obtain
\begin{equation}
I_2(\chi,\tau)=\frac{1}{2}\ii\oint_L
\frac{\log(B(\lambda))}{R(\lambda;\chi,\tau)}\,\dd\lambda = \frac{1}{2}\ii\oint_{L'}
\frac{\log(B(\lambda))}{R(\lambda;\chi,\tau)}\,\dd\lambda.
\label{eq:I2-identities}
\end{equation}
Then, collapsing $L'$ to both sides of $\Sigma_\mathrm{c}$, where $R(\lambda;\chi,\tau)$ is analytic but the boundary values of the logarithm differ by $2\pi\ii$, 
\begin{equation}
I_2(\chi,\tau)=\pi\int_{\Sigma_\mathrm{c}}\frac{\dd\lambda}{R(\lambda;\chi,\tau)},
\end{equation}
where we recall that $\Sigma_\mathrm{c}$ is a Schwarz-symmetric arc oriented from $-\ii$ to $\ii$.  Therefore, $I_2(\chi,\tau)$ is
purely imaginary and is computable in terms of $A(\chi,\tau)$ and $B(\chi,\tau)^2$ via hyperbolic functions.  We have therefore obtained a formula for the integration constant $\kappa(\chi,\tau)$ in the form \eqref{eq:kappa-formula} written in Section~\ref{sec:Results-Shelves}.
According to \eqref{eq:k-formula} and $g(\lambda;\chi,\tau)=R(\lambda;\chi,\tau)k(\lambda;\chi,\tau)$ we have (evaluating the term proportional to $\kappa(\chi,\tau)$ by residues):
\begin{equation}
g(\lambda;\chi,\tau)= \kappa(\chi,\tau)-\frac{R(\lambda;\chi,\tau)}{\ii\pi}\int_{\Sigma_g}\frac{\vartheta(\eta;\chi,\tau)\,\dd\eta}{R_+(\eta;\chi,\tau)(\eta-\lambda)}.
\end{equation}

\subsubsection{Construction of $g(\lambda;\chi,\tau)$ when $\Sigma_g\subset\Sigma_\mathrm{c}$}
\label{sec:g-function-dumbbell}
If the Schwarz-symmetric jump contour $\Sigma_g$ is to be taken as a subset of $\Sigma_\mathrm{c}$,
then some modification of the construction of $g(\lambda;\chi,\tau)$ is needed.  Indeed, in this situation the phase function $\vartheta(\lambda;\chi,\tau)$ defined in \eqref{eq:vartheta} takes two distinct boundary values at each point of $\Sigma_g$, so instead of \eqref{eq:hpm-kappa} the condition that the sum of boundary values of $h(\lambda;\chi,\tau)$ is constant along $\Sigma_g$ now reads
\begin{equation}
\begin{split}
g_+(\lambda;\chi,\tau)+g_-(\lambda;\chi,\tau)&=h_+(\lambda;\chi,\tau)+h_-(\lambda;\chi,\tau)-\vartheta_+(\lambda;\chi,\tau)-\vartheta_-(\lambda;\chi,\tau)\\
&=2\gamma(\chi,\tau)-\vartheta_+(\lambda;\chi,\tau)-\vartheta_-(\lambda;\chi,\tau),\quad
\lambda\in\Sigma_g\subset\Sigma_\mathrm{c},
\end{split}
\label{eq:hpm-gamma}
\end{equation}
where the constant value of $h_++h_-$ is now denoted $2\gamma(\chi,\tau)$.  As before, we write $g(\lambda;\chi,\tau)=R(\lambda;\chi,\tau)k(\lambda;\chi,\tau)$ and find that the analogue of \eqref{eq:k-formula} reads
\begin{equation}
k(\lambda;\chi,\tau)=\frac{1}{\ii\pi}\int_{\Sigma_g}\frac{\gamma(\chi,\tau)-\tfrac{1}{2}\vartheta_+(\eta;\chi,\tau)-\tfrac{1}{2}\vartheta_-(\eta;\chi,\tau)}{R_+(\eta;\chi,\tau)(\eta-\lambda)}\,\dd\eta,\quad\lambda\in\mathbb{C}\setminus\Sigma_g,
\label{eq:k-formula-gamma}
\end{equation}
where $\gamma(\chi,\tau)$ is to be chosen to enforce the condition analogous to \eqref{eq:K-integral}: 
\begin{equation}
\int_{\Sigma_g}\frac{\gamma(\chi,\tau)-\tfrac{1}{2}\vartheta_+(\lambda;\chi,\tau)-\tfrac{1}{2}\vartheta_-(\lambda;\chi,\tau)}{R_+(\lambda;\chi,\tau)}\,\dd\lambda = 0.
\label{eq:K-integral-gamma}
\end{equation}
By residue calculations, this can be written in the form
\begin{equation}
\gamma(\chi,\tau)=\chi A(\chi,\tau)+\tau(A(\chi,\tau)^2-\tfrac{1}{2}B(\chi,\tau)^2)-\frac{1}{\ii\pi}I_2(\chi,\tau),
\end{equation}
where now $I_2(\chi,\tau)$ is given by a modification of the formula in \eqref{eq:I1-I2}:
\begin{equation}
I_2(\chi,\tau)\defeq\ii\int_{\Sigma_g}\frac{\tfrac{1}{2}\log_+(B(\lambda))+\tfrac{1}{2}\log_-(B(\lambda))}{R_+(\lambda;\chi,\tau)}\,\dd\lambda.
\end{equation}
Taking $L$ to be a clockwise-oriented loop passing through the endpoints $A\pm\ii B$ of $\Sigma_g$ and enclosing $\Sigma_g$ but with the two arcs of $\Sigma_\mathrm{c}\setminus\Sigma_g$ in its exterior, and taking $L'$ to be a pair of counterclockwise-oriented loops each enclosing one of the arcs of $\Sigma_\mathrm{c}\setminus\Sigma_g$ and passing through the corresponding endpoint of $\Sigma_g$, we again arrive at the identities \eqref{eq:I2-identities}.  Then collapsing the loops of $L'$ to both sides of the arcs of $\Sigma_\mathrm{c}\setminus\Sigma_g$ we obtain
\begin{equation}
I_2(\chi,\tau)=\pi\int_{\Sigma_\mathrm{c}\setminus\Sigma_g}\frac{\dd\lambda}{R(\lambda;\chi,\tau)},
\end{equation}
leading to the analogue of \eqref{eq:kappa-formula}:
\begin{equation}
\gamma(\chi,\tau)=\chi A+\tau(A^2-\tfrac{1}{2}B^2) +\ii\int_{\Sigma_\mathrm{c}\setminus\Sigma_g}\frac{\dd\lambda}{R(\lambda;\chi,\tau)}.
\label{eq:gamma-formula}
\end{equation}
This is equivalent to the form written in \eqref{eq:gamma-formula-intro} in Section~\ref{sec:Results-Exterior}.

\subsubsection{Structure of the zero level curve $\mathrm{Re}(\ii h(\lambda;\chi,\tau))=0$}
\label{sec:zero-level-curve}
A consequence of the choice of integration constant to ensure that $g(\lambda;\chi,\tau)\to 0$ as $\lambda\to\infty$ is that both $g$ and $h$ have even Schwarz symmetry for all $(\chi,\tau)\in \overline{\exterior\cup\shelves}$:
\begin{equation}
g(\lambda^*;\chi,\tau)^*=g(\lambda;\chi,\tau)\quad\text{and}\quad h(\lambda^*;\chi,\tau)^*=h(\lambda;\chi,\tau),\quad (\chi,\tau)\in \overline{\exterior\cup\shelves}.
\label{eq:g-h-Schwarz}
\end{equation}
It follows that $\mathrm{Re}(\ii h(\lambda;\chi,\tau))=0$ holds for all $\lambda\in\mathbb{R}$, $\lambda\not\in\Sigma_g$.  We also have the following.
\begin{lemma}
For all $(\chi,\tau)\in \overline{\exterior\cup\shelves}$, $\mathrm{Re}(\ii h(A\pm\ii B;\chi,\tau))=0$, where $A\pm\ii B=A(\chi,\tau)\pm\ii B(\chi,\tau)$ are the complex-conjugate endpoints of $\Sigma_g$.
\label{lem:h-imaginary-at-branch-points}
\end{lemma}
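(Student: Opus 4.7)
The plan is to show that $h(\lambda;\chi,\tau)$ attains the real value $\kappa(\chi,\tau)$ at the endpoint $A+\ii B$ (respectively $\gamma(\chi,\tau)$ in the dumbbell configuration), from which the assertion $\mathrm{Re}(\ii h(A\pm\ii B;\chi,\tau))=-\mathrm{Im}(h(A\pm\ii B;\chi,\tau))=0$ is immediate; the value at the conjugate endpoint then follows from the Schwarz symmetry \eqref{eq:g-h-Schwarz}.

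First, I would observe that $h'(\lambda;\chi,\tau)$ extends continuously to the branch points, even vanishing there. Indeed, the explicit formula \eqref{eq:hprime-formula} exhibits $h'$ as a rational factor (bounded near $A\pm\ii B$ since these points lie off the imaginary poles $\pm\ii$, by Remark~\ref{rem:Sigma_g}) multiplied by $R(\lambda;\chi,\tau)$, which vanishes like a square root at each endpoint. Consequently $h(\lambda;\chi,\tau)$, defined up to the integration constant built into $g(\lambda;\chi,\tau)$, is continuous at $A\pm\ii B$ and the two boundary values from the two sides of $\Sigma_g$ coincide there.

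Second, I would invoke the defining property of $g$, namely that $h_+(\lambda;\chi,\tau)+h_-(\lambda;\chi,\tau)$ is constant on $\Sigma_g$. In the configuration of Section~\ref{sec:g-function-loop}, that identity reads $h_+ + h_- = 2\kappa(\chi,\tau)$ by \eqref{eq:hpm-kappa}; passing to the endpoint $\lambda=A+\ii B$ along $\Sigma_g$ and using the continuity just established gives $h(A+\ii B;\chi,\tau)=\kappa(\chi,\tau)$. In the dumbbell configuration of Section~\ref{sec:g-function-dumbbell}, the analogous identity \eqref{eq:hpm-gamma} yields instead $h(A+\ii B;\chi,\tau)=\gamma(\chi,\tau)$. (The two values necessarily coincide since $h$ is intrinsically determined by its prescribed residues and normalization at infinity, independent of the choice of contour used to factor the jump.)

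Third, I would verify that $\kappa(\chi,\tau)$ (respectively $\gamma(\chi,\tau)$) is real. In \eqref{eq:kappa-formula} the summand $\chi A+\tau(A^2-\tfrac{1}{2}B^2)$ is manifestly real, and the remaining summand equals $\ii$ times $\int_{\Sigma_\mathrm{c}}\dd\lambda/R(\lambda;\chi,\tau)$. The Schwarz symmetries $R(\lambda^*;\chi,\tau)=R(\lambda;\chi,\tau)^*$ (valid because $\Sigma_g$ is Schwarz-symmetric and $R(\lambda;\chi,\tau)\sim\lambda$ at infinity) together with the fact that Schwarz reflection reverses the orientation of $\Sigma_\mathrm{c}$ show that this integral is purely imaginary, so its $\ii$-multiple is real. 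The identical argument with $\Sigma_\mathrm{c}\setminus\Sigma_g$ in place of $\Sigma_\mathrm{c}$ shows $\gamma(\chi,\tau)\in\mathbb{R}$ in the dumbbell case via \eqref{eq:gamma-formula}. Combining the three steps gives $h(A+\ii B;\chi,\tau)\in\mathbb{R}$, and then $h(A-\ii B;\chi,\tau)=h(A+\ii B;\chi,\tau)^*\in\mathbb{R}$ as well by \eqref{eq:g-h-Schwarz}, completing the proof. There is no real obstacle here: the only subtle point is the continuity of $h$ at the branch points, which we have already dispatched by noting that $h'$ is integrable (in fact vanishes) at $A\pm\ii B$.
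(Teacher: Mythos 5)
Your argument is correct in substance, but it takes a genuinely different route from the paper's. You evaluate $h$ at the branch point directly: since $h'$ in \eqref{eq:hprime-formula} vanishes like a square root at $\lambda_0=A+\ii B$, the relation $h_++h_-=2\kappa(\chi,\tau)$ on $\Sigma_g$ from \eqref{eq:hpm-kappa} (or \eqref{eq:hpm-gamma} in the dumbbell configuration) pins down the endpoint value of $h$ in terms of the integration constant, whose reality you then read off from the explicit formula \eqref{eq:kappa-formula} (resp.\ \eqref{eq:gamma-formula}) via the Schwarz symmetry of $R$ and of $\Sigma_\mathrm{c}$. The paper never evaluates the constant: it writes $\mathrm{Re}(\ii h(A+\ii B;\chi,\tau))$ as an integral of $h'$ from a real base point, uses the Schwarz symmetry of $h'$ together with the sign change of $h'$ across $\Sigma_g$ to fold the conjugated integral into a closed loop surrounding $\Sigma_g$, and then kills the loop integral by deformation to infinity — the residues $\pm\ii$ at $\lambda=\pm\ii$ from \eqref{eq:hprime-residues} cancel and the expansion \eqref{eq:hprime-expansion} has no $\lambda^{-1}$ term. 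The paper's version works entirely with $h'$, which is intrinsic, so it is indifferent to whether $\Sigma_g$ meets $\Sigma_\mathrm{c}$ and avoids any reference to $\kappa$ or $\gamma$; your version buys the explicit identification $h(\lambda_0;\chi,\tau)=\kappa(\chi,\tau)$ (which the paper indeed uses later in the analysis on $\shelves$), at the cost of tracking configuration-dependent constants.

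Two inaccuracies, neither fatal. First, in the dumbbell configuration of Section~\ref{sec:g-function-dumbbell} the point $\lambda_0$ is an interior point of $\Sigma_\mathrm{c}$; since $\vartheta_+-\vartheta_-=-2\pi$ across $\Sigma_\mathrm{c}$ while $g$ is continuous at $\lambda_0$, the two boundary values of $h$ at the branch point do \emph{not} coincide — they are $\gamma(\chi,\tau)\mp\pi$. This is harmless because the discrepancy is real, so $\mathrm{Re}(\ii h)$ still vanishes there, but your assertion that "the two boundary values from the two sides of $\Sigma_g$ coincide" (and hence that $h(A+\ii B;\chi,\tau)=\gamma(\chi,\tau)$ exactly) needs this correction. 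Second, your parenthetical claim that the two constants necessarily coincide is false: the paper's identity \eqref{eq:kappa-gamma} gives $\kappa(\chi,\tau)=\gamma(\chi,\tau)-\pi$, reflecting precisely the different branch structure of $\vartheta$ relative to $\Sigma_g$ in the two configurations. Both constants are real, which is all your proof actually requires.
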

\begin{proof}
Let $\lambda_\mathbb{R}\in\mathbb{R}$ with $\lambda_\mathbb{R}\not\in\Sigma_g$.
Since $h(\lambda_\mathbb{R};\chi,\tau)$ is purely real,
\begin{equation}
\begin{split}
\mathrm{Re}(\ii h(A+\ii B;\chi,\tau))&=\mathrm{Re}\left(\ii \int_{\lambda_\mathbb{R}}^{A+\ii B}h'(\lambda;\chi,\tau)\,\dd\lambda\right)\\ &= \frac{1}{2}\ii \int_{\lambda_\mathbb{R}}^{A+\ii B}h'(\lambda;\chi,\tau)\,\dd\lambda -\frac{1}{2}\ii\left[\int_{\lambda_\mathbb{R}}^{A+\ii B}h'(\lambda;\chi,\tau)\,\dd\lambda\right]^*,
\end{split}
\end{equation}
where due to \eqref{eq:hprime-residues} the path of integration $L:\lambda_\mathbb{R}\to A+\ii B$ is arbitrary in the upper half plane, except that it is chosen so that the pole at $\lambda=\ii$ does not lie between $L$ and $\Sigma_g$.  Using the even Schwarz symmetry of $h'(\lambda;\chi,\tau)$ and a contour integral reparametrization,
\begin{equation}
\begin{split}
\left[\int_{\lambda_\mathbb{R}}^{A+\ii B}h'(\lambda;\chi,\tau)\,\dd\lambda\right]^*&=\int_{\lambda_\mathbb{R}}^{A+\ii B}h'(\lambda;\chi,\tau)^*\,\dd\lambda^*\\
&=\int_{\lambda_\mathbb{R}}^{A+\ii B}h'(\lambda^*;\chi,\tau)\,\dd\lambda^*\\
&=-\int_{A-\ii B}^{\lambda_\mathbb{R}}h'(\lambda;\chi,\tau)\,\dd\lambda,
\end{split}
\end{equation}
where in the final integral the path of integration is $L^*$ but with opposite orientation.  Combining these results, and taking into account that $h'(\lambda;\chi,\tau)$ changes sign across $\Sigma_g$, we have
\begin{equation}
\mathrm{Re}(\ii h(A+\ii B;\chi,\tau))=\frac{1}{4}\ii \oint_O h'(\lambda;\chi,\tau)\,\dd\lambda,
\end{equation}
where $O$ is a simple closed contour enclosing $\Sigma_g$ but excluding $\lambda=\pm\ii$, the orientation of which depends on whether $\lambda_\mathbb{R}$ lies to the left or right of the point where $\Sigma_g$ intersects the real axis.  Using \eqref{eq:hprime-residues}, without changing the value of the integral we may replace $O$ by another contour surrounding $\Sigma_g$ with the same orientation but now also enclosing $\lambda=\pm\ii$.  Since there are no longer any singularities of $h'(\lambda;\chi,\tau)$ outside of $O$, we may evaluate the integral over $O$ by residues at $\lambda=\infty$.  Using \eqref{eq:hprime-expansion} one sees that the residue of $h'(\lambda;\chi,\tau)$ at $\lambda=\infty$ vanishes, so we conclude that $\mathrm{Re}(\ii h(A+\ii B;\chi,\tau))=0$.  Using \eqref{eq:g-h-Schwarz} then gives also $\mathrm{Re}(\ii h(A-\ii B;\chi,\tau))=0$.
\end{proof}
This result implies that the level curve $\mathrm{Re}(\ii h(\lambda;\chi,\tau))=0$ does not depend substantially on the choice of branch cut $\Sigma_g$.  Indeed, the differential $\ii h'(\lambda;\chi,\tau)\,\dd\lambda$ can be extended from $\lambda\in\mathbb{C}\setminus\Sigma_g$ to the hyperelliptic Riemann surface $\mathcal{R}$ of the equation $R^2=(\lambda-A)^2+B^2$ just by adding a second copy of $\mathbb{C}\setminus\Sigma_g$ on which $R(\lambda;\chi,\tau)$ is replaced with $-R(\lambda;\chi,\tau)$.  Since $\mathcal{R}$ has genus zero and hence has trivial homology, and since the residues of $h'(\lambda;\chi,\tau)$ (see \eqref{eq:hprime-residues}--\eqref{eq:hprime-expansion}) are imaginary, the real part of an antiderivative of $\ii h'(\lambda;\chi,\tau)\,\dd\lambda$ is well defined up to a constant as a harmonic function on $\mathcal{R}$ with the four points corresponding to $\lambda=\pm\ii$ omitted.  By 
Lemma~\ref{lem:h-imaginary-at-branch-points}, if the constant of integration is determined by fixing the base point of integration to be one of the two branch points, the real part vanishes at both branch points and on the principal sheet of $\mathcal{R}$ this function coincides with $\mathrm{Re}(\ii h(\lambda;\chi,\tau))$ while on the auxiliary sheet it coincides with $-\mathrm{Re}(\ii h(\lambda;\chi,\tau))$.  It follows that the projection from each sheet of $\mathcal{R}$ to $\mathbb{C}$ of the zero level is exactly the same.  Since the choice of branch cut $\Sigma_g$ for $R(\lambda;\chi,\tau)$ only affects the value of $\mathrm{Re}(\ii h(\lambda;\chi,\tau))$ up to a sign, the zero level curve is essentially independent of the location of $\Sigma_g$ (technically, $\mathrm{Re}(\ii h(\lambda;\chi,\tau))$ is undefined on $\Sigma_g$, but the zero level curve can be extended unambiguously to $\Sigma_g$). 

As noted above, the zero level set $\mathrm{Re}(\ii h(\lambda;\chi,\tau))=0$ always contains the real axis as a proper subset, as well as the branch points $\lambda=A\pm\ii B$.  Since $\ii h(\lambda;\chi,\tau)=\ii \vartheta(\lambda;\chi,\tau)+\ii g(\lambda;\chi,\tau)=\ii \vartheta(\lambda;\chi,\tau)+O(\lambda^{-1})=\ii \chi\lambda+\ii \tau\lambda^2+O(\lambda^{-1})$ as $\lambda\to\infty$, for $\tau\neq 0$ in $\overline{\exterior\cup\shelves}$ there is exactly one Schwarz-symmetric pair of arcs of the zero level set that are asymptotically vertical, one in each half-plane.  All other arcs of the level set in $\mathbb{C}\setminus\mathbb{R}$ are bounded.  These arcs are necessarily ``horizontal'' trajectories of the rational quadratic differential $h'(\lambda;\chi,\tau)^2\,\dd\lambda^2$, i.e., curves along which $h'(\lambda;\chi,\tau)^2\,\dd\lambda^2>0$.  By Lemma~\ref{lem:h-imaginary-at-branch-points}, some of the arcs of the zero level set are so-called critical trajectories, i.e., those emanating from zeros of $h'(\lambda;\chi,\tau)^2$.  By Jenkins' three-pole theorem \cite[Theorem 3.6]{Jenkins58} and the basic structure theorem \cite[Theorem 3.5]{Jenkins58}, the union of critical trajectories of $h'(\lambda;\chi,\tau)^2\,\dd\lambda^2$ has empty interior and divides the complex $\lambda$-plane into a finite number of domains.  Two of these domains, one in each half-plane, are so-called \emph{circle domains} each containing one of the poles $\lambda=\pm\ii$ and each having at least one of the zeros of $h'(\lambda;\chi,\tau)^2$ on its boundary.  Furthermore, from each of the simple roots $\lambda=A\pm\ii B$ of $h'(\lambda;\chi,\tau)^2$ emanate locally exactly three critical trajectories, and from each of the double roots of $h'(\lambda;\chi,\tau)^2$ (i.e., the roots of $2\tau\lambda^2+u(\chi,\tau)\lambda+v(\chi,\tau)$) emanate locally exactly four critical trajectories.

Suppose first that $(\chi,\tau)\in \exterior_\chi\cup \shelves$.  Then the double roots of $h'(\lambda;\chi,\tau)$ (two for $\tau\neq 0$ and one for $\tau=0$) are real, and therefore two of the four trajectories emanating from each coincide with intervals of $\mathbb{R}$ (that are contained in the level set $\mathrm{Re}(\ii h(\lambda;\chi,\tau))=0$, and the closure of the union of which is exactly $\mathbb{R}$).  In this case, by Lemma~\ref{lem:h-imaginary-at-branch-points} all critical trajectories are included in the level set $\mathrm{Re}(\ii h(\lambda;\chi,\tau))=0$.  The level curves entering the upper and lower half-planes vertically from $\lambda=\infty$ for $\tau\neq 0$ can only terminate at one of the roots of $h'(\lambda;\chi,\tau)^2$.  These trajectories either terminate at one of the real double roots, or at the conjugate pair of simple roots $\lambda=A\pm\ii B$.  
\begin{itemize}
\item
If they terminate at one of the two real double roots, then the non-real trajectories emanating from the other real double root can only terminate at the simple roots $\lambda=A\pm\ii B$.  It follows that the two additional trajectories emanating from each of these simple roots must coincide and form a closed curve in each half-plane.  By Teichm\"uller's lemma \cite[Theorem 14.1]{Strebel84}, this curve must be the boundary of the circle domain containing the pole $\lambda=\pm\ii$. If $\tau=0$ and hence there are no unbounded arcs of the level set in the open upper and lower half-planes, then by the same arguments the non-real trajectories emanating from the unique real double root terminate at the simple roots $\lambda=A\pm\ii B$, and the remaining two trajectories from each of these coincide and enclose the poles at $\lambda=\pm\ii$.  The zero level $\mathrm{Re}(\ii h(\lambda;\chi,\tau))=0$ consists of the real line, a Schwarz-symmetric pair of arcs connecting a real double root with the conjugate pair of simple roots $\lambda=A\pm\ii B$, a Schwarz-symmetric pair of loops joining each simple root $\lambda=A\pm\ii B$ to itself and enclosing the poles at $\lambda=\pm\ii$, and (if $\tau\neq 0$) a Schwarz-symmetric pair of unbounded arcs emanating from the second real double root and tending vertically to $\lambda=\infty$.  This topological configuration of the zero level set holds on the domain $\exterior_\chi$ (as one can see from the limiting case of $\tau=0$, where the zero level set acquires additional Schwarz reflection symmetry in the imaginary axis).
\item
If they terminate at the conjugate pair of simple roots $\lambda=A\pm \ii B$, then the remaining two trajectories emanating from each simple root terminate at the two real double roots, and the boundary of the circle domain in each half-plane consists of three distinct trajectories, one of which is the interval of the real axis between the two real double roots and is common to the boundaries of both circle domains.  (The other apparent possibility, that the two additional trajectories emanating from $\lambda=A\pm\ii B$ coincide and that the two trajectories emanating into each half-plane from the two real double roots also coincide, can be ruled out by Teichm\"uller's lemma since two closed curves formed by critical trajectories would appear in each half-plane, only one of which can contain a pole.)  The zero level set $\mathrm{Re}(\ii h(\lambda;\chi,\tau))=0$ consists of the real line, a Schwarz-symmetric pair of arcs from each of the two real double roots to the conjugate pair of simple roots, and a Schwarz-symmetric pair of unbounded arcs emanating from the conjugate pair of simple roots and tending vertically to $\lambda=\infty$.  This topological configuration of the zero level set holds on the domain $\shelves$ (as one can see from the limiting case of $\chi=0$, where again the zero level set acquires additional Schwarz reflection symmetry in the imaginary axis).
\end{itemize}

Next suppose that $(\chi,\tau)\in \exterior_\tau$.  Then the double roots of $h'(\lambda;\chi,\tau)^2$ form a conjugate pair that we denote by $\lambda=C\pm\ii D$.  By Lemma~\ref{lem:h-imaginary-at-branch-points}, the simple roots $\lambda=A\pm\ii B$ are on the zero level of $\mathrm{Re}(\ii h(\lambda;\chi,\tau))$ and therefore at most one trajectory from each can be unbounded.  If none of the three trajectories emanating from $\lambda=A\pm\ii B$ is unbounded, then at least one of them must terminate at $\lambda=C\pm\ii D$ implying that $\mathrm{Re}(\ii h(C\pm\ii D;\chi,\tau))=0$ and hence the unbounded arc of the level curve in each half-plane terminates at this point as well.  If it is exactly one trajectory from $\lambda=A\pm\ii B$ that terminates at $\lambda=C\pm\ii D$, then the other two coincide forming a loop, and the remaining two bounded trajectories emanating from the latter must coincide forming a second loop; however only one of these loops can contain the pole at $\lambda=\pm\ii$ so the existence of both is ruled out by Teichm\"uller's lemma.  If it is exactly two trajectories from $\lambda=A\pm\ii B$ that terminate at $\lambda=C\pm\ii D$, then the third trajectory would have to be unbounded contradicting the assumption that all trajectories from $A\pm\ii B$ are bounded.  If all three trajectories from $\lambda=A\pm\ii B$ terminate at $\lambda=C\pm\ii D$, then we again form two domains bounded by trajectories only one of which can contain a pole leading to a contradiction with Teichm\"uller's lemma.  We conclude that exactly one of the trajectories emanating from each simple root $\lambda=A\pm\ii B$ is unbounded.  
It then follows that the other two trajectories emanating from $\lambda=A\pm\ii B$ must coincide.  Indeed, otherwise they must both terminate at the double root $\lambda=C\pm\ii D$ in the same half-plane from which we learn that $\mathrm{Re}(\ii h(C\pm\ii D;\chi,\tau))=0$ which implies that neither of the remaining two trajectories from $\lambda=C\pm\ii D$ can be unbounded or terminate at $A\pm\ii B$, so they must coincide.  It is then apparent that each half-plane contains a domain bounded by the two curves connecting $A\pm\ii B$ with $C\pm\ii D$ and a domain bounded by the trajectory joining $C\pm\ii D$ to itself; however the pole $\lambda=\pm\ii$ can only lie in one of these two domains, so the existence of the other leads to a contradiction with Teichm\"uller's lemma. The zero level set $\mathrm{Re}(\ii h(\lambda;\chi,\tau))=0$ is then the disjoint union of three components:  the real line and a Schwarz-symmetric pair of components each consisting of a loop trajectory joining $\lambda=A\pm\ii B$ to itself and surrounding $\lambda=\pm\ii$ and an unbounded trajectory emanating from $\lambda=A\pm\ii B$.  In this case, the double roots $\lambda=C\pm\ii D$ do not lie on the zero level of $\mathrm{Re}(\ii h(\lambda;\chi,\tau))$, and the level set is not connected. 

\section{Far-Field Asymptotic Behavior of Rogue Waves in the Domain $\exterior$}
\label{sec:Schi-Stau}
In this section, we prove Theorem~\ref{thm:exterior}.  Since that result is specialized to the case of fundamental rogue waves of order $k\in\mathbb{Z}_{>0}$ for which $\mathbf{G}=\mathbf{Q}^{-s}$ with $s=(-1)^k$ and $M=\tfrac{1}{2}k+\tfrac{1}{4}$ (assumptions that are essential to the proof), in this section we will write $\mathbf{S}^{(k)}(\lambda;\chi,\tau)=\mathbf{S}(\lambda;\chi,\tau,\mathbf{Q}^{-s},M)$.
\subsection{Deformation to a dumbbell-shaped contour}
\label{sec:dumbbell}
When $(\chi,\tau)\in \exterior$, we will find it useful to begin by replacing the Jordan jump contour $\Sigma_\circ$ for $\mathbf{S}^{(k)}(\lambda;\chi,\tau)$ with a dumbbell-shaped contour consisting of a closed loop $\Gamma^+$ in the upper half-plane surrounding the point $\lambda=\ii$ in the clockwise sense, its reflection $\Gamma^-$ in the real axis (also oriented in the clockwise sense), and a ``neck'' $N$ consisting of an upward-oriented arc against the left side of the branch cut $\Sigma_\mathrm{c}$ for $\vartheta(\lambda;\chi,\tau)$ and a downward-oriented arc against the right side of the same cut.  Combining these two jump conditions with the jump discontinuity of the function $\vartheta(\lambda;\chi,\tau)$ across the central arc $\Sigma_\mathrm{c}$ of the neck, we can write a single jump condition for $\mathbf{S}^{(k)}(\lambda;\chi,\tau)$ across $N$, which we take to be oriented in the upward direction.  For this calculation, we assume that initially the Jordan curve $\Sigma_\circ$ contains $\Gamma^+\cup N\cup \Gamma^-$ in its interior and we introduce a substitution by setting
\begin{equation}
\tilde{\mathbf{S}}^{(k)}(\lambda;\chi,\tau)\defeq
\mathbf{S}^{(k)}(\lambda;\chi,\tau)\ee^{-\ii M\vartheta(\lambda;\chi,\tau)\sigma_3}\mathbf{Q}^{-s}\ee^{\ii M\vartheta(\lambda;\chi,\tau)\sigma_3},
\label{eq:S-Stilde-ALT}
\end{equation}
for $\lambda$ between $\Sigma_\circ$ and $\Gamma^+\cup N\cup \Gamma^-$, and we set $\tilde{\mathbf{S}}^{(k)}(\lambda;\chi,\tau)\defeq\mathbf{S}^{(k)}(\lambda;\chi,\tau)$ elsewhere, i.e., in the exterior of $\Sigma_\circ$ and in the interior of $\Gamma^+$ and of $\Gamma^-$.  
Dropping the tilde, the jump contour for $\mathbf{S}^{(k)}(\lambda;\chi,\tau)$ becomes $\Gamma^+\cup N\cup \Gamma^-$.  The jump condition for $\mathbf{S}^{(k)}(\lambda;\chi,\tau)$ across $\Gamma^+$ and $\Gamma^-$ reads exactly the same as the original jump condition \eqref{eq:S-jump} across $\Sigma_\circ$.  
To compute the jump of the redefined $\mathbf{S}^{(k)}(\lambda;\chi,\tau)$ across $N$,
we start from its definition and using the fact that $\vartheta(\lambda;\chi,\tau)$ takes distinct boundary values on $N$ from either side we get
\begin{multline}
\mathbf{S}^{(k)}_+(\lambda;\chi,\tau)=\mathbf{S}_-^{(k)}(\lambda;\chi,\tau)\ee^{-\ii M\vartheta_-(\lambda;\chi,\tau)\sigma_3}\mathbf{Q}^s\ee^{\ii M\vartheta_-(\lambda;\chi,\tau)\sigma_3}
\ee^{-\ii M\vartheta_+(\lambda;\chi,\tau)\sigma_3}\mathbf{Q}^{-s}\ee^{\ii M\vartheta_+(\lambda;\chi,\tau)\sigma_3}\\
{}=\mathbf{S}_-^{(k)}(\lambda;\chi,\tau)\frac{1}{2}
\begin{bmatrix} 1+\ee^{2\ii M(\vartheta_+(\lambda;\chi,\tau)-\vartheta_-(\lambda;\chi,\tau))} & 
s\left(\ee^{-2\ii M\vartheta_+(\lambda;\chi,\tau)}-\ee^{-2\ii M\vartheta_-(\lambda;\chi,\tau)}\right)\\
-s\left(\ee^{2\ii M\vartheta_+(\lambda;\chi,\tau)}-\ee^{2\ii M\vartheta_-(\lambda;\chi,\tau)}\right) & 
1+\ee^{-2\ii M(\vartheta_+(\lambda;\chi,\tau)-\vartheta_-(\lambda;\chi,\tau))}\end{bmatrix},\\
\quad\quad\quad\quad\quad\quad\lambda\in N.
\end{multline}
But by \eqref{eq:vartheta} we have $\vartheta_+(\lambda;\chi,\tau)-\vartheta_-(\lambda;\chi,\tau)=-2\pi$.  Since $M=\tfrac{1}{2}k+\tfrac{1}{4}$ for $k\in\mathbb{Z}_{>0}$, this easily reduces to
\begin{equation}
\begin{split}
\mathbf{S}^{(k)}_+(\lambda;\chi,\tau)&=\mathbf{S}_-^{(k)}(\lambda;\chi,\tau)\begin{bmatrix}0 & s\ee^{-2\ii M\vartheta_+(\lambda;\chi,\tau)}\\-s\ee^{2\ii M\vartheta_+(\lambda;\chi,\tau)} & 0\end{bmatrix}\\
&=\mathbf{S}_-^{(k)}(\lambda;\chi,\tau)\begin{bmatrix}0 & -s\ee^{-2\ii M\vartheta_-(\lambda;\chi,\tau)}\\
s\ee^{2\ii M\vartheta_-(\lambda;\chi,\tau)} & 0\end{bmatrix},\quad\lambda\in N.
\end{split}
\label{eq:S-N-jump-ALT}
\end{equation}

\begin{remark}
The fact that the jump matrix on $N$ is off-diagonal is a consequence of the quantization of $M>0$ via $M=\tfrac{1}{2}k+\tfrac{1}{4}$, $k\in\mathbb{Z}_{>0}$, and the choice of ``core'' matrix $\mathbf{G}=\mathbf{Q}^{-s}$ for $s=(-1)^k$.  
More generally, if we express $M\ge 0$ in the modular form $M=\tfrac{1}{2}k+r$ with $k\in\mathbb{Z}_{\ge 0}$ and $0\le r<\tfrac{1}{2}$, then for $\mathbf{G}=\mathbf{Q}^{-s}$ with $s=\pm 1$ arbitrary we obtain
\begin{equation}
\mathbf{S}_+(\lambda;\chi,\tau,\mathbf{Q}^{-s},M)=
\mathbf{S}_-(\lambda;\chi,\tau,\mathbf{Q}^{-s},M)
\ee^{-\ii M\vartheta_-(\lambda;\chi,\tau)\sigma_3}\mathbf{Z}
\ee^{\ii M\vartheta_+(\lambda;\chi,\tau)\sigma_3},\quad\lambda\in N
\end{equation}
in place of \eqref{eq:S-N-jump-ALT}, where $\mathbf{Z}$ is the constant matrix
\begin{equation}
\mathbf{Z}\defeq\begin{bmatrix}
(-1)^k\cos(2\pi r) & s (-1)^k\ii\sin(2\pi r)\\ s(-1)^k\ii\sin(2\pi r) & (-1)^k\cos(2\pi r)\end{bmatrix}.
\end{equation}
It is then clear that the \emph{only} values of $M\ge 0$ for which $\mathbf{Z}$ is off-diagonal are those corresponding to rogue waves.  
This is the reason why fundamental rogue waves behave differently for $(\chi,\tau)\in \exterior$ than other solutions obtained from Riemann-Hilbert Problem~\ref{rhp:rogue-wave-reformulation} for different parameters as described in Section~\ref{sec:M-arbitrary}, such as the high-order multiple-pole solitons for which $M\in\tfrac{1}{2}\mathbb{Z}_{\ge 0}$.  The latter solutions are special once again, in that they are precisely the solutions for which $\mathbf{Z}$ is diagonal (in fact $\mathbf{Z}=(-1)^k\mathbb{I}$).
In the general case, all four entries of $\mathbf{Z}$ are nonzero and hence available for use as pivots in matrix factorizations, and this distinguishes the asymptotic behavior on $\exterior$ from both special cases as described in Section~\ref{sec:general}.
\label{rem:M-quantum}
\end{remark}

Next, we explain how the contours $\Gamma^+$ and $N$ should be chosen (recall that $\Gamma^-$ is the Schwarz reflection of $\Gamma^+$ with clockwise orientation).  Recall from Section~\ref{sec:zero-level-curve} that as $(\chi,\tau)$ ranges over $\exterior$, there exists a simple closed curve surrounding the point $\lambda=\ii$ and passing through the point $\lambda=A+\ii B$ such that all roots of $2\tau\lambda^2+u\lambda+v=0$ are in the exterior of this curve, and importantly, such that $h'(\lambda;\chi,\tau)\,\dd\lambda$ is purely real along the curve.  In other words, the circle domain for the rational quadratic differential $h'(\lambda;\chi,\tau)^2\,\dd\lambda^2$ containing the pole $\lambda=\ii$ (reality of the residue due to the condition \eqref{eq:hprime-residues} guarantees that this point is indeed contained in a circle domain) has only the critical point $\lambda=A+\ii B$ on its boundary.  We take the boundary curve, which is a critical trajectory for $h'(\lambda;\chi,\tau)^2\,\dd\lambda^2$, to be the loop $\Gamma^+$.  Then we choose $N$ to be any Schwarz-symmetric arc from $\lambda=A-\ii B$ to $\lambda=A+\ii B$ that lies in the exterior of both loops $\Gamma^+\cup \Gamma^-$.  Later we will fix its direction near the endpoints of $N$.  See the left-hand panels of Figures~\ref{fig:Schi1}--\ref{fig:Stau1}.

\begin{figure}[h]
\begin{center}
\includegraphics{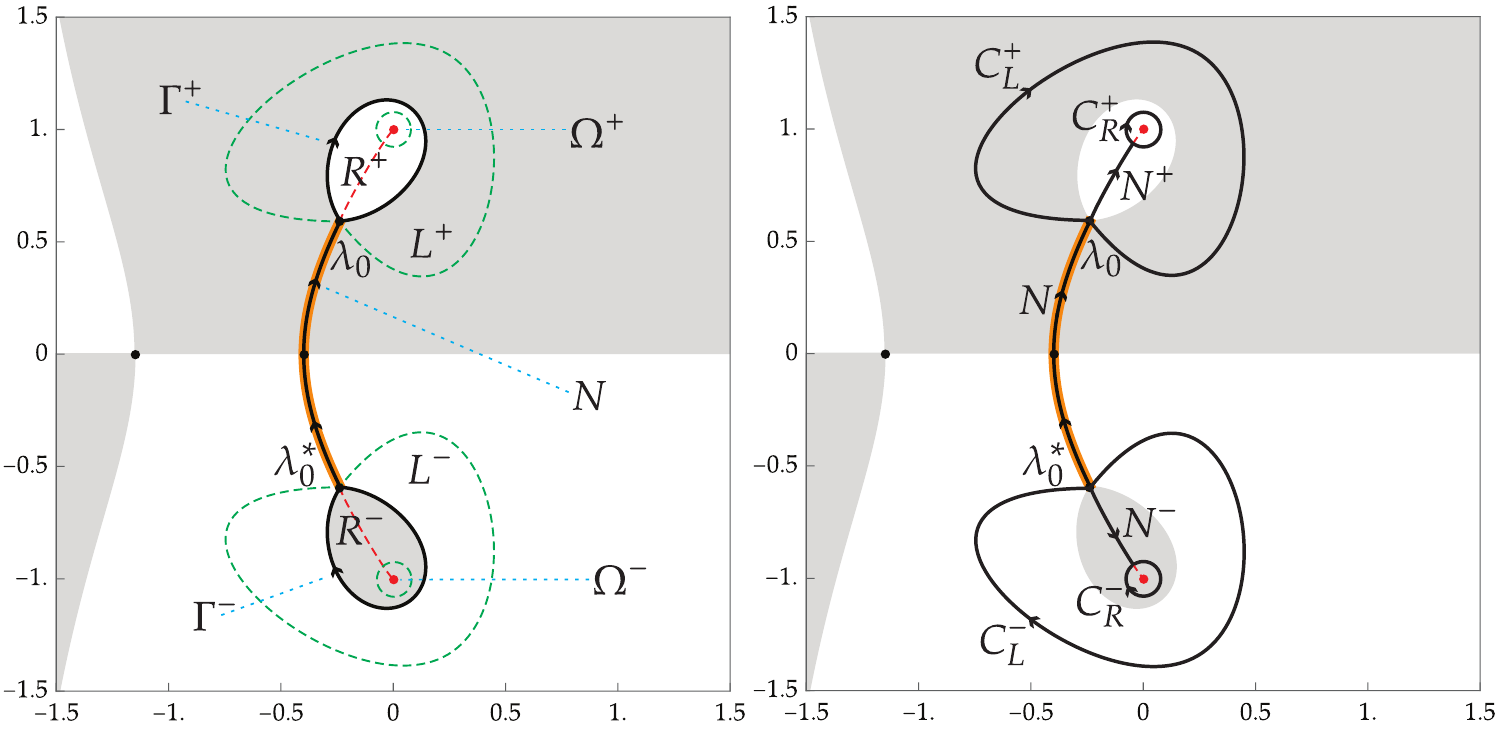}
\end{center}
\caption{Left:  for $(\chi,\tau)=(2.5,0.7)\in \exterior_\chi$, the regions in the $\lambda$-plane where $\mathrm{Re}(\ii h(\lambda;\chi,\tau))<0$ (shaded) and $\mathrm{Re}(\ii h(\lambda;\chi,\tau))>0$ (unshaded), and the modified jump contour $\Gamma^+\cup N\cup\Gamma^-$.  The jump contour $\Sigma_\mathrm{c}$ for $\vartheta(\lambda;\chi,\tau)$ consists of the union of $N$ and the dashed red arcs terminating at $\lambda=\pm\ii$ (red dots).  Critical points of $h(\lambda;\chi,\tau)$ are shown with black dots.  Also shown are the ``lens'' regions $L^\pm$ and $R^\pm$ lying to the left and right respectively of $\Gamma^\pm$.  Right:  the jump contour for $\mathbf{W}^{(k)}(\lambda;\chi,\tau)$.  Note that for $(\chi,\tau)\in \exterior_\chi$ we may choose the branch cut $N=\Sigma_g$ (highlighted in orange) to coincide with a level curve of $\mathrm{Re}(\ii h(\lambda;\chi,\tau))$ and with this choice $\mathrm{Re}(\ii h(\lambda;\chi,\tau))$ is a continuous function with the exception of the points $\lambda=\pm \ii$.}
\label{fig:Schi1}
\end{figure}

\begin{figure}[h]
\begin{center}
\includegraphics{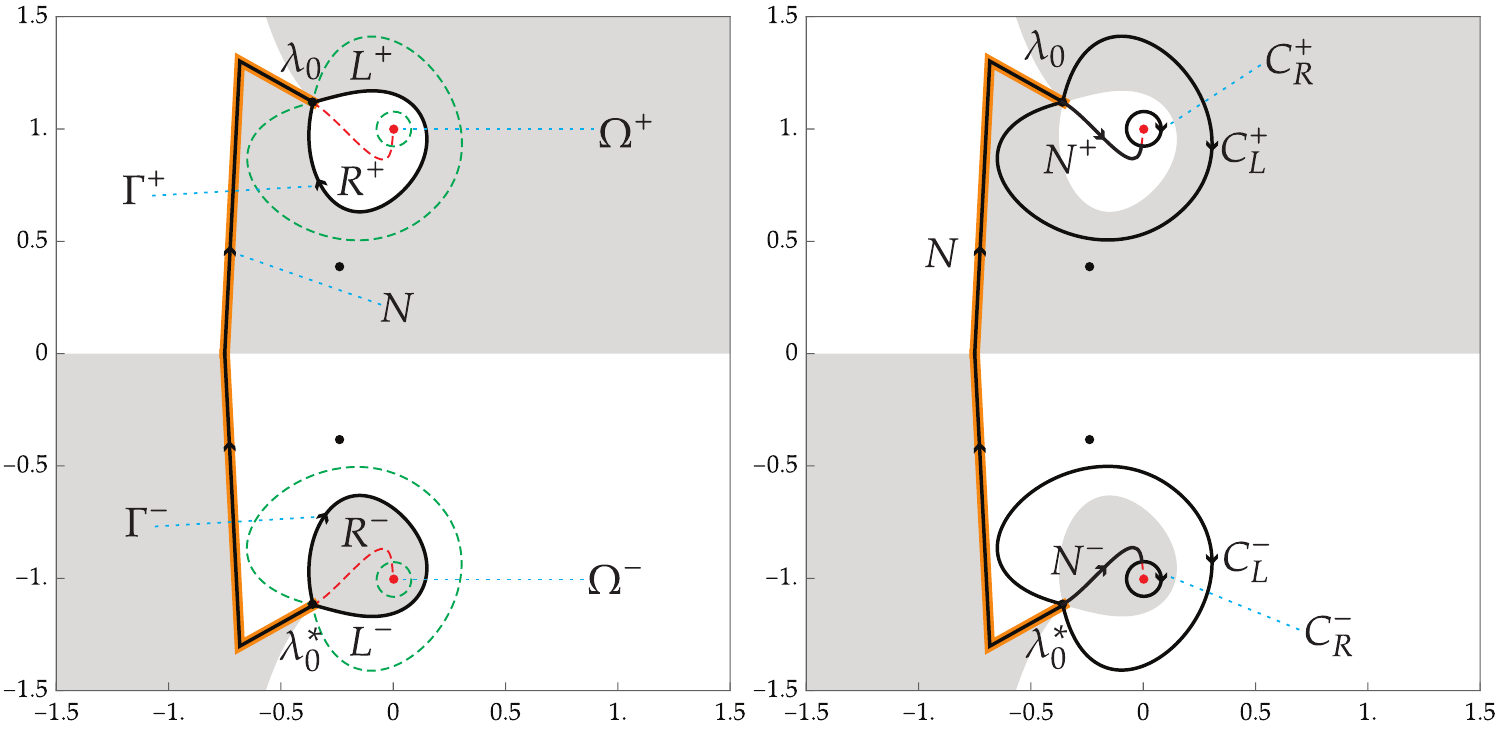}
\end{center}
\caption{As in Figure~\ref{fig:Schi1} but now for $(\chi,\tau)=(2.0,1.2)\in \exterior_\tau$.  In this case a pair of real critical points of $h(\lambda;\chi,\tau)$ on $\exterior_\chi$ have merged and split into a conjugate pair that is necessarily on a nonzero level of $\mathrm{Re}(\ii h(\lambda;\chi,\tau))$.  The consequence is that it is no longer possible on $\exterior_\tau$ to choose the branch cut $N=\Sigma_g$ to be a level curve of $\mathrm{Re}(\ii h(\lambda;\chi,\tau))$ which therefore experiences a jump discontinuity across the cut.  We illustrate this fact in this figure by taking $N$ as a somewhat arbitrary union of straight line segments instead of any natural trajectory of $h'(\lambda;\chi,\tau)^2\,\dd\lambda^2$.  Crucially, this issue plays no role in the subsequent analysis, because it is never necessary to factor the jump matrix carried by $N$.}
\label{fig:Stau1}
\end{figure}

\subsection{Introduction of $g$ and steepest descent deformation of the Riemann-Hilbert problem}
Now with the contours $\Gamma^\pm$ and $N$ set up in this way, we introduce the $g$-function via the transformation \eqref{eq:T-to-S} 
taking $\mathbf{S}^{(k)}(\lambda;\chi,\tau)$ to $\mathbf{T}^{(k)}(\lambda;\chi,\tau)$.  We assume that the Schwarz-symmetric arc $\Sigma_g$ where $g(\lambda;\chi,\tau)$ fails to be analytic coincides with $N$, which in turn is a sub-arc of $\Sigma_\mathrm{c}$.  Therefore, we need the version of the construction of $g(\lambda;\chi,\tau)$ described in Section~\ref{sec:g-function-dumbbell}.  Recalling from \eqref{eq:hpm-gamma} that $h_+(\lambda;\chi,\tau)+h_-(\lambda;\chi,\tau)=2\gamma(\chi,\tau)$ for $\lambda\in\Sigma_g=N$, where 
$\gamma(\chi,\tau)$
is a real quantity 
given by \eqref{eq:gamma-formula}
we obtain from \eqref{eq:S-N-jump-ALT} the jump condition for $\mathbf{T}^{(k)}(\lambda;\chi,\tau)$ along $N$ in the form
\begin{equation}
\mathbf{T}^{(k)}_+(\lambda;\chi,\tau)=\mathbf{T}^{(k)}_-(\lambda;\chi,\tau)\begin{bmatrix}
0 & \ii\ee^{-2\ii M\gamma(\chi,\tau)}\\\ii\ee^{2\ii M\gamma(\chi,\tau)} & 0\end{bmatrix},\quad\lambda\in N.
\label{eq:T-jump-N-Schi-Stau-ALT}
\end{equation}
We next take advantage of the fact that $\mathrm{Re}(\ii h(\lambda;\chi,\tau))=0$ on $\Gamma^\pm$ to transform $\mathbf{T}^{(k)}(\lambda;\chi,\tau)$ explicitly into $\mathbf{W}^{(k)}(\lambda;\chi,\tau)$ by a substitution based on the same elementary factorization \eqref{eq:Q-factorizations} used in Section~\ref{sec:channels}.  Let $\Omega^\pm$ denote small disks centered at $\lambda=\pm\ii$ and enclosed by $\Gamma^\pm$ respectively, let $R^\pm$ denote the interior of $\Gamma^\pm$ with the closure of $\Omega^\pm$ excluded, and let $L^\pm$ denote lens-shaped regions on the exterior of $\Gamma^\pm$ as shown in the left-hand panels of Figures~\ref{fig:Schi1}--\ref{fig:Stau1}.  Then, we make the following definition (compare with \eqref{eq:T-S-L-plus-ALT}--\eqref{eq:T-S-L-minus-ALT}):
\begin{equation}
\mathbf{W}^{(k)}(\lambda;\chi,\tau)\defeq\mathbf{T}^{(k)}(\lambda;\chi,\tau)\begin{bmatrix}
1 & 0\\ s\ee^{2\ii Mh(\lambda;\chi,\tau)} & 1\end{bmatrix},\quad\lambda\in L^+,
\label{eq:W-def-Schi-Stau-Lplus-ALT}
\end{equation}
\begin{equation}
\mathbf{W}^{(k)}(\lambda;\chi,\tau)\defeq\mathbf{T}^{(k)}(\lambda;\chi,\tau)2^{\frac{1}{2}\sigma_3}\begin{bmatrix}1 & \tfrac{1}{2}s\ee^{-2\ii M h(\lambda;\chi,\tau)}\\ 0 & 1\end{bmatrix},\quad\lambda\in R^+,
\label{eq:W-def-Schi-Stau-Rplus-ALT}
\end{equation}
\begin{equation}
\mathbf{W}^{(k)}(\lambda;\chi,\tau)\defeq\mathbf{T}^{(k)}(\lambda;\chi,\tau)2^{\frac{1}{2}\sigma_3},\quad
\lambda\in\Omega^+,
\end{equation}
\begin{equation}
\mathbf{W}^{(k)}(\lambda;\chi,\tau)\defeq\mathbf{T}^{(k)}(\lambda;\chi,\tau)2^{-\frac{1}{2}\sigma_3},\quad
\lambda\in\Omega^-,
\end{equation}
\begin{equation}
\mathbf{W}^{(k)}(\lambda;\chi,\tau)\defeq\mathbf{T}^{(k)}(\lambda;\chi,\tau)2^{-\frac{1}{2}\sigma_3}\begin{bmatrix} 1 & 0\\-\tfrac{1}{2}s\ee^{2\ii Mh(\lambda;\chi,\tau)} & 1\end{bmatrix},\quad\lambda\in R^-,\quad\text{and}
\label{eq:W-def-Schi-Stau-Rminus-ALT}
\end{equation}
\begin{equation}
\mathbf{W}^{(k)}(\lambda;\chi,\tau)\defeq\mathbf{T}^{(k)}(\lambda;\chi,\tau)\begin{bmatrix}
1 & -s\ee^{-2\ii Mh(\lambda;\chi,\tau)}\\ 0 & 1\end{bmatrix},\quad
\lambda\in L^-,
 \label{eq:W-def-Schi-Stau-Lminus-ALT}
\end{equation}
and elsewhere that $\mathbf{T}^{(k)}(\lambda;\chi,\tau)$ is defined we set $\mathbf{W}^{(k)}(\lambda;\chi,\tau)\defeq\mathbf{T}^{(k)}(\lambda;\chi,\tau)$.  One can check that $\mathbf{W}^{(k)}(\lambda;\chi,\tau)$ can be defined on $\Gamma^\pm$ to be analytic there.  Taking into account that 
$\ee^{\pm 2\ii Mh(\lambda;\chi,\tau)}$
has jump discontinuities across arcs $N^\pm$ within the annular domains $R^\pm$, 
the jump contour for $\mathbf{W}^{(k)}(\lambda;\chi,\tau)$ is as shown in the right-hand panels of Figures~\ref{fig:Schi1}--\ref{fig:Stau1}.
The jump conditions satisfied by $\mathbf{W}^{(k)}(\lambda;\chi,\tau)$ are then the following.  Firstly, since $\mathbf{W}^{(k)}(\lambda;\chi,\tau)=\mathbf{T}^{(k)}(\lambda;\chi,\tau)$ holds for both boundary values taken along $N$, the same jump condition \eqref{eq:T-jump-N-Schi-Stau-ALT} holds for $\mathbf{W}^{(k)}(\lambda;\chi,\tau)$ also.  Next, comparing with \eqref{eq:Tjump-channels-CLplus-ALT}--\eqref{eq:Tjump-channels-CRplus-ALT} and \eqref{eq:Tjump-channels-CRminus-ALT}--\eqref{eq:Tjump-channels-CLminus-ALT}, we have
\begin{equation}
\mathbf{W}^{(k)}_+(\lambda;\chi,\tau)=\mathbf{W}^{(k)}_-(\lambda;\chi,\tau)\begin{bmatrix}1 & 0\\
-s\ee^{2\ii Mh(\lambda;\chi,\tau)} & 1\end{bmatrix},\quad\lambda\in C_L^+,
\label{eq:Wjump-exterior-CLplus}
\end{equation}
\begin{equation}
\mathbf{W}^{(k)}_+(\lambda;\chi,\tau)=\mathbf{W}^{(k)}_-(\lambda;\chi,\tau)\begin{bmatrix}1 & \tfrac{1}{2}s\ee^{-2\ii Mh(\lambda;\chi,\tau)} \\ 0 & 1\end{bmatrix},\quad\lambda\in C_R^+,
\end{equation}
\begin{equation}
\mathbf{W}^{(k)}_+(\lambda;\chi,\tau)=\mathbf{W}^{(k)}_-(\lambda;\chi,\tau)\begin{bmatrix}1 & 0\\
-\tfrac{1}{2}s\ee^{2\ii Mh(\lambda;\chi,\tau)} & 1\end{bmatrix},\quad
\lambda\in C_R^-,\quad\text{and}
\end{equation}
\begin{equation}
\mathbf{W}^{(k)}_+(\lambda;\chi,\tau)=\mathbf{W}^{(k)}_-(\lambda;\chi,\tau)\begin{bmatrix} 1 & s\ee^{-2\ii Mh(\lambda;\chi,\tau)}\\ 0 & 1\end{bmatrix},\quad\lambda\in C_L^-.
\label{eq:Wjump-exterior-CLminus}
\end{equation}
Finally, for $\lambda\in N^\pm$ we compute 
\begin{equation}
\mathbf{W}^{(k)}_+(\lambda;\chi,\tau)=\mathbf{W}_-^{(k)}(\lambda;\chi,\tau)\begin{bmatrix}1 & \tfrac{1}{2}s\left(\ee^{-2\ii Mh_+(\lambda;\chi,\tau)}-\ee^{-2\ii Mh_-(\lambda;\chi,\tau)}\right)\\0 & 1\end{bmatrix},\quad
\lambda\in N^+,\quad\text{and}
\end{equation}
\begin{equation}
\mathbf{W}^{(k)}_+(\lambda;\chi,\tau)=\mathbf{W}^{(k)}_-(\lambda;\chi,\tau)\begin{bmatrix}1 & 0\\
-\tfrac{1}{2}s\left(\ee^{2\ii Mh_+(\lambda;\chi,\tau)}-\ee^{2\ii Mh_-(\lambda;\chi,\tau)}\right) & 1\end{bmatrix},\quad\lambda\in N^-.
\end{equation}
Then, since for $\lambda\in N^\pm$, $g(\lambda;\chi,\tau)$ has no jump discontinuity and $\vartheta_+(\lambda;\chi,\tau)-\vartheta_-(\lambda;\chi,\tau)=-2\pi$, and since $M=\tfrac{1}{2}k+\tfrac{1}{4}$ for $k\in\mathbb{Z}_{>0}$, these simplify to
\begin{equation}
\mathbf{W}^{(k)}_+(\lambda;\chi,\tau)=\mathbf{W}^{(k)}_-(\lambda;\chi,\tau)\begin{bmatrix}1 & \ii\ee^{-\ii M(h_+(\lambda;\chi,\tau)+h_-(\lambda;\chi,\tau))}\\0 & 1\end{bmatrix},\quad\lambda\in N^+,\quad\text{and}
\end{equation}
\begin{equation}
\mathbf{W}^{(k)}_+(\lambda;\chi,\tau)=\mathbf{W}^{(k)}_-(\lambda;\chi,\tau)\begin{bmatrix}1 & 0\\
\ii\ee^{\ii M(h_+(\lambda;\chi,\tau)+h_-(\lambda;\chi,\tau))} & 1\end{bmatrix},\quad\lambda\in N^-.
\end{equation}

\subsection{Parametrix construction}
\label{sec:Airy-parametrix}
From the sign structure of $\mathrm{Re}(\ii h(\lambda;\chi,\tau))$ as indicated with shading in Figures~\ref{fig:Schi1}--\ref{fig:Stau1}, it is then clear that the jump matrices are exponentially small perturbations of the identity matrix except when $\lambda\in N=\Sigma_g$ and in small neighborhoods of the branch points $\lambda=A\pm \ii B$.  To deal with these, we first construct an \emph{outer parametrix} denoted 
$\dot{\mathbf{W}}^{(k),\mathrm{out}}(\lambda;\chi,\tau)$
designed to solve the jump condition 
\eqref{eq:T-jump-N-Schi-Stau-ALT} for $\lambda\in\Sigma_g$ exactly, to be analytic for $\lambda\in\mathbb{C}\setminus\Sigma_g$, and to tend to the identity as $\lambda\to\infty$.  This is easily accomplished simply by diagonalization of the constant jump matrix, the eigenvalues of which are $\pm \ii$.  All solutions of the jump condition 
\eqref{eq:T-jump-N-Schi-Stau-ALT}
have singularities at the endpoints of $\Sigma_g$, and we select the unique solution with the mildest rate of growth at these two points:
\begin{equation}
\dot{\mathbf{W}}^{(k),\mathrm{out}}(\lambda;\chi,\tau)\defeq\\
\ee^{-\ii M\gamma(\chi,\tau)\sigma_3}\mathbf{Q}y(\lambda;\chi,\tau)^{\sigma_3}\mathbf{Q}^{-1}\ee^{\ii M\gamma(\chi,\tau)\sigma_3},
\label{eq:outer-parametrix-Schi-Stau-ALT}
\end{equation}
where $\mathbf{Q}$ is the matrix defined in \eqref{eq:Q-def}, and where $y(\lambda;\chi,\tau)$ is the function analytic for $\lambda\in\mathbb{C}\setminus\Sigma_g$ determined by the conditions
\begin{equation}
y(\lambda;\chi,\tau)^4=\frac{\lambda-\lambda_0(\chi,\tau)}{\lambda-\lambda_0(\chi,\tau)^*},\quad\text{and $y(\lambda;\chi,\tau)\to 1$ as $\lambda\to\infty$}.
\label{eq:y-def}
\end{equation}
Note that the only dependence on 
$M$
enters via the oscillatory factors 
$\ee^{\pm\ii M\gamma(\chi,\tau)\sigma_3}$,
so the outer parametrix
$\dot{\mathbf{W}}^{(k),\mathrm{out}}(\lambda;\chi,\tau)$
is bounded as 
$M\to\infty$,
provided that $\lambda$ is bounded away from $\lambda_0(\chi,\tau)$ and $\lambda_0(\chi,\tau)^*$.

Next, we let $D_{\lambda_0}(\delta)$ and $D_{\lambda_0^*}(\delta)=D_{\lambda_0}(\delta)^*$ be disks of small radius $\delta$ independent of 
$M$
centered at $\lambda=\lambda_0(\chi,\tau)=A(\chi,\tau)+\ii B(\chi,\tau)$ and $\lambda=\lambda_0(\chi,\tau)^*$ respectively.  
Since $h'(\lambda;\chi,\tau)$ vanishes like a square root as $\lambda\to \lambda_0(\chi,\tau)$ and $h_+(\lambda_0(\chi,\tau);\chi,\tau)+h_-(\lambda_0(\chi,\tau);\chi,\tau)=2\gamma(\chi,\tau)$, there is a univalent function $f_{\lambda_0}(\lambda;\chi,\tau)$ defined on $D_{\lambda_0}(\delta)$ with $f_{\lambda_0}(\lambda_0(\chi,\tau);\chi,\tau)=0$ such that 
\begin{equation}
f_{\lambda_0}(\lambda;\chi,\tau)^3=-(h_+(\lambda;\chi,\tau)+h_-(\lambda;\chi,\tau)-2\gamma(\chi,\tau))^2,\quad \lambda\in D_{\lambda_0}(\delta),
\label{eq:Airy-map-Schi-Stau-ALT}
\end{equation}
in which the sum of boundary values of $h$ is analytically continued from $N^+$ to $D_{\lambda_0}(\delta)\setminus N$ by means of the identity $h_+(\lambda;\chi,\tau)-h_-(\lambda;\chi,\tau)=-2\pi$ for $\lambda\in N^+$.
Moreover, the univalent solution of \eqref{eq:Airy-map-Schi-Stau-ALT} and the jump contours $N\cap D_{\lambda_0}(\delta)$, $N^+\cap D_{\lambda_0}(\delta)$, and $C_L^+\cap D_{\lambda_0}(\delta)$ can be chosen so that $\lambda\in N\cap D_{\lambda_0}(\delta)$ implies $f_{\lambda_0}(\lambda;\chi,\tau)<0$, $\lambda\in N^+\cap D_{\lambda_0}(\delta)$ implies $f_{\lambda_0}(\lambda;\chi,\tau)>0$, and $\lambda\in C_L^+\cap D_{\lambda_0}(\delta)$ implies that either $\arg(f_{\lambda_0}(\lambda;\chi,\tau))=\tfrac{2}{3}\pi$ or $\arg(f_{\lambda_0}(\lambda;\chi,\tau))=-\tfrac{2}{3}\pi$.  
Define a matrix $\mathbf{X}^{(k)}(\lambda;\chi,\tau)$ within $D_{\lambda_0}(\delta)$ by
\begin{equation}
\mathbf{X}^{(k)}(\lambda;\chi,\tau)\defeq\mathbf{W}^{(k)}(\lambda;\chi,\tau)\ee^{-\ii M\gamma(\chi,\tau)\sigma_3}\ee^{\frac{1}{4}\ii\pi\sigma_3},\quad\lambda\in D_{\lambda_0}(\delta).
\label{eq:Y-from-X-Schi-Stau-ALT}
\end{equation}
Then, using again $M=\tfrac{1}{2}k+\tfrac{1}{4}$ for $k\in\mathbb{Z}_{>0}$, the jump conditions satisfied by 
$\mathbf{X}^{(k)}(\lambda;\chi,\tau)$
can be written in a simple form, in terms of the variable (rescaled conformal coordinate on $D_{\lambda_0}(\delta)$) 
$\zeta\defeq M^\frac{2}{3}f_{\lambda_0}(\lambda;\chi,\tau)$:
\begin{equation}
\mathbf{X}^{(k)}_+(\lambda;\chi,\tau)=\mathbf{X}^{(k)}_-(\lambda;\chi,\tau)\begin{bmatrix}1 &\ee^{-\zeta^\frac{3}{2}} \\ 0 & 1\end{bmatrix},\quad \arg(\zeta)=0,
\label{eq:Airy-jump-first}
\end{equation}
\begin{equation}
\mathbf{X}^{(k)}_+(\lambda;\chi,\tau)=\mathbf{X}^{(k)}_-(\lambda;\chi,\tau)\begin{bmatrix}1 & 0\\-\ee^{\zeta^\frac{3}{2}} & 1\end{bmatrix},\quad\arg(\zeta)=\pm\tfrac{2}{3}\pi,\quad\text{and}
\end{equation}
\begin{equation}
\mathbf{X}^{(k)}_+(\lambda;\chi,\tau)=\mathbf{X}^{(k)}_-(\lambda;\chi,\tau)\begin{bmatrix}0 & -1\\
1 & 0\end{bmatrix},\quad\arg(-\zeta)=0,
\label{eq:Airy-jump-last}
\end{equation}
where for uniformity all four rays are taken to be oriented away from the origin in the $\zeta$-plane.
There exists a unique matrix function $\mathbf{A}(\zeta)$ with the following properties:
\begin{itemize}
\item
$\mathbf{A}(\zeta)$ is analytic for $0<|\arg(\zeta)|<\tfrac{2}{3}\pi$ and $\tfrac{2}{3}\pi<|\arg(\zeta)|<\pi$ (four sectors);
\item $\mathbf{A}(\zeta)$ takes continuous boundary values from each sector satisfying the same jump conditions written in \eqref{eq:Airy-jump-first}--\eqref{eq:Airy-jump-last};
\item $\mathbf{A}(\zeta)$ has uniform asymptotics in all directions of the complex plane given by 
\begin{equation}
\mathbf{A}(\zeta)\ee^{-\frac{1}{4}\ii\pi\sigma_3}\mathbf{Q}\ee^{\frac{1}{4}\ii\pi\sigma_3}\zeta^{-\frac{1}{4}\sigma_3}=\begin{bmatrix}1+O(\zeta^{-3}) & O(\zeta^{-1})\\O(\zeta^{-2}) & 1+O(\zeta^{-3})\end{bmatrix},\quad\zeta\to\infty,
\end{equation}
where $\mathbf{Q}$ is the matrix defined in \eqref{eq:Q-def}.
\end{itemize}
It is well-known that the unique solution of these Riemann-Hilbert conditions can be written explicitly in terms of Airy functions, and the reader can find a complete development of the solution in \cite[Appendix B]{BothnerM20}.
Next, we define the matrix function
\begin{equation}
\mathbf{H}(\lambda;\chi,\tau)\defeq\ee^{\ii M\gamma(\chi,\tau)\sigma_3}\dot{\mathbf{W}}^{(k),\mathrm{out}}(\lambda;\chi,\tau)\ee^{-\ii M\gamma(\chi,\tau)\sigma_3}\mathbf{Q}f_{\lambda_0}(\lambda;\chi,\tau)^{-\frac{1}{4}\sigma_3}\ee^{\frac{1}{4}\ii\pi\sigma_3}
\end{equation}
and note that it follows from the definition of the conformal map $\lambda\mapsto f_{\lambda_0}(\lambda;\chi,\tau)$ and the definition \eqref{eq:outer-parametrix-Schi-Stau-ALT} of $\dot{\mathbf{W}}^{(k),\mathrm{out}}(\lambda;\chi,\tau)$ that $\mathbf{H}(\lambda;\chi,\tau)$ is analytic for $\lambda\in D_{\lambda_0}(\delta)$ and is independent of $M$.
We use $\mathbf{H}(\lambda;\chi,\tau)$ and $\mathbf{A}(\zeta)$ to define an \emph{inner parametrix} on $D_{\lambda_0}(\delta)$ as follows:
\begin{equation}
\dot{\mathbf{W}}^{(k),\lambda_0}(\lambda;\chi,\tau):=\ee^{-\ii M\gamma(\chi,\tau)\sigma_3}\mathbf{H}(\lambda;\chi,\tau)M^{-\frac{1}{6}\sigma_3}\mathbf{A}(M^\frac{2}{3}f_{\lambda_0}(\lambda;\chi,\tau))\ee^{-\frac{1}{4}\ii\pi\sigma_3}\ee^{\ii M\gamma(\chi,\tau)\sigma_3},\quad\lambda\in D_{\lambda_0}(\delta).
\end{equation}
It is easy to check that 
$\dot{\mathbf{W}}^{(k),\lambda_0}(\lambda;\chi,\tau)$
takes continuous boundary values that satisfy exactly the same jump conditions within $D_{\lambda_0}(\delta)$ as do those of 
$\mathbf{W}^{(k)}(\lambda;\chi,\tau)$
itself. Also, since $\zeta$ is large of size 
$M^\frac{2}{3}$
when $\lambda\in \partial D_{\lambda_0}(\delta)$, 
\begin{multline}
\dot{\mathbf{W}}^{(k),\lambda_0}(\lambda;\chi,\tau)\dot{\mathbf{W}}^{(k),\mathrm{out}}(\lambda;\chi,\tau)^{-1} \\
\begin{aligned}
&= \ee^{-\ii M\gamma(\chi,\tau)\sigma_3}\mathbf{H}(\lambda;\chi,\tau)M^{-\frac{1}{6}\sigma_3}\mathbf{A}(\zeta)\ee^{-\frac{1}{4}\ii\pi\sigma_3}\mathbf{Q}\ee^{\frac{1}{4}\ii\pi\sigma_3}\zeta^{-\frac{1}{4}\sigma_3}M^{\frac{1}{6}\sigma_3}\mathbf{H}(\lambda;\chi,\tau)^{-1}\ee^{\ii M\gamma(\chi,\tau)\sigma_3}\\
&=\ee^{-\ii M\gamma(\chi,\tau)\sigma_3}\mathbf{H}(\lambda;\chi,\tau)\begin{bmatrix}1+O(\zeta^{-3}) & O(M^{-\frac{1}{3}}\zeta^{-1})\\O(M^\frac{1}{3}\zeta^{-2}) & 1+O(\zeta^{-3})\end{bmatrix}
\mathbf{H}(\lambda;\chi,\tau)^{-1}\ee^{\ii M\gamma(\chi,\tau)\sigma_3}\\
&=\mathbb{I}+O(M^{-1}),\quad\lambda\in\partial D_{\lambda_0}(\delta).
\end{aligned}
\label{eq:Dplus-mismatch-Schi-Stau-ALT}
\end{multline}
Since the matrix $\mathbf{W}^{(k)}(\lambda;\chi,\tau)$ satisfies $\mathbf{W}^{(k)}(\lambda^*;\chi,\tau)=\sigma_2\mathbf{W}^{(k)}(\lambda;\chi,\tau)^*\sigma_2$, 
we may define a second inner parametrix for $\lambda\in D_{\lambda_0^*}(\delta)$ to respect this symmetry.

We combine the inner and outer parametrices into a \emph{global parametrix} by setting
\begin{equation}
\dot{\mathbf{W}}^{(k)}(\lambda;\chi,\tau)\defeq\begin{cases}
\dot{\mathbf{W}}^{(k),\lambda_0}(\lambda;\chi,\tau),&\quad\lambda\in D_{\lambda_0}(\delta),\\
\sigma_2\dot{\mathbf{W}}^{(k),\lambda_0}(\lambda^*;\chi,\tau)^*\sigma_2,&\quad\lambda\in D_{\lambda_0^*}(\delta),\\
\dot{\mathbf{W}}^{(k),\mathrm{out}}(\lambda;\chi,\tau),&\quad\lambda\in\mathbb{C}\setminus(\overline{D_{\lambda_0}(\delta)\cup D_{\lambda_0^*}(\delta)}\cup\Sigma_g).
\end{cases}
\label{eq:global-parametrix-Schi-Stau-ALT}
\end{equation}

\subsection{Error analysis and asymptotic formula for $\psi_k(M\chi,M\tau)$ for $(\chi,\tau)\in \exterior$.}
As in Section~\ref{sec:small-norm-channels}, we define an error matrix to compare $\mathbf{W}^{(k)}(\lambda;\chi,\tau)$ with its global parametrix defined in \eqref{eq:global-parametrix-Schi-Stau-ALT}:
\begin{equation}
\mathbf{F}^{(k)}(\lambda;\chi,\tau):=\mathbf{W}^{(k)}(\lambda;\chi,\tau)\dot{\mathbf{W}}^{(k)}(\lambda;\chi,\tau)^{-1}.
\end{equation}
This matrix can be considered to be analytic in $\lambda$ except on a contour $\Sigma_\mathbf{F}$ consisting of the union of (i) those arcs of the jump contour for $\mathbf{W}^{(k)}(\lambda;\chi,\tau)$ other than $\Sigma_g$ outside the disks $D_{\lambda_0}(\delta)$ and $D_{\lambda_0^*}(\delta)$ and (ii) the disk boundaries $\partial D_{\lambda_0}(\delta)$ and $\partial D_{\lambda_0^*}(\delta)$, which we take to have clockwise orientation.  Also, $\mathbf{F}^{(k)}(\lambda;\chi,\tau)$ takes continuous boundary values on $\Sigma_\mathbf{F}$ from each connected component of $\mathbb{C}\setminus\Sigma_\mathbf{F}$, and $\mathbf{F}^{(k)}(\lambda;\chi,\tau)\to\mathbb{I}$ as $\lambda\to\infty$. Because $\delta>0$ is held fixed as $M\to\infty$ and because the outer parametrix is uniformly bounded on arcs of type (i), there is a constant $\nu>0$ such that on those arcs we have the uniform estimate $\mathbf{F}^{(k)}_+(\lambda;\chi,\tau)=\mathbf{F}^{(k)}_-(\lambda;\chi,\tau)(\mathbb{I}+O(\ee^{-\nu M}))$.  On the circular arcs of type (ii), the estimate \eqref{eq:Dplus-mismatch-Schi-Stau-ALT} and its Schwarz reflection guarantee that on those arcs we have the uniform estimate $\mathbf{F}^{(k)}_+(\lambda;\chi,\tau)=\mathbf{F}^{(k)}_-(\lambda;\chi,\tau)(\mathbb{I}+O(M^{-1}))$. By small-norm theory it then follows that $\mathbf{F}^{(k)}_-(\lambda;\chi,\tau)=\mathbb{I}+O(M^{-1})$ holds in the $L^2$ sense on the union of arcs of types (i) and (ii) as $M\to+\infty$.  Using the Cauchy integral representation \eqref{eq:F-Cauchy-channels} then shows that $\mathbf{F}^{(k)}(\lambda;\chi,\tau)=\mathbb{I}+\lambda^{-1}\mathbf{F}_1^{(k)}(\chi,\tau) + O(\lambda^{-2})$ as $\lambda\to\infty$ where $\mathbf{F}_1^{(k)}(\chi,\tau)=O(M^{-1})$ holds uniformly for $(\chi,\tau)$ in compact subsets of $\exterior$.

Using the fact that for $|\lambda|$ sufficiently large, 
$\mathbf{S}^{(k)}(\lambda;\chi,\tau)=\mathbf{W}^{(k)}(\lambda;\chi,\tau)\ee^{-\ii Mg(\lambda;\chi,\tau)\sigma_3}$ while $\dot{\mathbf{W}}^{(k)}(\lambda;\chi,\tau)=\dot{\mathbf{W}}^{(k),\mathrm{out}}(\lambda;\chi,\tau)$,
from \eqref{eq:psi-k-S} we have the following exact formula
\begin{equation}
\begin{split}
\psi_k(M\chi,M\tau)&=2\ii\ee^{-\ii M\tau}\lim_{\lambda\to\infty}\lambda W^{(k)}_{12}(\lambda;\chi,\tau)\ee^{\ii Mg(\lambda;\chi,\tau)}\\
&=2\ii\ee^{-\ii M\tau}\lim_{\lambda\to\infty}\lambda\left[F^{(k)}_{11}(\lambda;\chi,\tau)\dot{W}^{(k),\mathrm{out}}_{12}(\lambda;\chi,\tau)\right.\\
&\qquad\qquad\qquad\qquad\qquad\qquad{}\left.+F^{(k)}_{12}(\lambda;\chi,\tau)\dot{W}^{(k),\mathrm{out}}_{22}(\lambda;\chi,\tau)\right]\ee^{\ii Mg(\lambda;\chi,\tau)}.
\end{split}
\end{equation}
Since $\mathbf{F}^{(k)}(\lambda;\chi,\tau)\to\mathbb{I}$, $\dot{\mathbf{W}}^{(k),\mathrm{out}}(\lambda;\chi,\tau)\to\mathbb{I}$, and $g(\lambda;\chi,\tau)\to 0$ as $\lambda\to\infty$, this simplifies to
\begin{equation}
\psi_k(M\chi,M\tau)=2\ii\ee^{-\ii M\tau}\lim_{\lambda\to\infty}\lambda\left[\dot{W}^{(k),\mathrm{out}}_{12}(\lambda;\chi,\tau) +F^{(k)}_{12}(\lambda;\chi,\tau)\right].
\end{equation}
Using \eqref{eq:outer-parametrix-Schi-Stau-ALT} and that $\mathbf{F}_1^{(k)}(\chi,\tau)=O(M^{-1})$, recalling $B(\chi,\tau)=\mathrm{Im}(\lambda_0(\chi,\tau))>0$ we obtain 
\begin{equation}
\psi_k(M\chi,M\tau)=B(\chi,\tau)\ee^{-\ii M\tau}\ee^{-2\ii M\gamma(\chi,\tau)} + O(M^{-1}),
\label{eq:psi-k-shelves-chi-tau-ALT}
\end{equation}
which completes the proof of Theorem~\ref{thm:exterior}.

\section{Far-Field Asymptotic Behavior in the Domain $\shelves$}
In this section, we prove Theorem~\ref{thm:shelves} and its corollaries. 
Our analysis is valid for all $M\in \mathbb{Z}_{>0}$, with $\mathbf{G}=\mathbf{Q}^{-s}$, $s=\pm 1$ in contrast to that in the preceding section. The analysis will be guided by the sign chart of $\Re(\ii h(\lambda;\chi,\tau))$, $h(\lambda;\chi,\tau) = g(\lambda;\chi,\tau) +  \vartheta(\lambda;\chi,\tau)$.
Recall from the discussion in Section~\ref{sec:GenusZeroModification} that for $(\chi,\tau)\in\shelves$, $h'(\lambda;\chi,\tau)^2$ has 2 real double roots denoted by $a(\chi,\tau)<b(\chi,\tau)$, and two simple roots $A(\chi,\tau)\pm \ii B(\chi,\tau)$ for which we write $\lambda_0(\chi,\tau)\defeq  A(\chi,\tau) + \ii B(\chi,\tau)$, where $B(\chi,\tau)>0$, and we also have $A(\chi,\tau)<0$ because the endpoints $A(\chi,\tau) \pm \ii B(\chi,\tau)$ of $\Sigma_g$, along which $g(\lambda;\chi,\tau)$ has a jump discontinuity, lie in the left half-plane for all $(\chi,\tau)$ in the interior of $\shelves$. We recall from the beginning of Section~\ref{sec:zero-level-curve} that $\mathrm{Re}(\ii h(\lambda;\chi,\tau))=0$ holds for all $\lambda\in\mathbb{R} \setminus \Sigma_g$. Therefore, both $h_{-}(a(\chi,\tau);\chi,\tau)$ and $h(b(\chi,\tau);\chi,\tau)$ are real-valued. 
We also note the facts
\begin{equation}
h''_-(a(\chi,\tau);\chi,\tau) < 0 \quad\text{and}\quad h''(b(\chi,\tau);\chi,\tau) > 0,
\label{eq:h-double-prime-a-b-signs}
\end{equation}
which follow from the formula \eqref{eq:hprime-formula} since $\tau>0$ and $a(\chi,\tau),b(\chi,\tau)$ are real roots of the quadratic in the numerator of \eqref{eq:hprime-formula} for $(\chi,\tau)$ in the interior of $\shelves$.


Recall from Section~\ref{sec:zero-level-curve} that there is a Schwarz-symmetric arc of the zero level curve of $\lambda\mapsto \Re(\ii h(\lambda;\chi,\tau))$ that connects $\lambda_0(\chi,\tau)$ to $\lambda_0(\chi,\tau)^*$ and passes through the point $a(\chi,\tau)\in\mathbb{R}$. We place the branch cut $\Sigma_g$ on this curve, denote by $\Sigma_g^{\pm}$ its subarcs that lie in the half-planes $\mathbb{C}^{\pm}$, and orient $\Sigma_g$ from $\lambda_0^*$ to $\lambda_0$. The other bounded trajectory of the zero level curve in the upper half plane is one that connects $\lambda_0(\chi,\tau)$ to $b(\chi,\tau)$. We denote this arc by $\Gamma^+ = \Gamma^+(\chi,\tau)$, and its Schwarz reflection by $\Gamma^- = \Gamma^-(\chi,\tau)$, both with downward orientation. We also set $\Gamma\defeq  \Gamma^+ \cup \Gamma^- \cup \{  b(\chi,\tau) \}$. We set $I\defeq [a(\chi,\tau), b(\chi,\tau)]$ to
denote the only remaining bounded component of $\Re(\ii h(\lambda;\chi,\tau))=0$.
For the analysis that follows we take $\Sigma_\circ$ to be the clockwise-oriented loop $\Sigma_{\circ} = \Sigma_g \cup \Gamma$. We choose $\Sigma_\mathrm{c}$  to be a Schwarz-symmetric arc that connects the points $\lambda=\pm \ii$ while passing through the point $\lambda=\tfrac{1}{2}(a(\chi,\tau)+b(\chi,\tau))$, say (any point in $(a(\chi,\tau),b(\chi,\tau))$ would suffice), with upward orientation. See the left-hand panel of Figure~\ref{fig:SB1} for an illustration of these arcs.

\begin{figure}[h]
\begin{center}
\includegraphics{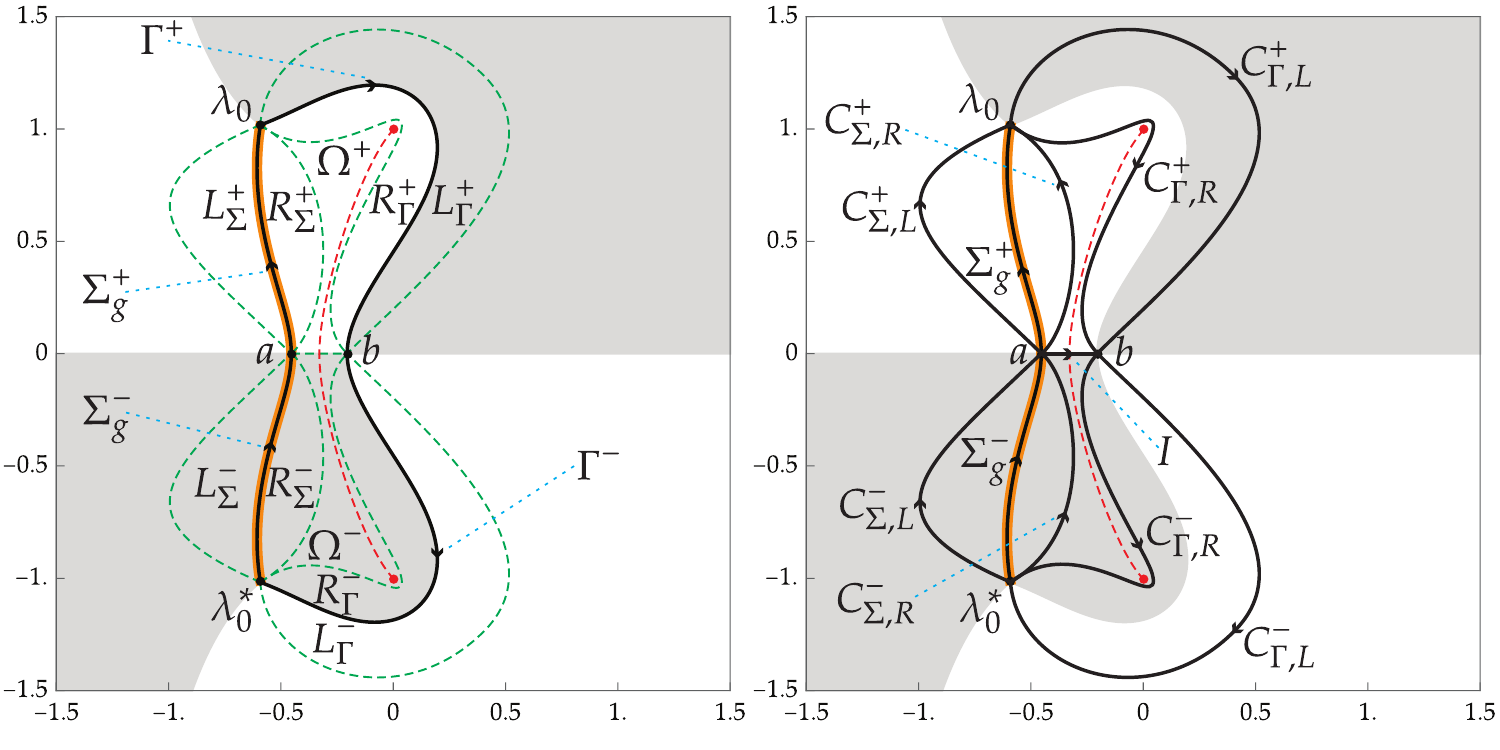}
\end{center}
\caption{Left:  the initial jump contour and sign chart of $\mathrm{Re}(\ii h(\lambda;\chi,\tau))$ (shaded for negative, unshaded for positive) for $(\chi,\tau)=(2,0.8)\in \shelves$, showing also the regions where explicit transformations are made.  Right:  the resulting jump contour after the transformations.}
\label{fig:SB1}
\end{figure}

\subsection{Introduction of $g$ and steepest descent deformation of the Riemann-Hilbert problem}
With contours chosen this way, we have $\Sigma_g \cup \Sigma_\mathrm{c} = \emptyset$; therefore, we need the version of the construction of $g(\lambda;\chi,\tau)$ described in Section~\ref{sec:g-function-loop}. We introduce the $g$-function and the matrix function $\mathbf{T}(\lambda;\chi,\tau,\mathbf{Q}^{-s}, M)$ by the global substitution \eqref{eq:T-to-S}.
We let $\Omega^+$ denote the domain enclosed by $\Sigma_g^+ \cup \Gamma^+ \cup I $ which contains $\lambda=\ii$ and $\Sigma_\mathrm{c} \cap \mathbb{C}^+$, and let $\Omega^{-}$ be its Schwarz reflection. 
We let $L_\Sigma^\pm$ and $L_\Gamma^\pm$ (resp., $R_\Sigma^\pm$ and $R_\Gamma^\pm$) denote lens-shaped regions lying to the left (resp., right) of $\Sigma_g^\pm$ and $\Gamma^\pm$ with respect to orientation, as depicted in the left-hand panel of Figure~\ref{fig:SB1}. 
The lens-shaped regions are chosen to be so thin as to exclude the points $\lambda=\pm \ii$ and $\Sigma_\mathrm{c}$ while supporting a fixed sign of $\Re(\ii h(\lambda;\chi,\tau))$. 
On $\Gamma^\pm$, we will use the two-factor factorizations of the central factor $\mathbf{Q}^{-s}$, exactly as written in \eqref{eq:Q-factorizations}.
However, along $\Sigma_g$, we will employ the following additional factorizations
\begin{equation}
\mathbf{Q}^{-s} = 
\begin{cases}
2^{\frac{1}{2}\sigma_3} \begin{bmatrix} 1 & -\frac{1}{2}s \\ 0 & 1 \end{bmatrix} \begin{bmatrix} 0 & s \\ -s & 0 \end{bmatrix} \begin{bmatrix} 1 & -s \\ 0 & 1 \end{bmatrix}, \quad& \lambda\in \Sigma_g^+,\vspace{0.33em}\\
 2^{-\frac{1}{2}\sigma_3} \begin{bmatrix} 1 & 0 \\ \frac{1}{2}s & 1 \end{bmatrix} \begin{bmatrix} 0 & s \\ -s & 0 \end{bmatrix} \begin{bmatrix} 1 & 0 \\ s & 1 \end{bmatrix},\quad& \lambda\in \Sigma_g^-.
\end{cases}
\label{eq:Q-triple-factorization}
\end{equation}
Taking advantage of the two-factor matrix factorizations \eqref{eq:Q-factorizations}, we define $\mathbf{W}(\lambda)=\mathbf{W}(\lambda;\chi,\tau,\mathbf{Q}^{-s},M)$ in the lens-shaped regions surrounding $\Gamma^\pm$ by:

\begin{equation}
\mathbf{W}(\lambda)\defeq \mathbf{T}(\lambda;\chi,\tau,\mathbf{Q}^{-s},M) 
\begin{bmatrix} 1 & 0 \\ s \ee^{2\ii M h(\lambda;\chi,\tau)}& 1 \end{bmatrix},\quad \lambda\in L^+_{\Gamma},
\label{eq:T-to-W-L-plus-Gamma}
\end{equation}
\begin{equation}
\mathbf{W}(\lambda)\defeq \mathbf{T}(\lambda;\chi,\tau,\mathbf{Q}^{-s},M) 
2^{\frac{1}{2}\sigma_3} \begin{bmatrix} 1 & \frac{1}{2} s\ee^{-2\ii M h(\lambda;\chi,\tau)} \\ 0 & 1 \end{bmatrix},\quad \lambda \in R^+_{\Gamma},
\end{equation}
%
\begin{equation}
\mathbf{W}(\lambda)\defeq \mathbf{T}(\lambda;\chi,\tau,\mathbf{Q}^{-s},M) 
2^{\frac{1}{2}\sigma_3},\quad \lambda \in \Omega^+,
\end{equation}
\begin{equation}
\mathbf{W}(\lambda)\defeq \mathbf{T}(\lambda;\chi,\tau,\mathbf{Q}^{-s},M) 
2^{-\frac{1}{2}\sigma_3},\quad \lambda \in \Omega^-,
\end{equation}
\begin{equation}
\mathbf{W}(\lambda)\defeq \mathbf{T}(\lambda;\chi,\tau,\mathbf{Q}^{-s},M) 
2^{-\sigma_3/2}  \begin{bmatrix} 1 & 0 \\ -\frac{1}{2}s \ee^{2\ii M h(\lambda;\chi,\tau)} & 1 \end{bmatrix} ,\quad \lambda \in R^-_{\Gamma},\quad\text{and}
\end{equation}
\begin{equation}
\mathbf{W}(\lambda)\defeq \mathbf{T}(\lambda;\chi,\tau,\mathbf{Q}^{-s},M) 
 \begin{bmatrix} 1 & - s \ee^{-2\ii M h(\lambda;\chi,\tau)} \\ 0 & 1 \end{bmatrix} ,\quad \lambda \in L^-_{\Gamma}.
\end{equation}
Note how the definitions above compare with \eqref{eq:W-def-Schi-Stau-Lplus-ALT}--\eqref{eq:W-def-Schi-Stau-Lminus-ALT} which use the same factorizations of $\mathbf{Q}^{-s}$: the regions $L^\pm$ and $R^\pm$ are merely replaced with $L_\Gamma^\pm$ and $R_\Gamma^\pm$, respectively. In the lens-shaped regions surrounding $\Sigma_g$ on the other hand, we make use of the factorizations \eqref{eq:Q-triple-factorization} and define:
\begin{equation}
\mathbf{W}(\lambda)\defeq \mathbf{T}(\lambda;\chi,\tau,\mathbf{Q}^{-s},M) 
 \begin{bmatrix} 1 & s  \ee^{-2\ii M h(\lambda;\chi,\tau)} \\ 0 & 1 \end{bmatrix},\quad \lambda \in L^+_{\Sigma},
\end{equation}
\begin{equation}
\mathbf{W}(\lambda)\defeq \mathbf{T}(\lambda;\chi,\tau,\mathbf{Q}^{-s},M) 
2^{\frac{1}{2}\sigma_3} \begin{bmatrix} 1 & -\frac{1}{2} s \ee^{-2\ii M h(\lambda;\chi,\tau)} \\ 0 & 1 \end{bmatrix},\quad \lambda \in R^+_{\Sigma},
\end{equation}
\begin{equation}
\mathbf{W}(\lambda)\defeq \mathbf{T}(\lambda;\chi,\tau,\mathbf{Q}^{-s},M) 
 \begin{bmatrix} 1 & 0 \\ -s  \ee^{2\ii M h(\lambda;\chi,\tau)} & 1 \end{bmatrix},\quad \lambda \in L^-_{\Sigma},\quad\text{and}
\end{equation}
\begin{equation}
\mathbf{W}(\lambda)\defeq \mathbf{T}(\lambda;\chi,\tau,\mathbf{Q}^{-s},M) 
 2^{-\frac{1}{2}\sigma_3} \begin{bmatrix} 1 & 0 \\ \frac{1}{2} s \ee^{2\ii M h(\lambda;\chi,\tau)} & 1 \end{bmatrix}
,\quad \lambda \in R^-_{\Sigma}.
\label{eq:T-to-W-R-minus-Sigma-ALT}
\end{equation}
We simply leave $\mathbf{W}(\lambda) \defeq  \mathbf{T}(\lambda;\chi,\tau,\mathbf{Q}^{-s},M) $ elsewhere. It is now easy to see that $\mathbf{W}(\lambda)$ extends to $\lambda\in \Gamma^+ \cup \Gamma^-$ as an analytic function, so that $\mathbf{W}(\lambda)$ is analytic in the complement of the jump contour $C^{+}_{\Sigma,L} \cup C^{+}_{\Sigma,R} \cup C^{-}_{\Sigma,L} \cup C^{-}_{\Sigma,R} \cup \Sigma_g \cup I \cup C^{+}_{\Gamma,L} \cup C^{+}_{\Gamma,R} \cup C^{-}_{\Gamma,L} \cup C^{-}_{\Gamma,R}$, the arcs of which are depicted in the right-hand panel of Figure~\ref{fig:SB1}. Across these arcs $\mathbf{W}(\lambda) $ satisfies the following jump relations:
\begin{equation}
\mathbf{W}_{+}(\lambda) = \mathbf{W}_{-}(\lambda)
\begin{bmatrix} 1 & 0 \\ -s \ee^{2\ii M h(\lambda;\chi,\tau)}& 1 \end{bmatrix},\quad \lambda\in C^+_{\Gamma,L},
\label{eq:W-jump-Gamma-plus-L}
\end{equation}
\begin{equation}
\mathbf{W}_{+}(\lambda) = \mathbf{W}_{-}(\lambda)
\begin{bmatrix} 1 & \frac{1}{2} s\ee^{-2\ii M h(\lambda;\chi,\tau)} \\ 0 & 1 \end{bmatrix},\quad \lambda\in C^+_{\Gamma,R},
\end{equation}
\begin{equation}
\mathbf{W}_{+}(\lambda) = \mathbf{W}_{-}(\lambda)
\begin{bmatrix} 1 & 0 \\ -\frac{1}{2}s  \ee^{2\ii M h(\lambda;\chi,\tau)} & 1 \end{bmatrix} ,\quad \lambda\in C^-_{\Gamma,R},\quad\text{and}
\end{equation}
\begin{equation}
\mathbf{W}_{+}(\lambda) = \mathbf{W}_{-}(\lambda)
\begin{bmatrix} 1 &  s \ee^{-2\ii M h(\lambda;\chi,\tau)} \\ 0 & 1 \end{bmatrix} ,\quad \lambda\in C^-_{\Gamma,L},
\end{equation}
which are again in parallel with the jump conditions \eqref{eq:Wjump-exterior-CLplus}--\eqref{eq:Wjump-exterior-CLminus}, and
\begin{equation}
\mathbf{W}_{+}(\lambda) = \mathbf{W}_{-}(\lambda)
2^{\sigma_3},\quad \lambda\in I ,
\label{eq:W-jump-I}
\end{equation}
\begin{equation}
\mathbf{W}_{+}(\lambda) = \mathbf{W}_{-}(\lambda)
\begin{bmatrix} 1 & -s  \ee^{-2\ii M h(\lambda;\chi,\tau)} \\ 0 & 1 \end{bmatrix},\quad \lambda\in C^+_{\Sigma,L},
\end{equation}
\begin{equation}
\mathbf{W}_{+}(\lambda) = \mathbf{W}_{-}(\lambda)
\begin{bmatrix} 1 & -\frac{1}{2} s \ee^{-2\ii M h(\lambda;\chi,\tau)} \\ 0 & 1 \end{bmatrix},\quad \lambda\in C^+_{\Sigma,R},
\end{equation}
\begin{equation}
\mathbf{W}_{+}(\lambda) = \mathbf{W}_{-}(\lambda)
\begin{bmatrix} 1 & 0 \\ s  \ee^{2\ii M h(\lambda;\chi,\tau)} & 1 \end{bmatrix},\quad \lambda\in C^-_{\Sigma,L},\quad\text{and}
\end{equation}
\begin{equation}
\mathbf{W}_{+}(\lambda) = \mathbf{W}_{-}(\lambda)
\begin{bmatrix} 1 & 0 \\ \frac{1}{2} s  \ee^{2\ii M h(\lambda;\chi,\tau)} & 1 \end{bmatrix},\quad \lambda\in C^-_{\Sigma,R}.
\label{eq:W-jump-Sigma-minus-R-ALT}
\end{equation}
Finally, along the branch cut $\Sigma_g$ we have
\begin{equation}
\mathbf{W}_{+}(\lambda) = \mathbf{W}_{-}(\lambda)
\begin{bmatrix} 0 & s  \ee^{-2\ii M \kappa(\chi,\tau))} \\ -s  \ee^{2 \ii M \kappa(\chi,\tau)} & 0 \end{bmatrix},\quad \lambda\in \Sigma_g = \Sigma_g^+ \cup \Sigma_g^-,
\label{eq:W-jump-Sigma-g}
\end{equation}
where $2\kappa(\chi,\tau)$ is the real constant value of $h_{+}(\lambda;\chi,\tau)+h_{-}(\lambda;\chi,\tau)$ for $\lambda\in\Sigma_g$ and $\kappa(\chi,\tau)$ is given in \eqref{eq:kappa-formula}. 
It follows from the sign chart of $\Re(\ii h(\lambda;\chi,\tau))$ as shown in Figure~\ref{fig:SB1} that all of the jump matrices above except for those supported on $I\cup\Sigma_g$ tend to the identity matrix exponentially fast as $M\to +\infty$ for $\lambda$ on the relevant supporting arcs away from the points $\lambda = \lambda_{0}(\chi,\tau)$, $\lambda_{0}(\chi,\tau)^{*}$, $a(\chi,\tau)$ and $b(\chi,\tau)$. In sufficiently small neighborhoods of these points, we will construct local parametrices that satisfy the jump conditions exactly.

\subsection{Parametrix construction}
\subsubsection{Outer parametrix construction} 
We start with construction of an outer parametrix denoted $\dot{\mathbf{W}}^\mathrm{out}(\lambda)\defeq \dot{\mathbf{W}}^\mathrm{out}(\lambda;\chi,\tau,\mathbf{Q}^{-s},M)$ satisfying exactly the jump conditions on $I$ and $\Sigma_g$ (cf., \eqref{eq:W-jump-I} and \eqref{eq:W-jump-Sigma-g}) that do not become asymptotically trivial as $M\to +\infty$. The procedure follows closely the construction in \cite[Section 4.2.2]{BilmanLM20} and it can be viewed as a combination of the outer parametrices constructed in Section~\ref{sec:channels} and in Section~\ref{sec:Schi-Stau} together with a new diagonal factor which is intrinsic to \shelves. Indeed, the jump condition \eqref{eq:W-jump-I} on $I$ is identical for $(\chi,\tau)\in\channels$ and $(\chi,\tau)\in\shelves$, hence we employ the outer parametrix \eqref{eq:Channels-Tout} to write $\dot{\mathbf{W}}^\mathrm{out}(\lambda)$ as
\begin{equation}
\dot{\mathbf{W}}^\mathrm{out}(\lambda)=\mathbf{J}(\lambda)\left(\frac{\lambda-a(\chi,\tau)}{\lambda-b(\chi,\tau)}\right)^{\ii p\sigma_3},
\label{eq:W-out-G}
\end{equation}
where the power function is defined as the principal branch and $p>0$ was defined in \eqref{eq:Channels-Tout}.
Then, $\mathbf{J}(\lambda)=\mathbf{J}(\lambda;\chi,\tau,\mathbf{Q}^{-s},M)$ extends analytically to $I$, and we will assume that it is bounded near $\lambda=a(\chi,\tau),b(\chi,\tau)$ in particular making it analytic at $\lambda=b(\chi,\tau)$.  Therefore, $\mathbf{J}(\lambda)$ is analytic for $\lambda\in\mathbb{C}\setminus\Sigma_g$ and tends to the identity as $\lambda\to\infty$.
Across $\Sigma_g$, the constant jump condition \eqref{eq:W-jump-Sigma-g} required of $\dot{\mathbf{W}}^\mathrm{out}(\lambda)$ becomes modified for $\mathbf{J}(\lambda)$:
\begin{equation}
\mathbf{J}_+(\lambda)=\mathbf{J}_-(\lambda)\left(\frac{\lambda-a(\chi,\tau)}{\lambda-b(\chi,\tau)}\right)^{\ii p\sigma_3}
\begin{bmatrix}0 & s \ee^{-2\ii M\kappa(\chi,\tau)}\\
- s \ee^{ 2\ii M\kappa(\chi,\tau)} & 0\end{bmatrix}
\left(\frac{\lambda-a(\chi,\tau)}{\lambda-b(\chi,\tau)}\right)^{-\ii p\sigma_3},\quad \lambda\in \Sigma_g,
\end{equation}
and we will convert this back into a constant jump condition on $\Sigma_g$ alone by introducing a \emph{Szeg{\H o} function} $K(\lambda;\chi,\tau)$, which we define by
\begin{equation}
K(\lambda;\chi,\tau)\defeq  p\log\left(\frac{\lambda-a(\chi,\tau)}{\lambda-b(\chi,\tau)}\right)+ p R(\lambda;\chi,\tau) \int_{a(\chi,\tau)}^{b(\chi,\tau)}\frac{\dd \eta}{R(\eta;\chi,\tau)(\eta-\lambda)}, 
\label{eq:K-def}
\end{equation}
in which the logarithm is taken to be the principal branch, $-\pi<\mathrm{Im}(\log(\cdot))<\pi$. 
It is straightforward to confirm that $K(\lambda;\chi,\tau)$ has the following properties.
Recalling the definition \eqref{eq:mu-formula-intro} of the constant $\mu(\chi,\tau)$, $K(\lambda;\chi,\tau) = -\mu(\chi,\tau) + O(\lambda^{-1})$ as $\lambda\to\infty$. Despite appearances, $K(\lambda;\chi,\tau)$ does not have a jump across $I$ as is easily confirmed by comparing the boundary values of the logarithm and using the Plemelj formula.  
The apparent singularities at $\lambda=a(\chi,\tau),b(\chi,\tau)$ are removable, so the domain of analyticity for $K(\lambda;\chi,\tau)$ is $\lambda\in\mathbb{C}\setminus\Sigma_g$, and $K(\lambda;\chi,\tau)$ takes continuous boundary values on $\Sigma_g$, including at the endpoints.  These boundary values are related by the jump condition
\begin{equation}
K_+(\lambda;\chi,\tau)+K_-(\lambda;\chi,\tau)=2 p\log\left(\frac{\lambda-a(\chi,\tau)}{\lambda-b(\chi,\tau)}\right),\quad \lambda\in\Sigma_g.
\label{eq:K-jump}
\end{equation}
One can indeed check that $K(\lambda;\chi,\tau)$ has the alternate representation obtained by the Plemelj formula:
\begin{equation}
K(\lambda;\chi,\tau) =  \frac{R(\lambda;\chi,\tau)}{2\pi \ii} \int_{\Sigma_g} \log\left( \frac{\eta - a(\chi,\tau)}{\eta - b(\chi,\tau)}\right) \frac{2 p  \dd \eta}{R_+(\eta; \chi,\tau)(\eta - \lambda)},
\label{eq:K-Plemelj}
\end{equation}
which confirms the properties stated above. The values of $K(\lambda;\chi,\tau)$ at $\lambda=a(\chi,\tau),b(\chi,\tau)$ will be useful in obtaining the asymptotic formula for $q( M \chi, M\tau;\mathbf{Q}^{-s},M)$, and they can easily be computed from the representation \eqref{eq:K-Plemelj}. 
Using
\begin{equation}
R(b(\chi,\tau);\chi,\tau) = |b(\chi,\tau)-\lambda_0(\chi,\tau)|\quad\text{and}\quad R_-(a(\chi,\tau);\chi,\tau) = |a(\chi,\tau)-\lambda_0(\chi,\tau)|,
\label{eq:R-a-b}
\end{equation}
we arrive at the formul\ae\ \eqref{eq:intro-Ka}--\eqref{eq:intro-Kb} for $K_a(\chi,\tau)\defeq K_-(a(\chi,\tau);\chi,\tau)$ and $K_b(\chi,\tau)\defeq K(b(\chi,\tau);\chi,\tau)$ respectively.

%
Preserving the normalization at infinity, we introduce $K(\lambda;\chi,\tau)$ in the construction of $\dot{\mathbf{W}}^\text{out}(\lambda)$ by
\begin{equation}
\mathbf{J}(\lambda)=\mathbf{L}(\lambda)\ee^{-\ii (K(\lambda;\chi,\tau)+ \mu(\chi,\tau) )\sigma_3}.
\label{eq:G-H}
\end{equation}
It then follows that $\mathbf{L}(\lambda)=\mathbf{L}(\lambda;\chi,\tau,\mathbf{Q}^{-s},M)$ is a matrix function analytic for $\lambda\in\mathbb{C}\setminus\Sigma_g$ that tends to $\mathbb{I}$ as $\lambda\to\infty$, and that satisfies the jump condition
\begin{equation}
\mathbf{L}_+(\lambda)=\mathbf{L}_-(\lambda)
\begin{bmatrix}0 & s \ee^{-2\ii (M\kappa(\chi,\tau) + \mu(\chi,\tau) )}\\
-s \ee^{2\ii (M\kappa(\chi,\tau) + \mu(\chi,\tau) )} & 0\end{bmatrix}
,\quad \lambda\in\Sigma_g.
\label{eq:H-jump-Sigma-g}
\end{equation}
$\mathbf{L}(\lambda)$ is given by
\begin{equation}
\mathbf{L}(\lambda)\defeq 
\ee^{-\ii \frac{1}{4}s \pi \sigma_3}
\ee^{-\ii (M\kappa(\chi,\tau) + \mu(\chi,\tau) )\sigma_3}
\mathbf{Q}y(\lambda;\chi,\tau)^{\sigma_3}\mathbf{Q}^{-1}
\ee^{\ii (M\kappa(\chi,\tau)+  \mu(\chi,\tau) )\sigma_3}
\ee^{\ii \frac{1}{4}s \pi \sigma_3},
\label{eq:H-def}
\end{equation}
where $y(\lambda;\chi,\tau)$ is defined in \eqref{eq:y-def}.
This solution clearly relates to (232) in exactly the same way that \eqref{eq:outer-parametrix-Schi-Stau-ALT} relates to \eqref{eq:T-jump-N-Schi-Stau-ALT} with $\mathbf{T}$ replaced by $\mathbf{W}$, and obviously $\mathbf{L}(\lambda)\to\mathbb{I}$ as $\lambda\to\infty$.
Combining \eqref{eq:W-out-G}, \eqref{eq:G-H}, and \eqref{eq:H-def} completes the construction of the outer parametrix $\dot{\mathbf{W}}^\mathrm{out}(\lambda)$:
\begin{equation}
\dot{\mathbf{W}}^\mathrm{out}(\lambda) = \mathbf{L}(\lambda) \ee^{-\ii ( K(\lambda;\chi,\tau) + \mu(\chi,\tau) )\sigma_3}\left(\frac{\lambda-a(\chi,\tau)}{\lambda-b(\chi,\tau)}\right)^{\ii p\sigma_3}.
\label{eq:W-out-full}
\end{equation}
Note that the only dependence on $M$ enters via the oscillatory factors $\ee^{\pm \ii M \kappa(\chi,\tau)\sigma_3}$ in $\mathbf{L}(\lambda)$. Thus, $\dot{\mathbf{W}}^{\text{out}}(\lambda)=\dot{\mathbf{W}}^{\text{out}}(\lambda;\chi,\tau,\mathbf{Q}^{-s},M)$ is bounded as $M\to+\infty$, provided that $\lambda$ is bounded away from $\lambda_0(\chi,\tau)$ and $\lambda_0(\chi,\tau)^*$.

While the outer parametrix exactly satisfies the same jump conditions satisfied by $\mathbf{W}(\lambda)$ on $\Sigma_g$ and $I$, it is discontinuous near the endpoints of these arcs. Thus, the problem at hand
requires four inner parametrices $\dot{\mathbf{W}}^a(\lambda)$, $\dot{\mathbf{W}}^b(\lambda)$, $\dot{\mathbf{W}}^{\lambda_0}(\lambda)$, and $\dot{\mathbf{W}}^{\lambda_0^*}(\lambda)$ to be defined in disks $D_{\lambda}(\delta)$, centered at the points $\lambda=a(\chi,\tau)$, $b(\chi,\tau)$, $\lambda_0(\chi,\tau)$, $\lambda_0^*(\chi,\tau)$, respectively, where $\delta=\delta(\chi,\tau)>0$ is chosen sufficiently small but independent of $M$. We take the circular boundaries of these disks to have clockwise orientation.

\subsubsection{Inner parametrix construction near the points $a(\chi,\tau)$ and $b(\chi,\tau)$} 

To define an inner parametrix $\dot{\mathbf{W}}^b(\lambda)=\dot{\mathbf{W}}^b(\lambda;\chi,\tau,\mathbf{Q}^{-s},M)$ in $D_b(\delta)$, first note that the properties of $h(\lambda;\chi,\tau)$ summarized at the beginning of this section imply that
${h}(\lambda;\chi,\tau)-{h}(b(\chi,\tau);\chi,\tau)$ is an analytic function of $\lambda$ that vanishes precisely to second order as $\lambda \to b(\chi,\tau)$. 
We introduce an $M$-independent conformal coordinate $f_b$ by setting
\begin{equation}
f_b(\lambda;\chi,\tau)^2 = 2({h}(\lambda;\chi,\tau)-{h}_b(\chi,\tau)),\quad \lambda\in D_b(\delta),
\label{eq:fb-def}
\end{equation}
where ${h}_b(\chi,\tau)\defeq {h}(b(\chi,\tau);\chi,\tau)$,
and choose the solution with $f_b'(b(\chi,\tau);\chi,\tau)>0$. To see why this choice is possible, note that repeated differentiation in \eqref{eq:fb-def} results in the relation
\begin{equation}
f_b'(b(\chi,\tau);\chi,\tau)^2 = h''(b(\chi,\tau);\chi,\tau) > 0.
\label{eq:fb-prime-h-double-prime}
\end{equation}
With this choice
the arc $I\cap D_b(\delta)$ is mapped by $\lambda \mapsto f_b(\lambda;\chi,\tau)$ locally to the negative real axis. Then, in the rescaled conformal coordinate $\zeta_b \defeq  M^\frac{1}{2} f_b$, the jump conditions satisfied by the matrix function
\begin{equation}
\mathbf{U}^b(\lambda) \defeq  \mathbf{W}(\lambda) \ii^{\frac{1}{2}(1-s)\sigma_3}
\ee^{-\ii M {h}_b(\chi,\tau)\sigma_3},\quad \lambda\in D_b(\delta)
\label{eq:W-transformation-b}
\end{equation}
coincide exactly with those of $\mathbf{U}(\zeta)$ described right before \eqref{eq:PCU-asymp} when expressed in terms of the variable $\zeta=\zeta_b$ and the jump contours are locally taken to coincide with the five rays $\arg(\zeta)=\pm \tfrac{1}{4}\pi$, $\arg(\zeta)=\pm \tfrac{3}{4}\pi$, and $\arg(-\zeta)=0$ as shown in \cite[Figure 9]{BilmanLM20}. Therefore, the construction of $\dot{\mathbf{W}}^b(\lambda)$ follows \emph{mutatis mutandis} that of the local parametrix near $b$ in Section~\ref{sec:channels-parametrix}. Indeed, replacing $\vartheta_b$ with $h_b$ and taking into account the fact that $\dot{\mathbf{W}}^\text{out}(\lambda)$ in this section differs from the outer parametrix \eqref{eq:Channels-Tout} in Section~\ref{sec:channels-parametrix} by multiplication on the left by $\mathbf{J}(\lambda)$, one obtains (compare with \eqref{eq:Channels-Tb-ALT})
\begin{equation}
\dot{\mathbf{W}}^b(\lambda)\defeq \mathbf{Y}^b(\lambda)\mathbf{U}(\zeta_b) \ii^{-\frac{1}{2}(1-s)\sigma_3} \ee^{\ii M{h}_b(\chi,\tau)\sigma_3},
\end{equation}
where $\mathbf{Y}^b(\lambda)$ is the prefactor that is holomorphic in the disk $D_b(\delta)$ and is given by
\begin{equation}
\mathbf{Y}^b(\lambda)\defeq 
\mathbf{J}(\lambda)
M^{\frac{1}{2} \ii p \sigma_3}
 \ee^{-\ii M {h}_b(\chi,\tau)\sigma_3} \ii^{\frac{1}{2}(1-s)\sigma_3}  \mathbf{H}^{b}(\lambda),
\label{eq:A-b}
\end{equation}
in which $\mathbf{H}^b(\lambda)$ is given \emph{exactly} as in \eqref{eq:Channels-Hb} with the conformal map $f_b(\lambda;\chi,\tau)$ being based on $h$ as in \eqref{eq:fb-def} rather than on $\vartheta$ as in Section~\ref{sec:channels}. $\mathbf{H}^b(\lambda)$ is holomorphic in the disk $D_b(\delta)$.
It is now easy to verify (by drawing a comparison with the construction in Section~\ref{sec:channels-parametrix}) that $\dot{\mathbf{W}}^b(\lambda)$ exactly satisfies the jump conditions for $\mathbf{W}(\lambda)$ in $D_b(\delta)$.
Comparing the this parametrix with the outer parametrix (which is unimodular) on the boundary of $D_b(\delta)$, we see that
\begin{equation}
\dot{\mathbf{W}}^b(\lambda)\dot{\mathbf{W}}^{\mathrm{out}}(\lambda)^{-1}
=\mathbf{Y}^b(\lambda)\mathbf{U}(\zeta_b) \zeta_b^{\ii p \sigma_3}\mathbf{Y}^b(\lambda)^{-1},
\label{eq:error-PC-b}
\end{equation}
and using the asymptotic expansion \eqref{eq:PCU-asymp} in \eqref{eq:error-PC-b} yields the estimate
\begin{equation}
\sup_{\lambda\in \partial D_b(\delta)}\| \dot{\mathbf{W}}^b(\lambda)\dot{\mathbf{W}}^{\mathrm{out}}(\lambda)^{-1} -\mathbb{I} \| = O(M^{-\frac{1}{2}}),\quad M\to+\infty,
\label{eq:error-PC-disk-b-large-M}
\end{equation}
where $\| \cdot \|$ denotes the matrix norm induced from an arbitrary vector norm on $\mathbb{C}^2$.


Constructing an inner parametrix $\dot{\mathbf{W}}^a(\lambda)=\dot{\mathbf{W}}^a(\lambda)(\lambda;\chi,\tau,\mathbf{Q}^{-s},M)$ in the disk $D_a(\delta)$ requires a bit more work due to the presence of the cut $\Sigma_g$ inside $D_a(\delta)$. Note that for $\lambda\in D_a$, ${h}(\lambda;\chi,\tau)$ comprises two different functions that are both analytic in the entire disk $D_a(\delta)$. We will use $\pm$ subscripts to denote these functions: ${h}_{-}(\lambda;\chi,\tau)$ (resp., ${h}_{+}(\lambda;\chi,\tau)$) coincides with ${h}(\lambda;\chi,\tau)$ for $\lambda$ to the right (resp., left) of $\Sigma_g$ with respect to (upward) orientation. These functions are of course related by ${h}_+(\lambda;\chi,\tau) + {h}_-(\lambda;\chi,\tau) = 2 \kappa(\chi,\tau)$ for $ \lambda\in D_{a}(\delta)$,
where the (real-valued) constant $\kappa(\chi,\tau)$ is given in \eqref{eq:kappa-formula}. By analogy, we denote by $D_{a,-}(\delta)$ (resp., $D_{a,+}(\delta)$) the part of $D_a(\delta)$ that lies to the right (resp., left) of $\Sigma_g$ with respect to orientation. We use the same notational convention for the boundaries of these half-disks: $\partial D_{a,\pm}(\delta)$ denotes the circular boundary of $D_{a,\pm}(\delta)$ (omitting $\Sigma_g$).

We base the definition of a conformal mapping on the analytic function $\lambda\mapsto {h}_-(\lambda;\chi,\tau)$. Again by the properties of $h$ summarized at the beginning of this section, ${h}_-(\lambda;\chi,\tau)-{h}_-(a(\chi,\tau);\chi,\tau)$ vanishes to second order as $\lambda\to a(\chi,\tau)$. We introduce an $M$-independent conformal coordinate $f_a$ by setting
\begin{equation}
f_a(\lambda;\chi,\tau)^2 = 2(h_a(\chi,\tau) - h_-(\lambda;\chi,\tau)),\quad \lambda\in D_a(\delta),
\label{eq:fa-def}
\end{equation}
where $h_a(\chi,\tau)\defeq h_{-}(a(\chi,\tau);\chi,\tau)$,
and choose the solution with $f'_a(a(\chi,\tau);\chi,\tau)<0$. This is again possible as one obtains by repeated differentiation in \eqref{eq:fa-def} the relation
\begin{equation}
f_a'(a(\chi,\tau);\chi,\tau)^2= - h_-''(a(\chi,\tau);\chi,\tau) >0,
\label{eq:fa-prime-h-double-prime}
\end{equation}
see \eqref{eq:h-double-prime-a-b-signs}.
With this choice,
the arc $I \cap D_a(\delta)$ is mapped by $\lambda \mapsto f_a(\lambda ;\chi,\tau)$ locally to the negative real axis. In the rescaled conformal coordinate $\zeta_a \defeq M^\frac{1}{2} f_a$, the jump conditions satisfied by the piecewise-defined matrix function
\begin{equation}
\mathbf{U}^a(\lambda) \defeq \begin{cases} 
\mathbf{W} (\lambda)
\ii^{\frac{1}{2}(1-s)\sigma_3}  \ii^{\sigma_3} \ee^{-\ii M {h}_a(\chi,\tau)\sigma_3}(\ii \sigma_2),\quad& \lambda\in D_{a,-}(\delta),\\
\mathbf{W}(\lambda) \ii^{\frac{1}{2}(1-s)\sigma_3} \ee^{-\ii M \kappa(\chi,\tau)\sigma_3}(-\ii \sigma_2)  \ee^{\ii M \kappa(\chi,\tau)\sigma_3}  \ii^{\sigma_3} \ee^{-\ii M {h}_a(\chi,\tau)\sigma_3}(\ii \sigma_2), \quad & \lambda \in D_{a,+}(\delta)
\end{cases}
\label{eq:W-transformation-a}
\end{equation}
coincide exactly with those of $\mathbf{U}(\zeta)$ described right before \eqref{eq:PCU-asymp} again when expressed in terms of the variable $\zeta=\zeta_a$ and the jump contours are locally taken to coincide with the five rays $\arg(\zeta)=\pm \tfrac{1}{4}\pi$, $\arg(\zeta)=\pm \tfrac{3}{4}\pi$, and $\arg(-\zeta)=0$ as shown in \cite[Figure 9]{BilmanLM20}. 
In light of the transformation \eqref{eq:W-transformation-a}, for $\mathbf{W}(\lambda)\dot{\mathbf{W}}^{a}(\lambda)^{-1}$ to be analytic in $D_{a}(\delta)$ we take the parametrix $\dot{\mathbf{W}}^{a}(\lambda)$=$\dot{\mathbf{W}}^{a}(\lambda;\chi,\tau,\mathbf{Q}^{-s},M)$ to be of the form
\begin{equation}
\dot{\mathbf{W}}^{a}(\lambda)\defeq \begin{cases}
\mathbf{Y}^{a}(\lambda) \mathbf{U}(\zeta_a)(-\ii \sigma_2) \ee^{\ii M {h}_a(\chi,\tau)\sigma_3} (\ii^{-\sigma_3})  \ii^{-\frac{1}{2}(1-s)\sigma_3}, \quad & \lambda\in D_{a,-}(\delta),\\
\mathbf{Y}^{a}(\lambda) \mathbf{U}(\zeta_a)(-\ii \sigma_2) \ee^{\ii M {h}_a(\chi,\tau)\sigma_3}(\ii^{-\sigma_3})\mathbf{K}(\chi,\tau) \ii^{-\frac{1}{2}(1-s)\sigma_3} , \quad & \lambda\in D_{a,+}(\delta),
\end{cases}
\label{eq:W-a}
\end{equation}
where we have set
\begin{equation}
\mathbf{K}(\chi,\tau) \defeq   \ee^{-\ii M\kappa(\chi,\tau)\sigma_3}(\ii \sigma_2)  \ee^{\ii M \kappa(\chi,\tau)\sigma_3}
\label{eq:K-mat}
\end{equation}
for brevity in the expressions, and where $\mathbf{Y}^a(\lambda)$ is a matrix function that is holomorphic in $D_a(\delta)$, to be determined by requiring $\dot{\mathbf{W}}^{a}(\lambda)\dot{\mathbf{W}}^{\mathrm{out}}(\lambda)^{-1}=\mathbb{I}+o(1)$ for $\lambda\in \partial D_a(\delta)$ as $M\to+\infty$.
We note that $\mathbf{J}(\lambda)$ given in \eqref{eq:W-out-G} defines two functions that are analytic in the entire disk $D_a(\delta)$, and we again use $\pm$ subscripts consistent with their boundary values taken on $\Sigma_g$ to label them:
$\mathbf{J}_\pm(\lambda)$ coincides with $\mathbf{J}(\lambda)$ for $\lambda\in D_{a,\pm}(\delta)$.
It follows from \eqref{eq:W-out-G} and \eqref{eq:W-a} that for $\lambda\in \partial D_{a,-}(\delta)$ we have
\begin{equation}
\dot{\mathbf{W}}^a(\lambda)\dot{\mathbf{W}}^\mathrm{out}(\lambda)^{-1} 
=\mathbf{Y}^{a}(\lambda) \mathbf{U}(\zeta_a)(-\ii\sigma_2) \ee^{\ii M {h}_a(\chi,\tau)\sigma_3}(\ii^{-\sigma_3})\ii^{-\frac{1}{2}(1-s)\sigma_3} \left(\frac{\lambda-a(\chi,\tau)}{\lambda-b(\chi,\tau)}\right)^{-\ii p\sigma_3} \mathbf{J}_-(\lambda)^{-1}.
\label{eq:error-right-half-disk-What}
\end{equation}
On the other hand, the definition \eqref{eq:W-out-G} also yields for $\lambda\in D_{a,-}(\delta)\setminus I$
\begin{equation}
\dot{\mathbf{W}}^\mathrm{out}(\lambda)  \ii^{\frac{1}{2}(1-s)\sigma_3}  \ii^{\sigma_3}  \ee^{- \ii M {h}_a(\chi,\tau)\sigma_3}(\ii\sigma_2) = 
\mathbf{J}_{-}(\lambda)   \ii^{\frac{1}{2}(1-s)\sigma_3} \ii^{\sigma_3}  \ee^{- \ii M {h}_a(\chi,\tau)\sigma_3}
M^{-\frac{1}{2}\ii p \sigma_3}\mathbf{H}^a(\lambda) \zeta_a^{-\ii p \sigma_3},
\label{eq:W-hat-R-out}
\end{equation}
where $\mathbf{H}^{a}(\lambda)$ is given \emph{exactly} by the formula \eqref{eq:Channels-Ha} except with the different conformal map $f_a(\lambda;\chi,\tau)$ whose construction \eqref{eq:fa-def} is based on $h_{-}(\lambda;\chi,\tau)$ rather than $\vartheta(\lambda;\chi,\tau)$ as in Section~\ref{sec:channels}. 
$\mathbf{H}^{a}(\lambda)$ is holomorphic in the entire disk $D_{a}(\delta)$.
Using \eqref{eq:W-hat-R-out} in \eqref{eq:error-right-half-disk-What} guides us to choose the prefactor $\mathbf{Y}^a(\lambda)$ to be
\begin{equation}
\mathbf{Y}^a(\lambda)\defeq \mathbf{J}_{-}(\lambda)  M^{-\frac{1}{2}\ii p \sigma_3} \ee^{- \ii M {h}_a(\chi,\tau)\sigma_3} \ii^{\frac{1}{2}(1-s)\sigma_3} \ii^{\sigma_3}  
\mathbf{H}^a(\lambda),
\label{eq:A-a}
\end{equation}
which is holomorphic in the entire disk $D_a(\delta)$ and unimodular.
Since $\mathbf{Y}^a(\lambda)$ is now determined, $\dot{\mathbf{W}}^a(\lambda)$ is determined according to \eqref{eq:W-a} and it follows that the mismatch \eqref{eq:error-right-half-disk-What} 
between the inner parametrix and the outer parametrix along $\partial D_{a,-}(\delta)$ reads:
\begin{equation}
\dot{\mathbf{W}}^{a}(\lambda) \dot{\mathbf{W}}^{\text {out }}(\lambda)^{-1}=\mathbf{Y}^{a}(\lambda)\mathbf{U}\left(\zeta_{a}\right) \zeta_{a}^{\mathrm{i} p \sigma_{3}} \mathbf{Y}^{a}(\lambda )^{-1},\quad \lambda \in \partial D_{a,-}(\delta).
\label{eq:mismatch-right-half-disk}
\end{equation}
However, we have used only the information in the right half-disk $D_{a,R}(\delta)$ to construct $\dot{\mathbf{W}}^a(\lambda)$, and one needs to check whether \eqref{eq:mismatch-right-half-disk} actually holds on the entire disk boundary $\partial D_{a}(\delta)$. This can be verified by a direct calculation using the relation \eqref{eq:W-jump-Sigma-g},
and we indeed have
 \begin{equation}
\dot{\mathbf{W}}^a(\lambda)\dot{\mathbf{W}}^\mathrm{out}(\lambda)^{-1} = \mathbf{Y}^a(\lambda)\mathbf{U}(\zeta_a) \zeta_a^{\ii p \sigma_3} \mathbf{Y}^a(\lambda)^{-1},\quad \lambda\in\partial D_a(\delta).
\label{eq:error-PC-a}
\end{equation}
By construction, $\dot{\mathbf{W}}^a(\lambda)$ exactly satisfies the jump conditions for $\mathbf{W}^a(\lambda)$ in $D_a(\delta)$.
Then
using the asymptotic expansion \eqref{eq:PCU-asymp} in \eqref{eq:error-PC-a}, we obtain the estimate
\begin{equation}
\sup_{\lambda \in \partial D_a(\delta)}\| \dot{\mathbf{W}}^a(\lambda) \dot{\mathbf{W}}^{\mathrm{out}}(\lambda)^{-1} -\mathbb{I} \| = O(M^{-\frac{1}{2}}),\quad M\to+\infty.
\label{eq:error-PC-disk-a-large-M}
\end{equation}

\subsubsection{Inner parametrix construction near the points $\lambda_0(\chi,\tau)$ and $\lambda_0(\chi,\tau)^*$} We now let $D_{\lambda_0}(\delta)$ and $D_{\lambda_0^*}(\delta) = D_{\lambda_0}(\delta)^*$ denote disks of small radius $\delta$ independent of $M$ centered at $\lambda=\lambda_0(\chi,\tau)$ and $\lambda=\lambda_0(\chi,\tau)^*$ respectively. Recalling that $h(\lambda_0(\chi,\tau);\chi,\tau) = \kappa(\chi,\tau)$ and $h'(\lambda;\chi,\tau)$ vanishes like a square root as $\lambda\to\lambda_0(\chi,\tau)$, a procedure almost exactly like the one following \eqref{eq:Airy-map-Schi-Stau-ALT} in Section~\ref{sec:Airy-parametrix} (replacing both of the boundary values $h_\pm$ in \eqref{eq:Airy-map-Schi-Stau-ALT} with $h$) leads to the construction of an inner parametrix $\dot{\mathbf{W}}^{\lambda_0}(\lambda)$ on $D_{\lambda_0}(\delta)$ in terms of Airy functions which takes continuous boundary values and satisfies exactly the same jump conditions within $D_{\lambda_0}(\delta)$ as $\mathbf{W}(\lambda)$. Moreover, across the boundary $\partial D_{\lambda_0}(\delta)$ this inner parametrix satisfies
\begin{equation}
\sup_{\lambda \in \partial D_{\lambda_0} } \| \dot{\mathbf{W}}^{\lambda_0}(\lambda)\dot{\mathbf{W}}^{\mathrm{out}}(\lambda)^{-1} -\mathbb{I} \| = O(M^{-1}),\quad M\to+\infty.
\label{eq:error-Airy-disk-plus}
\end{equation}
Since the matrix $\mathbf{W}(\lambda)$ satisfies $\mathbf{W}(\lambda^*)=\sigma_2\mathbf{W}(\lambda)^*\sigma_2$, we may define as in Section~\ref{sec:Airy-parametrix} a second inner parametrix for $\lambda\in D_{\lambda_0^*}(\delta)$ to respect this symmetry, which, of course, satisfies
\begin{equation}
\sup_{\lambda \in \partial D_{\lambda_0^*} } \| \dot{\mathbf{W}}^{\lambda_0^*}(\lambda)\dot{\mathbf{W}}^{\mathrm{out}}(\lambda)^{-1} -\mathbb{I} \| = O(M^{-1}),\quad M\to+\infty.
\label{eq:error-Airy-disk-minus}
\end{equation}

A \emph{global parametrix} $\dot{\mathbf{W}}(\lambda)=\dot{\mathbf{W}}(\lambda;\chi,\tau,\mathbf{Q}^{-s},M)$ is finally constructed by assembling the outer and inner parametrices as follows:
\begin{equation}
\dot{\mathbf{W}}(\lambda)\defeq 
\begin{cases}
\dot{\mathbf{W}}^{\lambda_0} (\lambda),&\quad \lambda\in D_{\lambda_0}(\delta),\\
\dot{\mathbf{W}}^{\lambda_0^*} (\lambda),&\quad \lambda\in D_{\lambda_0^*}(\delta),\\
\dot{\mathbf{W}}^a (\lambda),&\quad \lambda\in D_a(\delta),\\
\dot{\mathbf{W}}^b (\lambda),&\quad \lambda\in D_b(\delta),\\
\dot{\mathbf{W}}^\mathrm{out} (\lambda),&\quad \lambda\in \mathbb{C}\setminus(\Sigma_g \cup I \cup \overline{D_{\lambda_0}(\delta)\cup D_{\lambda_0^*}(\delta) \cup D_a(\delta) \cup D_b(\delta)}).
\end{cases}
\label{eq:W-dot-bun}
\end{equation}

\subsection{Small norm problem for the error and large-$M$ expansion}
To analyze the accuracy of the global parametrix $\dot{\mathbf{W}}(\lambda)$ for $(\chi,\tau)\in \shelves$, we define the \emph{error}
\begin{equation}
\mathbf{F}(\lambda) \defeq  \mathbf{W}(\lambda) \dot{\mathbf{W}}(\lambda)^{-1}.
\label{eq:F-bun}
\end{equation} 
As $\dot{\mathbf{W}}(\lambda)$ satisfies exactly the same jump conditions as $\mathbf{W}(\lambda)$ inside the disks $D_\lambda(\delta)$, $\lambda=a,b,\lambda_0,\lambda_0^*$, and on portions of the arcs $\Sigma_g$ and $I$ exterior to these disks, $\mathbf{F}(\lambda)$ can be taken as an analytic function of $\lambda\in\mathbb{C}$ with the exception the contour system $\Sigma_\mathbf{F}$, which consists of the portions of the arcs $C^\pm_{\Gamma, L}, C^\pm_{\Gamma, R}, C^\pm_{\Sigma, L}, C^\pm_{\Sigma, R}$ lying outside the disks $D_{a,b}(\delta)$ and $D_{\lambda_0,\lambda_0^*}(\delta)$ along with the four disk boundaries $\partial D_{a,b}(\delta)$ and $\partial D_{\lambda_0,\lambda_0^*}(\delta)$. We denote by $\mathbf{V}^{\mathbf{F}}(\lambda)$ the jump matrix for $\mathbf{F}(\lambda)$, which is supported on $\Sigma_\mathbf{F}$. On the arcs $C^\pm_{\Gamma, L}, C^\pm_{\Gamma, R}, C^\pm_{\Sigma, L}, C^\pm_{\Sigma, R}$ \emph{outside} the four disks, we can express $\mathbf{V}^{\mathbf{F}}(\lambda)$ as
\begin{equation}
\begin{split}
\mathbf{V}^{\mathbf{F}}(\lambda) &= \mathbf{F}_-(\lambda)^{-1}\mathbf{F}_+ (\lambda)\\
&=\dot{\mathbf{W}}^\mathrm{out}(\lambda)\mathbf{W}_-(\lambda)^{-1}\mathbf{W}_+ (\lambda)\dot{\mathbf{W}}^\mathrm{out}(\lambda)^{-1}.
\end{split}
\end{equation}
Since $\dot{\mathbf{W}}^\mathrm{out}(\lambda)$ remains bounded with unit determinant as $M\to+\infty$ and $\delta$ is fixed, there exists a positive constant $\nu>0$ such that $\mathbf{V}^\mathbf{F}(\lambda) - \mathbb{I} = O(\ee^{-\nu M})$ holds uniformly on the jump contour $\Sigma_\mathbf{F}$ for $\mathbf{F}(\lambda)$ except on the circles $\partial D_{a,b}(\delta)$ and $\partial D_{\lambda_0,\lambda_0^*}(\delta)$. On the circles, the jump matrix for $\mathbf{F}(\lambda)$ takes the form:
\begin{alignat}{2}
 \mathbf{F}_+(\lambda) &= \mathbf{F}_-(\lambda) \cdot \dot{\mathbf{W}}^{a,b}(\lambda)\dot{\mathbf{W}}^\mathrm{out}(\lambda)^{-1},&&\quad \lambda \in \partial D_{a,b}(\delta),\label{eq:F-jump-bun-a-b}\\
  \mathbf{F}_+(\lambda) &= \mathbf{F}_+(\lambda) \cdot \dot{\mathbf{W}}^{\lambda_0,\lambda_0^*}(\lambda)\dot{\mathbf{W}}^\mathrm{out}(\lambda)^{-1},&&\quad \lambda \in \partial D_{\lambda_0,\lambda_0^*}(\delta),
  \label{eq:F-jump-bun-lambda0}
\end{alignat}
because $\mathbf{W}(\lambda)$ is continuous across each of the four circles. Recalling the estimates \eqref{eq:error-PC-disk-b-large-M} and \eqref{eq:error-PC-disk-a-large-M}, it is seen from \eqref{eq:F-jump-bun-a-b} that $\mathbf{V}^\mathbf{F}(\lambda) - \mathbb{I} = O(M^{-\frac{1}{2}})$ holds uniformly on the circles $\partial D_{a,b}(\delta)$. Similarly, we see from \eqref{eq:error-Airy-disk-plus}-\eqref{eq:error-Airy-disk-minus} and \eqref{eq:F-jump-bun-lambda0} that $\mathbf{V}^\mathbf{F}(\lambda) - \mathbb{I} = O(M^{-1})$ holds uniformly on the circles $\partial D_{\lambda_0, \lambda_0^*}(\delta)$. Thus, it follows that $ \mathbf{F}_+(\lambda) = \mathbf{F}_-(\lambda)( \mathbb{I} + O(M^{-\frac{1}{2}}))$ holds uniformly as $M\to+\infty$ on the compact jump contour $\Sigma_\mathbf{F}$. Standard small-norm theory for such Riemann-Hilbert problems implies that $\mathbf{F}_-(\lambda) = \mathbb{I} + O(M^{-\frac{1}{2}})$ holds in the $L^2$ sense on $\Sigma_\mathbf{F}$, in the limit $M\to +\infty$.

\subsection{Asymptotic formula for $q(x,t;\mathbf{Q}^{-s},M)$ and fundamental rogue waves for $(\chi,\tau)\in \shelves$} We note that for the matrix function $\mathbf{W}(\lambda)=\mathbf{W}(\lambda;\chi,\tau,\mathbf{Q}^{-s},M)$
\begin{equation}
\mathbf{W} (\lambda)  = \mathbf{T} (\lambda) = \mathbf{S} (\lambda) \ee^{M g(\lambda;\chi,\tau)\sigma_3} 
\end{equation}
holds for $|\lambda|$ sufficiently large; therefore, from \eqref{eq:q-S} we have the formula
\begin{equation}
q(M\chi, M\tau; \mathbf{Q}^{-s}, M ) = 2\ii  \lim_{\lambda\to\infty} \left( \lambda W_{12}(\lambda) \ee^{\ii M g(\lambda;\chi,\tau)}\right).
\label{eq:psi-k-W}
\end{equation}
On the other hand, we see from the definitions \eqref{eq:W-dot-bun} and \eqref{eq:F-bun} that 
\begin{equation}
\mathbf{W}(\lambda) = \mathbf{F}(\lambda) \dot{\mathbf{W}}^{\mathrm{out}}(\lambda)
\end{equation}
also holds for $|\lambda|$ sufficiently large; therefore, \eqref{eq:psi-k-W} is expressed as:
\begin{equation}
q(M\chi, M\tau; \mathbf{Q}^{-s}, M ) = 2\ii  \lim_{\lambda\to\infty} \left( \lambda F_{11}(\lambda) \dot{W}^{\mathrm{out}}_{12}(\lambda) +  \lambda F_{12}(\lambda) \dot{W}^{\mathrm{out}}_{22}(\lambda) \right).
\label{eq:psi-k-F}
\end{equation}
Now $\dot{\mathbf{W}}^\mathrm{out}(\lambda)$ tends to the identity as $\lambda\to\infty$, and from \eqref{eq:F-bun} so does $\mathbf{F}(\lambda)$.  Therefore,
\begin{equation}
q(M\chi, M\tau; \mathbf{Q}^{-s}, M ) = 2\ii  \lim_{\lambda\to\infty} \left( \lambda  \dot{W}^{\mathrm{out}}_{12}(\lambda) +   \lambda F_{12}(\lambda) \right).
\label{eq:psi-k-F-simp}
\end{equation}
Recalling that $K(\lambda;\chi,\tau) +\mu(\chi,\tau)= O(\lambda^{-1})$ as $\lambda\to\infty$, it is easily seen from the definitions \eqref{eq:H-def} and \eqref{eq:W-out-full} that
\begin{equation}
2\ii \lim_{\lambda\to\infty} \lambda \dot{W}^{\mathrm{out}}_{12}(\lambda) 
=B(\chi,\tau) \ee^{-2\ii (M \kappa(\chi,\tau) +  \mu(\chi,\tau) + \frac{1}{4} s \pi)} = \mathfrak{L}_s^{[\shelves]}(\chi,\tau;M),
\label{eq:psi-k-bun-outer}
\end{equation}
producing the leading term for $q(M\chi,M\tau; \mathbf{Q}^{-s},M)$ given in \eqref{eq:leading-term-shelves-q}.

It now remains to compute the contribution in \eqref{eq:psi-k-F} coming from $\lambda F_{12}(\lambda;\chi,\tau)$ as $\lambda\to\infty$. Formulating the jump condition for $\mathbf{F}(\lambda)$ in the form $\mathbf{F}_+ - \mathbf{F}_- = \mathbf{F}_- (\mathbf{V}^{\mathbf{F}}-\mathbb{I})$ and using the fact that $\mathbf{F}(\lambda)\to\mathbb{I}$ as $\lambda\to\infty$, we obtain from the Plemelj formula
the same representation as in \eqref{eq:F-Cauchy-channels} for $\mathbf{F}(\lambda)$. It then follows that
$\mathbf{F}(\lambda)$ has the Laurent series expansion which is convergent for sufficiently large $|\lambda|$:
\begin{equation}
\mathbf{F}(\lambda) = \mathbb{I} - \frac{1}{2\pi \ii} \sum_{m=1}^{\infty} \lambda^{-m} \int_{\Sigma_\mathbf{F}} \mathbf{F}_-(\eta)(\mathbf{V}^{\mathbf{F}}(\eta)-\mathbb{I}) \eta^{m-1}\, \dd \eta,\quad |\lambda|>|\Sigma_\mathbf{F}|\defeq \sup_{\eta\in\Sigma_\mathbf{F}}|\eta|.
\label{eq:F-Laurent-bun}
\end{equation}
We obtain from this expansion the integral representation 
\begin{multline}
\lim_{\lambda\to\infty} \lambda F_{12}(\lambda) =
-\frac{1}{2\pi \ii} \left \lbrace \int_{\Sigma_\mathbf{F}} ( F_{11-}(\eta) - 1)V^{\mathbf{F}}_{12}(\eta)\dd \eta\right. \\
\left. +\int_{\Sigma_\mathbf{F}} V^{\mathbf{F}}_{12}(\eta)\dd \eta
+ \int_{\Sigma_\mathbf{F}} F_{12-}(\eta)(V^{\mathbf{F}}_{22}(\eta) -1 ) \dd \eta
 \right\rbrace.
 \label{eq:lambda-F-12-bun}
\end{multline}
We recall that $\mathbf{F}(\lambda)-\mathbb{I} = O(M^{-\frac{1}{2}})$ in the $L^2$ sense and $\mathbf{V}^{\mathbf{F}}(\lambda)-\mathbb{I} = O(M^{-\frac{1}{2}})$ in the $L^\infty$ sense on $\Sigma^{\mathbf{F}}$, in the limit $M\to+\infty$. As the $L^1$ norm is subordinate to the $L^2$ norm on the compact contour $\Sigma^\mathbf{F}$, direct application of Cauchy-Schwarz inequality shows that the first and the last integrals in \eqref{eq:lambda-F-12-bun} are both of size $O(M^{-1})$ as $M\to+\infty$. Combining this fact with \eqref{eq:psi-k-bun-outer} in the formula \eqref{eq:psi-k-F-simp} yields
\begin{equation}
q(M\chi, M\tau; \mathbf{Q}^{-s},M) =  \mathfrak{L}_s^{[\shelves]}(\chi,\tau;M) -\frac{1}{\pi}\int_{\Sigma_\mathbf{F}} V^{\mathbf{F}}_{12}(\eta)\dd \eta  + O(M^{-1}),
\quad M\to+\infty.
\end{equation}
Note that $V^\mathbf{F}_{12}(\lambda)$ is $O(M^{-1})$ on the circles $\partial D_{\lambda_0,\lambda_0^*}(\delta)$ as $M\to+\infty$ and it is  $O(\ee^{-\nu M})$ on the portions of the arcs $C^\pm_{\Gamma, L}, C^\pm_{\Gamma, R}, C^\pm_{\Sigma, L}, C^\pm_{\Sigma, R}$ lying outside the four disks, whereas $V^\mathbf{F}_{12}(\lambda)$ is $O(M^{-\frac{1}{2}})$ on the circles $\partial D_{a,b}(\delta)$. Therefore, the same formula as above holds with a different error of the same size when the integration contour $\Sigma_{\mathbf{F}}$ is replaced with $\partial D_{a}(\delta) \cup \partial D_{b}(\delta)$:
\begin{equation}
q(M\chi, M\tau; \mathbf{Q}^{-s},M)=\mathfrak{L}_s^{[\shelves]}(\chi,\tau;M) - \frac{1}{\pi}\int_{\partial D_{a}(\delta) \cup \partial D_{b}(\delta)} V^{\mathbf{F}}_{12}(\eta)\dd \eta 
 + O(M^{-1}), \quad M\to+\infty.
\label{eq:psi-k-V-12-bun}
\end{equation}
Using the asymptotic expansion \eqref{eq:PCU-asymp} in the formul\ae{} \eqref{eq:error-PC-b} and \eqref{eq:error-PC-a} and recalling that $\mathbf{V}^{\mathbf{F}}(\lambda) = \dot{\mathbf{W}}^{a,b}(\lambda)\dot{\mathbf{W}}^{\mathrm{out}}(\lambda)^{-1}$ for $\lambda\in \partial D_{a,b}(\delta)$, we see that
\begin{equation}
V^{\mathbf{F}}_{12}(\lambda) = \frac{1}{2\ii M^{\frac{1}{2}}} 
\left( 
\frac{\alpha Y_{11}^{a,b}(\lambda)^2 + \beta Y_{12}^{a,b}(\lambda)^2}{f_{a,b}(\lambda;\chi,\tau)}  
\right)
+ O(M^{-1}),\quad M\to+\infty, \quad \lambda\in\partial D_{a,b}(\delta).
\label{eq:VF-12-bun}
\end{equation}
The definition \eqref{eq:A-b} for $\mathbf{Y}^b(\lambda)$ together with the fact that $\mathbf{H}^b(\lambda)$ is a diagonal matrix directly gives
\begin{align}
Y^b_{11}(\lambda)^2 &= s L_{11}(\lambda)^2 \ee^{-2 \ii (K(\lambda;\chi,\tau) + \mu(\chi,\tau) )} \ee^{-2\ii M {h}_b(\chi,\tau)} M^{\ii p} (\lambda-a(\chi,\tau))^{2\ii p} \left( \frac{f_b(\lambda;\chi,\tau)}{\lambda-b(\chi,\tau)} \right)^{2\ii p},
\label{eq:A-b-11}\\
Y^b_{12}(\lambda)^2 &= s L_{12}(\lambda)^2 \ee^{2 \ii( K(\lambda;\chi,\tau) +\mu(\chi,\tau) )} \ee^{2\ii M {h}_b(\chi,\tau)} M^{-\ii p}(\lambda-a(\chi,\tau))^{-2\ii p} \left( \frac{f_b(\lambda;\chi,\tau)}{\lambda-b(\chi,\tau)} \right)^{-2\ii p},
\label{eq:A-b-12}
\end{align}
where to arrive at the latter formula we have used $s=\pm 1$. Similarly, the definition \eqref{eq:A-a} for $\mathbf{Y}^{a}(\lambda)$, this time together with the fact that $\mathbf{H}^a(\lambda)$ is off-diagonal and with perhaps more tedious arithmetic gives
\begin{align}
Y^a_{11}(\lambda)^2 &= - s L_{-,12}(\lambda)^2 \ee^{2\ii (K_-(\lambda;\chi,\tau) + \mu(\chi,\tau) )} \ee^{2\ii M {h}_a(\chi,\tau)} M^{\ii p} (b(\chi,\tau)-\lambda)^{2\ii p} \left( \frac{a(\chi,\tau) - \lambda}{f_a(\lambda;\chi,\tau)} \right)^{-2\ii p},
\label{eq:A-a-11}\\
Y^a_{12}(\lambda)^2 &= - s L_{-,11}(\lambda)^2 \ee^{-2 \ii (K_-(\lambda;\chi,\tau) + \mu(\chi,\tau) ) } \ee^{-2\ii M {h}_a(\chi,\tau)} M^{-\ii p}(b(\chi,\tau)-\lambda)^{-2\ii p} \left( \frac{a(\chi,\tau) - \lambda}{f_a(\lambda;\chi,\tau)} \right)^{2\ii p},
\label{eq:A-a-12}
\end{align}
and to obtain the former formula we have again used $s=\pm 1$. 
Here $\mathbf{L}_-(\lambda)$ and $K_-(\lambda;\chi,\tau)$ are the functions analytic for $\lambda\in D_a(\delta)$ coinciding with $\mathbf{L}(\lambda)=\mathbf{L}(\lambda;\chi,\tau,\mathbf{Q}^{-s},M)$ and  $K(\lambda;\chi,\tau)$, respectively, in $D_{a,-}(\delta)$. 
Thus, with \eqref{eq:A-b-11} and \eqref{eq:A-b-12}, we see from \eqref{eq:VF-12-bun} that $V^{\mathbf{F}}_{12}(\lambda)$ on the circle $\partial D_b(\delta)$ is given by:
\begin{equation}
\begin{split}
V^{\mathbf{F}}_{12}(\lambda) &=
\frac{s \alpha M^{\ii p} \ee^{-2\ii M {h}_b(\chi,\tau)} }{2\ii M^{\frac{1}{2}} f_b(\lambda;\chi,\tau)} L_{11}(\lambda)^2 \ee^{-2\ii ( K(\lambda;\chi,\tau) +\mu(\chi,\tau) ) }    (\lambda-a(\chi,\tau))^{2\ii p}  \left( \frac{f_b(\lambda;\chi,\tau)}{\lambda-b(\chi,\tau)} \right)^{2\ii p}\\
&\quad - 
\frac{s \alpha^* M^{-\ii p} \ee^{2\ii M {h}_b(\chi,\tau)} }{2\ii M^{\frac{1}{2}} f_b(\lambda;\chi,\tau)} L_{12}(\lambda)^2 \ee^{2\ii (K(\lambda;\chi,\tau) + \mu(\chi,\tau) )}  (\lambda-a(\chi,\tau))^{-2\ii p} \left( \frac{f_b(\lambda;\chi,\tau)}{\lambda-b(\chi,\tau)} \right)^{-2\ii p} \\
&\quad+ O(M^{-1}),\quad \text{in $L^\infty(\partial D_b(\delta))$ as $M\to+\infty$},
\end{split}
\label{eq:V-F-12-b}
\end{equation}
where we have used the property $\beta=-\alpha^*$. Similarly, we see from \eqref{eq:A-a-11} and \eqref{eq:A-a-12} that $V^{\mathbf{F}}_{12}(\lambda)$ on the circle $\partial D_a(\delta)$ is given by:
\begin{equation}
\begin{split}
V^{\mathbf{F}}_{12}(\lambda) &=
\frac{-s \alpha M^{\ii p} \ee^{2\ii M {h}_a(\chi,\tau)} }{2\ii M^{\frac{1}{2}} f_a(\lambda;\chi,\tau)} L_{-,12}(\lambda)^2 \ee^{2\ii (K_-(\lambda;\chi,\tau) +\mu(\chi,\tau) )} (b(\chi,\tau)-\lambda)^{2\ii p} 
\left( \frac{a(\chi,\tau)-\lambda}{f_a(\lambda;\chi,\tau)} \right)^{-2\ii p}\\
&\quad + 
\frac{s \alpha^* M^{-\ii p} \ee^{-2\ii M {h}_a(\chi,\tau)} }{2\ii M^{\frac{1}{2}} f_a(\lambda;\chi,\tau)} L_{-,11}(\lambda)^2 \ee^{-2\ii (K_-(\lambda;\chi,\tau)+\mu(\chi,\tau) )}  (b(\chi,\tau)-\lambda)^{-2\ii p} 
\left( \frac{a(\chi,\tau)-\lambda}{f_a(\lambda;\chi,\tau)} \right)^{2\ii p}\\
&\quad+ O(M^{-1})\quad \text{in $L^\infty(\partial D_a(\delta))$ as $M\to+\infty$}.
\end{split}
\label{eq:V-F-12-a}
\end{equation}
Note that $f_b(\lambda;\chi,\tau)$ in the leftmost factor of \eqref{eq:V-F-12-b} has a simple zero at $\lambda=b(\chi,\tau)$ and $f_a(\lambda;\chi,\tau)$ in the leftmost factor of \eqref{eq:V-F-12-a} has a simple zero at $\lambda=a(\chi,\tau)$, while the rest of the factors in \eqref{eq:V-F-12-b} and \eqref{eq:V-F-12-a} are holomorphic within the relevant disks.
Recalling the clockwise orientation of the circles $\partial D_{a,b}(\delta)$ and using 
\begin{align}
f_a'(a(\chi,\tau);\chi,\tau)& = - \left(-h_a''(\chi,\tau)\right)^{\frac{1}{2}}  \defeq  -\sqrt{-h''_-(a(\chi,\tau);\chi,\tau)}<0,\\
f_b'(b(\chi,\tau);\chi,\tau)& = h_b''(\chi,\tau)^{\frac{1}{2}}  \defeq  \sqrt{h''(b(\chi,\tau);\chi,\tau)}>0,
\end{align}
a simple residue calculation in \eqref{eq:V-F-12-b}--\eqref{eq:V-F-12-a} yields
\begin{multline}
-\frac{1}{\pi} \int_{D_{a}(\delta)} V^{\mathbf{F}}_{12}(\eta)\dd \eta = \frac{s}{M^{\frac{1}{2}} \left(-h_a''(\chi,\tau)\right)^{\frac{1}{2}}}
\left[ 
\alpha M^{\ii p} \ee^{2\ii M {h}_a(\chi,\tau)} L_{a,12}(\chi,\tau)^2 \ee^{2\ii (K_a(\chi,\tau) + \mu(\chi,\tau) )} X_a(\chi,\tau)^{\ii p}
\right.\\
\left. -\alpha^* M^{-\ii p} \ee^{-2\ii M {h}_a(\chi,\tau)}  
L_{a,11}(\chi,\tau)^2 \ee^{-2 \ii (K_a(\chi,\tau) + \mu(\chi,\tau) )}  X_a(\chi,\tau)^{-\ii p}
\right] + O(M^{-1})\quad\text{and}
\label{eq:V-F-12-a-residue}
\end{multline}
\begin{multline}
-\frac{1}{\pi} \int_{D_{b}(\delta)} V^{\mathbf{F}}_{12}(\eta)\dd \eta =
\frac{s}{M^{\frac{1}{2}}  h_b''(\chi,\tau)^{\frac{1}{2}}}
 \left[ 
\alpha M^{\ii p} \ee^{-2\ii M {h}_b(\chi,\tau)} 
L_{b,11}(\chi,\tau)^2 \ee^{-2 \ii (K_b(\chi,\tau) + \mu(\chi,\tau) )}  X_b(\chi,\tau)^{\ii p}
\right.\\
\left. - \alpha^{*} M^{-\ii p} \ee^{2\ii M {h}_b(\chi,\tau)}
L_{b,12}(\chi,\tau)^2 \ee^{2\ii (K_b(\chi,\tau) + \mu(\chi,\tau) )}  X_b(\chi,\tau)^{-\ii p}
\right] + O(M^{-1}),
\label{eq:V-F-12-b-residue}
\end{multline}
where we have set
\begin{equation}
X_a(\chi,\tau) \defeq - (b(\chi,\tau)-a(\chi,\tau))^2 h_a''(\chi,\tau),\quad\text{and}\quad X_b(\chi,\tau) \defeq  (b(\chi,\tau)-a(\chi,\tau))^2 h_b''(\chi,\tau),
\end{equation}
\begin{equation}
\mathbf{L}_a(\chi,\tau) \defeq  \mathbf{L}_-(a(\chi,\tau);\chi,\tau,\mathbf{Q}^{-s},M), \quad \text{and}\quad \mathbf{L}_b(\chi,\tau) \defeq  \mathbf{L}(b(\chi,\tau);\chi,\tau,\mathbf{Q}^{-s},M),
\end{equation}
and used the notation in \eqref{eq:intro-Ka} and \eqref{eq:intro-Kb}. Recalling the definitions of the four positive modulation factors in \eqref{eq:m-a-b-shelves}, we obtain from \eqref{eq:H-def} the (well-defined) expressions
\begin{equation}
L_{b,11}(\chi,\tau)^2 = \frac{1}{2} + \frac{1}{4} \big( y(b(\chi,\tau);\chi,\tau)^{-2} + y(b(\chi,\tau);\chi,\tau)^{2} \big)
=m^+_b(\chi,\tau),
\label{eq:H-b-11}
\end{equation}
and similarly
\begin{align}
L_{b,12}(\chi,\tau)^2 
&= - m^+_b(\chi,\tau) \ee^{-4\ii (M\kappa(\chi,\tau)+ \mu(\chi,\tau) + \frac{1}{4}s \pi )},
\label{eq:H-b-12}\\
L_{a,11}(\chi,\tau)^2 &=
m^+_a(\chi,\tau),\quad\text{and}
\label{eq:H-a-11}\\
L_{a,12}(\chi,\tau)^2 &= - m^-_a(\chi,\tau) \ee^{-4\ii(M\kappa(\chi,\tau)+ \mu(\chi,\tau) + \frac{1}{4}s \pi )}.\label{eq:H-a-12}
\end{align}
Recalling that $p=\tfrac{\ln(2)}{2\pi}$ and $a(\chi,\tau)<b(\chi,\tau)$ for $(\chi,\tau) \in \shelves$ together with the signs \eqref{eq:fb-prime-h-double-prime} and \eqref{eq:fa-prime-h-double-prime}, we write
\begin{align}
M^{\pm \ii p} &= \ee^{\pm \ii \ln(M) \frac{\ln(2)}{2\pi}},\\
X_a(\chi,\tau)^{\pm \ii p} &= \ee^{\pm \ii \frac{\ln(2)}{2\pi} \ln\left( - (b(\chi,\tau) - a(\chi,\tau) )^2h''_a(\chi,\tau) \right)},\quad\text{and}
\label{eq:log-b-a-1}\\
X_b(\chi,\tau)^{\pm \ii p} &= \ee^{\pm \ii \frac{\ln(2)}{2\pi} \ln\left( (b(\chi,\tau) - a(\chi,\tau) )^2h''_b(\chi,\tau) \right)}.
\label{eq:log-b-a-2}
\end{align}
Now substituting \eqref{eq:H-b-11}--\eqref{eq:log-b-a-2} along with \eqref{eq:Channels-alpha-beta} for $\alpha$ in \eqref{eq:V-F-12-a-residue} and \eqref{eq:V-F-12-b-residue} yields
\begin{equation}
- \frac{1}{\pi}\int_{\partial D_{a}(\delta) \cup \partial D_{b}(\delta)} V^{\mathbf{F}}_{12}(\eta)\dd \eta  = \mathfrak{S}^{[\shelves]}_s(\chi,\tau;M) + O(M^{-1}), \quad M\to+\infty,
\label{eq:subleading-term-shelves-q-proof}
\end{equation}
in which $\mathfrak{S}_s^{[\shelves]}(\chi,\tau;M)$ is the sub-leading term defined in \eqref{eq:subleading-term-shelves-q}. Substituting \eqref{eq:subleading-term-shelves-q-proof} back in \eqref{eq:psi-k-V-12-bun} 
finishes the proof of Theorem~\ref{thm:shelves}. 
\begin{remark} In practice, one needs to evaluate the derivatives $h''_a(\chi,\tau)=h''_-(a(\chi,\tau);\chi,\tau)$ and $h''_b(\chi,\tau)=h''(b(\chi,\tau);\chi,\tau)$ to 
use the approximation given in Theorem~\ref{thm:shelves}. These can be computed in a straightforward manner from \eqref{eq:hprime-formula} and using \eqref{eq:R-a-b}:
\begin{align}
{h}''(b(\chi,\tau);\chi,\tau) &= \frac{2\tau(b(\chi,\tau) - a(\chi,\tau))}{b(\chi,\tau)^2+1}|b(\chi,\tau) - \lambda_0(\chi,\tau)|, \label{eq:h-double-prime-b} \\
{h}_{-}''(a(\chi,\tau);\chi,\tau) &= \frac{- 2\tau(b(\chi,\tau) - a(\chi,\tau))}{a(\chi,\tau)^2+1} | a(\chi,\tau) - \lambda_0(\chi,\tau)|.
\label{eq:h-double-prime-a}
\end{align}
\label{rem:h-double-prime}
\end{remark}

\subsection{Wave-theoretic interpretation of the asymptotic formula for $q(M\chi, M\tau; \mathbf{Q}^{-s}, M)$ in $\shelves$.} 
\label{sec:wave-theoretic-interpretation}
In this subsection we prove Corollary~\ref{cor:shelves-local}.
As we will be working in a relative perturbation regime of the leading term in the large-$M$ asymptotic expansion of $q(M\chi,M\tau; \mathbf{Q}^{-s}, M)$, we compare with the formula \eqref{eq:leading-and-subleading-shelves-rewritten} and write the leading term in the form
\begin{equation}
\mathfrak{L}_s^{[\shelves]}(\chi,\tau;M) =  -\ii s B(\chi,\tau) \ee^{-2\ii \phi(\chi,\tau;M)}.
\label{eq:q-0-bun}
\end{equation}
We then fix $(\chi_0,\tau_0)\in\shelves$, and write $\chi = \chi_0 + \Delta \chi$ and $ \tau = \tau_0 + \Delta \tau$.
Noting that $\Delta \chi = M^{-1} \Delta x$ and $\Delta \tau = M^{-1} \Delta t$, and recalling the assumptions $\Delta x = O(1)$ and $\Delta t = O(1)$, it is easy to see that the phase $\phi(\chi,\tau;M)$ (see \eqref{eq:symmetrical-phases}) admits the following Taylor series expansion about $(\chi,\tau) = (\chi_0, \tau_0)$
\begin{equation}
\begin{split}
\phi(\chi,\tau;M) 
&= M \kappa(\chi_0,\tau_0) + \kappa_\chi (\chi_0,\tau_0)\Delta x + \kappa_\tau (\chi_0,\tau_0)\Delta t + O(M^{-1}\Delta x^2) + O(M^{-1}\Delta x \Delta t) + O(M^{-1}\Delta t^2) \\
&\quad+ \mu(\chi_0,\tau_0) + O(M^{-1}\Delta x) + O(M^{-1}\Delta t)\\
&=\phi(\chi_0,\tau_0; M) + \kappa_\chi (\chi_0,\tau_0)\Delta x + \kappa_\tau (\chi_0,\tau_0)\Delta t + O(M^{-1}),\quad M\to +\infty,
\end{split}
\label{eq:Omega-0-expand-1}
\end{equation}
which implies
\begin{equation}
\ee^{-2\ii \phi(\chi,\tau;M)} = \ee^{-2\ii \phi (\chi_0,\tau_0;M)} \ee^{\ii(\xi_0\Delta x - \Omega_0 \Delta t)} + O(M^{-1}),\quad M\to +\infty,
\label{eq:Omega-0-Taylor-bun}
\end{equation}
in which real local wavenumber $\xi_0$ and real local frequency $\Omega_0$ are defined in \eqref{eq:wavenumbers-intro}--\eqref{eq:frequencies-intro}. On the other hand, Taylor expansion of $B(\chi,\tau)$ in \eqref{eq:q-0-bun} around the same point $(\chi_0,\tau_0)$ gives
\begin{equation}
B(\chi,\tau) = B(\chi_0,\tau_0) + O(M^{-1}),\quad M\to +\infty.
\label{eq:B-Taylor-bun}
\end{equation}
Combining \eqref{eq:Omega-0-Taylor-bun} and \eqref{eq:B-Taylor-bun} in \eqref{eq:q-0-bun} yields the expansion
\begin{equation}
\mathfrak{L}_s^{[\shelves]}(\chi_0+M^{-1}\Delta x,\tau_0+M^{-1}\Delta t;M)
=Q(\Delta x, \Delta t) + O(M^{-1}),\quad M\to +\infty,
\label{eq:Q-0-expand}
\end{equation}
valid uniformly for $(\Delta x,\Delta t)$ bounded, where $Q(\Delta x, \Delta t)$ is given in \eqref{eq:leading-plane-wave-intro}. 
We proceed in a similar manner and obtain Taylor series expansions of the terms in the sub-leading term in  \eqref{eq:leading-and-subleading-shelves-rewritten} around the same fixed point $(\chi_0,\tau) \in \shelves$. Recall the definitions \eqref{eq:symmetrical-phases} of the symmetrical phases $\phi_a$ and $\phi_b$. For bounded $(\Delta x, \Delta t)$ as before, we have
\begin{multline}
\phi_a(\chi,\tau;M) = \phi_a(\chi_0,\tau_0; M) + \left(\Phi^{[\shelves]}_{a,\chi}(\chi_0,\tau_0) + 2\kappa_\chi(\chi_0,\tau_0) \right)\Delta x \\ + \left( \Phi^{[\shelves]}_{a,\chi}(\chi_0,\tau_0) + 2\kappa_\tau(\chi_0,\tau_0) \right) \Delta t + O(M^{-1}).
\end{multline}
Substituting \eqref{eq:Phis-shelves} in this expression and recalling the definitions  \eqref{eq:wavenumbers-intro}--\eqref{eq:frequencies-intro} for the real local wavenumber $\xi_a$ and real local frequency $\Omega_a$ gives
\begin{equation}
\ee^{\pm \ii \phi_a(\chi,\tau;M)} = \ee^{\pm \ii \phi_a(\chi_0,\tau_0;M)} \ee^{\pm \ii (\xi_a \Delta x - \Omega_a \Delta t)} + O(M^{-1}),\quad M\to +\infty.
\label{eq:expand-phi-a}
\end{equation}
An identical calculation for the phase $\phi_b(\chi,\tau;M)$ gives
\begin{equation}
\ee^{\pm \ii \phi_b(\chi,\tau;M)} = \ee^{\pm \ii \phi_b(\chi_0,\tau_0;M)} \ee^{\pm \ii (\xi_b \Delta x - \Omega_b \Delta t)} + O(M^{-1}),\quad M\to +\infty.
\label{eq:expand-phi-b}
\end{equation}
On the other hand, for the amplitude factors in \eqref{eq:leading-and-subleading-shelves-rewritten} we have the expansions 
\begin{align}
m_a^{\pm}(\chi,\tau)F_a^{[\shelves]}(\chi,\tau)&=  m_a^{\pm}(\chi_0,\tau_0)F_a^{[\shelves]}(\chi_0,\tau_0) + O(M^{-1}),\label{eq:expand-m-F-a}\\
m_b^{\pm}(\chi,\tau)F_b^{[\shelves]}(\chi,\tau)&=  m_b^{\pm}(\chi_0,\tau_0)F_b^{[\shelves]}(\chi_0,\tau_0) + O(M^{-1})\label{eq:expand-m-F-b}.
\end{align}
Using \eqref{eq:expand-phi-a}--\eqref{eq:expand-m-F-b} in the sub-leading term $\mathfrak{S}_s^{[\shelves]}(\chi,\tau;M)$ written in the form \eqref{eq:leading-and-subleading-shelves-rewritten}, taking into account the overall multiplicative factor $M^{-\frac{1}{2}}$ in \eqref{eq:leading-and-subleading-shelves-rewritten} for the error terms in \eqref{eq:expand-phi-a}--\eqref{eq:expand-m-F-b}, and factoring out $Q(\Delta x, \Delta t)$ to express the sub-leading term as a relative perturbation results in the expansion \eqref{eq:Q-perturbation-shelves}, which proves the first statement in Corollary~\ref{cor:shelves-local}.

To show that $Q(\Delta x, \Delta t)$ is a plane-wave solution of \eqref{eq:NLS-Deltas}, we need the following lemma concerning the partial derivatives $g_\chi(\lambda;\chi,\tau)$ and $g_\tau(\lambda;\chi,\tau)$.
\begin{lemma}
The partial derivatives $g_{\chi}(\lambda;\chi,\tau)$ and $g_{\tau}(\lambda;\chi,\tau)$ are given for $(\chi,\tau)\in \shelves$ by
\begin{align}
g_{\chi}(\lambda;\chi,\tau) &=  A(\chi,\tau) - \lambda + R(\lambda;\chi,\tau),\label{eq:g-chi}\\
g_{\tau}(\lambda;\chi,\tau) &= A(\chi,\tau)^2 - \frac{1}{2}B(\chi,\tau)^2 - \lambda^2 + (A(\chi,\tau) + \lambda) R(\lambda;\chi,\tau)\label{eq:g-tau}.
\end{align}
Also, the partial derivatives $\kappa_\chi(\chi,\tau)$ and $\kappa_\tau(\chi,\tau)$ are given by
\begin{align}
\kappa_\chi(\chi,\tau) &= A(\chi,\tau)\label{eq:kappa-chi},\\
\kappa_\tau(\chi,\tau) &= A(\chi,\tau)^2 - \frac{1}{2} B(\chi,\tau)^2\label{eq:kappa-tau}.
\end{align}
\label{lemma:g-derivatives}
\end{lemma}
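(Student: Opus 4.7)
The plan is to compute $\partial_\chi h'$ and $\partial_\tau h'$ directly via scalar Riemann--Hilbert analysis on the Riemann surface of $R$, antidifferentiate in $\lambda$ to recover $\partial_\chi h$ and $\partial_\tau h$ up to functions of $(\chi,\tau)$, subtract $\vartheta_\chi=\lambda$ and $\vartheta_\tau=\lambda^2$ to extract $g_\chi$ and $g_\tau$, determine the constants of $\lambda$-integration from the normalization $g\to 0$ as $\lambda\to\infty$, and finally read off $\kappa_\chi$ and $\kappa_\tau$ by differentiating the boundary relation $g_+(\lambda;\chi,\tau)+g_-(\lambda;\chi,\tau)=2\kappa(\chi,\tau)-2\vartheta(\lambda;\chi,\tau)$ along $\Sigma_g$ and using $R_+(\lambda;\chi,\tau)+R_-(\lambda;\chi,\tau)=0$.

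For the $\chi$-derivative, differentiating the three defining properties of $h'$ from Section~\ref{sec:g-function} (jump sum zero on $\Sigma_g$, residues $\pm\ii$ at $\lambda=\pm\ii$, large-$\lambda$ asymptotic $2\tau\lambda+\chi+O(\lambda^{-2})$) gives that $\partial_\chi h'$ is analytic off $\Sigma_g$, has vanishing jump sum on the interior of $\Sigma_g$, is regular at $\lambda=\pm\ii$, and satisfies $\partial_\chi h'(\lambda;\chi,\tau)=1+O(\lambda^{-2})$ as $\lambda\to\infty$. The product $F(\lambda)\defeq R(\lambda;\chi,\tau)\,\partial_\chi h'(\lambda;\chi,\tau)$ then has no jump on $\Sigma_g$ because the sign flips of its two factors cancel, is analytic at $\lambda=\pm\ii$, and is bounded at the branch points $\lambda=A\pm\ii B$ because $R$ vanishes like a square root there while $\partial_\chi h'$ can have at worst reciprocal-square-root growth (inherited from $\chi$-differentiation of the explicit factorization \eqref{eq:hprime-formula} through the branch-point locations). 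Hence $F$ is entire with $F(\lambda)=\lambda+O(1)$ at infinity, so Liouville's theorem forces $F(\lambda)=\lambda+c$; matching $R(\lambda;\chi,\tau)^{-1}=\lambda^{-1}+A\lambda^{-2}+O(\lambda^{-3})$ against the required $\partial_\chi h'=1+O(\lambda^{-2})$ fixes $c=-A$. Antidifferentiation in $\lambda$ yields $\partial_\chi h(\lambda;\chi,\tau)=R(\lambda;\chi,\tau)+C(\chi,\tau)$, and the requirement $g_\chi=\partial_\chi h-\lambda\to 0$ as $\lambda\to\infty$ combined with $R=\lambda-A+O(\lambda^{-1})$ forces $C=A$, producing \eqref{eq:g-chi}. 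Taking boundary values of \eqref{eq:g-chi} on $\Sigma_g$ and summing gives $g_{\chi+}+g_{\chi-}=2A-2\lambda$, which matched against the differentiated boundary relation $2\kappa_\chi-2\lambda$ yields \eqref{eq:kappa-chi}.

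The $\tau$-analogue is structurally identical: $\partial_\tau h'$ satisfies the same jump and residue conditions as $\partial_\chi h'$ but with asymptotic $2\lambda+O(\lambda^{-2})$ at infinity, so $R\cdot\partial_\tau h'$ is entire of polynomial growth $2\lambda^2+O(\lambda)$, hence a quadratic $2\lambda^2+c_1\lambda+c_0$; matching $R^{-1}=\lambda^{-1}+A\lambda^{-2}+(A^2-\tfrac{1}{2}B^2)\lambda^{-3}+O(\lambda^{-4})$ against $\partial_\tau h'=2\lambda+O(\lambda^{-2})$ yields $c_1=-2A$ and $c_0=B^2$. A one-line calculation verifies the antiderivative identity $\partial_\lambda[(A+\lambda)R]=(2\lambda^2-2A\lambda+B^2)/R$, so $\partial_\tau h=(A+\lambda)R+\tilde C(\chi,\tau)$; using $(A+\lambda)R=\lambda^2-A^2+\tfrac{1}{2}B^2+O(\lambda^{-1})$ at infinity and imposing $g_\tau=\partial_\tau h-\lambda^2\to 0$ fixes $\tilde C=A^2-\tfrac{1}{2}B^2$, producing \eqref{eq:g-tau} and then \eqref{eq:kappa-tau} by the same boundary-value argument.

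The main technical point in the plan is the boundedness of $R\cdot\partial_\chi h'$ and $R\cdot\partial_\tau h'$ at the branch points $\lambda=A(\chi,\tau)\pm\ii B(\chi,\tau)$, which themselves depend on the parameters being differentiated. Parameter-differentiation of a factor vanishing like $\sqrt{\lambda-\lambda_0(\chi,\tau)}$ introduces an inverse-square-root singularity with a coefficient proportional to $\partial_{\chi,\tau}\lambda_0$, but the resulting singularity is exactly cancelled by the same square-root vanishing of $R$; making this cancellation rigorous via the explicit representation \eqref{eq:hprime-formula} is the only non-routine step in the argument, and it is what licenses the Liouville-style uniqueness conclusion that drives both derivative computations.
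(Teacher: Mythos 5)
Your proposal is correct and arrives at exactly the formul\ae\ \eqref{eq:g-chi}--\eqref{eq:kappa-tau}, but by a route that differs in mechanism from the paper's. The paper never differentiates $h'$: it differentiates the boundary relation \eqref{eq:hpm-kappa} with respect to the parameters and treats $g_\chi$, $g_\tau$ themselves as the unknowns of a scalar additive Riemann--Hilbert problem on $\Sigma_g$ (analytic off $\Sigma_g$, sum of boundary values equal to $2\kappa_\chi-2\lambda$, resp.\ $2\kappa_\tau-2\lambda^2$, bounded at the endpoints because $g$ behaves like a constant plus $(\lambda-\lambda_0)^{3/2}$ times an analytic factor, and $O(\lambda^{-1})$ at infinity); writing the unknown as $R$ times a Cauchy integral and evaluating by residues gives \eqref{eq:g-chi-integrated}--\eqref{eq:g-tau-integrated} with $\kappa_\chi,\kappa_\tau$ still free, and the normalization at infinity then yields \eqref{eq:kappa-chi}--\eqref{eq:kappa-tau}. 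You instead differentiate the defining properties of $h'$ in the parameters, identify $R\,\partial_\chi h'$ and $R\,\partial_\tau h'$ as polynomials by a Liouville argument, antidifferentiate in $\lambda$, fix the integration constants from $g\to 0$, and only then read off $\kappa_\chi,\kappa_\tau$ from the jump relation. What your route buys is that the solution step is pure rational-function identification (no Plemelj/Cauchy-integral formula is needed), and the explicit forms $\partial_\chi h'=(\lambda-A)/R$ and $\partial_\tau h'=(2\lambda^2-2A\lambda+B^2)/R$ are useful byproducts; what it costs is the extra bookkeeping you correctly flag: the interchange $\partial_\chi\partial_\lambda h=\partial_\lambda\partial_\chi h$ and the smooth parameter dependence of $h$ (the paper dispatches the analogous point for $g$ with Morera's theorem and the real analyticity of $\lambda_0(\chi,\tau)$), the cancellation of the inverse-square-root singularity of $\partial_{\chi,\tau}h'$ against the vanishing of $R$ at $\lambda_0,\lambda_0^*$ (which is immediate from \eqref{eq:hprime-formula} since $R\,\partial_{\chi,\tau}h'$ is then manifestly rational), and the fact that relations differentiated along the moving contour $\Sigma_g$ (both $h'_++h'_-=0$ and $h_++h_-=2\kappa$) must be differentiated at a fixed interior point with the cut held locally fixed, which is legitimate because the placement of the cut between its endpoints is a matter of choice. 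None of these is a gap, but they are the places your write-up should be made explicit; with them stated, the argument is complete and equivalent in strength to the paper's.
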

\begin{proof} 
As $\lambda_0(\chi,\tau)$ is a real analytic function of $\chi$ and $\tau$ for $(\chi,\tau)\in \shelves$, it follows from Morera's theorem that $g_\chi(\lambda;\chi,\tau)$ and $g_\tau(\lambda;\chi,\tau)$ are functions that are analytic for $\lambda\in \mathbb{C}\setminus \Sigma_g$. 
Recall the definition of $\vartheta(\lambda;\chi,\tau)$ from \eqref{eq:vartheta}, and also recall that $g(\lambda;\chi,\tau)$ behaves like the sum of a constant and the product of $(\lambda-\lambda_0)^{\frac{3}{2}}$ with an analytic function in a neighborhood of $\lambda_0$ in $\mathbb{C}\setminus \Sigma_g$ (with the same behavior near $\lambda_0^*$ by symmetry). 
Now it is seen from \eqref{eq:hpm-kappa} that $g_\chi$ can be obtained as the function analytic for $\lambda\in\mathbb{C}\setminus\Sigma_g$ satisfying the jump condition
\begin{equation}
g_{\chi+}(\lambda;\chi,\tau) + g_{\chi-}(\lambda;\chi,\tau) = 2 \kappa_\chi(\chi,\tau) - 2\vartheta_\chi(\lambda;\chi,\tau),\quad \lambda\in\Sigma_g,
\label{eq:g-chi-jump}
\end{equation}
that is bounded at the endpoints $\lambda=\lambda_0(\chi,\tau), \lambda_0(\chi,\tau)^*$ and is normalized as $g_\chi(\lambda;\chi,\tau)=O(\lambda^{-1})$ as $\lambda\to\infty$. Similarly, $g_\tau$ can be obtained as the function analytic in the same domain satisfying the jump condition
\begin{equation}
g_{\tau+}(\lambda;\chi,\tau) + g_{\tau-}(\lambda;\chi,\tau) = 2 \kappa_\tau(\chi,\tau) - 2 \vartheta_\tau(\lambda;\chi,\tau),\quad \lambda\in\Sigma_g,
\label{eq:g-tau-jump}
\end{equation}
and that is again bounded at the endpoints and normalized as $g_\tau(\lambda;\chi,\tau)=O(\lambda^{-1})$ as $\lambda\to\infty$. 
It is easy to see that the unique functions satisfying the analyticity, jump conditions, and boundedness conditions alone are
\begin{align}
g_\chi (\lambda;\chi,\tau) &=  \kappa_{\chi}(\chi,\tau)-\lambda + R(\lambda;\chi,\tau),\quad\lambda\in\mathbb{C}\setminus\Sigma_g,\label{eq:g-chi-integrated}\\
g_\tau(\lambda;\chi,\tau) &=  \kappa_{\tau}(\chi,\tau) - \lambda^{2}  + ( A(\chi,\tau) + \lambda ) R(\lambda;\chi,\tau),\quad\lambda\in\mathbb{C}\setminus\Sigma_g,\label{eq:g-tau-integrated}
\end{align}
for instance by writing $g_\chi$ and $g_\tau$ as $R(\lambda;\chi,\tau)$ times an unknown function in \eqref{eq:g-chi-jump} and \eqref{eq:g-tau-jump} and solving the resulting jump conditions for the new unknowns by a Cauchy integral that can be evaluated by residues. Enforcing the heretofore neglected normalization conditions 
by using the expansion $R(\lambda;\chi,\tau) = \lambda - A(\chi,\tau) + \tfrac{1}{2}B(\chi,\tau)^2 \lambda^{-1} + O(\lambda^{-2})$
as $\lambda\to\infty$ in \eqref{eq:g-chi-integrated} and \eqref{eq:g-tau-integrated} results in the formul\ae\ \eqref{eq:kappa-chi} and \eqref{eq:kappa-tau}.
\end{proof}
\begin{remark}
Since $\kappa(\chi,\tau)$ is a smooth function of both variables, its first order partial derivatives with respect to $\chi$ and $\tau$ commute, which in light of the expressions \eqref{eq:kappa-chi} and \eqref{eq:kappa-tau} implies the partial differential equation
\begin{equation}
\frac{\partial A}{\partial\tau}=\frac{\partial}{\partial\chi}(A^2-\tfrac{1}{2}B^2).
\label{eq:Whitham1}
\end{equation}
Similarly, taking the coefficients $c^{(\chi)}(\chi,\tau)$ and $c^{(\tau)}(\chi,\tau)$ of the (leading) term proportional to $\lambda^{-1}$ in $g_\chi(\lambda;\chi,\tau)$ and $g_\tau(\lambda;\chi,\tau)$ respectively, we obtain the consistency relation $c^{(\chi)}_\tau=c^{(\tau)}_\chi$ which is equivalent to the partial differential equation
\begin{equation}
\frac{\partial}{\partial\tau}B^2=\frac{\partial}{\partial\chi}(2AB^2).
\label{eq:Whitham2}
\end{equation}
Under the identifications $\rho=B^2$ and $U=-2A$, the two equations \eqref{eq:Whitham1}--\eqref{eq:Whitham2} are equivalent to the dispersionless nonlinear Schr\"odinger system (or genus-zero Whitham system) written in \eqref{eq:dispersionless-NLS}.  As a $2\times 2$ quasilinear system, it can be written in Riemann invariant (diagonal) form; 
in particular, the variables $\lambda_0=A+\ii B$ and $\lambda_0^*=A-\ii B$ are Riemann invariants for this system, in terms of which it becomes
\begin{equation}
\begin{split}
\lambda_{0,\tau} + \left(-\tfrac{3}{2}\lambda_0-\tfrac{1}{2}\lambda_0^*\right)\lambda_{0,\chi}^*&=0\\
\lambda_{0,\tau}^* + \left(-\tfrac{1}{2}\lambda_0-\tfrac{3}{2}\lambda_0^*\right)\lambda_{0,\chi}^*&=0.
\end{split}
\end{equation}
Since $\kappa(\chi,\tau)$ and $\gamma(\chi,\tau)$ differ by a constant according to \eqref{eq:kappa-gamma}, this proves Corollary~\ref{cor:Whitham}.
\label{rem:Whitham}
\end{remark}
Substituting \eqref{eq:kappa-chi} and \eqref{eq:kappa-tau} in \eqref{eq:wavenumbers-intro} and \eqref{eq:frequencies-intro}, we see that
\begin{align}
\xi_0 &= -2 A(\chi_0,\tau_0)\label{eq:xi-0-explicit},\\
\Omega_0 &= 2 A(\chi_0,\tau_0)^2 - B(\chi_0,\tau_0)^2\label{eq:Omega-0-explicit}.
\end{align}
Using these expressions and noting from \eqref{eq:leading-plane-wave-intro} that $|\mathcal{A}| = B(\chi_0,\tau_0)$, it is straightforward to show that  the wavenumber $\xi_0$, the frequency $\Omega_0$, and the modulus $|\mathcal{A}|$ of the complex amplitude for $Q(\Delta x, \Delta t)$ satisfy the nonlinear dispersion relation
\begin{equation}
\Omega_0 - \tfrac{1}{2}\xi_0^2 +|\mathcal{A}|^2 = 0.
\label{eq:nls-dispersion}
\end{equation}
This proves that $Q(\Delta x, \Delta t)$ is a plane-wave solution of \eqref{eq:NLS-Deltas}.

We will now prove the claim that each of the functions $p_{a}(\Delta x, \Delta t)$ and $p_b(\Delta x, \Delta t)$ in the expansion \eqref{eq:Q-perturbation-shelves} defines a solution of the linearization \eqref{eq:linearization-intro} of \eqref{eq:NLS-Deltas} about the plane-wave solution $Q(\Delta x,\Delta t)$.
Observe that the expansion \eqref{eq:Q-perturbation-shelves} is of the form \eqref{eq:q-perturb-appendix} in the treatment given in Appendix \ref{A:perturbations} and hence gives a relative perturbation expansion of $Q(\Delta x,\Delta t)$ for $M\gg 1$. We let $r_{a,b}(\Delta x, \Delta t)$ and $s_{a,b}(\Delta x, \Delta t)$ denote the real and imaginary parts of $p_{a,b}(\Delta x, \Delta t)$, respectively. For convenience and brevity in the calculations to come, we set
\begin{equation}
Z_{a,b}^\pm \defeq \pm F_{a,b}^{[\shelves]}(\chi_0,\tau_0)m_{a,b}^\pm(\chi_0,\tau_0).
\label{eq:Z-a-b}
\end{equation}
Then $r_{a,b}(\Delta x, \Delta t)$ and $s_{a,b}(\Delta x, \Delta t)$ are expressed in terms of these quantities as
\begin{align}
r_{a,b}(\Delta x,\Delta t) &= \frac{Z_{a,b}^- - Z_{a,b}^+}{B(\chi_0,\tau_0)}\sin\left(\phi_{a,b}(\chi_0,\tau_0;M)+\xi_{a,b}\Delta x -\Omega_{a,b}\Delta t\right),\\
s_{a,b}(\Delta x,\Delta t) &= \frac{Z_{a,b}^- + Z_{a,b}^+}{B(\chi_0,\tau_0)}\cos\left(\phi_{a,b}(\chi_0,\tau_0;M)+\xi_{a,b}\Delta x -\Omega_{a,b}\Delta t\right).
\end{align}
In view of Appendix~\ref{A:perturbations}, to prove that $p_a(\Delta x, \Delta t)$ and $p_b(\Delta x,\Delta t)$ solve \eqref{eq:linearization-intro},
it suffices to show that the pairs $(r_{a}(\Delta x, \Delta t), s_{a}(\Delta x, \Delta t))$ and $(r_{b}(\Delta x, \Delta t), s_{b}(\Delta x, \Delta t))$ satisfy \eqref{eq:linearized-NLS-r-s} (written in the variables $(\Delta x,\Delta t)$ instead of $(x,t)$). Suppressing the dependencies on the fixed point $(\chi_0,\tau_0)$ for brevity, it is easy to see using $|\mathcal{A}|^2=B^2=B(\chi_0,\tau_0)^2$ that $(r_{a,b}(\Delta x, \Delta t), s_{a,b}(\Delta x, \Delta t))$ satisfies \eqref{eq:linearized-NLS-r-s} if and only if 
\begin{align}
\left( \xi_{a,b}^2 + 2(\xi_0 \xi_{a,b} - \Omega_{a,b}) \right)Z_{a,b}^+ +\left( \xi_{a,b}^2 - 2(\xi_0 \xi_{a,b} - \Omega_{a,b}) \right) Z_{a,b}^- &= 0,\label{eq:F-a-b-pm-sys-1}\\
\left( \xi_{a,b}^2 + 2(\xi_0 \xi_{a,b} - \Omega_{a,b})  - 4 B^2 \right) Z_{a,b}^+ +\left( - \xi_{a,b}^2 + 2(\xi_0 \xi_{a,b} - \Omega_{a,b})  + 4 B^2 \right)Z_{a,b}^- &= 0.\label{eq:F-a-b-pm-sys-2}
\end{align}
Note that \eqref{eq:F-a-b-pm-sys-1}--\eqref{eq:F-a-b-pm-sys-2} 
constitute two homogeneous systems of linear equations for $(Z_{a,b}^+, Z_{a,b}^-)$, one for each choice of subscript $a$, $b$.
These systems have nontrivial solutions if and only if they are singular, which amount to the conditions
\begin{equation}
4 (\xi_0 \xi_{a,b}-\Omega_{a,b} )^2 = \xi_{a,b}^2 \left(\xi_{a,b}^2 - 4 B^2 \right).\label{eq:linearized-dispersion-a-b-bun}
\end{equation}
Again recalling that $B^2=|\mathcal{A}|^2$, these are precisely two instances of the linearized dispersion relation \eqref{eq:linearized-dispersion} to be satisfied by the pairs $(\xi_{a,b}, \Omega_{a,b})$ of relative local wavenumbers and frequencies. We will first show that the conditions \eqref{eq:linearized-dispersion-a-b-bun} hold, and then show that the pairs $(Z_{a,b}^+, Z_{a,b}^-)$
lie in the (nontrivial) nullspaces of the coefficient matrices for the systems \eqref{eq:F-a-b-pm-sys-1}--\eqref{eq:F-a-b-pm-sys-2}.
 To prove \eqref{eq:linearized-dispersion-a-b-bun},
we refer back to Lemma~\ref{lemma:g-derivatives} and use the expression \eqref{eq:g-chi} for $g_\chi(\lambda;\chi,\tau)$ together with $\vartheta_\chi(\lambda;\chi,\tau)=\lambda$ and \eqref{eq:kappa-chi} in the definitions \eqref{eq:wavenumbers-intro} of $\xi_{a,b}$ to see that
\begin{equation}
\xi_a = -2 R_-(a(\chi_0;\tau_0);\chi_0,\tau_0)\quad \text{and}\quad
\xi_b = -2 R(b(\chi_0;\tau_0);\chi_0,\tau_0).
\label{eq:xi-a-b-explicit}
\end{equation}
Similarly, using the expression \eqref{eq:g-tau} for $g_\tau(\chi,\tau)$ together with $\vartheta_\tau(\lambda;\chi,\tau)=\lambda^2$ and \eqref{eq:kappa-chi} in the definitions \eqref{eq:frequencies-intro} of $\Omega_{a,b}$ yields
\begin{equation}
\begin{split}
\Omega_a &= 2(A(\chi_0,\tau_0) + a(\chi_0,\tau_0)) R_-(a(\chi_0;\tau_0);\chi_0,\tau_0),\\
\Omega_b &= 2(A(\chi_0,\tau_0) + b(\chi_0,\tau_0))  R(b(\chi_0;\tau_0);\chi_0,\tau_0).
\label{eq:Omega-a-b-explicit}
\end{split}
\end{equation}
Now, to show that \eqref{eq:linearized-dispersion-a-b-bun} holds, we recall the definition of $\xi_0$ in \eqref{eq:wavenumbers-intro}, and use \eqref{eq:xi-a-b-explicit} and \eqref{eq:Omega-a-b-explicit} to observe that
\begin{equation}
\begin{split}
4 (\xi_0 \xi_a-\Omega_a )^2 
&= 4\left(4 A(\chi_0,\tau_0) R(a(\chi_0,\tau_0); \chi_0, \tau_0) - 2(A(\chi_0, \tau_0)+a(\chi_0, \tau_0)) R_-(a(\chi_0, \tau_0); \chi_0, \tau_0)\right)^{2}\\
&=16 R_-(a(\chi_0, \tau_0); \chi_0, \tau_0)^2 \left(a(\chi_0, \tau_0) - A(\chi_0, \tau_0) \right)^2.
\end{split}
\label{eq:linearized-dispersion-a-LHS}
\end{equation}
Next, the right-hand side of \eqref{eq:linearized-dispersion-a-b-bun} reads
\begin{equation}
\begin{split}
\xi_a^2 \left(\xi_a^2 - 4 B(\chi_0,\tau_0)^2 \right)
&= 4 R_-(a(\chi_0,\tau_0);\chi_0,\tau_0)^2 \left(4 R_-(a(\chi_0,\tau_0);\chi_0,\tau_0)^2 - 4 B(\chi_0,\tau_0)^2 \right)\\
&= 16 R_-(a(\chi_0,\tau_0);\chi_0,\tau_0)^2 \left( a(\chi_0,\tau_0) - A(\chi_0,\tau_0) \right)^2,
\end{split}
\label{eq:linearized-dispersion-a-RHS}
\end{equation}
since $R(\lambda;\chi,\tau)^2 = (\lambda- A(\chi,\tau))^2 + B(\chi,\tau)^2$. The identities \eqref{eq:linearized-dispersion-a-LHS}--\eqref{eq:linearized-dispersion-a-RHS} prove that the linearized dispersion relation \eqref{eq:linearized-dispersion-a-b-bun} holds for $(\xi_a, \Omega_a)$. A completely analogous calculation having the point $b(\chi_0,\tau_0)$ in place of $a(\chi_0,\tau_0)$ shows that \eqref{eq:linearized-dispersion-a-b-bun} holds for $(\xi_b, \Omega_b)$.

As we have now established that the linear systems \eqref{eq:F-a-b-pm-sys-1}--\eqref{eq:F-a-b-pm-sys-2}
 are both singular, it remains to show that the pairs of quantities $(Z_{a,b}^{+}, Z_{a,b}^{-})$
lie in the corresponding nullspaces. 
To do so, it suffices to verify that \eqref{eq:F-a-b-pm-sys-1} 
holds. 
Using the definitions \eqref{eq:m-a-b-shelves} in \eqref{eq:Z-a-b}
and noting that $F^{[\shelves]}_a(\chi_0,\tau_0)$ and $F^{[\shelves]}_b(\chi_0,\tau_0)$ are nonzero, it is seen that verifying \eqref{eq:F-a-b-pm-sys-1}
amounts to showing that
\begin{align}
\cos( \arg(a(\chi_0,\tau_0)-\lambda_0(\chi_0,\tau_0)) ) &= \frac{-2(\xi_0\xi_a - \Omega_a)}{\xi_a^2},\label{eq:show-cos-arg-a-bun}\\
\cos( \arg(b(\chi_0,\tau_0)-\lambda_0(\chi_0,\tau_0)) ) &= \frac{-2(\xi_0\xi_b - \Omega_b)}{\xi_b^2}.\label{eq:show-cos-arg-b-bun}
\end{align}
However, according to 
\eqref{eq:xi-0-explicit}, \eqref{eq:xi-a-b-explicit}, and \eqref{eq:Omega-a-b-explicit}, we obtain for the right-hand side of the purported identities \eqref{eq:show-cos-arg-a-bun}--\eqref{eq:show-cos-arg-b-bun} that
\begin{align}
\frac{-2(\xi_0\xi_a - \Omega_a)}{\xi_a^2}&=\frac{a(\chi_0,\tau_0) - A(\chi_0,\tau_0)}{R_-(a(\chi_0,\tau_0);\chi_0,\tau_0)},\label{eq:show-cos-arg-a-simp}\\
\frac{-2(\xi_0\xi_b - \Omega_b)}{\xi_b^2}&=\frac{b(\chi_0,\tau_0) - A(\chi_0,\tau_0)}{R(b(\chi_0,\tau_0);\chi_0,\tau_0)}.\label{eq:show-cos-arg-b-simp}
\end{align}
Since $R_-(a(\chi_0,\tau_0);\chi_0,\tau_0)$ and $R(b(\chi_0,\tau_0);\chi_0,\tau_0)$ are both positive, while $a(\chi_0,\tau_0)$ and $b(\chi_0,\tau_0)$ are real and $A(\chi_0,\tau_0)=\mathrm{Re}(\lambda_0(\chi_0,\tau_0))$, 
we see that the identities \eqref{eq:show-cos-arg-a-bun}--\eqref{eq:show-cos-arg-b-bun} indeed both hold:
\begin{align}
\frac{-2(\xi_0\xi_a - \Omega_a)}{\xi_a^2}&=\frac{\mathrm{Re}(a(\chi_0,\tau_0)-\lambda_0(\chi_0,\tau_0))}{|a(\chi_0,\tau_0)-\lambda_0(\chi_0,\tau_0)|}=\cos(\arg(a(\chi_0,\tau_0)-\lambda_0(\chi_0,\tau_0))),\\
\frac{-2(\xi_0\xi_b - \Omega_b)}{\xi_b^2}&=\frac{\mathrm{Re}(b(\chi_0,\tau_0)-\lambda_0(\chi_0,\tau_0))}{|b(\chi_0,\tau_0)-\lambda_0(\chi_0,\tau_0)|}=\cos(\arg(b(\chi_0,\tau_0)-\lambda_0(\chi_0,\tau_0))).
\end{align}
Thus, we have shown that $(Z_a^+,Z_a^-)$ and $(Z_b^+,Z_b^-)$ are non-trivial solutions of the linear homogeneous systems \eqref{eq:F-a-b-pm-sys-1}--\eqref{eq:F-a-b-pm-sys-2}.
This implies that  $p_a(\Delta x, \Delta t)$ and  $p_b(\Delta x, \Delta t)$ are solutions of \eqref{eq:linearization-intro}.
\begin{remark}
Requiring instead any of the individual plane waves 
\begin{equation}
\begin{split}
(\Delta x, \Delta t) &\mapsto \frac{\ii F_a^{[\shelves]}(\chi_0,\tau_0)}{B(\chi_0,\tau_0)}m_a^{\pm}(\chi_0,\tau_0)\ee^{\pm \ii\phi_a(\chi_0,\tau_0;M)} \ee^{\pm \ii(\xi_a \Delta x - \Omega_a \Delta t)}\quad\text{or}\\
(\Delta x, \Delta t) &\mapsto \frac{\ii F_b^{[\shelves]}(\chi_0,\tau_0)}{B(\chi_0,\tau_0)}m_b^{\pm}(\chi_0,\tau_0)\ee^{\pm \ii\phi_b(\chi_0,\tau_0;M)} \ee^{\pm \ii(\xi_b \Delta x - \Omega_b \Delta t)}
\end{split}
\end{equation}
in \eqref{eq:p-a-b-shelves} to be solutions of \eqref{eq:linearization-intro} forces $B(\chi_0,\tau_0)=0$, which is a contradiction. Therefore, one indeed needs to form the combinations $p_a(\Delta x, \Delta t)$ and $p_b(\Delta x, \Delta t)$ as in \eqref{eq:p-a-b-shelves}.
\end{remark}
Finally, we show that the relative wavenumbers $\xi_{a,b}$ do not lie in the band of modulational instability $(-2 B(\chi_0,\tau_0), 2 B(\chi_0,\tau_0))$ in view of the well-known theory summarized in Appendix~\ref{A:perturbations}. This result follows from the identities \eqref{eq:xi-a-b-explicit} in a straightforward manner. Indeed,
\begin{equation}
\begin{alignedat}{3}
\xi_a^2 &= 4 \left( a(\chi_0,\tau_0) - A(\chi_0,\tau_0) \right)^2 + 4 B(\chi_0,\tau_0)^2 &&> 4 B(\chi_0,\tau_0)^2,
\\
\xi_b^2 &= 4 \left( b(\chi_0,\tau_0) - A(\chi_0,\tau_0) \right)^2 + 4 B(\chi_0,\tau_0)^2 &&> 4 B(\chi_0,\tau_0)^2.
\end{alignedat}
\end{equation}

\newpage
\appendix
%
%
\section{Relative Perturbations of Plane Waves and Linear Instability Bands}
\label{A:perturbations}
In this section of the Appendix we consider relative perturbations of a plane-wave solution $q=Q(x,t):= \mathcal{A} \ee^{\ii (\xi_0 x - \Omega_0 t)}$ of the focusing nonlinear Schr\"odinger equation in the form \eqref{eq:NLS-ZBC}, 
having complex-valued amplitude $\mathcal{A}$, wavenumber $\xi_0\in\mathbb{R}$, and frequency $\Omega_0\in\mathbb{R}$ necessarily linked by the nonlinear dispersion relation $\Omega_0 -\tfrac{1}{2}\xi_0^2 + |\mathcal{A}|^2=0$. Requiring more generally that 
\begin{equation}
q(x,t) = Q(x,t)(1 + \varepsilon p(x,t) + o(\varepsilon)),\quad 0< \varepsilon \ll 1,
\label{eq:q-perturb-appendix}
\end{equation}
also solves \eqref{eq:NLS-ZBC} 
and formally retaining terms up to $o(\varepsilon)$ yields the differential equation
\begin{equation}
\ii p_t + \ii \xi_0 p_x + \tfrac{1}{2}p_{xx} |\mathcal{A}|^2(p + p^*) = 0,
\label{eq:linearized-NLS}
\end{equation} 
which is real-linear, but not complex-linear. We split $p(x,t)$ into its real and imaginary parts:
\begin{equation}
r(x,t) := \tfrac{1}{2}(p(x,t) + p(x,t)^*)\quad \text{and} \quad s(x,t) := -\ii\tfrac{1}{2}(p(x,t) - p(x,t)^*),
\end{equation}
giving rise to the following system of coupled linear differential equations with real-valued coefficients for $(r,s)$:
\begin{equation}
\begin{aligned}
r_t + \xi_0 r_x + \tfrac{1}{2} s_{xx} &=0,\\
s_t + \xi_0 s_x - \tfrac{1}{2} r_{xx} - 2 |\mathcal{A}|^2 r &=0.\\
\end{aligned}
\label{eq:linearized-NLS-r-s}
\end{equation}
We will now carry out a Fourier analysis to determine the instability bands for relative perturbations $p(x,t)$ of $Q(x,t)$. To this end, we suppose $p(x,t)$ is a plane-wave solution of \eqref{eq:linearized-NLS}. For convenience we drop for the moment the reality condition for $(r,s)$ and work with the ansatz
\begin{equation}
\begin{bmatrix}
r(x,t) \\ s(x,t)
\end{bmatrix}
:= \begin{bmatrix}
\alpha \\ \beta
\end{bmatrix}\ee^{\ii (\xi x - \Omega t)}
\label{eq:r-s-ansatz}
\end{equation}
for some complex constants $\alpha$ and $\beta$, and $\xi\in\mathbb{R}$, $\Omega\in\mathbb{R}$. Substituting \eqref{eq:r-s-ansatz} in \eqref{eq:linearized-NLS-r-s} yields the homogenous linear algebraic system
\begin{equation}
\begin{bmatrix}
-\ii(\Omega -\xi_0 \xi) & -\frac{1}{2}\xi^2 \\  -\frac{1}{2}\xi^2 + 2 |\mathcal{A}|^2 & \ii(\Omega - \xi_0 \xi)
\end{bmatrix}
\begin{bmatrix}
\alpha \\ \beta
\end{bmatrix}
=
\begin{bmatrix}
0 \\ 0
\end{bmatrix}.
\end{equation}
This system has a nontrivial solution $(\alpha,\beta)$ if and only if 
\begin{equation}
4 (\Omega - \xi_0 \xi)^2 = \xi^2 \left(\xi^2 - 4|\mathcal{A}|^2 \right),
\label{eq:linearized-dispersion}
\end{equation}
which is the linearized dispersion relation for the relative wavenumber $\xi$ and the relative frequency $\Omega$. As $\xi,\xi_0\in\mathbb{R}$, we see that $\Im(\Omega) \neq 0$ if $\xi^2< 4|\mathcal{A}|^2$. Therefore, the plane-wave solutions of \eqref{eq:linearized-NLS-r-s} with relative wavenumbers $\xi$ lying in the band $(-2 |\mathcal{A}|, 2 |\mathcal{A}|)$ exhibit exponential growth in time $t>0$.  This is the well-known modulational (or sideband, or Benjamin-Feir) instability of plane-wave solutions for the focusing nonlinear Schr\"odinger equation.

\section{Proofs of Some Elementary Results}
\label{A:Proofs}
\subsection{Symmetries of $q(x,t;\mathbf{Q}^{-s},M)$:  Proof of Proposition~\ref{prop:symmetry}}
\begin{proof}[Proof of Proposition~\ref{prop:symmetry}]
For this proof, we assume without loss of generality that $\Sigma_\circ$ is a circle centered at the origin of arbitrary radius $r$ greater than $1$ with clockwise orientation.
Taking $\mathbf{G}=\mathbf{Q}^{-s}$ for $s=\pm 1$, the jump condition \eqref{eq:P-bulk-jump} in Riemann-Hilbert Problem~\ref{rhp:rogue-wave-reformulation} for $\mathbf{P}(\lambda;x,t)=\mathbf{P}(\lambda;x,t,\mathbf{Q}^{-s},M)$ can be written as 
\begin{equation}
\mathbf{P}_+(\lambda;x,t)=\mathbf{P}_-(\lambda;x,t)\ee^{-\ii\theta(\lambda;x,t)\sigma_3}
B(\lambda)^{M\sigma_3}
\mathbf{Q}^{-s}
B(\lambda)^{-M\sigma_3}
\ee^{\ii\theta(\lambda;x,t)\sigma_3},
\label{eq:P-bulk-jump-rewrite}
\end{equation}
where $\theta(\lambda;x,t)\defeq\lambda x+\lambda^2t$. 
Define $\mathbf{R}(\lambda;x,t)$ in terms of $\mathbf{P}(\lambda;x,t)$ by
\begin{equation}
\mathbf{R}(\lambda;x,t)\defeq\begin{cases}
-s\sigma_3\mathbf{P}(\lambda;x,t)\ee^{-2\ii\theta(\lambda;x,t)\sigma_3}\sigma_1,&\quad |\lambda|<r,
\\
\sigma_3\mathbf{P}(\lambda;x,t)
B(\lambda)^{2M\sigma_3}
\sigma_3,&\quad |\lambda|>r.
\end{cases}
\end{equation} 
$\mathbf{R}(\lambda;x,t)$ is obviously analytic for $\lambda\in\mathbb{C}\setminus\Sigma_\circ$, and since 
powers of $B(\lambda)$ tend to $1$ 
as $\lambda\to\infty$ we have $\mathbf{R}(\lambda;x,t)\to\mathbb{I}$ as $\lambda\to\infty$.  To compute the jump across the circle $\Sigma_\circ$, we use the jump condition \eqref{eq:P-bulk-jump} for $\mathbf{P}(\lambda;x,t)$ to obtain, using \eqref{eq:Q-def} in the last step,
\begin{equation}
\begin{split}
\mathbf{R}_+(\lambda;x,t)&=\sigma_3\mathbf{P}_+(\lambda;x,t)
B(\lambda)^{2M\sigma_3}
\sigma_3\\
&=\sigma_3\mathbf{P}_-(\lambda;x,t)\ee^{-\ii\theta(\lambda;x,t)\sigma_3}
B(\lambda)^{M\sigma_3}
\mathbf{Q}^{-s}\sigma_3 
B(\lambda)^{M\sigma_3}
\ee^{\ii\theta(\lambda;x,t)\sigma_3}\\
&=-s\mathbf{R}_-(\lambda;x,t)\sigma_1\ee^{\ii\theta(\lambda;x,t)\sigma_3}
B(\lambda)^{M\sigma_3}
\mathbf{Q}^{-s}\sigma_3
B(\lambda)^{M\sigma_3}
\ee^{\ii\theta(\lambda;x,t)\sigma_3}\\
&=\mathbf{R}_-(\lambda;x,t)\ee^{-\ii\theta(\lambda;x,t)\sigma_3}
B(\lambda)^{-M\sigma_3}
\left[-s\sigma_1\mathbf{Q}^{-s}\sigma_3\right]
B(\lambda)^{M\sigma_3}
\ee^{\ii\theta(\lambda;x,t)\sigma_3}\\
&=\mathbf{R}_-(\lambda;x,t)\ee^{-\ii\theta(\lambda;x,t)\sigma_3}
B(\lambda)^{-M\sigma_3}
\mathbf{Q}^{-s}
B(\lambda)^{M\sigma_3}
\ee^{\ii\theta(\lambda;x,t)\sigma_3}.
\end{split}
\end{equation}
Now $\theta(\lambda;-x,t)=\theta(-\lambda;x,t)$
and $B(\lambda)=B(-\lambda)^{-1}$.
Therefore, we see that $\mathbf{P}(\lambda;-x,t)$ and $\mathbf{R}(-\lambda;x,t)$ satisfy exactly the same analyticity, jump, and normalization conditions and therefore by uniqueness $\mathbf{P}(\lambda;-x,t)=\mathbf{R}(-\lambda;x,t)$.  Thus,
\begin{equation}
\begin{split}
q(-x,t;\mathbf{Q}^{-s},M)&=2\ii\lim_{\lambda\to\infty}\lambda P_{12}(\lambda;-x,t)\\
&=2\ii\lim_{\lambda\to\infty}\lambda R_{12}(-\lambda;x,t)\\
&=-2\ii\lim_{\lambda\to\infty}\lambda R_{12}(\lambda;x,t)\\
&=2\ii\lim_{\lambda\to\infty}\lambda P_{12}(\lambda;x,t)\\
&=q(x,t;\mathbf{Q}^{-s},M).
\end{split}
\end{equation}
Since 
$B(-\lambda^*)=B(\lambda)^*$,
it is even easier to see that $\mathbf{P}(\lambda;x,-t)$ and $\mathbf{P}(-\lambda^*;x,t)^*$ solve the same Riemann-Hilbert problem and hence are equal.  Therefore
\begin{equation}
\begin{split}
q(x,-t;\mathbf{Q}^{-s},M)&=2\ii\lim_{\lambda\to\infty}\lambda P_{12}(\lambda;x,-t)\\
&=2\ii\lim_{\lambda\to\infty}\lambda P_{12}(-\lambda^*;x,t)^*\\
&=-2\ii\lim_{\lambda\to\infty}\left[\lambda^* P_{12}(\lambda^*;x,t)\right]^*\\
&=\left[2\ii\lim_{\lambda\to\infty}\lambda P_{12}(\lambda;x,t)\right]^*\\
&=q(x,t;\mathbf{Q}^{-s},M)^*.
\end{split}
\end{equation}
This completes the proof of Proposition~\ref{prop:symmetry}.
\end{proof}

\subsection{Continuation of $u(\chi,\tau)$ to $\overline{\exterior\cup\shelves}$:  Proof of Proposition~\ref{prop:u}}
\begin{proof}[Proof of Proposition~\ref{prop:u}]
We first examine $P(u;\chi,\tau)$ near the positive $\chi$ and $\tau$ axes.  
\begin{lemma}
Fix $\chi>0$.  Then for $\tau>0$ sufficiently small there exists a unique and simple real root of $P(u;\chi,\tau)$.
\label{lem:tau-small}
\end{lemma}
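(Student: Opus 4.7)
The proof will proceed by analyzing each cluster of roots of the unperturbed polynomial $P(u;\chi,0) = (u-\chi)(3u-\chi)^4 u^2$ separately. For $\chi>0$, this factorization exhibits three distinct real roots: a simple root at $u=\chi$, a quadruple root at $u=\chi/3$, and a double root at $u=0$, accounting for all seven roots counted with multiplicity. The plan is to show that the simple root persists as a unique simple real root, while the other two clusters become entirely non-real under perturbation by $\tau>0$.

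The easy part is the simple root at $u=\chi$, where $P_u(\chi;\chi,0) = -(\chi)(2\chi)^4(\chi)^2\cdot(\text{nonzero factor})\neq 0$ for $\chi>0$, so the implicit function theorem produces a unique real-analytic branch $u=u(\chi,\tau)$ with $u(\chi,0)=\chi$ that remains simple for $|\tau|$ small. For the cluster at $u=\chi/3$, I would introduce the scaling $u=\chi/3+\tau v$ and compute the joint Taylor expansion of $P$ in $(u,\tau)$ based at $(\chi/3,0)$. Letting $A(u;\chi)$ and $B(u;\chi)=16u^2(u-\chi)$ denote the coefficients of $\tau^2$ and $\tau^4$ in $P(u;\chi,\tau)$, the key computational identities are
\begin{equation}
A(\chi/3;\chi)=0,\quad A'(\chi/3;\chi)=0,\quad A''(\chi/3;\chi)=-\tfrac{32}{3}\chi^3,\quad B(\chi/3;\chi)=-\tfrac{32}{27}\chi^3.
\end{equation}
Combined with the expansion $(u-\chi)(3u-\chi)^4u^2=-6\chi^3 w^4+O(w^5)$ where $w=u-\chi/3$, these yield the remarkable cancellation
\begin{equation}
P(\chi/3+\tau v;\chi,\tau) = -6\chi^3\tau^4\bigl(v^2+\tfrac{4}{9}\bigr)^2 + O(\tau^5),
\end{equation}
locally uniformly in $v$. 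By Rouch\'e's theorem applied to a small fixed disk around each of $v=\pm 2\ii/3$, the four roots of $P$ near $u=\chi/3$ lie within $O(\tau^{1/2})$ of the non-real points $\chi/3\pm\tfrac{2}{3}\ii\tau$ and are therefore non-real for $\tau>0$ sufficiently small.

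For the double root at $u=0$, the analogous scaling $u=\tau v$ together with $P(0;\chi,\tau)=-16\chi^3\tau^2$ and the leading behavior $-\chi^5 w^2$ of the unperturbed polynomial produces
\begin{equation}
P(\tau v;\chi,\tau) = -\chi^5\tau^2\bigl(v^2+\tfrac{16}{\chi^2}\bigr)+O(\tau^3),
\end{equation}
so the two roots in this cluster lie near $v=\pm 4\ii/\chi$ and are again non-real for small $\tau>0$. Since the three clusters account for all seven roots of the septic $P(u;\chi,\tau)$ by continuity of the roots in $\tau$, we conclude that for each fixed $\chi>0$ and all sufficiently small $\tau>0$, the continuation of $u=\chi$ is the unique real root of $P$, and it is simple.

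The main obstacle is the Taylor-coefficient computation at $u=\chi/3$: because this is a root of multiplicity four, the naive rescaling $u=\chi/3+\tau^\alpha v$ only has a chance of producing a balanced leading-order equation when the coefficients satisfy several simultaneous vanishing conditions, and these must be verified by direct algebraic calculation. Once the perfect-square structure $(v^2+4/9)^2$ is exposed, the rest of the argument is a routine application of Rouch\'e's theorem; without it, one might mistakenly suspect that the quadruple root could split off real pairs under perturbation.
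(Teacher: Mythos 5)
Your proof is correct and takes essentially the same route as the paper's: the paper likewise continues the simple root $u=\chi$ by the implicit function theorem and uses the rescalings $u=\tfrac{1}{3}\chi+\tau\zeta$ and $u=\tau\zeta$ to obtain the leading forms $-\tfrac{2}{27}\chi^3(9\zeta^2+4)^2$ and $-\chi^5\zeta^2-16\chi^3$, which coincide with your $-6\chi^3\bigl(v^2+\tfrac{4}{9}\bigr)^2$ and $-\chi^5\bigl(v^2+\tfrac{16}{\chi^2}\bigr)$, and concludes non-reality of the remaining six roots exactly as you do (your explicit Taylor-coefficient identities and Rouch\'e step just make the dominant-balance bookkeeping more explicit). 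One small imprecision: in the $u$-variable the four roots lie within $O(\tau^{3/2})$, not $O(\tau^{1/2})$, of $\tfrac{1}{3}\chi\pm\tfrac{2}{3}\ii\tau$, but this does not matter because your fixed-disk Rouch\'e argument in the $v$-plane already forces $\mathrm{Im}(v)$ to stay near $\pm\tfrac{2}{3}$, hence $\mathrm{Im}(u)=\tau\,\mathrm{Im}(v)\neq 0$ for small $\tau>0$.
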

\begin{proof}
Since the simple root at $u=\chi$ for $\tau=0$ persists for small $\tau$, we need to show that the four roots of $P(u;\chi,0)$ near $u=\tfrac{1}{3}\chi$ and the two roots of $P(u;\chi,0)$ near $u=0$ become complex roots of $P(u;\chi,\tau)$ for $\tau\neq 0$ small.  

To study the roots of $P(u;\chi,\tau)$ near $u=\tfrac{1}{3}\chi$, we set $u=\tfrac{1}{3}\chi + \Delta u$ and then express $P(u;\chi,\tau)$ in terms of $\Delta u$:
\begin{multline}
27 P(\tfrac{1}{3}\chi+\Delta u;\chi,\tau)=2187\Delta u^7 - 243(3\chi^2-8\tau^2)\Delta u^5-162\chi^3\Delta u^4 \\{}+ 216\tau^2(54-3\chi^2+2\tau^2)\Delta u^3
-144\chi^3\tau^2\Delta u^2-144\chi^2\tau^4\Delta u-32\chi^3\tau^4.
\end{multline}
For small $\tau$, the dominant balance in which $\Delta u$ is also small is $\Delta u=\tau\zeta$ for a new unknown $\zeta=O(1)$ as $\tau\to 0$.  Then we find that we can divide by $\tau^4$ for $\tau\neq 0$ and obtain
\begin{equation}
27 \tau^{-4}P(\tfrac{1}{3}\chi +\tau\zeta;\chi,\tau)=-2\chi^3 (9\zeta^2+4)^2 + O(\tau),\quad\tau\to 0,\quad\tau\neq 0.
\end{equation}
So, to leading order, we have double purely imaginary roots at $\zeta=\pm\tfrac{2}{3}\ii$, which can split apart at higher order in $\tau$.  This shows that the four roots of $P(u;\chi,\tau)$ near $u=\tfrac{1}{3}\chi$ for small $\tau$ have nonzero imaginary parts.  

To study the roots of $P(u;\chi,\tau)$ near $u=0$ we observe that the dominant balance in $P(u;\chi,\tau)=0$ in which $u$ and $\tau$ are both small for $\chi>0$ fixed occurs with $u=\tau\zeta$ with new unknown $\zeta=O(1)$ as $\tau\to 0$.  Dividing by $\tau^2$ after the substitution yields
\begin{equation}
\tau^{-2}P(\tau\zeta;\chi,\tau)=-(\chi^5+ 8\chi(54+\chi^2)\tau^2+16\chi\tau^4)\zeta^2-16\chi^3 + O(\tau),\quad\tau\to 0,\quad\tau\neq 0.
\end{equation}
So, to leading order, we have a purely imaginary pair of simple roots at $\zeta=\pm 4\ii\chi^{-1}$.  This shows that the two roots of $P(u;\chi,\tau)$ near $u=0$ for small $\tau$ have nonzero imaginary parts.
\end{proof}

\begin{lemma}
Fix $\tau>0$.  Then for $\chi>0$ sufficiently small there exists a unique and simple real root of $P(u;\chi,\tau)$.
\label{lem:chi-small}
\end{lemma}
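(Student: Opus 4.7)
The plan is to imitate the structure of the proof of Lemma~\ref{lem:tau-small} but with the roles of $\chi$ and $\tau$ interchanged: first analyze $P(u;0,\tau)$ explicitly, then perturb.  Setting $\chi=0$ in the definition of $P(u;\chi,\tau)$, only the three terms proportional to $u^7$, $u^5$, and $u^3$ survive, so
\begin{equation}
P(u;0,\tau)=u^{3}\bigl(81u^{4}+72\tau^{2}u^{2}+16\tau^{4}+432\tau^{2}\bigr).
\end{equation}
The quartic factor, viewed as a quadratic in $v=u^{2}$, has discriminant
$72^{2}\tau^{4}-4\cdot 81\cdot(16\tau^{4}+432\tau^{2})=-139968\,\tau^{2}<0$
for $\tau>0$, so the quartic is strictly positive on $\mathbb{R}$.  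Hence $P(\,\cdot\,;0,\tau)$ has exactly one real root, namely $u=0$, which occurs with multiplicity three, and four non-real roots that persist as non-real roots for small $\chi>0$ by continuous dependence.

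To resolve what becomes of the triple root at $u=0$, I would seek the correct dominant balance.  The constant term of $P(u;\chi,\tau)$ is $-16\chi^{3}\tau^{2}$ and the coefficients of $u,u^{2},u^{3}$ all have a term of size $\tau^{2}$ with no smaller factor of $\chi$, so the natural scaling is $u=\chi\zeta$.  Substituting and dividing by $\chi^{3}$ yields
\begin{equation}
\chi^{-3}P(\chi\zeta;\chi,\tau)=16\tau^{2}\bigl[(\tau^{2}+27)\zeta^{3}-(\tau^{2}+27)\zeta^{2}+9\zeta-1\bigr]+O(\chi^{2}),\quad\chi\to 0.
\end{equation}
A direct calculation of the cubic discriminant of the bracketed polynomial in $\zeta$, with coefficients $(A,B,C,D)=(\tau^{2}+27,-(\tau^{2}+27),9,-1)$, gives
\begin{equation}
\Delta=18ABCD-4B^{3}D+B^{2}C^{2}-4AC^{3}-27A^{2}D^{2}=-4\tau^{4}(\tau^{2}+27)<0
\end{equation}
for $\tau>0$, so the leading cubic in $\zeta$ has exactly one real root, which is necessarily simple.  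The implicit function theorem then produces a unique simple real root $u(\chi,\tau)=\chi\zeta(\chi,\tau)$ of $P(\,\cdot\,;\chi,\tau)$ near $u=0$ for all sufficiently small $\chi>0$ (with $\tau>0$ fixed).  Combining this with the preceding observation that the four non-real roots remain non-real for small $\chi$ completes the proof.

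I do not anticipate any serious obstacle here: the algebra is of the same flavor as in Lemma~\ref{lem:tau-small}, and the only nontrivial step is the discriminant computation, which is a finite symbolic calculation.  The one thing to be watchful about is the possibility that the rescaling $u=\chi\zeta$ might miss a second branch of real roots emerging from $u=0$ on a different scale; the explicit factorization $P(u;0,\tau)=u^{3}\cdot(\text{positive quartic})$ and the fact that the triple root at the origin has multiplicity exactly three rule this out, since any real root of $P(\,\cdot\,;\chi,\tau)$ near $u=0$ for small $\chi$ must converge to $u=0$ and hence is captured by the chosen balance.
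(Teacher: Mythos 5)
Your proposal is correct and follows essentially the same path as the paper: factor $P(u;0,\tau)=u^3(81u^4+72\tau^2u^2+432\tau^2+16\tau^4)$ with a positive quartic factor, rescale $u=\chi\zeta$, and unfold the triple root via the limiting cubic (your bracketed cubic is exactly the paper's $P_0(\zeta;\tau)$ divided by $16\tau^2$). The only divergence is how the root configuration of that cubic is settled: you compute its discriminant to be $-4\tau^4(\tau^2+27)<0$ and invoke the standard sign criterion to get exactly one simple real root at once, whereas the paper only notes the discriminant is nonvanishing (proportional to $27\tau^{12}+\tau^{14}$, consistent with your value after restoring the factor $(16\tau^2)^4$) and then pins down the configuration by continuation in $\tau$ together with a $\tau\to\infty$ asymptotic analysis; your shortcut is legitimate and a bit more direct, though the paper's longer route also delivers the bound $\tfrac{1}{3}<\zeta(\tau)<1$, which is used afterwards for the expansions of $v$, $A$, and $B^2$ as $\chi\downarrow 0$ but is not needed for the lemma itself. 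Your closing remark about roots on other scales is handled at the same level of rigor as the paper (a Rouch\'e/Hurwitz argument for the rescaled degree-$7$ polynomial on a fixed disk, or the observation that $P_0(0;\tau)=-16\tau^2\neq 0$ and the leading coefficient of $P_0$ is nonzero, makes it airtight), so there is no gap.
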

\begin{proof}
It is easy to see that since $P(u;0,\tau)=u^3(81u^4+72\tau^2u^2 + 432\tau^2+16\tau^4)$, for all $\tau>0$, $P(u;0,\tau)$ has a triple root at $u=0$ and no other real roots.  To unfold the triple root for small $\chi$, set $u=\chi \zeta$ and assume that the new unknown $\zeta$ is bounded as $\chi\downarrow 0$.  Thus one finds that one may divide by $\chi^3$ for $\chi\neq 0$ and obtain
\begin{equation}
\chi^{-3}P(\chi \zeta;\chi,\tau)=P_0(\zeta;\tau) + O(\chi^2),\quad\chi\downarrow 0,\quad \chi\neq 0,
\end{equation}
where $P_0$ is a cubic polynomial in $\zeta$:
\begin{equation}
P_0(\zeta;\tau)\defeq (432\tau^2+16\tau^4)\zeta^3-(432\tau^2+16\tau^4)\zeta^2+144\tau^2\zeta-16\tau^2.
\end{equation}
The discriminant of $P_0(\zeta;\tau)$ is proportional to $27\tau^{12}+\tau^{14}$ which vanishes for no $\tau>0$.  Therefore as $\tau$ varies between $\tau=0$ and $\tau=+\infty$, the root configuration of $P_0(\zeta;\tau)$ (i.e., three real roots or one real root with a complex-conjugate pair with nonzero imaginary part) persists for all $\tau>0$.  In the limit $\tau\to\infty$, the dominant terms in $P_0(\zeta;\tau)$ are $16\tau^4(\zeta^3-\zeta^2)$, so there is one real root near $\zeta=1$ and two small roots of size $\zeta=O(\tau^{-1})$.  We may write $\zeta=\tau^{-1}\zeta_1$ to separate them:
\begin{equation}
P_0(\tau^{-1}\zeta_1;\tau)=-16\tau^2(\zeta_1^2+1)+O(\tau),\quad\tau\to\infty,
\end{equation}
so $\zeta_1=\pm\ii + O(\tau^{-1})$ as $\tau\to\infty$.  Therefore, $P_0(\zeta;\tau)$ has a unique simple real root denoted $\zeta(\tau)$ and a conjugate pair of complex roots for all $\tau>0$.  It is easy to see that
\begin{equation}
\zeta(\tau)=\frac{1}{3}+O(\tau^2),\quad\tau\downarrow 0\quad\text{and}\quad
\zeta(\tau)=1+O(\tau^{-2}),\quad\tau\uparrow\infty.
\end{equation}
Since neither $P_0(\tfrac{1}{3};\tau)=-\tfrac{32}{27}\tau^4$ nor $P_0(1;\tau)=128\tau^2$ can vanish for any $\tau>0$, it then follows that $\tfrac{1}{3}<\zeta(\tau)<1$ holds for all $\tau>0$.  
This proves that the triple root of $P(u;0,\tau)$ originates in the limit $\chi\downarrow 0$ as the collision of a conjugate pair of roots and a real simple root having the expansion
\begin{equation}
u(\chi,\tau)=\chi \zeta(\tau)+O(\chi^3),\quad\chi\downarrow 0,\quad\tau>0.
\end{equation}
Since the remaining quartet of complex roots of $P(u;\chi,\tau)$ for $\chi=0$ remains complex for small $\chi$, the proof is finished.
\end{proof}
It then follows via \eqref{eq:eliminate-v}
that 
\begin{equation}
v(\chi,\tau)=2\tau\frac{2\zeta(\tau)-1}{3\zeta(\tau)-1}+O(\chi^2),\quad\chi\downarrow 0,\quad\tau>0,
\end{equation}
and then from \eqref{eq:eliminate-AB},
\begin{equation}
A(\chi,\tau)=O(\chi)\quad\text{and}\quad B(\chi,\tau)^2=\frac{2\zeta(\tau)}{3\zeta(\tau)-1} + O(\chi^2),\quad\chi\downarrow 0,\quad \tau>0.
\end{equation}
Note that $B(0,\tau)^2>1$ for all $\tau>0$, and that $B(0,\tau)^2\to 1$ as $\tau\uparrow +\infty$ while $B(0,\tau)^2\to +\infty$ as $\tau\downarrow 0$.

By Lemma~\ref{lem:tau-small} and Lemma~\ref{lem:chi-small}, $P(u;\chi,\tau)$ has a unique simple real root $u=u(\chi,\tau)$ for $(\chi,\tau)$ in the open first quadrant near each of the coordinate axes.  Next we show that this situation persists throughout $(\mathbb{R}_{>0}\times\mathbb{R}_{>0})\setminus\overline{\channels}$ by studying the resultant $\delta(\chi,\tau)$ of $P(u;\chi,\tau)$ and $P'(u;\chi,\tau)$ with respect to $u$, a polynomial in $(\chi,\tau)$ the zero locus of which detects repeated roots $u$ of $P(u;\chi,\tau)$. 
We consider the renormalized resultant
\begin{equation}
\delta^\mathrm{R}\defeq\frac{\delta(\chi,\tau)}{c \tau^{16}\chi^{6}},
\end{equation}
where $c \defeq 539122498937926189056$ is a constant that factors out of $\delta(\chi,\tau)$ along with the product $\tau^{16}\chi^{6}$. The renormalized resultant $\delta^\mathrm{R}$ is even in $\chi$ and $\tau$ and so can be expressed as
\begin{equation}
\begin{aligned}
\delta^\mathrm{R}(X,T) =& 16 X^{7} + 304 T X^{6} + 24 T (98 T + 1011 ) X^{5} + T (9488 T^2  - 380376 T -19683) X^4\\
&+64 T^2 (332 T^2 - 18009 T + 57645)X^3+ 
384T^2 (68 T^3 - 2553 T^2 + 159246 T - 59049 )X^2\\
&+16384  T^3  (T-54)(T+27)^2 X + 4096 T^3 (T+27)^4,
\end{aligned}
\end{equation}
where $X=\chi^2$ and $T=\tau^2$.
Since we have already shown that $P(u;\chi,\tau)$ has a unique and simple real root for $(\chi,\tau)$ near the coordinate axes, we can study the equation $\delta^\mathrm{R}(\chi^2,\tau^2)=0$ rather than $\delta(\chi,\tau)=0$.  If, as $(\chi,\tau)$ is taken out of one or the other region near the axes where it is known that $P(u;\chi,\tau)$ has a unique real and simple root, $P(u;\chi,\tau)$ does not acquire any repeated roots then in particular it does not acquire any repeated real roots and hence the number of real roots cannot change.  

Therefore, it would be sufficient to prove that the renormalized resultant $\delta^\mathrm{R}(\chi^2,\tau^2)$ does not vanish in the unbounded region $(\chi,\tau)\in(\mathbb{R}_+\times\mathbb{R}_+)\setminus\overline{\channels}$.  With this goal in mind, we view $\delta^\mathrm{R}(X,T)$ as a polynomial in $X$ with coefficients polynomial in $T$.
Then it is easy to see that for $T=\tau^2$ sufficiently large, all of the coefficients of powers of $X$ in $\delta^\mathrm{R}(X,T)$ are positive, so there are no nonnegative roots $X\ge 0$ of $\delta^\mathrm{R}(X,T)$.  Looking on the $T$-axis, we see that $\delta^\mathrm{R}(0,T)=4096T^3(T+27)^4$, which does not vanish for any $T=\tau^2>0$.  Therefore, as $\tau$ is decreased, the only way a positive value of $X>0$ for which $\delta^\mathrm{R}(X,T)=0$ can occur is if first there is a positive repeated root, i.e., a positive value $X>0$ for which both $\delta^\mathrm{R}(X,T)=0$ and $\delta^\mathrm{R}_X(X,T)=0$.  Setting to zero the resultant of the latter two polynomial equations with respect to $X$ gives the condition on $T=\tau^2$ for which there exist repeated roots $X$ (possibly negative or complex) of $\delta^\mathrm{R}(X,T)$.  This condition factors as:
\begin{equation}
T^{16}(T+27)^5(243T+1)^3(64T-27)^3Q_5(T)^2=0,
\label{eq:resultant-of-resultant}
\end{equation}
where $Q_5(T)$ is a quintic polynomial:
\begin{multline}
Q_5(T)\defeq 663552T^5+954511200T^4+829508109289T^3-14696124806763T^2\\
{}+82617806699739T-205891132094649.
\end{multline}
For $T=\tau^2>0$, the first three factors on the left-hand side of \eqref{eq:resultant-of-resultant} are nonzero, and the fourth factor vanishes exactly for $T=T^\sharp\defeq (\tau^\sharp)^2$, i.e., for $\tau=\pm\tau^\sharp$, where $\tau^\sharp$ is defined in \eqref{eq:corner-point}.  So it remains to determine whether $Q_5(T)=0$ holds for any $T>0$.  In fact $Q_5(0)\neq 0$ and $Q_5(T^\sharp)\neq 0$ by direct computation, so we will apply the theory of Sturm sequences (see Definition~\ref{def:Sturm-sequence}) to count the number of real roots $T$ in the intervals $(0,T^\sharp)$ and $(T^\sharp,+\infty)$.
We thus obtain the following sign sequences at the points $T=0$, $T=T^\sharp=(\tau^\sharp)^2$, and $T=\infty$:
\begin{equation}
\begin{split}
\Xi[Q_5](0)&=(-,+,+,-,-,+),\\
\Xi[Q_5](T^\sharp)&=(-,+,+,-,-,+),\\
\Xi[Q_5](+\infty)&=(+,+,-,-,-,+).
\end{split}
\end{equation}
Since $\#(\Xi[Q_5](0))- \#(\Xi[Q_5](T^\sharp))=3-3=0$, by Sturm's theorem (see Theorem~\ref{t:Sturm}) $Q_5(T)$ has no real root in $[0,T^\sharp]$. We also see that $\#(\Xi[Q_5](T^\sharp))- \#(\Xi[Q_5](+\infty))=3-2=1$, which similarly proves that there exists exactly one real root of $Q_5(T)$ in the interval 
$(T^\sharp,+\infty)$; we denote it by $T_1$. One can easily check numerically that $T_1\approx 10.232235>T^\sharp$, which gives $\tau_1\defeq \sqrt{T_1} \approx 3.198786$.  So as $T$ decreases from $T=+\infty$, the first possible bifurcation point at which positive solutions $X$ of $\delta^\mathrm{R}(X,T)=0$ might appear is $T=T_1$.  

Next, one checks directly that $\delta^\mathrm{R}(X,T^\sharp)$ factors as the product of a quintic polynomial in $X$ with strictly positive coefficients and $(16X-81)^2$.  Referring to \eqref{eq:corner-point}, this means that $X^\sharp=(\chi^\sharp)^2$ is a positive double root of $X\mapsto\delta^\mathrm{R}(X,T^\sharp)$, and that there are no other positive roots.  We now show that this double root splits into a pair of real simple roots as $T$ decreases from $T^\sharp$ and into a pair of complex-conjugate simple roots as $T$ increases from $T^\sharp$.  Indeed, if we write $X=X^\sharp+\Delta X$ and $T=T^\sharp+\Delta T$ for $\Delta X$ and $\Delta T$ small, then the dominant terms in $\delta^\mathrm{R}(X,T)$ are those homogeneous in $(\Delta X,\Delta T)$ of degree $2$ and these terms turn out to be proportional to a perfect square:  $(\Delta X-4\Delta T)^2$.  Therefore $\Delta X=4\Delta T+o(\Delta T)$ as $\Delta T\to 0$.  To split the double root present for $\Delta T=0$ therefore requires continuing the calculation to higher order; for this purpose we write $X=X^\sharp+\Delta X$ with $\Delta X=4\Delta T + \zeta$ and discover that the dominant terms in $\delta^\mathrm{R}(X,T)$ are now proportional to $3645\zeta^2+8192\Delta T^3$.  Setting these to zero gives distinct real solutions for $\zeta$ only if $\Delta T<0$.  This perturbative analysis proves that near $T=T^\sharp$ there only exist positive real solutions $X$ of $\delta^\mathrm{R}(X,T)=0$ for $T\le T^\sharp$, and these roots satisfy
\begin{equation}
X=X^\sharp+4(T-T^\sharp)\pm\sqrt{\tfrac{8192}{3645}}(T^\sharp-T)^\frac{3}{2} + o((T^\sharp-T)^\frac{3}{2}),\quad T\uparrow T^\sharp.
\label{eq:discriminant-roots-near-T0}
\end{equation}
Then, since we have already shown that there can be no repeated roots of $X\mapsto\delta^\mathrm{R}(X,T)$ for $T^\sharp<T<T_1$, there are no positive roots $X$ at all for $T$ in this range.  Therefore, for $T>T^\sharp$, only for $T=T_1$ is it possible for there to be any positive roots $X$ of $X\mapsto\delta^\mathrm{R}(X,T)$, and no such root can be simple.  Since $\delta^\mathrm{R}(X,T_1)$ is a polynomial in $X$ of degree $7$, for this special value of $T=T_1$ there are at most finitely many positive and necessarily repeated roots $X=X_i>0$, $i\le 3$, corresponding to $\chi_i=\sqrt{X_i}$.  Numerically, one sees that in fact the only repeated root of $X\mapsto\delta^\mathrm{R}(X,T_1)$ (recall that this map must have one or more repeated roots, possibly negative real or complex, by choice of $T_1$) is a positive number $X=X_1\approx31.8597$ corresponding to $\chi_1=\sqrt{X_1}\approx5.64444$ and that there are no other positive roots.

Finally, we consider the range $T<T^\sharp$.  The two simple roots of $X\mapsto\delta^\mathrm{R}(X,T)$ with the expansions \eqref{eq:discriminant-roots-near-T0} cannot coalesce, nor can any new roots appear, for $0<T<T^\sharp$ as has already been shown.  We will show that the two simple roots with the expansions \eqref{eq:discriminant-roots-near-T0} are contained within the domain $\channels$ for all $T\in (0,T^\sharp)$. To show this, we look for simultaneous solutions of the condition \eqref{eq:boundary-curve} describing the boundary of $\channels$, expressed as a polynomial condition in $(X,T)$, and $\delta^\mathrm{R}(X,T)=0$ by computing the resultant with respect to $X$.  The latter resultant is proportional to $T^9(64T-27)^6Q_9(T)$ where $Q_9(T)$ is a ninth-degree polynomial having the Sturm sequences
\begin{equation}
\Xi[Q_9](0)=\Xi[Q_9](T^\sharp)=(+,+,+,-,-,+,+,-,+,+)
\end{equation}
from which it follows by Sturm's theorem that there are no values of $T\in (0,T^\sharp)$ for which roots $X$ of $X\mapsto\delta^\mathrm{R}(X,T)$ can coincide with points of the boundary of $\channels$.  It therefore remains to determine whether the expansions \eqref{eq:discriminant-roots-near-T0} give values of $X$ that lie in the interior of $\channels$.  But near $(X,T)=(X^\sharp,T^\sharp)$ a similar local analysis of the condition \eqref{eq:boundary-curve} as already performed for the condition $\delta^\mathrm{R}(X,T)=0$ shows that \eqref{eq:boundary-curve} only has real solutions $X$ for $T\le T^\sharp$ and that these solutions have the expansions
\begin{equation}
X=X^\sharp+4(T-T^\sharp)\pm\sqrt{\tfrac{8192}{729}}(T^\sharp-T)^\frac{3}{2}+o((T^\sharp-T)^\frac{3}{2}),\quad T\uparrow T^\sharp.
\label{eq:ChannelsBoundaryNearX0T0}
\end{equation}
For $T^\sharp-T$ small and positive, the interior of $\channels$ lies between these latter two curves.
Comparing with \eqref{eq:discriminant-roots-near-T0} we then see that locally the roots $X$ of $X\mapsto\delta^\mathrm{R}(X,T)$ are indeed contained within $\channels$, and this necessarily persists throughout the whole interval $T\in (0,T^\sharp)$.

Therefore, the only points $(\chi,\tau)\in\mathbb{R}_{> 0}\times\mathbb{R}_{> 0}$ in the exterior of $\overline{\channels}$ where the resultant $\delta(\chi,\tau)$ of $P(u;\chi,\tau)$ and $P'(u;\chi,\tau)$ vanishes are $(\chi_i,\tau_1)$, $i\le 3$.  Since by Lemmas~\ref{lem:tau-small} and ~\ref{lem:chi-small} it is known that $P(u;\chi,\tau)$ has a unique real and simple root for points $(\chi,\tau)$ in the exterior sufficiently close to the coordinate axes, and since complex-conjugate roots of $P(u;\chi,\tau)$ are prevented from bifurcating onto the real axis under continuation in $(\chi,\tau)$ unless $\rho(\chi,\tau)$ vanishes, it follows that $P(u;\chi,\tau)$ has a unique real and simple root for all $(\chi,\tau)$ in the part of the open first quadrant exterior to $\overline{\channels}$ with the possible exception of only the points $(\chi_i,\tau_1)$, $i\le 3$. For these exceptional isolated points it can in principle happen that one or more complex-conjugate pairs of roots of $P(u;\chi,\tau)$ coalesce on the real axis, but these are either roots of even multiplicity or in the case of a collision with the simple root they may add an even number to its multiplicity.

Letting $u(\chi,\tau)$ denote the unique real root of odd multiplicity, we extend $u(\chi,\tau)$ to the coordinate axes within $(\mathbb{R}_{>0}\times\mathbb{R}_{>0})\setminus\overline{\channels}$ by continuity:  $u(0,\tau)=0$ for $\tau>0$ and $u(\chi,0)=\chi$ for $\chi>2$.  Note that $u(0,\tau)$ is non-simple root of $P(u;0,\tau)$, but $u(\chi,0)$ is a simple root of $P(u;\chi,0)$.  This completes the proof of Proposition~\ref{prop:u}.
\end{proof}

\begin{remark}
As pointed out earlier, numerics suggest that there is only one positive value of $\chi=\chi_1$ for which there are repeated roots of $P(u;\chi,\tau)$ for $\tau=\tau_1$.  Numerical calculations also show that there are two repeated roots of $P(u;\chi_1,\tau_1)$ forming a complex-conjugate pair.  Therefore $P(u;\chi_1,\tau_1)$ also has just one real root and it is simple.  Thus apparently there is just one exceptional point, and in fact it is not really exceptional after all.
\end{remark}

\section{Some Useful Facts About Polynomials with Real Coefficients}
\label{A:Sturm}
We remind the reader that the discriminant $\Delta_f$ of a polynomial $f(z)= a_n z^n + a_{n-1} z^{n-1} + \cdots + a z+ a_0$, $n\geq 1$, $a_n\neq 0$, with roots (counted with multiplicity) $\xi_1, \xi_2,\dots,\xi_n\in\mathbb{C}$ can be expressed as
\begin{equation}
\Delta_f = a_n^{2n-2}\prod_{1 \leq j < k \leq n}\left(\xi_{j}-\xi_{k}\right)^{2}.
\label{eq:discriminant-roots}
\end{equation}
We assume that the coefficients $a_k$, $k=1,\dots,n$ of the polynomial $f$ are real in the rest of this appendix. In this case, the representation \eqref{eq:discriminant-roots} provides information about the number of non-real roots of $f$. Since the non-real roots of $f$ come in complex conjugate pairs, it is seen from \eqref{eq:discriminant-roots} that $\Delta_f>0$ if and only if $f$ has all distinct real roots or the number of non-real roots are a multiple of $4$. On the other hand, in case $n\geq 2$, $\Delta_f<0$ if and only if the number of non-real roots of $f$ is $2~\mathrm{mod}(4)$. 

The following method is useful for obtaining information about the real roots of a univariate polynomial $f(z)$. We first give a definition (\cite{Sturm1829}, see also \cite[Section 1.3]{Sturmfels02}).
\begin{definition}[Sturm sequence]
Given a polynomial $f(z)$ of degree $n$, define polynomials $f_k(z)$, $k=0,1,2,\dots$ by
\begin{equation}
\begin{split}
f_0(z) &:= f(z),\\
f_1(z) &:= f'(z),\\
f_k(z) &:= -\rem(f_{k-2}(z), f_{k-1}(z)),\quad \text{for $k\geq 2$},
\end{split}
\end{equation}
where $\rem(f_{k-2}(z), f_{k-1}(z))$ denotes the remainder arising in the division of $f_{k-2}(z)$ by $f_{k-1}(z)$. For sufficiently large $k$ we have $f_k(z)\equiv 0$, so let $m$ be the index of the last non-trivial polynomial $f_m(z)$. The \emph{Sturm sequence} of $f(z)$ is the finite sequence of polynomials $(f_0(z), f_1(z),\ldots, f_m(z))$, where necessarily $m\leq n = \deg(f)$.
\label{def:Sturm-sequence}
\end{definition}
We denote by $\Xi[f](a)$ the sequence of \emph{signs} of the Sturm sequence of $f(z)$ evaluated at a point $a\in\mathbb{R}$:
\begin{equation}
\Xi[f](a) := (\sign(f_0(a)),\sign(f_1(a)),\sign(f_2(a)),\ldots, \sign(f_m(a)) ),
\end{equation}
and we let $\#(\Xi[f](a))$ denote the number of sign variations in $\Xi[f](a)$, i.e., the number of sign changes ignoring any zeros when counting. For instance, for $f(z)=4z^3 + z^2 -2$, we have the Sturm sequence
\begin{align}
f_0(z)=4z^3 + z^2 -2,\quad
f_1(z):=12 z^2 +2z,\quad
f_2(z):= \frac{1}{18}z +2,\quad
f_3(z):=-15480,\quad
f_4(z):=0,
\end{align}
and hence at $z=4$, for example, we have
\begin{equation}
 \Xi[f](4) = (\sign(-30), \sign(44), \sign(17/9), \sign(-15480))=(-,+,+,-),
\end{equation}
which gives $\#(\Xi[f](-2))=2$. As a more complicated example, we obtain $\#(\Xi[f](a))= 3$ if $\Xi[f](a)=(+,+,0,+,-,0,+,+,0,-)$. The following theorem (\cite{Sturm1829}, see also \cite[Theorem 1.4]{Sturmfels02}) gives an \emph{exact} count of real zeros of $f(z)$ weighted by multiplicity in an interval using $\#(\Xi[f](\cdot))$.
\begin{theorem}[Sturm's Theorem] Suppose that $a<b$ and neither $a$ nor $b$ is a zero of $f(z)$. Then $\#(\Xi[f](a))\geq \#(\Xi[f](b))$, and the number of real zeros, weighted by multiplicity, of the polynomial $f(z)$ in the interval $[a,b]$ is equal to $\#(\Xi[f](a)) - \#(\Xi[f](b))$.
\label{t:Sturm}
\end{theorem}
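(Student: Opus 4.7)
The plan is to track how the integer-valued function $t \mapsto \#(\Xi[f](t))$ changes as $t$ traverses the interval $[a,b]$, and to show that the only net change comes from zeros of $f_0 = f$ itself, with each distinct real zero contributing a drop of exactly one to $\#(\Xi[f](\cdot))$. (Note that in order to match the classical Sturm's statement I would first reduce to the squarefree case by dividing $f$ by $\gcd(f,f') = \pm f_m$; the reduced polynomial has the same distinct real zeros as $f$ and, once one rescales, the same Sturm sequence up to an overall positive factor in each position, so the variation count is unchanged. Since the theorem as stated in the excerpt counts with multiplicity, I would clarify in the writeup that for a polynomial with simple real roots this is the same as counting distinct zeros, and for repeated roots one reduces to the squarefree polynomial $\tilde f := f/\gcd(f,f')$ and recovers the \emph{distinct}-root count.)

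First, I would establish the structural properties of the Sturm sequence that drive the argument. The Euclidean-style recurrence $f_{k-2}(z) = q_k(z) f_{k-1}(z) - f_k(z)$ is what one obtains from the definition $f_k = -\rem(f_{k-2}, f_{k-1})$, and iterating this shows that $f_m$ is (a scalar multiple of) $\gcd(f,f')$. After reducing to the squarefree case, $f_m$ is a nonzero constant, hence its sign contributes nothing to changes of $\#(\Xi[f](\cdot))$. Crucially, no two consecutive polynomials $f_{k-1}, f_k$ can share a real zero: if they did, the recurrence would force $f_j(t_0)=0$ for all $j \ge k-1$, contradicting $f_m \not\equiv 0$ as a constant; running the recurrence the other direction also gives $f_0(t_0) = f_1(t_0) = 0$, which would contradict squarefreeness.

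Second, I would analyze how $\#(\Xi[f](t))$ evolves at each point where some $f_k$ changes sign. For an interior index $0 < k < m$ and a zero $t_0$ of $f_k$, the recurrence evaluated at $t_0$ yields $f_{k-1}(t_0) = -f_{k+1}(t_0)$, so these neighbors are nonzero and opposite in sign. In a small neighborhood of $t_0$ the signs of $f_{k-1}$ and $f_{k+1}$ are locked, and one checks by inspection that the triple $(\sign f_{k-1}, \sign f_k, \sign f_{k+1})$ contributes exactly one sign variation on each side of $t_0$, regardless of how $f_k$ transits zero. So interior zeros of the sequence are invisible to $\#(\Xi[f](\cdot))$.

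Third, I would examine the leading pair $(f_0,f_1) = (f,f')$ at a simple real zero $t_0$ of $f$ (which is the only case after reduction to squarefree $f$). Then $f_1(t_0) = f'(t_0) \ne 0$, and by the first-order Taylor expansion $f(t) = (t-t_0)f'(t_0) + O((t-t_0)^2)$ we get that $\sign f_0(t) = -\sign f_1(t)$ for $t < t_0$ close to $t_0$, while $\sign f_0(t) = \sign f_1(t)$ for $t > t_0$ close to $t_0$. Thus as $t$ crosses $t_0$ the pair $(f_0,f_1)$ loses exactly one sign variation, and by Step~2 no compensating change occurs elsewhere in the sequence. Summing over all (necessarily finitely many) distinct real zeros of $f$ in $(a,b)$ and noting that, by assumption, $f(a) f(b) \ne 0$ so no zero lies at the endpoints, gives $\#(\Xi[f](a)) - \#(\Xi[f](b))$ equal to the number of distinct real zeros of the squarefree reduction in $[a,b]$, which is the same number for $f$ itself.

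The main obstacle, and the step requiring the most care in the writeup, is the verification at an interior zero of $f_k$ that the sign variation is genuinely unchanged: one must carefully exclude simultaneous vanishings (using the Euclidean structure and squarefreeness), produce the identity $f_{k-1}(t_0) = -f_{k+1}(t_0)$ locally, and then enumerate the sign patterns $(\pm, 0, \mp)$ versus $(\pm, \pm, \mp)$ to confirm both contribute precisely one variation. Once this local analysis is in place the global monotonicity $\#(\Xi[f](a)) \ge \#(\Xi[f](b))$ and the exact count follow immediately by summing the contributions at each distinct real zero of $f$.
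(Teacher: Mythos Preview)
The paper does not actually prove this theorem: it is stated in Appendix~\ref{A:Sturm} as a classical result with citations to Sturm's original paper and to Sturmfels' book, and is then invoked as a black box in the proof of Proposition~\ref{prop:u}. So there is no ``paper's own proof'' to compare against.

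Your proposal is the standard classical argument and is correct for the statement that Sturm's theorem counts \emph{distinct} real zeros. You are right to flag the phrase ``weighted by multiplicity'' in the statement as problematic: the classical Sturm sequence built from $(f,f')$ counts distinct real zeros, not zeros with multiplicity, precisely because the reduction to the squarefree part $f/\gcd(f,f')$ that you describe leaves the sign-variation count unchanged while discarding multiplicities. In the paper's actual applications (to $Q_5$ and $Q_9$ in the proof of Proposition~\ref{prop:u}) the polynomials in question are being checked for the existence of any real root in an interval, so the distinction does not affect the paper's conclusions; but your proof, as written, establishes the distinct-root version rather than the multiplicity-weighted version stated. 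If you want the writeup to match the paper's wording exactly you would need either to note that the stated version is nonstandard or to restrict to the squarefree case from the outset.
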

The theorem also applies to the case where $a=-\infty$ or $b=+\infty$ by considering the asymptotic behavior of the polynomials in the Sturm sequence, which amounts to looking at the signs of the leading coefficients of the polynomials $f_k(z)$ in the Sturm sequence of $f$.

Another result on the real roots of a polynomial is the following:

\begin{theorem}[D\'escartes' Rule of Signs] Let $f(z)=a_n z^n + a_{n-1} z^{n-1} + \cdots + a_1 z + a_0$, $a_n\neq 0$. The number of positive real roots of $f$ is at most the number of sign variations in its coefficient sequence $(a_n,a_{n-1},\ldots, a_1,a_0)$. Moreover, the number of positive real roots of $f$ differs from the the number of sign variations of the coefficients sequence by an even (nonnegative) integer.
\label{t:Descartes}
\end{theorem}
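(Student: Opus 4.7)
The plan is to split Theorem~\ref{t:Descartes} into two independent assertions: (i) the parity identity $P(f)\equiv V(f)\pmod 2$, and (ii) the inequality $V(f)\ge P(f)$. Throughout, $V(f)$ denotes the number of sign variations in the coefficient sequence of $f$ and $P(f)$ the number of positive real roots counted with multiplicity. Without loss of generality I shall assume $a_0\ne 0$: if $a_0=\dots=a_{k-1}=0$ with $a_k\ne 0$, then $f(z)=z^kh(z)$, the factor $z^k$ contributes no positive roots, and the nonzero sign sequence of $h$ equals that of $f$, so $V(h)=V(f)$ and $P(h)=P(f)$.

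For part (i), I would observe first that $V(f)$ is even if and only if $\sign(a_n)=\sign(a_0)$, because reading the nonzero coefficients of $f$ from highest to lowest index we flip sign exactly $V(f)$ times. On the side of roots, $f(z)$ takes the sign of $a_n$ as $z\to+\infty$ and the sign of $a_0$ as $z\to 0^+$, so by the intermediate value theorem the total number of sign changes of $f$ on the open interval $(0,+\infty)$ is odd if and only if $\sign(a_n)\ne\sign(a_0)$. But this sign-change count equals the number of positive roots of odd multiplicity, which in turn has the same parity as $P(f)$, since positive roots of even multiplicity contribute evenly to $P(f)$. Combining these observations gives $V(f)\equiv P(f)\pmod 2$.

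For part (ii), I would argue by induction on $n=\deg f$, using $f'$ and Rolle's theorem. The coefficient sequence $(na_n,(n-1)a_{n-1},\dots,a_1)$ of $f'$ has exactly the same signs as $(a_n,\dots,a_1)$, so $V(f')$ equals the number of sign variations of the nonzero coefficient sequence of $f$ with $a_0$ removed. If $k\ge 1$ is the smallest index with $a_k\ne 0$, one reads off
\begin{equation}
V(f)-V(f')=\begin{cases}0,&\sign(a_0)=\sign(a_k),\\1,&\sign(a_0)\ne\sign(a_k).\end{cases}
\label{eq:Vf-Vfprime-plan}
\end{equation}
On the root side, if $c_1<\dots<c_r$ are the distinct positive roots of $f$ with multiplicities $m_1,\dots,m_r$, then $f'$ vanishes at each $c_i$ to order $m_i-1$ and, by Rolle's theorem, has at least one additional root in each interval $(c_i,c_{i+1})$; hence $P(f')\ge P(f)-1$ in general. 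Moreover, in the case $\sign(a_0)=\sign(a_k)$, the local expansions $f(z)\sim a_0+a_kz^k$ and $f'(z)\sim ka_kz^{k-1}$ near $z=0^+$ show that $f(0^+)$ and $f'(0^+)$ have the same sign, so $f$ moves strictly away from zero near the origin and must reverse course before reaching $c_1$; this supplies an extra zero of $f'$ in $(0,c_1)$ and upgrades the bound to $P(f')\ge P(f)$. Combining with \eqref{eq:Vf-Vfprime-plan} yields $V(f)-P(f)\ge V(f')-P(f')$ in both cases, which is nonnegative by the inductive hypothesis applied to $f'$; the base case $n=0$ is trivial.

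The expected obstacle is precisely the auxiliary argument at $z=0^+$: Rolle's theorem does not directly apply on $[0,c_1]$ because $f(0)\ne 0$, so the recovery of the additional root of $f'$ in the case $\sign(a_0)=\sign(a_k)$ must be extracted from the local behavior of $f$ and $f'$ at the origin rather than from a global mean-value argument. Once this local analysis is executed, the induction closes routinely, and combining (i) with (ii) produces both the inequality $V(f)\ge P(f)$ and the fact that $V(f)-P(f)$ is a nonnegative even integer, as required.
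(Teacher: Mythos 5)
Your proposal is correct, but there is nothing in the paper to compare it against: Theorem~\ref{t:Descartes} is quoted in Appendix~\ref{A:Sturm} purely as classical background (alongside Sturm's theorem, Theorem~\ref{t:Sturm}) and the paper supplies no proof of it, using instead Sturm sequences for the actual root-counting in the proof of Proposition~\ref{prop:u}. Judged on its own, your argument is the standard and complete route: the reduction to $a_0\neq 0$, the parity statement via the signs of $f$ at $0^+$ and $+\infty$ together with the observation that $P(f)$ has the parity of the number of odd-multiplicity positive roots, and the inequality $V(f)\ge P(f)$ by induction on the degree through $f'$, Rolle's theorem, and the case analysis \eqref{eq:Vf-Vfprime-plan}, all fit together correctly; combining the two parts indeed gives that $V(f)-P(f)$ is a nonnegative even integer. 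Two small points to make explicit in a final write-up: the ``extra'' zero of $f'$ in $(0,c_1)$ when $\sign(a_0)=\sign(a_k)$ is obtained rigorously by noting that $f'>0$ (say) on some $(0,\epsilon]$ while the mean value theorem on $[\epsilon,c_1]$ produces a point where $f'<0$, so the intermediate value theorem gives a zero of $f'$ strictly between; and the case $r=0$ (no positive roots of $f$) should be noted separately, since there is no $c_1$ there, but then $P(f)=0$ and the required inequalities $P(f')\ge P(f)-1$ and, when the signs agree, $P(f')\ge P(f)$ hold trivially. With those remarks the induction closes exactly as you describe.
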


\end{document}